\newtheorem{theorem}{Theorem}[section]
\newtheorem{lemma}[theorem]{Lemma}
\newtheorem{corollary}[theorem]{Corollary}
\theoremstyle{definition}
\newtheorem{definition}[theorem]{Definition}
\newtheorem{observation}[theorem]{Observation}
\newtheorem{remark}[theorem]{Remark}
\numberwithin{equation}{section}
\newcommand{\eps}{\varepsilon}
\newcommand{\R}{\mathbb{R}}
\newcommand{\C}{\mathbb{C}}
\newcommand{\Z}{\mathbb{Z}}
\newcommand{\Q}{\mathbb{Q}}
\newcommand{\E}{\mathbb{E}}
\newcommand{\Img}{\operatorname{Im}}
\newcommand{\SL}{\operatorname{SL}}
\newcommand{\tr}{\operatorname{Tr}}
\newcommand{\prm}{\operatorname{pm}}
\newcommand{\row}{\operatorname{row}}
\newcommand{\diag}{\operatorname{diag}}
\newcommand{\capa}{\operatorname{cap}}
\newcommand{\caA}{\mathcal{A}}
\newcommand{\caB}{\mathcal{B}}
\newcommand{\caD}{\mathcal{D}}
\newcommand{\caI}{\mathcal{I}}
\newcommand{\bfzero}{\mathbf{0}}
\newcommand{\bfone}{\mathbf{1}}
\DeclareMathOperator*{\supp}{supp}
\DeclareMathOperator*{\rk}{rank}
\DeclareMathOperator*{\ncrank}{nc-rank}
\DeclareMathOperator{\poly}{poly}
\DeclareMathOperator*{\argmin}{argmin}
\DeclarePairedDelimiter{\norm}{\lVert}{\rVert}
\DeclarePairedDelimiter{\abs}{\lvert}{\rvert}
\DeclarePairedDelimiter{\inprod}{\langle}{\rangle}
\title{Shrunk subspaces via operator Sinkhorn iteration}%
\newif\ifanonymous
    \author{Anonymous authors}%
    \author{%
    Cole Franks\thanks{Department of Mathematics, Massachusetts Institute of Technology. Email: \texttt{franks@mit.edu}.
     Part of C.F.'s work took place at the Geometric Methods in Optimization and Sampling workshop at the Simons Institute for the Theory of Computing.}
    \and
    Tasuku Soma\thanks{Department of Mathematics, Massachusetts Institute of Technology. Email: \texttt{tasuku@mit.edu}}
    \and
    Michel X. Goemans\thanks{Department of Mathematics, Massachusetts Institute of Technology. Email: \texttt{goemans@math.mit.edu}}
    } 
\begin{document}
\maketitle
\begin{abstract}
A recent breakthrough in Edmonds' problem showed that the noncommutative rank can be computed in deterministic polynomial time, and various algorithms for it were devised.
However, only quite complicated algorithms are known for finding a so-called shrunk subspace, which acts as a dual certificate for the value of the noncommutative rank.
In particular, the operator Sinkhorn algorithm, perhaps the simplest algorithm to compute the noncommutative rank with operator scaling, does not find a shrunk subspace. Finding a shrunk subspace plays a key role in applications, such as separation in the Brascamp-Lieb polytope, one-parameter subgroups in the null-cone membership problem, and primal-dual algorithms for matroid intersection and fractional matroid matching.

In this paper, we provide a simple Sinkhorn-style algorithm to find the smallest shrunk subspace over the complex field in deterministic polynomial time.  To this end, we introduce a generalization of the operator scaling problem, where the spectra of the marginals must be majorized by specified vectors. Then we design an efficient Sinkhorn-style algorithm for the generalized operator scaling problem.
Applying this to the shrunk subspace problem, we show that a sufficiently long run of the algorithm also finds an approximate shrunk subspace close to the minimum exact shrunk subspace. Finally, we show that the approximate shrunk subspace can be rounded if it is sufficiently close. Along the way, we also provide a simple randomized algorithm to find the smallest shrunk subspace.

As applications, we design a faster algorithm for fractional linear matroid matching and efficient weak membership and optimization algorithms for the rank-2 Brascamp-Lieb polytope.
\end{abstract}
\clearpage
\tableofcontents
\clearpage
\section{Introduction}
Let $\caA \subseteq \C^{n\times n}$ be a matrix space spanned by $A_1, \dots, A_p \in \C^{n\times n}$. Edmonds' problem is to decide whether $\caA$ has a matrix of full rank, denoted $\rk \caA = n$. There is a simple randomized algorithm for Edmonds' problem, but derandomizing it is a central question in polynomial identity testing. The completeness of the determinant for polynomial identity testing \cite{valiant1979complexity} implies that a positive solution to Edmonds' problem would imply historically intractable circuit lower bounds \cite{kabanets2004derandomizing}, so it comes as no surprise that Edmonds' problem remains open.

The \emph{noncommutative} Edmonds problem is to compute a relaxed notion of rank known as the \emph{noncommutative rank} $\ncrank\caA$ defined as
\begin{align*}
\ncrank\caA = \min\left\{n- \dim U + \dim \caA(U): \text{$U$ subspace of $\C^n$} \right\},
\end{align*}
where $\caA(U) = \inprod{Au: A \in \caA, u \in U}$.
Equivalently, $\ncrank\caA = n - c$, where $c$ is the largest difference between $\dim U$ and $\dim \caA(U)$.
We call such a subspace $U$ a \emph{$c$-shrunk subspace}. If $\caA$ is an yes instance of the Edmonds problem, then clearly $\ncrank\caA = \rk\caA = n$, i.e., $\dim U = \dim\caA(U)$ for all subspaces $U$. The noncommutative Edmonds problem is equivalent to testing the identity of noncommutative rational formulae \cite{Fortin2004}, and arises in diverse contexts such as  the invariant theory of quivers \cite{Derksen2017} and statistical estimation \cite{amendola2021invariant}. Maximum bipartite matching can be formulated as a special case of this problem, and the $c$-shrunk subspace amounts to the ``maximum deficiency set" from Hall's theorem.  Several independent works by Garg, Gurvits, Oliveira, Wigderson \cite{garg2016deterministic}, Ivanyos, Quiao, Subramanyam \cite{Ivanyos2018}, and Hamada and Hirai \cite{Hamada2021} exhibited deterministic polynomial-time algorithms for the noncommutative Edmonds' problem, showing that the noncommutative version of the problem is significantly easier.

The three algorithms have very different approaches. \cite{garg2016deterministic}, following \cite{Gurvits2004}, reduces the problem to the boundedness of a certain optimization problem over matrix groups known as the \emph{capacity}. The algorithm for solving this optimization problem, known as \emph{operator scaling}, is a delightfully simple generalization of Sinkhorn's algorithm \cite{sinkhorn1964relationship} for re-weighting (or ``scaling") the rows and columns of a nonnegative matrix to make it doubly stochastic (matrix scaling). However, the iterative algorithm lacks one key feature shared by a more sophisticated submodular minimization approach of \cite{Hamada2021} and algebraic approach of \cite{Ivanyos2018}: the optimization algorithm of \cite{garg2016deterministic} does \emph{not} produce a $c$-shrunk subspace.

The $c$-shrunk subspace can be viewed as a certificate bounding the noncommutative rank, and such certificates are useful in optimization applications (e.g. dual certificates) as well as computational invariant theory (it can be viewed as a $1$-parameter subgroup taking the tuple of matrices to the origin).
For instance, the $c$-shrunk subspace arises as a dual certificate in algorithms for the weighted matroid intersection problem and its generalizations~\cite{VandeVate1992,Chang2001b,Gijswijt2013}.
Also, finding a $c$-shrunk subspace is crucial in a weak separation algorithm for the Brascamp-Lieb polytope by \cite{Garg2018}.

Geometric invariant theory provides another motivation for finding shrunk subspaces - in the $m = n$ case, the group $G = \SL(n) \times \SL(n)$ (pairs of matrices with determinant one) naturally acts on the tuple $(A_1, \dots, A_p)$ generating $\caA$ by $(g,h) \cdot (A_1, \dots, A_p) = (g A_1 h^\dagger, \dots, g A_p h^\dagger).$ This is the well-studied \emph{left-right action}.~It is known that all homogeneous invariant polynomials invariant under this group action vanish on the tuple $(A_1, \dots, A_p)$, i.e. the tuple is in the null-cone, if and only if $\caA$ has a shrunk subspace \cite{burgin2006hilbert}. The Hilbert-Mumford criterion states that null-cone membership holds if and only if there is a \emph{one-parameter subgroup} (1-PSG) of $G$ driving the Euclidean norm of $(A_1, \dots, A_p)$ to zero \cite{mumford1994geometric}. 1-PSG's are deterministic certificates of null-cone membership, as opposed to oracle access to a random invariant polynomial. Shrunk subspaces encode such 1-PSG's, and $c$-shrunk subspaces correspond to 1-PSG's driving the tuple to the origin the ``fastest." Efficiently deciding null-cone membership for general linear group actions remains an interesting open problem.

\subsection{Our contribution}
We show how to modify the Sinkhorn-style operator scaling algorithm from \cite{garg2016deterministic} to find the smallest $c$-shrunk subspace of $\caA$. We view our result as evidence that continuous optimization is a viable, or at least competitive, approach to polynomial identity testing problems in invariant theory. Along the way, we find a simple randomized algorithm to find the smallest $c$-shrunk subspace of $\caA$ using Wong sequences. Even when allowing randomness, the sophisticated deterministic algorithms of \cite{Hamada2021} and \cite{Ivanyos2018} were the only known methods to find a $c$-shrunk subspace.

The idea of the deterministic iterative algorithm is as follows. Gurvits's original operator Sinkhorn algorithm \cite{Gurvits2004} can be seen as alternating minimization of an objective function (Equation \ref{eq:gurv}) on the space of (pairs of) positive-definite matrices. When the objective function is unbounded, a sufficiently long run of Gurvits's algorithm yields an \emph{approximate} shrunk subspace, a notion we define formally in Definition \ref{def:approx-indep}. It is easy to modify the objective function (Equation \ref{eq:simple-capacity}) to yield an approximate $c$-shrunk subspace; see Section \ref{sec:warm-up} for how to read off the maximum deficiency subset in the case of matrix scaling. Even so, it is not obvious how to obtain a true $c$-shrunk subspace from a mere approximate one. It is not clear that every approximately $c$-shrunk subspace is close to some $c$-shrunk subspace, let alone how to find the nearby true shrunk subspace.

To fix this issue, we further ``perturb" the objective function (Equation \ref{eq:perturb-capacity}) to find an $\eps$-approximate $c$-shrunk subspace of dimension at most the dimension $r^*$ of the true smallest $c$-shrunk subspace in time polynomial in $\log (1/\eps)$ and the input size - see Algorithm \ref{alg:approx-indep}, and for the algorithm's guarantees see Theorem~\ref{thm:indep-correct}. It is well-known that the true smallest $c$-shrunk subspace is unique, and we show it has polynomial bit complexity (Theorem \ref{thm:round}). This follows from the analysis of our simple randomized algorithm to output the smallest $c$-shrunk subspace. Finally, we show a stability result for the smallest $c$-shrunk subspace, namely that an $\exp(-\poly)$-approximate $c$-shrunk subspace of dimension at most $r^*$ is actually $\exp(-\poly)$-close to the true smallest $c$-shrunk subspace (Theorem \ref{thm:close}).
Together with our bit complexity bound, this shows our main result (Theorem \ref{thm:roundsub-correct}): we can round our approximate $c$-shrunk subspace to the true smallest $c$-shrunk subspace.

\subsection{Applications: rank-2 Brascamp-Lieb polytope and fractional linear matroid matching}

As an application of our methods, we prove several algorithmic results about the Brascamp-Lieb polytope (BL polytope for short)~\cite{Bennett2008}. The BL polytope for full row-rank matrices $B_i \in \C^{n_i \times n}$ $(i=1,\dots,p)$ is the set of $x \in \R^p$ such that
\begin{align}\label{eq:BL-V-repr}
\begin{split}
\sum_{i=1}^p x_i \dim(B_i W) &\geq \dim(W) \qquad \text{($W$: subspace of $\C^n$)} \\
\sum_{i=1}^p x_i n_i &= n, \\
x_i &\geq 0. \qquad (i=1, \dots, p)
\end{split}
\end{align}
The BL polytope with $n_i = r$ for all $i$ is called the rank-$r$ BL polytope.
We denote this polytope $BL(B_1, \dots, B_p)$ or concisely $BL(B)$, where $B = (B_1, \dots, B_p)$.

It was shown in \cite{Bennett2008} that the BL polytope determines the finiteness of the Brascamp-Lieb inequalities in functional analysis, and \cite{Garg2018} exhibited a separation algorithm for $x \in \mathbb{Q}^p$ that runs in time polynomial in the sizes of $B_i$ and the common denominator of entries of $x$.
Indeed, they reduced the separation problem to computing the noncommutative rank of a pseudopolynomial-size matrix space $\caA$ with the following property.
If $x \in BL(B)$, then $\ncrank\caA$ is full.
If $x \notin BL(B)$, then a shrunk subspace gives a violating subspace $W$, i.e., $\sum_{i=1}^p x_i \dim(B_i W) > \dim W$.
The former can be verified by the operator Sinkhorn algorithm, but the latter cannot; they used an algebraic algorithm of \cite{Ivanyos2018} separately to this end.

The BL polytope is significant in combinatorial optimization because various important polytopes are special cases of even low-rank BL polytopes.
The rank-1 BL polytope is simply the base polytope of a linear matroid~\cite{Barthe1998} (over $\C$).
\cite{Garg2018} showed that the rank-2 BL polytope includes the common base polytope of two linear matroids.
Indeed, we can observe that the rank-2 BL polytope is precisely the \emph{perfect fractional linear matroid matching polytope} \cite{VandeVate1992}, which includes the fractional perfect matching polytope and the common base polytope and is totally dual half-integral~\cite{Gijswijt2013}. The polytope obtained by removing the constraint $\sum_i n_i x_i = n$ (after rewriting the inequalities in down-monotone form) is known as the \emph{fractional linear matroid matching polytope}, and we refer to it as $P(B)$. The \emph{maximum fractional linear matroid matching problem} is to find $x \in P(B)$ that maximizes $\sum_i x_i$. This problem generalizes the maximum matroid intersection problem. The dual certificate, known as a \emph{minimum 2-cover}, is a pair $(S, T)$ of subspaces such that $\dim S + \dim T = k$ and $\dim (S \cap \row B_i) + \dim (T \cap \row B_i) \geq 2$ for all $i$, where $k$ is twice the optimal value and $\row B_i$ is the row-space of $B_i$. Finding the minimum 2-cover in the special case of linear matroid intersection amounts to the submodular minimization formulation of the dual of maximum matroid intersection, which in the special case of bipartite matching is equivalent to finding the minimum cover (the dual problem to finding the maximum matching). The BL polytope can also be seen as a slice of a \emph{moment polytope}, a certain polytope associated to a vector in a representation of a linear algebraic group. For more information about moment polytopes see \cite{burgisser2019towards}. Efficiently deciding membership in BL polytopes could lead to efficient algorithms for moment polytope membership, an open problem in computational invariant theory.

\cite{Chang2001b} gave an algorithm to solve the maximum fractional linear matroid matching problem with polynomially many calls to certain matroid oracles, and \cite{Gijswijt2013} showed how to use polynomially many calls to the algorithm of \cite{Chang2001b} to optimize linear functionals over the perfect fractional linear matroid matching polytope. Nonetheless, no algorithm for membership in rank-2 BL polytopes is known to be polynomial time! This is because the recursive nature of the oracle calls in \cite{Chang2001b,Gijswijt2013} na\"ively lead to bit complexity explosions of intermediate numbers.

The difficulty of optimization on BL polytopes is perhaps unsurprising. It is unclear how to decide if a given inequality is valid for the polytope and in fact, it is not known that membership in the rank-2 BL polytope is in \textsc{coNP}!  One could hope that a violating subspace could act as a \textsc{coNP} certificate for membership, but it is not known whether every inequality can be witnessed by a subspace of polynomial bit complexity.

As an application of our modified Sinkhorn algorithm, we prove the following two results in Section \ref{sec:bl}.
\begin{theorem}[informal, see~Theorem~\ref{thm:fracmat}]
There is a deterministic algorithm that finds a $(1-\eps)$-maximum fractional linear matroid matching in $\tilde{O}(n^5(n+p))$ time for any $\eps > 0$, where $n$ is the dimension of the ground space and $p$ is the number of matrices $B_i$.
\end{theorem}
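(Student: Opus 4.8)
The plan is to package the membership/separation oracle for the rank-2 BL polytope with the duality between 2-covers and fractional matchings, and then to drive the whole thing with the modified Sinkhorn algorithm whose guarantees are quoted in Theorem~\ref{thm:indep-correct} and Theorem~\ref{thm:roundsub-correct}. First I would recall that the rank-2 BL polytope $BL(B)$ equals the perfect fractional linear matroid matching polytope, so that maximizing $\sum_i x_i$ over the down-monotone polytope $P(B)$ is, by LP duality, equal to $\tfrac12 k$ where $k$ is the size of a minimum 2-cover; the combinatorial part of the argument is to show that a minimum 2-cover can be extracted from a smallest $c$-shrunk subspace of an appropriately constructed matrix space $\caA_B$ built from the $B_i$. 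This construction is exactly the one underlying the separation oracle of \cite{Garg2018}: $x \in BL(B)$ iff $\ncrank \caA_B$ is full, and a $c$-shrunk subspace of $\caA_B$ with $c$ maximal yields a subspace $W$ certifying the violated BL inequality, from which the pair $(S,T)$ of a 2-cover is read off by the standard $\row B_i$-intersection bookkeeping already described in the introduction.

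Next I would address running time, which is where the real work lies. The matrix space $\caA_B$ produced by the \cite{Garg2018} reduction has dimension parameters polynomial in $n$ and $p$ — I would pin down that the ambient dimension is $O(n)$ and the number of spanning matrices is $O(n+p)$, or reduce to that by a preprocessing step — and then invoke the quantitative guarantee of our Sinkhorn-style algorithm (Algorithm~\ref{alg:approx-indep}) to produce an $\eps'$-approximate $c$-shrunk subspace of dimension at most $r^*$ in time polynomial in the input size and $\log(1/\eps')$. Each iteration costs $\tilde{O}(n^\omega)$ or $\tilde{O}(n^3)$ linear-algebra time on $n\times n$ matrices, times the $O(n+p)$ cost of touching all the generators, and the number of iterations needed for the capacity-gap bound is $\tilde{O}(n^2)$ (this is the standard operator-scaling iteration count, from the spectral-gap/capacity-decrease analysis underlying Theorem~\ref{thm:indep-correct}); multiplying gives the claimed $\tilde O(n^5(n+p))$. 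For the $(1-\eps)$-approximate matching I would take $\eps' = \exp(-\poly(n,p,\log(1/\eps)))$ small enough that Theorem~\ref{thm:close} and the bit-complexity bound of Theorem~\ref{thm:round} force the approximate shrunk subspace to round to the exact smallest one; since $\log(1/\eps')$ is polynomial, this only inflates the iteration count by a polynomial factor, but I would have to be careful that it does not push the exponent above five — likely this requires observing that the rounding only needs to be invoked once, at the end, rather than throughout, and that the dependence on $\log(1/\eps)$ is additive in a lower-order term.

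From the exact smallest $c$-shrunk subspace of $\caA_B$ I would then read off the minimum 2-cover $(S,T)$, hence the exact optimum value $\tfrac12 k$ of the fractional matching LP, and finally reconstruct a primal optimum $x \in P(B)$ of value $\tfrac12 k$ — or rather a $(1-\eps)$-approximate one if we stop the iteration early — by a complementary-slackness / uncrossing argument against the 2-cover, which is a polynomial-time linear-algebraic step dominated by the cost already incurred. The main obstacle I anticipate is the running-time accounting: controlling the blow-up in the size of $\caA_B$ coming from the \cite{Garg2018} gadget (a na\"ive reduction is pseudopolynomial, so one must use the fact that for \emph{rank-2} instances the construction can be made genuinely polynomial), and then threading the $\log(1/\eps)$ and the rounding precision through the iteration bound without exceeding $\tilde O(n^5(n+p))$. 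The correctness half — that a smallest $c$-shrunk subspace of $\caA_B$ encodes a minimum 2-cover and thus the fractional matching optimum — is essentially a translation of known polyhedral facts (\cite{VandeVate1992,Chang2001b,Gijswijt2013}) combined with the \cite{Garg2018} reduction, and should go through cleanly once the dictionary is set up.
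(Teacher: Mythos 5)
There is a genuine gap, on two fronts: the construction and the running time. The paper's proof (Theorem~\ref{thm:fracmat}, Algorithm~\ref{alg:fracmat}) does not go through the \cite{Garg2018} gadget at all. It uses the skew-symmetric wedge construction of \cite{Oki2022}: from a basis $a_i,b_i$ of each line it forms $A_i = a_i b_i^T - b_i a_i^T$, so the matrix space $\caA$ lives in $\C^{n\times n}$ with exactly $p$ generators and $\ncrank\caA$ equals twice the maximum fractional matching size. Your plan instead relies on the \cite{Garg2018} reduction, but that reduction is tied to a \emph{specific} point $x$ and is pseudopolynomial in the common denominator of $x$; for the optimization problem there is no distinguished $x$, and your assertion that the gadget ``can be made genuinely polynomial'' in the rank-2 case is exactly the missing step, not a routine preprocessing observation. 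Likewise, your final step of reconstructing a primal optimum from the minimum 2-cover by ``complementary slackness / uncrossing'' is not a dominated linear-algebra step: recovering a primal fractional matroid matching from a dual cover is essentially the content of the combinatorial algorithms of \cite{Chang2001b}, which is precisely the machinery this theorem is meant to bypass.

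The running-time accounting also fails along your route, and your own caveat does not rescue it. You propose to compute the \emph{exact} smallest shrunk subspace, which requires $\eps'$ inverse-exponential and rounding; by Theorem~\ref{thm:roundsub-correct} that costs on the order of $n^7(m+n)^3(m+n+p)(\cdots)$ arithmetic operations, and even a single call to \textsc{ApproxIndep} at precision $\log(1/\eps') = \poly(n)$ (Theorem~\ref{thm:indep-correct}) already exceeds $\tilde O(n^5(n+p))$ by several powers of $n$. The paper never computes a shrunk subspace, exact or approximate, and never invokes the rounding/stability theorems for this result: it runs \textsc{DecisionSinkhorn} with binary search to find $k=\ncrank\caA$, runs \textsc{MajSinkhorn} (Algorithm~\ref{alg:maj-scaling}) only to the target accuracy $\eps$ to get a scaling matrix $g$ with $\sum_i \tilde A_i\tilde A_i^\dagger \preceq (1+\eps)I_n$ and $\sum_i \tr(\tilde A_i\tilde A_i^\dagger)\geq k$, and then reads off the primal solution \emph{directly} as $x_i = |\det R_i|^2/(1+\eps)$, where $R_i$ is the triangular factor of the thin QR decomposition of $[g a_i,\, g b_i]$; feasibility follows from the projection inequality of Lemma~\ref{lem:proj-ineq} applied to $\sum_i x_i \pi_{\Img(\tilde A_i)} \preceq (1+\eps)I_n$, and the size bound $\sum_i x_i \geq k/(2(1+\eps))$ comes from the trace condition. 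This primal-only reading is what keeps the dependence on $\eps$ at $1/\eps^2$ inside a lower-order term and the total cost at $\tilde O(n^5(n+p))$ for constant $\eps$; without it, your approach does not meet the stated bound.
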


\begin{theorem}\label{thm:bl-conp}
Membership in the rank-2 BL polytope is in \textsc{NP} $\cap$ \textsc{coNP}.
\end{theorem}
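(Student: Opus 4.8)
The plan is to exhibit, for a rational input $x\in\Q^p$, a polynomial-size \textsc{NP} certificate when $x$ lies in the rank-$2$ polytope $BL(B)$ and a polynomial-size \textsc{coNP} certificate when it does not, ``polynomial'' being measured in the bit-length of $B$ together with $\log d$, where $d$ is the common denominator of the $x_i$. The workhorse is the reduction of~\cite{Garg2018}: from $(B,x)$ one builds a matrix space $\caA_{B,x}$, of size polynomial in the bit-length of $B$ and in $d$, such that $x\in BL(B)$ if and only if $\ncrank\caA_{B,x}$ is full, and such that when $x\notin BL(B)$ a shrunk subspace of $\caA_{B,x}$ exhibits a subspace $W\subseteq\C^n$ with $\sum_i x_i\dim(B_iW)>\dim W$.

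\emph{The \textsc{coNP} direction.} If $x\notin BL(B)$, then either one of the directly checkable constraints $x\ge 0$, $\sum_i 2x_i=n$ fails, or there is a violating subspace $W$. Since $\dim(B_iW)\in\{0,1,2\}$ and $\dim W\le n$, the violation is encoded entirely by the integer pattern $(r,r_1,\dots,r_p)=(\dim W,\dim B_1W,\dots,\dim B_pW)$ together with the inequality $\sum_i x_ir_i<r$, so I would take as the certificate a subspace $W$ realizing such a pattern but of bit-length polynomial in that of $B$ and in $n$ only --- crucially, independent of $d$. The verifier then recomputes the pattern from $W$ and checks $\sum_i x_ir_i<r$ in polynomial time. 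Such a low-complexity $W$ exists because, for any fixed achievable pattern, the set of realizing subspaces is a nonempty Zariski-closed subset of the Grassmannian $\mathrm{Gr}(r,n)$, cut out in Pl\"ucker coordinates by the vanishing of the $(r_i+1)\times(r_i+1)$ minors of $B_i$ restricted to $W$ --- polynomials of degree $O(n)$ whose coefficients have bit-length polynomial in that of $B$; standard effective B\'ezout/Nullstellensatz bounds give such a nonempty variety a point of polynomial bit-length (over an algebraic extension of polynomially bounded degree). An alternative is to take $W$ from the smallest shrunk subspace of $\caA_{B,x}$ (Theorem~\ref{thm:round}) and then argue that, since its only role is to realize a dimension pattern that depends on $B$ alone, it can be re-encoded with complexity decoupled from $d$.

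\emph{The \textsc{NP} direction.} Here I invoke the fact, from~\cite{VandeVate1992,Gijswijt2013}, that the rank-$2$ BL polytope is the perfect fractional linear matroid matching polytope and, being totally dual half-integral, has only half-integral vertices; every vertex thus has coordinates in $\{0,\tfrac12,1,\dots,\tfrac n2\}$ and bit-length $O(p\log n)$. If $x\in BL(B)$, then by Carath\'eodory's theorem $x=\sum_{j=1}^{p+1}\lambda_jv_j$ for vertices $v_1,\dots,v_{p+1}$ of $BL(B)$ and nonnegative $\lambda_j$ with $\sum_j\lambda_j=1$; the certificate is this list. The verifier checks the affine identity $x=\sum_j\lambda_jv_j$, $\lambda\ge 0$, $\sum_j\lambda_j=1$, and then, for each $j$, checks $v_j\ge 0$, $\sum_i2(v_j)_i=n$, and that $v_j\in BL(B)$. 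The last check is polynomial-time because $v_j$ has common denominator $2$: the matrix space $\caA_{B,v_j}$ of~\cite{Garg2018} then has dimensions $\poly(n)$ (the denominator being $2$ keeps the blow-up linear) with $\poly(n,p)$ generators whose entries have bit-length polynomial in that of $B$, so whether $\ncrank\caA_{B,v_j}$ is full --- equivalently whether $v_j\in BL(B)$ --- is decided in deterministic polynomial time by any of \cite{garg2016deterministic,Ivanyos2018,Hamada2021}. Soundness follows from convexity, completeness from the Carath\'eodory decomposition.

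\emph{Main obstacle.} The delicate part is the \textsc{coNP} side: producing a violating subspace whose bit-length is polynomial in $\log d$ and not merely in $d$ --- exactly the gap flagged in the introduction, since it is not a priori clear that every violated BL inequality is witnessed by a low-complexity subspace. The crucial observation is that a violating subspace only has to realize a dimension pattern determined by $B$, so its complexity can be decoupled from $x$; making this rigorous --- through the effective algebraic-geometry bound above, or through a structural analysis of the smallest shrunk subspace of $\caA_{B,x}$ afforded by Theorem~\ref{thm:round} --- is where the real work sits. Given that, the \textsc{NP} direction is comparatively routine, resting on the known half-integrality of the rank-$2$ BL polytope and the polynomial-time computability of $\ncrank$ on the reduction, which is polynomially sized once the denominator is fixed.
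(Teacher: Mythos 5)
Your \textsc{NP} direction is essentially the paper's own argument (decompose $x$ into half-integral vertices and check each one via \cite{Garg2018}, which is polynomial since the denominator is $2$), so no issue there. The \textsc{coNP} direction, however, has a genuine gap, and it sits exactly where you flag it. The certificate you want — a violating subspace $W$ whose bit length is polynomial in the size of $B$ and $\log d$ — is precisely the object whose existence the paper states is not known, and your justification does not supply it. It is true that $\{W \in \mathrm{Gr}(r,n): \dim(B_iW)\le r_i \ \forall i\}$ is a nonempty Zariski-closed set cut out by low-degree minors, but effective B\'ezout/Nullstellensatz-type results only guarantee an algebraic point whose degree over $\Q$ is bounded by the degree of (a component of) that variety; for determinantal/Schubert-type conditions imposed for $p$ different flags in a parameter space of dimension $r(n-r)=\Theta(n^2)$, this degree can be exponential in $n$ (generic Schubert intersection numbers are exponential), so ``an algebraic extension of polynomially bounded degree'' is not a standard consequence and is unsubstantiated — with extension degree $3^{\Theta(n^2)}$ even the minimal polynomial of a coordinate has exponential bit length. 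The fallback you sketch — take the minimum shrunk subspace of $\caA_{B,x}$ and invoke Theorem~\ref{thm:round} — only yields bit complexity polynomial in the dimension of $\caA_{B,x}$, which is polynomial in the common denominator $d$ (pseudopolynomial), and the proposed ``re-encoding with complexity decoupled from $d$'' is exactly the missing argument, not a proof. As written, the \textsc{coNP} half is therefore not established.

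The paper circumvents the subspace entirely. One proof uses half-integrality of $P(B)$ and the fact that facets have coefficients in $\{0,1,2\}$: the certificate is $p$ affinely independent half-integral points $x^{(i)}\in BL(B)$ lying on the violated facet-defining hyperplane $\alpha^Tx=\beta$; the verifier checks $x^{(i)}\in P(B)$ (denominator $2$) and that the perturbed centroid $y^*=\frac{1}{p}\sum_i x^{(i)}+\frac{1}{4p^2}\bfone$ is \emph{not} in $P(B)$ (denominator $4p^2$), both via \cite{Garg2018}; half-integrality then forces $\alpha^Tx\le\beta$ to be valid, while $x$ violates it. The other proof takes as certificate only the integer weight vector $w$ with $w_i=\dim(\row B_i\cap V)\in\{0,1,2\}$ (trivially of polynomial size) and verifies validity and violation by computing $\max\{w^Tx':x'\in P(B)\}$ exactly with the pseudopolynomial weighted Sinkhorn algorithm (Theorem~\ref{thm:weighted-sinkhorn-decision}), which is polynomial time here because $\norm{w}_1\le 2p$, and checking that this optimum is less than $w^Tx$. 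To repair your argument you would need one of these mechanisms (or a genuinely new bound on the bit complexity of violating subspaces, which would be a result of independent interest).
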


The idea of the first result is as follows.
First, we use the recent connection between a matrix space spanned by rank-2 skew-symmetric matrices and fractional linear matroid matching~\cite{Oki2022}.
They showed that the noncommutative rank of such matrix spaces is twice the size of the maximum fractional linear matroid matching.
We show that by applying our operator scaling algorithm to the matrix space, we can obtain a $(1-\eps)$-maximum fractional linear matroid matching.

Let us consider the second result.
It is easy to show that the membership in the rank-2 BL polytope is in \textsc{NP}.
Any $x \in BL(B)$ can be decomposed into a convex combination of half-integral vertices, so these vertices and convex coefficients form a certificate with polynomial bit complexity.
Note that the membership of a half-integral vector can be checked in polynomial time by the result of \cite{Garg2018} because the common denominator is a constant.
In contrast, a \textsc{coNP} certificate is more intricate. 
As mentioned above, a violating subspace may not have bounded bit complexity.
However, we can exploit half-integrality of $BL(B)$ (in the rank-2 case) and the fact that all facet-defining inequalities have coefficients in $\{0,1,2\}$ to provide a \textsc{coNP} certificate. The certificate for $x\notin BL(B)$ are $p$ affinely independent half-integral points $x^{(i)}$ ($i=1,\cdots,p$) in $BL(B)$ satisfying a valid inequality for $P(B)$ (and hence for $BL(B)$) at equality, with $x^*$ violating this inequality. To verify that the corresponding inequality is valid, the verifier needs to also check that a point close to the centroid of the $x^{(i)}$'s is not in $P(B)$, and this (and the membership of the $x^{(i)}$'s) can be checked efficiently using \cite{Garg2018} since the denominators are small. Details are given in a proof at the end of Section \ref{sec:scaling}. 

\subsubsection*{A pseudopolynomial weighted algorithm for rank-2 $P(B)$}
Another option for the \textsc{coNP} certificate is through a formulation of the weighted fractional linear matroid matching problem as another scaling problem similar to the one used for the shrunk subspace problem.
A natural extension of the Sinkhorn-style algorithm for the scaling problem (Algorithm \ref{alg:weighted-sinkhorn}) yields an algorithm that runs in time polynomial in the size of $B_i$ and the magnitude of weight:
\begin{theorem}\label{theorem:opt-pb} There is a deterministic algorithm to compute a point $x$ such that $w^T x\geq\max\{w^T y: y \in P(B)\} - \eps$ in time polynomial in the input size of $B_i$, $\norm{w}_1$, and $1/\eps$.
\end{theorem}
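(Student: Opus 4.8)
The plan is to recast the weighted problem as a weighted operator scaling instance, run a Sinkhorn-style iteration on it, and round the resulting near-optimal marginals to a vertex of $P(B)$.

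\emph{Reduction.} Since the running time may be pseudopolynomial in the weights, I would first replace $w$ by $\lceil (N/\norm{w}_\infty)\, w\rceil$ with $N = \poly(n, 1/\eps)$ and rescale; as $P(B) \subseteq [0, \poly(n)]^p$, this perturbs $\max_{y \in P(B)} w^T y$ (after rescaling back) by $O(\eps)$, so I may assume $w \in \Z_{\ge 0}^p$. Next I perturb $w$ by an $\eps/\poly(n)$-scale generic amount to $w'$ so that $\max_{y \in P(B)}(w')^T y$ is attained at a \emph{unique} vertex $v$; since $P(B)$ is half-integral~\cite{VandeVate1992,Gijswijt2013}, $v$ is half-integral, and it suffices to output $v$. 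Using the rank-$2$ skew-symmetric encoding of~\cite{Oki2022}, I pass to the explicit matrix space $\caA = \inprod{b_{i,1}b_{i,2}^\top - b_{i,2}b_{i,1}^\top : i \in [p]}$ built from the rows $b_{i,1}, b_{i,2}$ of the $B_i$, and fold in the weights by attaching multiplicity $w'_i$ (equivalently a scalar factor $e^{w'_i}$) to the $i$-th generator of the scaling objective, in the same spirit as the perturbed objective of Equation~\ref{eq:perturb-capacity}. The structural claim underlying everything is a weighted analogue of Oki's identity and of the shrunk-subspace duality: the value of this weighted scaling problem --- equivalently the optimum of the associated moment-polytope linear program --- is an explicit affine function of $(w')^T v$, and its optimal face encodes $v$.

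\emph{Iteration and rounding.} I would then run Algorithm~\ref{alg:weighted-sinkhorn}, which alternately normalizes the left and right marginals of the weighted tuple against the prescribed majorization targets, exactly as in the generalized operator scaling algorithm of Section~\ref{sec:scaling} but carrying the per-block weights $w'_i$. The capacity potential established earlier bounds the total movement of the iterates, so after $\poly(n, \norm{w}_1, 1/\eps)$ iterations --- each a numerically stable linear-algebra operation --- the scaled marginals lie within $\eps' = \eps/\poly(n)$ of the moment-polytope point encoding $v$. Reading off the per-block ``usages'' from these marginals yields a vector $\hat x$ with $\norm{\hat x - v}_\infty < 1/4$; since distinct half-integral vectors are at $\ell_\infty$-distance $\ge 1/2$, rounding $\hat x$ coordinatewise to the nearest half-integer recovers $v$ exactly. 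Finally I verify $v \in P(B)$ in polynomial time via the algorithm of~\cite{Garg2018} (the denominator of $v$ is $2$), and conclude, after a routine accounting of the discretization and perturbation errors, that $w^T v \ge \max_{y \in P(B)} w^T y - \eps$.

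\emph{Main obstacle.} The crux is the weighted structural identity in the reduction: showing that the value of the weighted scaling problem is the claimed affine image of $\max_{y \in P(B)} w^T y$, and --- more delicate --- that the block structure of the scaled marginals genuinely parametrizes points of $P(B)$, so that the extracted $\hat x$ sits near a vertex of $P(B)$ rather than near some weaker relaxation. This amounts to matching the majorization/marginal constraints of the generalized scaling problem to the defining inequalities of $P(B)$ with the weights threaded through, in effect redoing the LP-duality/submodularity argument behind Oki's theorem with multiplicities. The remaining ingredients --- an iteration count polynomial (not pseudo-exponential) in $\norm{w}_1$ and $1/\eps$, and the bit-complexity bookkeeping for the Sinkhorn steps --- should be routine given the capacity potential bounds already in hand, but must be re-derived with the weighted potential.
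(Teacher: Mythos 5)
Your proposal diverges from the paper's route and, as written, has a genuine gap at its core. The paper proves Theorem~\ref{theorem:opt-pb} without any vertex recovery: it threads the weights directly into a scaling objective $f_{w,k}$ (Equation~\ref{eq:fwk}), with one block $Y_i$ per matrix $B_i$ normalized by $\det Y_i = 1$ and the weight entering through the factor $e^{w_i z/n_i}$ coupled to the $-kz$ term. The two pillars are: (i) $\capa_{w,k}B > -\infty$ iff $\max\{w^Ty : y\in P(B)\}\ge k$, proved via Lemma~\ref{lem:bl-cap-ineq} together with the scaling characterization of $P(B)$ (Theorem~\ref{thm:bl-less}); and (ii) when \textsc{WeightedSinkhorn} terminates, the point $x_i = e^{-\delta}\nu_i/n_i$ read off the scaled marginals is \emph{feasible} (by Lemma~\ref{lem:proj-ineq} applied to the near-majorized marginals) and has $w^Tx \ge (1-\eps)k$ (Theorem~\ref{thm:bl-sinkhorn}); binary search over $k$ then gives the additive-$\eps$ guarantee in time polynomial in $\norm{w}_1$ and $1/\eps$ (Theorem~\ref{thm:weighted-sinkhorn-decision}). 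Your plan instead folds the weights into the rank-$2$ wedge encoding of~\cite{Oki2022} as scalar factors $e^{w_i'}$ on the generators and rests on a ``weighted analogue of Oki's identity'' asserting that the weighted scaling value is an affine function of $(w')^Tv$ and that the block marginals parametrize points of $P(B)$. You correctly identify this as the crux, but you do not prove it, and it is not a routine extension: multiplying a Kraus operator by a scalar is not the mechanism the paper uses to encode weights (it uses per-block determinant-one normalizations and the $e^{w_i z/n_i}$--$kz$ coupling), and establishing the correspondence is exactly the content of Lemma~\ref{lem:bl-cap-ineq} and Theorems~\ref{thm:bl-less} and~\ref{thm:bl-capacity}. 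So the heart of the argument is missing.

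The vertex-rounding superstructure also introduces a quantitative hole. To get $\norm{\hat x - v}_\infty < 1/4$ from an approximately optimal feasible point you need the perturbed weights $w'$ to have a unique optimal vertex with an optimality gap $\gamma$ over all other vertices that is at least inverse polynomial, and then an approximation error $\eps' \ll \gamma/\poly(n)$. A ``generic'' $\eps/\poly(n)$-scale perturbation does not deterministically yield such a gap bound: distinct half-integral vertices can have $w'$-values differing by as little as the perturbation's resolution (potentially exponentially small in $p$ for a deterministic lexicographic-style perturbation), and since the Sinkhorn iteration count scales like $1/\eps'^2$, this can blow up to exponential time. Moreover, the Sinkhorn potential argument only certifies approximate feasibility and approximate optimality of the read-off point, not $\ell_\infty$-proximity to ``the moment-polytope point encoding $v$'' as you claim. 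None of this machinery is needed for the stated additive-$\eps$ guarantee; if you drop the exact-vertex ambition and instead prove the weighted capacity characterization and the feasibility of the marginal read-off, you land on the paper's proof.
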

\noindent Theorem \ref{thm:weighted-sinkhorn-decision} gives the precise time complexity. Using this algorithm, one can define an appropriate \textsc{coNP} certificate that can be verified in polynomial time.

\subsubsection*{An $\eps$-membership algorithm for rank-2 $P(B)$}
We have mentioned the link between the BL polytope and the moment polytope. \cite{burgisser2019towards} gave a randomized algorithm to decide the $\eps$-membership problem for the moment polytope (to decide whether a point is $\eps$-far from the moment polytope or $\eps$-far from its complement) that runs in polynomial time in the input size and $1/\eps$. Their analysis showed that the dependence on the common denominator in \cite{Garg2018} is essentially not necessary. If one could show that the moment polytopes have a polynomial radius and inverse polynomial in-radius, a randomized algorithm for $\eps$-optimization in time polynomial in $1/\eps$ and the input size would follow by general reductions between membership, separation, and optimization \cite{lee2018efficient}.

However, no inverse polynomial lower bound is known for the in-radius of moment polytopes, and even if this were the case, the BL polytope is only a slice of the full moment polytope. Nonetheless, we directly (without using Theorem \ref{theorem:opt-pb}) show the following:
\begin{theorem}\label{thm:bl-membership}
There is a deterministic algorithm (Algorithm \ref{alg:bl-membership}) to decide the $\eps$-membership problem for rank-2 $P(B)$ in time polynomial in the input size and $1/\eps$.
\end{theorem}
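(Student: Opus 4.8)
The plan is to decide weak membership of a point $x$ in the rank-2 polytope $P(B)$ by running our generalized operator Sinkhorn iteration on an operator built directly from $B$ and $x$, and then thresholding on how far a potential function has decreased after a number of steps polynomial in $1/\eps$ and the input size. First I would set up the reduction. As in the separation algorithm of \cite{Garg2018}, but avoiding its dependence on the common denominator of $x$, one associates to $B=(B_1,\dots,B_p)$ and a weight vector $x\in\R^p_{\ge 0}$ a completely positive operator together with a target marginal spectrum whose entries are read off from the $x_i$ and the row spaces $\row B_i$; for the rank-2 case this is most naturally done through the rank-2 skew-symmetric encoding of \cite{Oki2022}. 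The content of the reduction, which I would establish via the moment-polytope description of feasibility of generalized operator scaling, is that, after the normalization that removes the affine constraint $\sum_i n_i x_i = n$ (e.g.\ by rescaling or by adding a slack coordinate), $x\in P(B)$ if and only if the associated generalized scaling instance is feasible, and moreover the subspace-indexed inequalities cutting out the generalized moment polytope are exactly the $2$-cover inequalities of $P(B)$. Crucially, since $x$ enters as a real majorization target rather than through duplicated blocks, the size of the instance does not depend on the denominators of $x$.

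Next I would run the Sinkhorn-style algorithm for the generalized operator scaling problem (the analogue of Algorithm~\ref{alg:approx-indep}), starting from the trivial scaling, for $N=\poly(\text{input size},1/\eps)$ iterations, tracking the accumulated decrease $\Phi_0-\Phi_N$ of the log-capacity potential. The decision rule is: declare $x\in P(B)$ if $\Phi_0-\Phi_N$ stays below a threshold $\tau=\poly(\text{input size},1/\eps)$, and $x\notin P(B)$ otherwise. Correctness rests on a quantitative dichotomy adapted from the analysis behind Theorems~\ref{thm:indep-correct} and \ref{thm:close}. If $x$ is $\eps$-deep inside $P(B)$, the scaling instance is feasible with margin $\eps$, and I would show the capacity is bounded below by $\exp(-\poly(\text{size},1/\eps))$; since the Sinkhorn potential is monotone nonincreasing and bounded below by $-\log(\mathrm{cap})$, this caps the total decrease $\Phi_0-\Phi_N$ below $\tau$, and we correctly answer ``inside.'' If instead $x$ is $\eps$-far from $P(B)$, then some $2$-cover is violated by margin $\Omega(\eps)$; translating the corresponding subspace into the scaling language, it forces the potential to decrease by $\Omega(\poly(\eps))$ on average over the iterations, so within $N$ steps $\Phi_0-\Phi_N>\tau$ and we correctly answer ``outside.'' Choosing $N$ and $\tau$ to separate these two polynomial-in-$(\eps,\text{size})$ bounds gives a correct weak-membership decision. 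Note we never output a violating subspace, which sidesteps the bit-complexity blow-up discussed above — we only need to detect that the potential has dropped.

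The main obstacle is precisely this quantitative dichotomy: converting a geometric margin $\eps$ in the polytope $P(B)$ into (i) an explicit lower bound on the capacity in the deep-inside case, and (ii) an explicit per-iteration potential-decrease rate in the far-outside case. Both are stability statements for the generalized moment polytope in the spirit of Theorem~\ref{thm:close}, but now with respect to perturbations of the majorization target $x$ rather than perturbations of the operator; here I would lean on the half-integrality of $P(B)$ and the fact that all its facet coefficients lie in $\{0,1,2\}$ (cf.\ \cite{Gijswijt2013}) to keep the relevant gaps inverse-polynomial in $1/\eps$, and on the bit-complexity bounds of Theorem~\ref{thm:round} to control the subspaces involved in (ii). A secondary, more routine point is the bookkeeping around the normalization that passes between $BL(B)$ and its down-monotone form $P(B)$, together with verifying that all intermediate iterates of the Sinkhorn process remain of polynomial bit size, which follows because scaling preserves boundedness of the operator.
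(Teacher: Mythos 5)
There is a genuine gap, and it sits exactly where you flag "the main obstacle": both halves of your quantitative dichotomy are deferred, and the mechanism you propose for the far-outside case is not the one that actually works. An alternating-minimization/Sinkhorn iteration only guarantees a per-step decrease of the potential when the current residuals (the KL-type errors $\eps_X,\eps_Y$) are large; a violated $2$-cover with margin $\Omega(\eps)$ does not by itself "force the potential to decrease by $\Omega(\poly(\eps))$ on average." What must be ruled out is a plateau: iterates with small residuals while $x$ is $\eps$-far. The paper's Algorithm \ref{alg:bl-membership} handles this not by thresholding the total decrease but by the soundness of its termination test: whenever the residual is small, Lemma \ref{lem:kl-scale} lets one rescale the current iterate into a point $y=e^{-\eps}x$ that is \emph{certified} to lie in $P(B)$ and is within $\eps$ of $x$, which directly contradicts $\eps$-farness; hence every iteration before step $T$ makes progress $\delta^2/2$ and non-termination correctly certifies "outside." Your cited tools for this step (Theorems \ref{thm:close} and \ref{thm:round}, half-integrality of the facets) are not what is needed here and play no role in the paper's argument; without an "approximate stationarity $\Rightarrow$ nearby feasible point" lemma your outside branch does not go through. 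For the inside branch, the paper does not need $x$ to be $\eps$-deep at all: it lower-bounds $\capa_x B$ uniformly over $P(B)$ by bounding it at the (small-denominator) vertices via the reduction in Theorem \ref{thm:bl-less} and then using concavity of $\capa_x B$ in $x$; your proposed $\exp(-\poly(1/\eps))$ bound for $\eps$-deep points is plausible but unproven and unnecessary.

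A second, structural issue is the reduction itself. The paper does not pass through the skew-symmetric encoding of \cite{Oki2022} or through the majorization-scaling framework with $x$ as a target spectrum; it runs alternating minimization directly on the objective of Theorem \ref{thm:bl-less}, in which $x$ enters as the weights $x_i\log\det Y_i$: the $Y_i$-update is a generalized KL projection onto the \emph{point} $\oplus_i x_i\bfone_{n_i}$ (enforcing the per-block condition $\tilde B_i\tilde B_i^\dagger\propto I_{n_i}$), while the $X$-update projects onto $\{\mu\le\bfone_n\}$. Your claim that $x\in P(B)$ is equivalent to feasibility of a majorization-scaling instance whose targets are "read off from the $x_i$," with the subspace inequalities of that instance coinciding with the $2$-cover inequalities, is not established, and the majorization constraints on the global marginal spectrum do not obviously capture these per-block identity conditions. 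So both the setup and the two quantitative claims your decision rule rests on would need to be supplied before the argument is complete.
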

The algorithm is yet another variant of the Sinkhorn-style algorithm described in \cite{Garg2018}. As with \cite{burgisser2019towards}, there is no dependence on the common denominator for the point in question. As $P(B)$ contains the origin and the simplex, and hence has polynomial in-radius, this result and \cite{lee2018efficient} imply a different algorithm, albeit a much slower randomized one, with guarantees matching Theorem \ref{theorem:opt-pb}. Without Theorem \ref{theorem:opt-pb}, Theorem \ref{thm:bl-membership} with only implies membership in $P(B)$ is in \textsc{coMA} due to the randomization in \cite{lee2018efficient}.

\subsection{Related work}
Matrix scaling is a classic problem commonly appearing in various fields such as optimization, statistics, economics, and physics.
See the comprehensive survey of Idel~\cite{Idel2016} and references therein.
The Sinkhorn algorithm is the standard algorithm for matrix scaling.
Although the Sinkhorn algorithm converges to an approximate doubly stochastic matrix if a given matrix admits such scaling, it is nontrivial what happens if it diverges.
\cite{Gietl2013} showed that the Sinkhorn algorithm has only two accumulation points by information geometry with the Kullback-Leibler divergence (KL-divergence).
\cite{Aas2014a} showed that the accumulation points exhibit a block diagonal structure which can be characterized combinatorially.
Recently, Hayashi and Hirai~\cite{Hayashi2022} proved that one can obtain a Hall blocker (dual certificate showing that a given matrix has no scaling) by the Sinkhorn iteration.

Their analysis also used information geometry with the KL-divergence.
However, it is unknown whether a similar result holds in operator scaling.
A major obstacle is that the underlying geometry of operator scaling is unknown.
The operator Sinkhorn algorithm is not an alternating minimization of the KL-divergence or other well-known divergences in quantum information theory~\cite{Matsuda2022}. We discuss in Section \ref{sec:warm-up} how, when specialized to matrix scaling case, our algorithm results in a quite different method to find a Hall blocker from the Sinkhorn iteration.

\paragraph{Organization of the paper.}
The paper is organized as follows.
In Section~\ref{sec:pre}, we introduce necessary concepts and definitions.
Section~\ref{sec:highlevel} describes our algorithms at a high level first for the simpler matrix scaling setting and then for the operator scaling setting.
In Section~\ref{sec:maj}, we introduce majorized operator scaling and our Sinkhorn-style algorithm.
Section~\ref{sec:decsion} discusses how to find an approximate shrunk subspace with Sinkhorn iteration and Section~\ref{sec:round} shows how to round it to the smallest $c$-shrunk subspace.
In Section~\ref{sec:bl}, we present applications of our algorithms to fractional linear matroid matching and the membership problem in the BL polytope.

\section{Definitions and preliminaries}\label{sec:pre}

\paragraph{Notation.}
Throughout the paper, $p$ will denote the number of Kraus operators.
Note that if the Kraus operators are $m \times n$ matrices then we can assume without loss of generality that $p \leq m n$. An upper bound on any entry of any of the Kraus operators is denoted by $M$.
For a positive integer $i$, we define $[i] := \{1, \dots, i\}$.
For a vector $a$, the Euclidian norm is denoted by $\norm{a}$ and the $\ell^p$ norm is by $\norm{a}_p$.
The all-one vector of length $n$ is denoted by $\bfone_n$.
For a complex matrix $A$, the Hermitian conjugate, transpose, and complex conjugate of $A$ is denoted by $A^\dagger$, $A^T$, and $\overline{A}$, respectively.
The Frobenius norm of $A$ is denoted by $\norm{A}_F$ and the operator norm (i.e., the largest singular value) is by $\norm{A}_1$.
We denote the orthogonal projection onto a subspace $U$ by $\pi_U$.
For a Hermitian matrix $A$, we denote by $\lambda(A)$ the vector whose entries are the eigenvalues of $A$ in nonincreasing order (unless stated otherwise).
The matrix square root and matrix exponential of $A$ are denoted by $\sqrt{X}$ (or $X^{1/2}$) and $e^A$, respectively.

\subsection{Matrix spaces and shrunk subspaces}

Let $\caA$ be a matrix space spanned by $A_1, \dots, A_p \in \C^{m\times n}$.
We present a few equivalent definitions of the noncommutative rank of $\caA$. The first three are easily seen to be equivalent; the fourth is non-trivial. Only the fourth hints at the reason for the name - the matrices $X_i$ can be thought of as non-commuting indeterminates.

\begin{definition}[noncommutative rank]
The \emph{noncommutative rank} of a matrix space $\caA = \langle A_1, \dots, A_p \rangle \subseteq \C^{m\times n}$ (denoted by $\ncrank\caA$) is equivalently defined as follows.
\begin{enumerate}
\item (Fortin and Reutenauer~\cite{Fortin2004}) Let $0_{s,t}$ denote an $s\times t$ zero matrix.
    \begin{align}\label{eq:FR}
    \ncrank\caA = \min\left\{m+n-s-t : P, Q \in GL(n),  PA_iQ = \begin{bmatrix}
        * & * \\
        * & 0_{s,t}
    \end{bmatrix}
    \, (i=1, \dots, p)
    \right\}
    \end{align}

\item (Hirai~and~Hamada~\cite{Hamada2021})\label{it:mvsp}
    \begin{align}
    \label{eq:MVSP}
        \ncrank\caA = \min\left\{m + n-\dim L-\dim R : \text{$L, R$ subspaces, $A_i(L, R) = \{0\}$ for all $i$}\right\},
    \end{align}
    where $A_i(L, R) = \{x^\dagger A_i y: x \in L, y \in R\}$.
    The minimization problem in~\eqref{eq:MVSP}, called the \emph{minimum vanishing subspace problem}, is submodular function minimization on the product of the lattice of all vector subspaces and its order-reversed lattice.
    Namely, if $(L, R)$ and $(L', R')$ both attain the minimum, so do their join $(L + L', R \cap R')$ and meet $(L \cap L', R + R')$.

\item (Ivanyos~et~al.~\cite{Ivanyos2015}, Garg~et~al.~\cite{Garg2019})
    \begin{align}
    \label{eq:shrunk}
        \ncrank\caA = \min\left\{n- \dim U + \dim \caA(U): \text{$U$ subspace of $\C^n$} \right\},
    \end{align}
    where $\caA(U) = \langle Au : A \in \caA, u \in U \rangle$.
    A subspace $U$ is called a \emph{$c$-shrunk subspace} of $\caA$ if $\dim U - \dim\caA(U) \geq c$.
    The right-hand side of \eqref{eq:shrunk} equals $\min\{n - c:  \text{$\exists$ $c$-shrunk subspace of $\caA$}\}$.

\item (Ivanyos, Qiao, and Subrahamnyam~\cite{Ivanyos2018})
\begin{align}
    \ncrank\caA = \sup_{d=1}^\infty \max\left\{\frac{1}{d} \rk\left(\sum_{i=1}^p A_i \otimes X_i \right) : X_i \in \C^{d\times d}\right\}.
    \label{eq:IQS}
\end{align}
The minimum $d$ that attains the supremum is called the \emph{minimum blow-up size} of $\caA$. It is known that the minimum blow-up size $d$ of any matrix subspace is at most $n-1$~\cite{Derksen2017}.
\end{enumerate}
\end{definition}

Feasible solutions of \eqref{eq:FR}, \eqref{eq:MVSP}, and \eqref{eq:shrunk} can be converted to each other.
Given a feasible solution $(P, Q)$ of \eqref{eq:FR}, let $L$ and $R$ be the subspaces spanned by the last $s$ columns and $t$ rows of $P$ and $Q$, respectively. Then, $(L, R)$ is a feasible solution of \eqref{eq:MVSP} with the same objective value.
If $(L,R)$ is a feasible solution of \eqref{eq:MVSP}, then $\caA(L) \subseteq R^\perp$, so any feasible solution of \eqref{eq:MVSP} gives that of \eqref{eq:shrunk} without increasing the objective value.
Finally, given a subspace $U$, one can make an $(n-\dim\caA(U)) \times \dim U$ zero block by an appropriate basis change.
Thus, the optimal value of \eqref{eq:shrunk} is at least that of \eqref{eq:FR}.

We focus in particular on Item \ref{it:mvsp}, the minimum vanishing subspace problem. In analogy with bipartite matching, if a $c$-shrunk subspace $U$ is analogous to a violation of Hall's condition with maximum defect, then the pair $((\mathcal A U)^\perp, U)$ is analogous to a maximum independent set. We give such sets a name.

\begin{definition}[Independent set]
A pair of subspaces $L\subseteq \C^n$ and $R\subseteq \C^m$ is said to be \emph{independent} if $L \subseteq (\mathcal A R)^\perp$, or equivalently if for all $v \in L, w \in R$ we have $$w^\dagger A_i v  =0.$$
The \emph{size} of an independent set is defined as $\dim L + \dim R$, and the maximum size of an independent set is equal to $m + n - \ncrank\caA$. %

The lattice of maximum independent sets has a unique maximum element $(L^*,R^*)$, which we call the \emph{dominant independent set}, so that $R^*$ is the smallest $c$-shrunk subspace. Let $\ell^* = \dim L^*, r^* = \dim R^*$, and $k^* = \ncrank\caA$ so that $\ell^* + r^* = n + m - k^*$.
\end{definition}

\subsection{Operator scaling}
For a tuple $A_1, \dots, A_p$ of matrices in $\C^{m \times n}$, we can associate the \emph{completely positive map} (CP map for short) $\Phi: \C^{n\times n} \to \C^{m \times m}$ given by
$$\Phi: X \mapsto \sum_{i = 1}^p A_i X A_i^\dagger$$
and $A_i, \dots, A_p$ are called the \emph{Kraus operators} for $\Phi$. The adjoint map $\Phi^* : \C^{m\times m} \to \C^{n \times n}$ is given by $\Phi^*: X \mapsto \sum_{i = 1}^p A_i^\dagger X A_i.$ For matrices $B \in \C^{m\times m}$ and $C \in \C^{n \times n}$, the operator $\Phi_{B,C}$ with Kraus operators $B A_1 C^\dagger, \dots, B A_p C^\dagger$ is called a \emph{scaling} of $\Phi$ by $B,C$. For $n= m$, the operator $\Phi$ is said to be $\eps$-\emph{doubly stochastic} if
\[
    \|\Phi(I_n) - I_n\|_F, \| \Phi^*(I_n) - I_n \|_F \leq \eps.
\]
We say that $\Phi$ is \emph{scalable to doubly stochastic} if $\Phi$ has an $\eps$-doubly stochastic scaling for every $\eps > 0$. There is a variational characterization of scalability through the following optimization problem over positive-definite matrices.
Define the \emph{capacity}\footnote{Remark that the original definition of the capacity by \cite{Gurvits2004} is $\capa'\Phi = \inf_{X \succ O}\frac{\det\Phi(X)}{\det X}$. One can check that $\capa\Phi = m + \log\capa'\Phi$.} of $\Phi$ as
\begin{align}
\capa \Phi = \inf_{X \succ 0, Y \succ 0} \tr \Phi (X^{-1}) Y^{-1} + \log \det X + \log \det Y.\label{eq:gurv}
\end{align}

Gurvits' theorem, which was proven independently in the context of quiver representations by King \cite{king1994moduli}, states the following.
\begin{theorem}[Gurvits~\cite{Gurvits2004}]\label{thm:gurv} For a CP map $\Phi: \C^{n\times n} \to \C^{n \times n}$, the following are equivalent:
\begin{enumerate}
\item $\ncrank\caA = n$, where $\caA$ is the matrix space spanned by the Kraus operators of $\Phi$.
\item $\Phi$ is scalable to doubly stochastic.
\item $\capa \Phi > - \infty.$
\end{enumerate}
\end{theorem}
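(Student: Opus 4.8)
The plan is to prove the cycle of implications (1)$\Rightarrow$(3)$\Rightarrow$(2)$\Rightarrow$(1), using the variational form of the capacity in \eqref{eq:gurv}. First I would record the basic invariance: scaling $\Phi$ by invertible $B,C$ changes $\capa\Phi$ by an additive constant $2\log\abs{\det B} + 2\log\abs{\det C}$, and does not change $\ncrank\caA$ since the span of $\{BA_iC^\dagger\}$ is $B\caA C^\dagger$. This lets us normalize freely. I would also observe that the infimum in \eqref{eq:gurv} is attained (or approached) along a minimizing sequence, and that at a stationary point $X,Y$ the first-order conditions read $\Phi(X^{-1})^{-1} = Y$ (up to the log-det gradient $-Y^{-1}$, i.e. $\tr$-gradient matching) and symmetrically for $\Phi^*$; writing these out shows that a minimizer of \eqref{eq:gurv} with finite value yields, after a change of variables $X \mapsto X^{1/2}$, $Y\mapsto Y^{1/2}$, a scaling of $\Phi$ that is exactly doubly stochastic, giving (3)$\Rightarrow$(2). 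Conversely, plugging $X = Y = I_n$ into \eqref{eq:gurv} shows $\capa\Phi \le \tr\Phi(I_n)$, and if $\Phi$ is scalable to doubly stochastic then by the invariance above $\capa\Phi = \capa\Phi_{B,C} + \text{const}$ with $\capa\Phi_{B,C}$ bounded below (the scaled map has $\tr\Phi_{B,C}(I) \to n$), so $\capa\Phi > -\infty$; this handles (2)$\Rightarrow$(3), modulo a limiting argument since "scalable" only gives $\eps$-doubly stochastic for each $\eps$.

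The substantive implication is (1)$\Rightarrow$(3): if $\caA$ has no shrunk subspace (equivalently $\ncrank\caA = n$) then $\capa\Phi > -\infty$. I would argue the contrapositive: suppose $\capa\Phi = -\infty$, so there is a sequence $X_t, Y_t \succ 0$ driving the objective to $-\infty$. After normalizing $\det X_t = \det Y_t = 1$ (using that the objective is invariant under $X_t \mapsto sX_t$, $Y_t \mapsto s^{-1}Y_t$ up to... actually under the joint rescaling the log-det terms cancel and the trace term scales, so one normalizes more carefully: rescale so that $\tr\Phi(X_t^{-1})Y_t^{-1}$ stays bounded), the only way the objective goes to $-\infty$ is that $\log\det X_t + \log\det Y_t \to -\infty$ while $\tr\Phi(X_t^{-1})Y_t^{-1}$ stays bounded. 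Passing to the eigen-decompositions, some eigenvalues of $X_t$ or $Y_t$ must blow up or vanish; the bounded-trace condition then forces $\Phi$ to map the span $U$ of the exploding eigenvectors of $X_t^{-1}$ into a proper subspace of $\C^n$ (otherwise the trace term would blow up), and a dimension count shows $\dim \caA(U) < \dim U$, i.e. $U$ is a shrunk subspace, contradicting (1). The main obstacle is making this compactness/degeneration argument precise — controlling which eigenvalues degenerate and at what rate, and extracting a \emph{rational} (integer-dimensional) subspace $U$ in the limit. The cleanest route is to diagonalize $X_t = \sum_j e^{-t\alpha_j^{(t)}} v_j v_j^\dagger$ in an exponential parametrization, pass to a subsequence so the ordering of the $\alpha_j$ and the limiting eigenvectors converge, and then read off $U$ as the span of eigenvectors whose exponents tend to $+\infty$ (and symmetrically a subspace downstairs from $Y_t$); the bounded-trace condition becomes the statement $v^\dagger A_i^\dagger w = 0$ for $v \in U$ and $w$ in the corresponding subspace, which is exactly independence of a pair witnessing a shrunk subspace.

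I would also note the alternative, slicker argument for (1)$\Rightarrow$(3) via the substitution $X = g^\dagger g$, $Y = h^\dagger h$ with $g, h \in GL_n$: then \eqref{eq:gurv} becomes (up to constants) $\inf_{g,h} \tr\big(\sum_i (h A_i g^{-1})(h A_i g^{-1})^\dagger\big) - 2\log\abs{\det g} - 2\log\abs{\det h}$ after relabeling, which is (a version of) the Kempf--Ness function for the left-right action of $\SL_n \times \SL_n$; its boundedness below is equivalent to the tuple not lying in the null-cone, which by the Hilbert--Mumford criterion is equivalent to the nonexistence of a destabilizing one-parameter subgroup, i.e. of a shrunk subspace. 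However, since invariant-theoretic machinery is not developed in the excerpt, I expect the self-contained degeneration argument above is the intended proof, and that is the one I would write out in full.
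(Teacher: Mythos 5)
Your plan, as written, does not close the announced cycle: you argue $(1)\Rightarrow(3)$, $(3)\Rightarrow(2)$ and $(2)\Rightarrow(3)$, but never the return implication $(2)\Rightarrow(1)$ (equivalently $(3)\Rightarrow(1)$), i.e.\ that the existence of a shrunk subspace forces $\capa\Phi=-\infty$ and non-scalability. Without it you have only established $(1)\Rightarrow(3)\Leftrightarrow(2)$, not the three-way equivalence. The missing piece is the easy direction and should simply be stated: if $\dim\caA(U)<\dim U$, take $X=e^{-t}\pi_U+\pi_{U^\perp}$ and $Y=e^{t}\pi_{\caA(U)}+\pi_{\caA(U)^\perp}$; since $\Phi(\pi_U)$ is supported on $\caA(U)$, the trace term in \eqref{eq:gurv} stays bounded by $2\tr\Phi(I_n)$ while $\log\det X+\log\det Y=t\,(\dim\caA(U)-\dim U)\to-\infty$, so $\capa\Phi=-\infty$ (and then $(2)\Rightarrow(3)$ gives non-scalability).

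Two of the steps you do argue are also weaker than claimed. For $(3)\Rightarrow(2)$ you pass to an exact minimizer and read off an exactly doubly stochastic scaling from the stationarity conditions; but when $\capa\Phi>-\infty$ the infimum is in general not attained and no exact doubly stochastic scaling exists (already in the matrix case: a matrix whose support has a perfect matching but not total support is only approximately scalable). The correct route is the one the theorem's statement anticipates: produce $\eps$-doubly stochastic scalings for every $\eps$, e.g.\ via the alternating-minimization progress bound (finite capacity plus a per-iteration decrease controlled by the distance of the marginals from the identity), which is exactly how this paper handles the general case in Lemma \ref{lem:prog-bound} and Theorem \ref{thm:maj-sinkhorn}. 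Similarly, in $(2)\Rightarrow(3)$ the assertion that $\capa\Phi_{B,C}$ is bounded below because $\tr\Phi_{B,C}(I_n)\approx n$ is not a justification: any operator, including one with a shrunk subspace, can be normalized to have trace $n$. What is needed is the genuine (standard) lemma that an $\eps$-doubly stochastic operator has capacity bounded below by a quantity depending only on $n$ and $\eps$, applied to one fixed $\eps$-scaling so that the additive constant $2\log\abs{\det B}+2\log\abs{\det C}$ is a single finite number. Your degeneration argument for $(1)\Rightarrow(3)$ is the right idea (it is essentially the Hilbert--Mumford/one-parameter-subgroup argument), but as you note it is where the real technical work lies; be aware also that the paper itself never reproves Theorem \ref{thm:gurv} directly --- it cites Gurvits and recovers the statement as the $k=n$, $r=0$ special case of Theorem \ref{thm:k-gurv}, whose equivalences follow from Theorems \ref{thm:indep-scaling} and \ref{thm:maj-scal} --- so an alternative correct write-up is simply to specialize that machinery.
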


\cite{Garg2019} showed that the capacity can be computed efficiently by a simple alternating minimization algorithm, which shows that deciding whether the noncommutative rank is full can be done in polynomial time.

\begin{remark}[Geodesic convexity] The objective function in the above optimization problem is not jointly convex in $X$ and $Y$, nor is it known to become convex under any simple change of variables. However, it is \emph{geodesically convex}, i.e. convex along the curves of the form $\sqrt{X} e^{H_1 t} \sqrt{X}, \sqrt{Y} e^{H_2 t} \sqrt{Y}$ through $(X,Y)$. These curves are the geodesics for a certain Riemannian metric on the manifold of positive definite matrices \cite{bhatia2009positive}. For fixed $Y$, the function is (Euclidean) convex in $X$ under the change of variables $X \gets X^{-1}$ or $X \gets e^X$. The domains $X \succ 0$ and $X \succeq I_n$ are both Euclidean convex and geodesically convex. All the objective functions we consider will have both joint geodesic convexity and Euclidean convexity in each matrix considered alone. We will not need to make use of geodesic convexity until Section \ref{sec:bl}.
\end{remark}

Lastly, we describe the interaction between independent sets and CP maps. One verifies that $(L,R)$ is an independent set if and only if
\begin{align} \tr \pi_L \Phi ( \pi_R) = 0. \end{align}
Furthermore, we can think of $(L^\perp, R^\perp)$ as a ``cover'' for $\Phi$. In particular, by expanding $I_n = \pi_R + \pi_{R^\perp}$ and $I_m = \pi_{L} + \pi_{L^\perp}$ the above identity implies the following inequality:
\begin{align}\tr \pi_{R^\perp} \Phi^*(I_m)  + \tr \pi_{L^\perp} \Phi(I_n) \geq \tr \Phi(I_n).\label{eq:cover}\end{align}

\section{Algorithm and analysis}\label{sec:highlevel}
Here we describe our algorithm at a high level.
Before explaining it for shrunk subspace, we describe our ideas in a simpler case of matrix scaling.
Then, we will see how these ideas generalize to the shrunk subspace problem.

\subsection{Warm-up: Finding Hall blockers in matrix scaling}\label{sec:warm-up}

Let $A \in \R^{m \times n}$ be a nonnegative matrix and $G = ([m], [n]; E)$ be the support graph of $A$.
That is, $G$ is the bipartite graph with the vertex sets $[m], [n]$ and the edge set $E = \{ij: A_{ij}\neq 0\}$.
Our goal is to find a Hall blocker in $G$, i.e., a column subset $S \subseteq [n]$ such that $\abs{S} - \abs{\Gamma(S)} = n - k^*$, where $\Gamma(S)$ is the set of neighbors of $S$ and $k^*$ is the size of a maximum matching in $G$.
Note that this is equivalent to finding an independent set $(L, R)$ of $G$ such that $\abs{L} + \abs{R} \geq m + n - k^*$.

The classical Sinkhorn-Knopp theorem characterizes the scalability to a doubly stochastic matrix with the existence of a perfect matching in the support graph.
We start with a generalization of the Sinkhorn-Knopp theorem to rectangular matrices and maximum matching.
\begin{definition}[Substochastic scalability of matrices]
We say a nonnegative matrix $A \in \R^{m\times n}$ is \emph{doubly substochastic} if
\[
A\bfone_n \leq \bfone_m \text{ and } A^T \bfone_m \leq \bfone_n,
\]
where $\leq$ denotes the usual element-wise order.
We refer to $\sum_{i,j} A_{ij}$ as the \emph{size} of $A$.
We say that $A$ is \emph{$k$-scalable} if for every $\eps > 0$, there is a scaling of $A$ which is doubly substochastic and has size at least $k - \eps$.
\end{definition}

Similar to the scalability to doubly stochastic matrices, one can characterize the $k$-scalability by a convex optimization problem.
Let
\begin{align}
f_k(x,y,z)& := \sum_{i, j} A_{ij} e^{z - y_i - x_j} +  \sum_{j=1}^n x_j+ \sum_{i=1}^m y_i - kz\label{eq:simple-capacity-matrix}
\end{align}
and
\begin{align}
\caD:= \{ (x,y,z): x \geq \bfzero_n, y \geq \bfzero_m, z \geq 0\} \subseteq \R^{n} \times \R^{m} \times \R. \label{eq:domain-matrix}
\end{align}
We define the $k$-capacity of $A$ by $\capa_k(A) := \inf_{(x,y,z)\in \caD}f_k(x,y,z)$.
By convex analysis, one can prove the following theorem.
We do not provide the proof here; we will prove a generalization of this theorem for operator scaling in Theorem~\ref{thm:k-gurv}.

\begin{theorem}[Matrix scaling version of Theorem~\ref{thm:k-gurv}]
For a nonnegative matrix $A \in \R^{m\times n}$, the following are equivalent:
\begin{enumerate}
\item There exists an independent set $(L, R)$ of $G$ with $\abs{L} + \abs{R} \geq m + n - k$.
\item $A$ is $k$-scalable.
\item $\capa_k A > - \infty.$
\end{enumerate}
\end{theorem}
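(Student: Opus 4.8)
The plan is to prove the three-way equivalence by a cycle of implications, using convex duality (LP-type duality / Fenchel duality) as the main engine, exactly mirroring what one would later do for the operator-scaling generalization in Theorem~\ref{thm:k-gurv}. The three statements are: (1) existence of a large independent set $(L,R)$ in $G$; (2) $k$-scalability of $A$; (3) $\capa_k A > -\infty$. I would show $(2) \Leftrightarrow (3)$ directly from the structure of the optimization problem, and then $(3) \Leftrightarrow (1)$ by analyzing when the infimum of $f_k$ over the cone $\caD$ is finite.

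\textbf{Step 1: $(2)\Leftrightarrow(3)$.} For fixed $z$, write $u_j = e^{-x_j} \le 1$ and $w_i = e^{-y_i} \le 1$ (since $x,y\ge\bfzero$), and $t = e^z \ge 1$. The scaled matrix is $A'_{ij} = u_j w_i A_{ij}$ times the global factor, and $f_k$ becomes, after this substitution, essentially $\sum_{ij} A'_{ij} - \sum_j \log u_j - \sum_i \log w_i - k\log t$ where the scaling we track is $t^{-1}$ times something; more cleanly, one shows that $\capa_k A > -\infty$ iff one cannot drive $f_k \to -\infty$, and the only way $f_k$ can go to $-\infty$ is if the ``mass'' terms $\sum A_{ij}e^{z-y_i-x_j}$ stay bounded while the linear terms $\sum x_j + \sum y_i - kz \to -\infty$. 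Conversely, a near-optimal point of $f_k$ gives, via the stationarity (KKT) conditions on the interior, a scaling whose row/column sums are $\le 1$ (substochasticity, from the constraints $x,y,z\ge 0$ being the ones that bind) and whose total size is $\ge k - \eps$ (from near-optimality in the $z$ direction). This is the routine ``exponential-coordinates'' computation and I would only sketch it; the gradient conditions $\partial f_k/\partial x_j \ge 0$, $\partial f_k/\partial y_i \ge 0$, $\partial f_k/\partial z \le 0$ with complementary slackness give exactly doubly-substochastic-with-size-$\ge k$ in the limit.

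\textbf{Step 2: $(1)\Rightarrow(3)$, i.e. a large independent set lower-bounds $\capa_k$.} Given independent $(L,R)$ with $|L|+|R| \ge m+n-k$, in the bipartite-graph setting $L\subseteq[m]$, $R\subseteq[n]$ are vertex sets with no edges of $G$ between $R$ and $L$; equivalently every edge $ij\in E$ has $i\notin L$ or $j\notin R$. I would plug in the test points $x = s\cdot\bfone_{R}$ (i.e. $x_j = s$ for $j\in R$, else $0$), $y = s\cdot\bfone_{L}$, $z = s$ for $s\to\infty$. Every surviving mass term $A_{ij}e^{z-y_i-x_j}$ has exponent $z - y_i - x_j = s - s[i\in L] - s[j\in R] \le 0$ precisely because $ij\in E$ forces $[i\in L]+[j\in R]\le 1$; hence the mass term stays bounded (by $\sum_{ij}A_{ij}$). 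Meanwhile $\sum_j x_j + \sum_i y_i - kz = s(|R| + |L| - k) \ge s(m+n-2k)$ — wait, that is not obviously bounded below. The right choice instead is to also use the complementary masses: one wants $\inf f_k > -\infty$, so it suffices to show $f_k$ is bounded below on $\caD$. I would argue this by weak duality: exhibit a feasible dual solution. The dual of the convex program $\inf_{\caD} f_k$ is a maximization over doubly-substochastic matrices $B$ supported on $E$ of the entropy-type functional, subject to total mass $= k$; the independent set $(L,R)$ with $|L|+|R|=m+n-k$ certifies (via König/Hall-type counting: the maximum size of such a $B$ is $\le$ size of max matching, and an independent set of size $m+n-k$ forces max matching $\le k$) that no doubly substochastic $B$ on $E$ has size $>k$, which is exactly the primal boundedness. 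I would make the duality precise using Fenchel/Lagrangian duality for the exponential objective over the polyhedral cone $\caD$.

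\textbf{Step 3: $(3)\Rightarrow(1)$, the hard direction.} This is the main obstacle and the crux of the whole paper's technique. Assume $\capa_k A > -\infty$; I must produce an independent set of size $\ge m+n-k$. The natural route: if $\capa_k A > -\infty$ then the infimum of $f_k$ over the cone $\caD$ is finite, so by convex duality it equals the optimal value of the dual, which is attained by some doubly substochastic $B$ supported on $E$ of size exactly $k$ (a maximum-size such matrix). By the generalized Birkhoff / König theorem for substochastic matrices, a maximum doubly-substochastic matrix supported on $E$ has size equal to the maximum matching $k^*$ of $G$, so $k \le k^*$... but we want the reverse comparison, $k^* \le k$, to get an independent set of size $\ge m+n-k$. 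Hmm — the correct reading is that $\capa_k A > -\infty$ is a statement that $k$ is \emph{not too large}: specifically it should force $k \le k^*$ is wrong too. Let me reconsider: for $k \le k^*$ the matrix is trivially $k$-scalable (scale toward a doubly-substochastic matrix of size $k^*\ge k$), so $(2)$ and $(3)$ hold; the content is that $(3)$ \emph{fails} for $k > k^*$, equivalently holds only for $k\le k^*$, and then an independent set of size $m+n-k^* \ge m+n-k$ exists by König's theorem. So the real statement to prove in Step 3 is the contrapositive: \textbf{if $k > k^*$ then $\capa_k A = -\infty$.} For this I take a maximum matching of size $k^*$ and its König cover, i.e. an independent set $(L,R)$ with $|L|+|R| = m+n-k^*$, and plug in $x = s\bfone_R, y = s\bfone_L, z = s$: as computed above every mass term's exponent is $\le 0$ so that sum is $O(1)$, while the linear part is $s(|L|+|R|-k) = s(m+n-k^*-k)$. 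This tends to $-\infty$ as $s\to\infty$ precisely when $m+n-k^*-k < 0$, i.e. $k > m+n-k^*$ — not quite $k>k^*$. The fix is to not shift the whole cover but only part of it, or to use a fractional/optimal-transport test vector; the clean approach is Fenchel duality with the dual optimum being a max-size doubly substochastic matrix, whose size is $k^*$ by the rectangular König theorem, giving $\capa_k A = -\infty \iff k > k^*$ directly. Thus the real work in Step 3 — and the step I expect to be genuinely delicate, and which the paper defers to Theorem~\ref{thm:k-gurv} — is setting up the convex duality rigorously (verifying Slater's condition or using a closedness argument so that strong duality and dual attainment hold over the noncompact cone $\caD$), and invoking the König-type min-max theorem for maximum doubly substochastic submatrices to identify the dual optimal value with $m+n-k^*$.

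I would organize the final writeup around this convex-duality core and note that all three steps lift verbatim to operator scaling by replacing nonnegative matrices with CP maps, exponentials $e^{x_j}$ with matrix exponentials $e^{X}$, the cone $\caD$ with $\{X\succeq I, Y\succeq I, z\ge 0\}$, and König's theorem with the statement $\max\text{-independent-set size} = m+n-\ncrank\caA$ from the preliminaries — which is exactly what Theorem~\ref{thm:k-gurv} will do.
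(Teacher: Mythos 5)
Your plan --- prove the matrix case directly by Fenchel/LP duality plus K\"onig, rather than deducing it from the operator machinery --- is a legitimately different route (the paper gives no matrix-level proof at all: it defers to Theorem~\ref{thm:k-gurv}, which is itself deduced from Theorems~\ref{thm:indep-scaling} and~\ref{thm:maj-scal}). But as written the argument swaps the roles of the two dual combinatorial objects, and this breaks Steps~2 and~3. Writing $k^*$ for the maximum matching, the correct dictionary is: $\capa_k A>-\infty$ iff $k\le k^*$; the certificate of \emph{finiteness} is dual feasibility in the Fenchel dual you set up, namely a doubly substochastic $B$ supported on $E$ with total mass $\ge k$ (e.g.\ $(k/k^*)$ times the indicator of a maximum matching), and weak duality alone then bounds $f_k$ below; the certificate of $\capa_k A=-\infty$ is a (fractional) vertex \emph{cover} of size $<k$, used as a recession direction. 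Your Step~2 argues boundedness from ``no doubly substochastic $B$ on $E$ has size $>k$,'' which is the wrong object: nonexistence of large fractional matchings lives on the unboundedness side. Correspondingly, your test vectors put weight $s$ on the independent set $(L,R)$; for an edge $ij$, independence gives $[i\in L]+[j\in R]\le 1$, so the exponent $z-y_i-x_j=s(1-[i\in L]-[j\in R])$ is $\ge 0$ and the mass terms blow up rather than stay bounded, contrary to what you assert. Pushing instead along the \emph{complementary cover} (weight $s$ on a minimum cover, $z=s$) makes every exponent $\le 0$ with linear part $s(k^*-k)$, so $\capa_kA=-\infty$ exactly when $k>k^*$; the ``not quite $k>k^*$'' mismatch in your Step~3 is purely an artifact of the swap, and once it is fixed you never need strong duality or dual attainment over the noncompact cone $\caD$ --- yet in your plan that unproved attainment is the load-bearing step, which is the other genuine gap.

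A second point you notice but never resolve is the direction of condition~(1). As printed, ``there exists an independent set of size $\ge m+n-k$'' is equivalent to $k\ge k^*$, whereas (2) and (3) are equivalent to $k\le k^*$; the reading consistent with Theorem~\ref{thm:k-gurv}(1), with the matrix version of Theorem~\ref{thm:indep-scaling}, and with the rounding lemma that follows in Section~\ref{sec:warm-up} is that \emph{no} independent set has size exceeding $m+n-k$, i.e.\ the maximum matching of $G$ is at least $k$. Your write-up oscillates between the two readings (e.g.\ ``an independent set of size $m+n-k^*\ge m+n-k$ exists'' presumes $k\ge k^*$, the opposite of the hypothesis $k\le k^*$ you had just adopted), so it is never pinned down which implications are actually being proven. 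Once (1) is fixed to the ``no large independent set'' form, your skeleton closes into a short correct proof: (2) forces $k\le k^*$ since substochastic scalings keep support in $E$; $k\le k^*$ gives (3) by weak duality with a size-$k$ fractional matching; (3) gives $k\le k^*$ by the cover recession direction; and $k\le k^*\Rightarrow(2)$ is the one genuinely nontrivial scaling construction, which you only gesture at via KKT and which is exactly what the paper's Theorem~\ref{thm:maj-scal} (via \textsc{MajSinkhorn}) supplies in general.
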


One can check the finiteness of $\capa_k A$ and find a $k$-scaling (if it exists) with a variant of the Sinkhorn algorithm.
In each iteration, we first scale up the whole matrix so that its size is at least $k$.
Then, if the scaled matrix violates the column sum constraint (resp. the row sum constraint), we scale down columns (resp. rows) to satisfy the constraint.
We will repeat this until we find a $k$-scaling of size at least $(1-\eps)k$.
Indeed, this is equivalent to applying alternating minimization of $f_k(x,y,z)$.
One can show that if $A$ is $k$-scalable, then this algorithm finds a desired $k$-scaling in polynomially many iterations.
To find the cardinality $k^*$ of the maximum matching, we can simply do a binary search on $k$.

\subsubsection*{Finding a Hall blocker with Sinkhorn iteration}
If $A$ is not $k$-scalable, then the algorithm diverges even for $k'=k -1/2$.
Suppose that we have $(x,y,z) \in \caD$ such that $f_{k'}(x, y, z)$ is very small.
Can we find an independent set of size greater than $m + n - k$ from $(x, y, z)$?
Indeed, one can obtain such an independent set by a simple rounding method.
\begin{lemma}
Suppose that $(x,y,z)\in\caD$ satisfies $f_{k'}(x, y, z) \leq -C$ for some sufficiently large $C > 0$ and $k':=k-1/2$.
Then, there exists an independent set of size greater than $m+n-k$ and one can efficiently find it from $(x, y, z)$.
\end{lemma}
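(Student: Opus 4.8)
The plan is to read off from $(x,y,z)$ a cheap \emph{fractional} vertex cover of the support graph $G=([m],[n];E)$ and then round it to an integral one using the integrality of the bipartite vertex-cover polytope; the complement of that cover will be the independent set we want. If $E=\emptyset$ then $A=0$ and $(L,R)=([m],[n])$ already works, so assume $E\neq\emptyset$ and set $a:=\min\{A_{ij}:A_{ij}\neq 0\}>0$. Split the objective as $f_{k'}(x,y,z)=S_{\exp}+S_{\mathrm{lin}}$ with $S_{\exp}:=\sum_{ij}A_{ij}e^{z-y_i-x_j}\ge 0$ and $S_{\mathrm{lin}}:=\sum_j x_j+\sum_i y_i-k'z$.

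The first step is to convert $f_{k'}(x,y,z)\le -C$ into a pointwise inequality along edges. Since $S_{\exp}\ge 0$ we get $S_{\mathrm{lin}}\le -C$, i.e.\ $\sum_j x_j+\sum_i y_i\le k'z-C$; combined with $x,y\ge\bfzero$ this forces $z\ge C/k'$ and also gives $S_{\mathrm{lin}}\ge -k'z$, whence $S_{\exp}=f_{k'}-S_{\mathrm{lin}}\le k'z$. Therefore, for \emph{every} edge $ij\in E$, the single term $a\,e^{z-y_i-x_j}\le A_{ij}e^{z-y_i-x_j}\le S_{\exp}\le k'z$, so
$$x_j+y_i\ \ge\ z-\log\tfrac{k'z}{a}\ =:\ 2\tau\qquad(\forall\, ij\in E).$$
Here is where ``$C$ sufficiently large'' is used: for $C$ at least a fixed polynomial in the bit-size of $A$ and $k$ the inequality $z\ge C/k'$ guarantees $\tau>0$.

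Next I would assign weight $y_i/(2\tau)$ to each row vertex $i\in[m]$ and $x_j/(2\tau)$ to each column vertex $j\in[n]$. By the display these weights are nonnegative and sum to at least $1$ across every edge, so they form a feasible fractional vertex cover of $G$, of total weight $\frac{\sum_j x_j+\sum_i y_i}{2\tau}\le\frac{k'z-C}{z-\log(k'z/a)}$. A short computation shows this is $<k$ once $C$ exceeds a fixed polynomial in the input size: the claim rearranges to $k\log(k'z/a)<(k-k')z+C=\tfrac z2+C$ --- this is precisely where the slack $k-k'=\tfrac12$ is spent, to dominate the logarithmic correction --- and the map $z\mapsto\tfrac z2+C-k\log(k'z/a)$ is minimized over $z>0$ at $z=2k$, where it equals $k+C-k\log(2kk'/a)>0$. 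By total unimodularity of the bipartite vertex-cover LP (König's theorem), the minimum \emph{integral} vertex cover $W$ of $G$ then has $\abs{W}$ equal to the minimum fractional value, hence $\abs{W}<k$, i.e.\ $\abs{W}\le k-1$; and $W$ is computable in polynomial time via bipartite maximum matching. Finally, $L:=[m]\setminus W$ and $R:=[n]\setminus W$ is an independent set (no edge of $G$ joins $L$ and $R$, since $W$ covers all edges) with $\abs L+\abs R=m+n-\abs W\ge m+n-k+1>m+n-k$.

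The one real obstacle is extracting the correct constant from the threshold $\tau$. The naive rounding $L=\{i:y_i\ge\tau\}$, $R=\{j:x_j\ge\tau\}$ is already a valid vertex cover (an uncovered edge would have $x_j+y_i<2\tau$, contradicting the display), but Markov's inequality only bounds its size by $\bigl(\sum_j x_j+\sum_i y_i\bigr)/\tau\approx 2k'$, which is far too weak. One has to use the full inequality $x_j+y_i\ge 2\tau$ rather than only $\max(x_j,y_i)\ge\tau$ --- which is exactly what a fractional cover of value $\approx k'<k$ records --- and then invoke integrality of the bipartite vertex-cover polytope to turn it into an \emph{integral} cover of size $\le k-1$ without losing the factor of two.
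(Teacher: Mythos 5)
Your proof is correct, but the rounding step takes a genuinely different route from the paper. Both arguments share the first phase: use $f_{k'}(x,y,z)\le -C$ to get $\sum_j x_j+\sum_i y_i\le k'z-C$, $z\ge C/k'$, and a per-edge lower bound $x_j+y_i\ge z-\log(\text{poly}/A_{ij})$, and then package $(x,y)$ (suitably normalized) as a fractional vertex cover of the support graph of value less than $k$, with the slack $k-k'=\tfrac12$ absorbing the logarithmic correction. (Your normalization by $2\tau=z-\log(k'z/a)$ is in fact a little cleaner than the paper's $x/z,\ y/z$, which only satisfies the edge constraints up to an error and needs an extra renormalization.) Where you diverge is the rounding: you invoke total unimodularity of the bipartite vertex-cover LP (K\H{o}nig) to pass from the fractional cover of value $<k$ to an integral minimum cover $W$ with $\abs{W}\le k-1$, computed via maximum bipartite matching, and take complements. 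The paper instead sorts $\tilde x,\tilde y$ and averages $\tilde x_j+\tilde y_i$ over the antidiagonal $i+j=k+1$ to find a pair below $1$, which by monotonicity yields the suffix sets $L=\{i,\dots,m\}$, $R=\{j,\dots,n\}$ directly as the independent set --- no matching computation and no appeal to LP integrality. Your route is more standard and even recovers an exact maximum independent set of $G$, but note that once you have certified that the fractional cover has value $<k$ you essentially discard $(x,y,z)$ and solve the combinatorial problem from scratch; the paper's threshold-on-sorted-coordinates rounding is deliberately chosen because it is the step that survives the passage to operator scaling, where the ``cover'' lives on eigenvalues and there is no K\H{o}nig theorem to fall back on (this is exactly the staircase argument reused in the analysis of \textsc{ApproxIndep}). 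Two small housekeeping points: your final deduction $\abs{W}\le k-1$ uses that $k$ is an integer, which the paper also assumes (otherwise replace $k$ by $\floor{k}$), and your closing remark that threshold rounding is ``too weak'' applies only to a fixed threshold $\tau$; the paper's antidiagonal selection is itself a threshold rounding, just with the cut chosen by averaging.
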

\begin{proof}
Let us first find a \emph{fractional cover} of size at most $k$. Recall that a fractional cover of $G$ is a pair $\tilde x, \tilde y \in \R^n_{\geq 0}$ such that $\tilde x_i + \tilde y_j \geq 1$ for $i,j \in G$ and its size is $\sum_i \tilde x_i + \sum_j \tilde y_j$; the minimum size of a fractional cover is dual to the maximum size of a perfect matching.

Let $S:=k'z - \sum_j x_j - \sum_i y_i$.
Then $S$ is at least $C$ because the remaining term in $f_{k'}$ is always positive.
Furthermore, $-x_i - y_j + z \leq \ln(S/A_{ij})$ for any $ij \in E$.
Set $\tilde{x} = x/z, \tilde{y} = y/z$.
Then $\tilde{x}_j + \tilde{y}_i \geq 1 - \ln(S / A_{ij})/z \geq 1 - k' \ln (S/A_{ij})/S$, and $\sum_j \tilde{x}_j + \sum_i \tilde{y}_i \leq k' - S/z < k'$. By making $C,$ and hence $S$, large enough and renormalizing $\tilde x, \tilde y$ slightly, we can further convert this to a fractional cover of value less than $k$.

Once we obtain a fractional cover $(\tilde x, \tilde y)$ of size less than $k$, we can round it to an independent set of size greater than $m+n-k$ as follows.
For simplicity, assume that $k$ is an integer (even if $k$ is not an integer, a similar argument works with $\lfloor k \rfloor$ instead of $k$).
Sort the rows and columns so that $\tilde x_1 \geq \dots \geq \tilde x_n$ and $\tilde y_1 \geq \dots \geq \tilde y_m$.
Consider the set $I$ of indices $(i, j) \in [m] \times [n]$ such that $i+j = k + 1$.
Note that $\abs{I} = k$.
Then we have
$\frac{1}{k} \sum_{(i,j) \in I} (\tilde x_j + \tilde y_i) \leq \frac{1}{k} (\sum_j \tilde x_j + \sum_i \tilde y_i) < 1$.
Therefore, there exists $(i, j) \in I$ such that $\tilde x_j + \tilde y_i < 1$.
Since $\tilde x_j, \tilde y_i$ are non-increasing, we have $\tilde x_{j'} + \tilde y_{i'} < 1$ for $i' \geq i$, $j' \geq j$.
Because $(\tilde x, \tilde y)$ is a fractional cover, this implies that $L =\{i,\dots, m\}$, $R = \{j,\dots, n\}$ form an independent set.
The size of $(L,R)$ is $(m - i + 1) + (n - j + 1) = m + n - k + 1$.
\end{proof}

Therefore, running the variant of the Sinkhorn algorithm for $k = k^* + 1$, we obtain a maximum independent set of size $m+n-k^*$ witnessing that the size of maximum matching is $k^*$.

\subsubsection*{Finding the smallest Hall blocker with perturbed capacity}
Now let us consider a slightly more refined problem.
Can we find the \emph{smallest} minimizer $S^*$ of the surplus function $\abs{\Gamma(S)} - \abs{S}$?
Note that the surplus function is submodular and hence the minimizers form a distributive lattice, so the smallest minimizer exists.
Again, this is equivalent to finding the independent set $(L^*, R^*)$ such that $\abs{L^*}+\abs{R^*} = m + n - k^*$ and $\abs{R^*}$ is smallest.

By the Dulmage-Mendelsohn decomposition, $R^*$ is the set of vertices exposed (i.e., not covered) by some maximum matching (see, e.g., \cite[Section~2.2.3]{Murota2009book}).
Therefore, there exists a $k^*$-scaling of $A$ such that the $i$th column sum is at most $1 - 1/n$ for $i \in R^*$ and at most $1$ elsewhere.
For a fixed candidate of $R^*$, we can check if there is such a scaling by matrix scaling with specified marginals.
The obvious problem here is that the number of candidates is exponential.

We can address this issue with a perturbed version of $k$-capacity.
Let $r^*$ be the size of $R^*$.
We say $(k, r)$ is \emph{violated} by an independent set $(L,R)$ either if $\abs{L}+\abs{R} > m + n - k$ or if $\abs{L}+\abs{R} = m + n - k$ and $\abs{R} > r$.
The key observation here is that one can also check whether $(k,r)$ is violated by $(L^*, R^*)$ or not by a similar scaling problem as above.
Let
\begin{align}\label{eq:perturb-capacity-matrix}
f_{k,r}(x,y,z)& := \sum_{i, j} A_{ij} e^{z - y_i - x_j} +  \sum_{j=1}^r x^\downarrow_j + \sum_{j=r+1}^ m \left(1 - \frac{1}{n} \right) x^\downarrow_j + \sum_{i=1}^m y_i - kz,
\end{align}
where $x^\downarrow_1 \geq \dots \geq x^\downarrow_n$ denotes the entries of $x$ in a non-increasing order.
Since $f_{k,r}(x,y,z) = f_k(x,y,z) - \frac{1}{n}\sum_{j=r+1}^m x^\downarrow_j$, the new function $f_{k,r}$ can be considered as a ``perturbed'' objective function.
Note that $f_{k,r}$ is still a convex function.
Let $\capa_{k,r} A := \inf_{(x,y,z) \in \caD} f_{k,r}(x,y,z)$.
Again, one can prove that $\capa_{k,r}$ gives the desired characterization of tuple $(k,r)$.

\begin{theorem}[Matrix scaling version of Theorem~\ref{thm:indep-scaling}]\label{thm:indep-scaling-matrix}
For a non-negative matrix $A \in \R^{m\times n}$ and $0 \leq k,r \leq n$ integers, the following are equivalent:
\begin{enumerate}
\item $(k,r) \leq (k^*,r^*)$ in lexicographic order, i.e. there is no independent set violating $(k,r)$.
\item $\capa_{k,r} A > -\infty$.
\end{enumerate}
\end{theorem}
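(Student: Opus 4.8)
The plan is to convert the finiteness of $\capa_{k,r}A=\inf_{\caD}f_{k,r}$ into a statement about (perturbed) fractional covers of $G$ by an asymptotic/recession-cone analysis, and then to pass between fractional covers and integral independent sets by threshold rounding, carefully tracking the effect of the perturbation. First, $f_{k,r}$ is convex and $\caD$ is a closed convex cone that equals its own recession cone, so whether $\inf_{\caD}f_{k,r}>-\infty$ is governed by the recession function of $f_{k,r}$ on $\caD$: the exponential part $\sum_{ij}A_{ij}e^{z-y_i-x_j}$ is nonnegative, with recession function $0$ along a direction $(x,y,z)\in\caD$ satisfying $z\le x_j+y_i$ for all $ij\in E$ and $+\infty$ otherwise, while the remaining piece $\ell(x,y,z)=\sum_{j=1}^{r}x^\downarrow_j+(1-\tfrac1n)\sum_{j=r+1}^{n}x^\downarrow_j+\sum_i y_i-kz$ is positively homogeneous. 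Normalizing the coordinate $z$ to $\{0,1\}$ (the slice $z=0$ is harmless since every term is then nonnegative), this reduces the claim to
\[
\capa_{k,r}A>-\infty \iff g_r(x,y):=\sum_j x_j+\sum_i y_i-\tfrac1n\sum_{j=r+1}^{n}x^\downarrow_j\ \ge\ k\quad\text{for every fractional cover }(x,y)\text{ of }G .
\]
The one subtlety here, ruling out a sub-linear escape to $-\infty$ along a recession direction on which $\ell$ vanishes, is handled exactly as in the proof of Theorem~\ref{thm:k-gurv}; alternatively one may bypass this step entirely by observing that Theorem~\ref{thm:indep-scaling-matrix} is the specialization of Theorem~\ref{thm:indep-scaling} to diagonal scalings, i.e.\ to the coordinate matrix space $\caA=\langle E_{ij}:ij\in E\rangle$.

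Next I would establish the rounding identity $\inf_{(x,y)\ \text{frac.\ cover}} g_r(x,y)=\min_{(L,R)\ \text{indep.}} \Psi(L,R)$, where $\Psi(L,R)$ is the value produced by threshold rounding. On one side, every independent set $(L,R)$ yields the $\{0,1\}$-valued fractional cover placing $1$ on the columns off $R$ and the rows off $L$, for which $g_r$ equals a quantity $\Psi(L,R)$ of the form $m+n-\abs L-\abs R$ corrected by $\tfrac1n$ times a term depending only on $\abs R$ and $r$. On the other side, truncating any fractional cover coordinatewise at $1$ keeps it a cover and does not increase $g_r$ (each variable enters $g_r$ with effective coefficient at least $1-\tfrac1n>0$, using that the sum of the $n-r$ smallest entries is concave and nondecreasing in $x$), so we may assume $(x,y)\in[0,1]^n\times[0,1]^m$; then for $\theta\in[0,1]$ the pair $R_\theta=\{j:x_j<\theta\}$, $L_\theta=\{i:y_i<1-\theta\}$ is an independent set, and the layer-cake formula gives $\sum_j x_j=n-\int_0^1\abs{R_\theta}\,d\theta$, $\sum_i y_i=m-\int_0^1\abs{L_\theta}\,d\theta$, and $\sum_{j=r+1}^n x^\downarrow_j=\int_0^1\big(\text{the corresponding correction in }\abs{R_\theta}\big)\,d\theta$, whence $g_r(x,y)=\int_0^1\Psi(L_\theta,R_\theta)\,d\theta\ge\min_\theta\Psi(L_\theta,R_\theta)$. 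Combining the two inequalities proves the identity, after which it remains only to check that $\Psi(L,R)<k$ holds precisely when $(L,R)$ violates $(k,r)$, which follows by comparing the integer $\abs L+\abs R\le m+n-k^*$ against the fractional correction term, which is at most $1$.

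The main obstacle is this last bookkeeping step: one must verify that the $\tfrac1n$-correction is exactly the tie-breaker which, once $\abs L+\abs R$ has been pinned to its maximum value $m+n-k^*$, detects whether the $R$-part of the independent set has the extremal size $r^*=\dim R^*$ — equivalently, whether the independent set can fail to be the dominant one. This is where the lattice structure of maximum independent sets must be invoked: concretely, in the matrix-scaling case the Dulmage--Mendelsohn decomposition identifies $R^*$ with the canonical (smallest Hall-blocker) member of that lattice, which is precisely what forces $\min_{(L,R)}\Psi(L,R)=k$ if and only if $(k,r)\le(k^*,r^*)$ in lexicographic order. Once this is established the equivalence $1\Leftrightarrow 2$ follows, and (as used by the algorithm) one then recovers $(k^*,r^*)$ by binary search and exhibits a violating independent set by running the Sinkhorn-style algorithm with $k=k^*+1$, or with $k=k^*$ and $r=r^*+1$.
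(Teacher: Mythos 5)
Your easy direction and the rounding identity are fine: a violating independent set gives the $\{0,1\}$-valued cover direction along which $f_{k,r}$ tends to $-\infty$, and the truncation-plus-threshold (layer-cake) computation does show that the infimum of $g_r$ over fractional covers equals the minimum of $\Psi$ over independent sets. The genuine gap is at the hard implication, ``no violating independent set $\Rightarrow \capa_{k,r}A>-\infty$''. Nonnegativity of the recession function on $\caD$ does not by itself bound a convex function below (sub-linear escape is possible in general, e.g.\ $-\sqrt{x}$ on $x\ge 0$), and the tight case --- every fractional cover has $g_r\ge k$ with equality attained --- is exactly where all the work lies. You defer this to ``the proof of Theorem~\ref{thm:k-gurv}'', but the paper contains no such proof: Theorem~\ref{thm:k-gurv} is itself only deduced from Theorems~\ref{thm:indep-scaling} and~\ref{thm:maj-scal}, whose proofs run through the triangular-scaling capacity results (Theorem~\ref{thm:tri}) together with the decomposition Lemma~\ref{lem:ncrank-decomp}, i.e.\ by exhibiting an explicit $(\alpha_{r^*},\bfone_m)$-majorized $k^*$-scaling whose existence forces finiteness of the capacity. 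Your fallback --- specialize Theorem~\ref{thm:indep-scaling} to $\caA=\langle E_{ij}: ij\in E\rangle$ --- is not a proof as written either: it makes the rest of your argument moot, and it still needs the bridge that subspace independent sets of this coordinate space yield the same $(k^*,r^*)$ as the combinatorial ones (e.g.\ that the dominant subspace pair can be taken coordinate), which you do not supply. The missing claim is in fact true and provable within your framework (for an exponential-plus-piecewise-linear-convex objective over a polyhedral cone, a nonnegative recession function does imply boundedness below, via a Farkas/minimax certificate that linearizes the piecewise-linear part and dominates the linear term by the exponents), but you neither state nor prove such a lemma, and it is the analytic heart of the direction you need.

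A secondary issue is the final bookkeeping. With the perturbation as written in \eqref{eq:perturb-capacity-matrix}, your $\Psi(L,R)=m+n-\abs{L}-\abs{R}-\tfrac1n\max(0,\,n-r-\abs{R})$, so at sizes $\abs{L}+\abs{R}=m+n-k$ the tie-breaker detects $\abs{R}<n-r$ (equivalently, in the convention $\alpha_r$ used everywhere else in the paper, $\dim R<r$); once the conventions are aligned, the equivalence ``$\min\Psi\ge k$ iff no violating set iff $(k,r)\le(k^*,r^*)$'' is immediate from the definition of the dominant independent set and needs no Dulmage--Mendelsohn input. The Dulmage--Mendelsohn decomposition (operator version: Lemma~\ref{lem:ncrank-decomp}) is what the paper uses for a different purpose --- constructing the scaling that certifies finiteness in precisely the direction you left unproven --- so invoking it as the missing tie-breaker misplaces the difficulty.
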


We can tweak the Sinkhorn algorithm to check the finiteness of $\capa_{k,r} A$ (and even find the desired scaling if it exists) in polynomial time.
So we can find $r^*$ by binary search.
Furthermore, by a similar rounding argument as above, given $(x,y,z) \in \caD$ with sufficiently small $f_{k,r}(x,y,z)$, it is possible to find an independent set $(L, R)$ violating $(k, r)$.
So, applying it for $k = k^*$ and $r = r^* + 1$, one can find the desired smallest minimizer $R^*$.

\subsubsection*{Majorization scaling of matrices}
The above scaling problem---finding a $k$-scaling with the column sums at most $1 - 1/n$ for $r^*$ columns and at most $1$ elsewhere---can be cast as a more general scaling problem, which we call \emph{majorization scaling}.

\begin{definition}[Majorization scalability of matrices]
For a vector $x \in \R^n$, let $x^\downarrow \in \R^n$ denote $x$ sorted in non-increasing order. The vector $x \in \R^n$ is \emph{weakly majorized} by $y \in \R^n$, denoted by $x \preceq y$, if
$$ \sum_{i = 1}^s x^\downarrow_i \leq \sum_{i = 1}^s y_i^\downarrow $$ for each $s \in [n]$.
For non-increasing vectors $\alpha \in \R^n$ and $\beta \in \R^m$, we say $A$ is $(\alpha,\beta)$-\emph{majorized} if
\[
    A\bfone_n \preceq \alpha, \quad A^T \bfone_m \preceq \beta.
\]
We say $A$ is $k$-\emph{scalable} to $(\alpha, \beta)$ if for every $\eps > 0$, there is a scaling of $A$ which is $(\alpha,\beta)$-majorized and has size at least $k - \eps$.
\end{definition}

Interestingly, the majorization scalability is also characterized by convex optimization.
Define
\begin{align}
f_k^{\alpha,\beta}(x,y,z)& := \sum_{i, j} A_{ij} e^{z - y_i - x_j} +  \sum_{j=1}^n \alpha_j x^\downarrow_j + \sum_{i=1}^m \beta_j y^\downarrow_i - kz,
\end{align}
and define the $(\alpha, \beta)$-capacity of $A$ by $\capa_k^{\alpha,\beta}:= \inf_{(x,y,z)\in\caD} f_{k}^{\alpha,\beta}(x,y,z)$.
Note that the above $(k,r)$-capacity is the special case
$$\alpha = (\underbrace{1,\dots, 1}_{\text{$(n-r)$ times}}, \underbrace{1-1/n, \dots, 1-1/n}_{\text{$r$ times}}), \quad \beta = \bfone_m.$$
Again by convex optimization, we can show the following theorem.

\begin{theorem}[Matrix scaling version of Theorem~\ref{thm:maj-scal}]\label{thm:maj-scal-matrix}
For a non-negative matrix $A \in \R^{m\times n}$ and $0 \leq k,r \leq n$ integers, the following are equivalent:
\begin{enumerate}
\item there exists a $k$-scaling to $(\alpha, \beta)$
\item $\capa_{k}^{\alpha,\beta} A > -\infty$.
\end{enumerate}
Furthermore, if there exists a $k$-scaling to $(\alpha, \beta)$, we can find it with a variant of the Sinkhorn algorithm.
\end{theorem}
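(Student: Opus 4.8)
The plan is to prove the equivalence $(1)\Leftrightarrow(2)$ by convex duality applied to $f_k^{\alpha,\beta}$ over $\caD$, and the ``furthermore'' by analyzing the natural alternating-minimization scheme. (Alternatively, the whole statement is the special case of the operator-scaling Theorem~\ref{thm:maj-scal} in which the Kraus operators are the rescaled matrix units $\sqrt{A_{ij}}\,E_{ij}$: then the associated CP map acts on diagonal matrices by $d\mapsto Ad$ and operator scaling by diagonal matrices is ordinary matrix scaling, so everything specializes; but it is instructive to argue directly.)

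\emph{The equivalence.} Introduce the weak-majorization polytope $P_\alpha:=\{a\ge\bfzero_n:a\preceq\alpha\}$, which is compact and convex, and note that $\sum_j\alpha_j x_j^\downarrow=\max_{a\in P_\alpha}\langle a,x\rangle$ for $x\ge\bfzero_n$ (the rearrangement inequality, together with an Abel-summation check that $a\preceq\alpha$ and $a\ge\bfzero$ force $\langle a,x\rangle\le\langle\alpha^\downarrow,x^\downarrow\rangle$ for $x\ge\bfzero$), and likewise for $\beta,y$. Thus the integrand of $f_k^{\alpha,\beta}$ equals $\max_{(a,b)\in P_\alpha\times P_\beta}\big[\sum_{i,j}A_{ij}e^{z-y_i-x_j}+\langle a,x\rangle+\langle b,y\rangle-kz\big]$, which is convex in $(x,y,z)\in\caD$ and affine in $(a,b)\in P_\alpha\times P_\beta$; since $P_\alpha\times P_\beta$ is compact, Sion's minimax theorem gives
\[
\capa_k^{\alpha,\beta}A=\max_{(a,b)\in P_\alpha\times P_\beta}\phi(a,b),\qquad
\phi(a,b):=\inf_{(x,y,z)\in\caD}\Big[\textstyle\sum_{i,j}A_{ij}e^{z-y_i-x_j}+\langle a,x\rangle+\langle b,y\rangle-kz\Big].
\]
For fixed $(a,b)$, $\phi(a,b)$ is an ordinary matrix capacity of $A$ with column-sum bound $a$, row-sum bound $b$, and size target $k$: the Matrix-scaling version of Theorem~\ref{thm:k-gurv} is the uniform case $a=\bfone_n,b=\bfone_m$, and the same convex-analytic argument (now tracking the bounds, each of which is at most $\|\alpha\|_\infty,\|\beta\|_\infty\le 1$ in the case of interest) shows in general that $\phi(a,b)>-\infty$ iff for every $\eps>0$ the matrix $A$ has a scaling $A'$ with $A'\bfone_n\le b$, $(A')^T\bfone_m\le a$ and size at least $k-\eps$; by LP duality for bipartite graphs, this is the same as saying the minimum $(a,b)$-weighted vertex cover of the support graph $G$ of $A$ is at least $k$. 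Hence $\capa_k^{\alpha,\beta}A>-\infty$ iff some $(a,b)\in P_\alpha\times P_\beta$ admits such a scaling. Finally, using that ``$u\le a$ entrywise for some $a\in P_\alpha$'' is equivalent to ``$u\preceq\alpha$'' (sorting preserves the entrywise order), letting $(a,b)$ range over $P_\alpha\times P_\beta$ turns this into the statement that $A$ has scalings, of size arbitrarily close to $k$, whose row and column sums are weakly majorized by $\beta$ and $\alpha$ --- i.e., that $A$ is $k$-scalable to $(\alpha,\beta)$.

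\emph{The algorithm.} Alternating minimization of $f_k^{\alpha,\beta}$ over the blocks $z$, $x$, $y$ is the Sinkhorn-style algorithm. The $z$-step rescales the whole matrix to have size $k$ when its current size is $\le k$ and leaves it unchanged otherwise (closed form). The $x$-step is $\min_{x\ge\bfzero_n}\big(\sum_j c_j e^{-x_j}+\sum_j\alpha_j x_j^\downarrow\big)$, with $c_j$ the current $j$-th column sum; sorting the columns so $c_1\ge\cdots\ge c_n$, one checks that an optimal $x$ is sorted in the same order, reducing this to a monotone separable convex program $\min\{\sum_j(c_{\sigma(j)}e^{-u_j}+\alpha_j u_j):u_1\ge\cdots\ge u_n\ge0\}$, solvable exactly in polynomial time (e.g.\ by pool-adjacent-violators), and the $y$-step is symmetric. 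The objective is nonincreasing along the iteration, and when $\capa_k^{\alpha,\beta}A>-\infty$ the iterates remain in a compact region, so a subsequence converges to a minimizer, the first-order optimality conditions at which say precisely that the scaled matrix is $(\alpha,\beta)$-majorized with size $\ge k$. Quantitatively, the standard potential-drop analysis --- identical in form to the one behind Theorem~\ref{thm:maj-scal} --- shows a polynomial number of iterations yields a scaling that is $(\alpha,\beta)$-majorized up to $\eps$ and of size at least $k-\eps$; in particular this proves $(2)\Rightarrow(1)$ constructively.

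\emph{The main obstacle.} I expect the crux to be the middle of the equivalence: establishing the marginal-bounded matrix-scaling theorem for $\phi(a,b)$ with non-uniform bounds and gluing it to bipartite LP duality via the majorization polytopes. One must cope with the standard subtlety that finiteness of a log-sum-exp-plus-linear capacity is \emph{not} equivalent to a single scaling serving as a certificate --- a scaling of size only $k-\eps$ gives a lower bound on $f_k^{\alpha,\beta}$ that degrades linearly in the variable $z$ --- so the minimax-and-compactness route above (or the equivalent careful recession/normalization argument of standard scaling theory) is genuinely needed; and one must invoke the classical Sinkhorn convergence statement, with the target marginals taken slightly inside $P_\alpha\times P_\beta$, to know that honest diagonal scalings of $A$ attain the LP value. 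This is routine but is the one ingredient not literally contained in the preceding matrix-scaling statements, and in any case the general operator version is proved in Theorem~\ref{thm:maj-scal}.
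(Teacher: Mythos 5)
The paper never proves this matrix statement separately: it is stated as a special case of the operator Theorem~\ref{thm:maj-scal}, whose proof couples the \textsc{MajSinkhorn} potential-drop analysis (Theorem~\ref{thm:maj-sinkhorn} together with Lemma~\ref{lem:kl-scale}) for ``finite capacity $\Rightarrow$ scalable'' with triangular-scaling capacities, unitary invariance and a compactness step (Lemma~\ref{lem:caps-ineq}, Theorem~\ref{thm:tri}) for the converse. Your direct route --- writing $\alpha\cdot x^\downarrow$ as the support function of $P_\alpha$, decomposing $\capa_k^{\alpha,\beta}A=\max_{(a,b)\in P_\alpha\times P_\beta}\phi(a,b)$, and handling each fixed $(a,b)$ by classical prescribed-marginal matrix scaling and bipartite LP duality --- is genuinely different and, in the matrix setting, more elementary: it avoids the triangular-scaling machinery entirely, and the ``furthermore'' coincides with the paper's alternating-minimization analysis. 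The fixed-$(a,b)$ characterization you call routine is indeed easy to make self-contained (Jensen gives $\phi(a,b)>-\infty$ from a flow of value $\geq k$ with marginals $\leq(a,b)$; an optimal fractional cover of value $<k$, pushed along $x=tu$, $y=tv$, $z=t$, drives $\phi(a,b)$ to $-\infty$), so that ingredient is fine.

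Two steps, however, need repair. First, the quantifier exchange at the end of your equivalence is a genuine gap as written: $k$-scalability to $(\alpha,\beta)$ gives, for each $\eps$, a scaling whose marginals are dominated by some $(a_\eps,b_\eps)\in P_\alpha\times P_\beta$, but that scaling only certifies a flow of value $k-\eps$ with respect to its own marginal caps, so $\phi(a_\eps,b_\eps)$ may well be $-\infty$; ``letting $(a,b)$ range over $P_\alpha\times P_\beta$'' does not by itself produce a single $(a,b)$ with cover value $\geq k$. You need a limit argument --- e.g.\ the scaled matrices are entrywise bounded and supported on the support graph of $A$, so a subsequential limit is a flow of value $\geq k$ whose marginals furnish the required $(a,b)$ --- which is precisely the compactness step the paper makes explicit in the proof of Theorem~\ref{thm:maj-scal}. (Once that is in place, Sion's minimax is not actually needed: only the trivial inequality $\phi(a,b)\leq\capa_k^{\alpha,\beta}A$ is used.) Second, in the algorithm paragraph the claim that finiteness of the capacity keeps the iterates in a compact region and that a minimizer is attained is false in general: already for doubly stochastic scaling of a matrix whose support has a perfect matching but not total support, the infimum is not attained and the iterates diverge. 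The correct argument is the quantitative one you also invoke: the potential drop bounds the number of iterations, and the output, which is only majorized ``up to $\eps$,'' must be multiplied by $e^{-\eps}$ (the paper's Lemma~\ref{lem:kl-scale}) to obtain a genuinely $(\alpha,\beta)$-majorized scaling of size at least $(1-\eps)k$, which is what the definition of $k$-scalability demands.
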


We will prove a generalization of this theorem to operator scaling in Section~\ref{sec:maj}.
To our knowledge, even for the matrix scaling setting, our majorization scaling is novel.

\subsection{High-level description of algorithm}
Now we describe our algorithm (Algorithm \ref{alg:scaling-shrunk}, \textsc{RoundSubspaces}) for the shrunk subspace problem at a high level, together with the theorems and concepts needed to analyze it.

Before doing so, we need to define an approximate notion of shrunk subspaces. We define approximate shrunk subspaces by instead defining approximate independent sets.

\begin{definition}[Approximate independent set]\label{def:approx-indep} Say a pair of subspaces $L\subseteq \C^n, R \subseteq \C^m$, is $\eps$-\emph{independent} if for all $v \in L, w \in R$ we have $$|w^\dagger A_i v|  \leq \eps  \norm{w} \cdot \norm{v}.$$
\end{definition}
Our algorithm will find the value of $k^* = \ncrank \mathcal A$ and then the value of $r^*$, the minimum value of $\dim R$ for $(L,R)$ a maximum independent set. For a candidate pair of values $(k,r)$, we say $(k,r)$ is \emph{violated} by a pair of subspaces $(L,R)$ if either $\dim L + \dim R > m + n - k$ or if $\dim L + \dim R = m + n - k$ and $\dim R < r$. Note that $(k^*, r^*)$ is the largest pair (in lexicographic order) with no violating independent set, so if we can test whether $(k,r)$ has a violating independent set we can find $k^*$ and $r^*$ using binary search.

\subsubsection*{Characterization of noncommutative rank}
We begin by characterizing the noncommutative rank with optimization.
Like the Sinkhorn-Knopp theorem, Gurvits' theorem (Theorem \ref{thm:gurv}) doesn't directly answer the problem of distinguishing if $\ncrank\mathcal A \geq k$ for other $1 \leq k \leq n$. There is a trick~\cite[Theorem~A.4]{Garg2019} to reduce to the case $k = n$, which is analogous to adding dummy vertices to a bipartite graph in order to reduce maximum matching to perfect matching. Here we show how to do it directly using a modified scaling problem.

\begin{definition}[Substochastic scalability]
We say a CP map $\Phi: \C^{n\times n} \to \C^{m \times m}$ is \emph{doubly substochastic} if
$$ \Phi(I_n) \preceq I_m \text{ and } \Phi^*(I_m) \preceq I_n,$$
where $\preceq$ denotes the Loewner ordering. We refer to $\tr \Phi(I_n)$ as the \emph{size} of $\Phi$. We say that $\Phi$ is \emph{$k$-scalable} if for every $\eps > 0$, there is a scaling of $\Phi$ which is doubly substochastic and has size at least $k - \eps$.
\end{definition}

There is also a variational characterization of the $k$-scalability of CP maps analogous to that of non-negative matrices. Consider the domain
\begin{align}\mathcal D:= \{ (X,Y,z): X \succeq I_n, Y \succeq I_m, z \geq 0\} \subseteq \C^{n\times n} \times \C^{m \times m} \times \R,\label{eq:domain}\end{align}
and the following function
\begin{align}
f_k(X,Y,z)& := \tr (\Phi(X^{-1}) Y^{-1} e^{z}) +  \log\det(X)+ \log\det(Y) - kz. \label{eq:simple-capacity}
\end{align}
on $\caD$. Note that $\capa_n \Phi = \capa \Phi$. We will show the following generalization of Gurvits' theorem.
\begin{theorem} \label{thm:k-gurv}
For a CP map $\Phi: \C^{n\times n} \to \C^{m \times m}$, the following are equivalent:
\begin{enumerate}
\item $\ncrank\mathcal A \geq k.$
\item $\Phi$ is $k$-scalable.
\item $\capa_k \Phi > - \infty.$
\end{enumerate}
\end{theorem}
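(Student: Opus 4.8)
The plan is to prove the cycle of implications $(1) \Rightarrow (2) \Rightarrow (3) \Rightarrow (1)$, mirroring the structure of Gurvits' theorem but keeping track of the extra scalar variable $z$ and the relaxed domain $X \succeq I_n$, $Y \succeq I_m$. The conceptual heart is the equivalence between the absence of a large independent set (equivalently $\ncrank \caA \geq k$, via the characterization \eqref{eq:shrunk}/\eqref{eq:MVSP}) and boundedness of the perturbed capacity $\capa_k\Phi$; this is a statement of convex duality, so the main work is to set up the right convex program and identify its dual with the combinatorial quantity.

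First I would prove $(2) \Rightarrow (1)$ in contrapositive form. Suppose $\ncrank\caA < k$, so there is an independent set $(L,R)$ with $\dim L + \dim R \geq m + n - k + 1$; equivalently, by the discussion around \eqref{eq:cover}, $(L^\perp, R^\perp)$ is a ``cover'' with $\tr \pi_L \Phi(\pi_R) = 0$. For any scaling $\Phi_{B,C}$ one checks that $(B^{-\dagger}L, C^{-\dagger}R)$ is again an independent set of the same dimensions, so inequality \eqref{eq:cover} applied to $\Phi_{B,C}$ gives $\tr \pi_{R^\perp}\Phi_{B,C}^*(I_m) + \tr \pi_{L^\perp}\Phi_{B,C}(I_n) \geq \tr \Phi_{B,C}(I_n)$. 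If $\Phi_{B,C}$ were doubly substochastic of size $\geq k - \eps$, the left side is at most $\dim R^\perp + \dim L^\perp = m + n - (\dim L + \dim R) \leq k - 1$, while the right side is $\geq k - \eps$; taking $\eps < 1$ yields a contradiction. Hence $\Phi$ is not $k$-scalable.

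Next, $(1) \Rightarrow (3)$: I would show that if $\capa_k\Phi = -\infty$ then one can extract an independent set violating $(k, \cdot)$, i.e.\ certifying $\ncrank\caA < k$. Given $(X,Y,z) \in \caD$ with $f_k(X,Y,z)$ very negative, the quantity $S := kz - \log\det X - \log\det Y$ must be large and positive (since the trace term is nonnegative), and $z$ must itself be large. Dividing by $z$ and passing to a limit point of $X/z$, $Y/z$ — whose spectra I would control using $X \succeq I_n$, $Y \succeq I_m$ and the bound on $\log\det$ — produces, after a spectral rounding argument in the spirit of the matrix-scaling warm-up (thresholding eigenvalues of the limiting matrices to get projections $\pi_L, \pi_R$), subspaces $(L,R)$ with $\dim L + \dim R \geq m + n - k + 1$ and $\tr\pi_L\Phi(\pi_R) = 0$. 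The estimate that the ``off-diagonal'' contribution vanishes in the limit comes from the fact that $\tr(\Phi(X^{-1})Y^{-1}e^z) \to 0$ forces $A_i$ to have negligible entries between the relevant eigenspaces. Finally, for $(3) \Rightarrow (2)$ — boundedness implies scalability — I would run the alternating-minimization / Sinkhorn dynamics on $f_k$: alternately optimize over $z$ (scaling the whole map up so its size is $\geq k$), over $X$ (which rescales so that $\Phi^*(I_m)$ becomes $\preceq I_n$) and over $Y$ (so $\Phi(I_n) \preceq I_m$), and argue that since $f_k$ is bounded below, each step decreases the objective by an amount controlling the constraint violation, so the iterates converge to a doubly substochastic scaling of size $\geq k - \eps$ for any $\eps$. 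Boundedness of $\capa_k$ guarantees the iterates do not escape to infinity, which is what makes this a genuine scaling rather than a divergent sequence.

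The main obstacle I expect is the $(1) \Rightarrow (3)$ direction, specifically the spectral rounding: unlike matrix scaling where the marginals are diagonal and one simply sorts coordinates, here one must argue that the limiting positive-definite matrices $X/z$ and $Y/z$ are simultaneously ``nearly block-diagonal'' with respect to a common eigenbasis adapted to the $A_i$, so that thresholding their eigenvalues yields an honest independent pair of subspaces rather than merely a fractional cover. Handling the noncommutativity and the limit carefully — in particular ensuring the dimension count $\dim L + \dim R \geq m+n-k+1$ survives the rounding — is the technical crux; the scalar $z$ and the shifted domain are minor bookkeeping by comparison. (The finer tracking needed to also read off $r^*$ is deferred to the perturbed capacity $\capa_{k,r}$ and Theorem \ref{thm:indep-scaling}.)
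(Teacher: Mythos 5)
Your legs $(3)\Rightarrow(2)$ (alternating minimization of $f_k$ with a per-step progress bound plus a rescaling step to pass from ``almost majorized'' to substochastic) and $(2)\Rightarrow(1)$ (transport an independent set through the scaling and use the cover inequality \eqref{eq:cover} with substochastic marginals) are sound and match the ingredients the paper actually uses (Theorem \ref{thm:maj-sinkhorn}, Lemmas \ref{lem:prog-bound} and \ref{lem:kl-scale}, and the Courant--Fischer/cover step inside the proof of Theorem \ref{thm:indep-scaling}). The gap is in your third leg, the contrapositive of $(1)\Rightarrow(3)$: extracting an independent set of size at least $m+n-k+1$ from $\capa_k\Phi=-\infty$. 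The rounding you sketch needs a gap that is uniform in $z$, and divergence of $f_k$ at level $k$ does not provide one. Concretely, from $f_k(X,Y,z)\le -C$ with normalized exponents $\tilde x=\lambda(\log X)/z$, $\tilde y=\lambda(\log Y)/z$ you only get $\tilde x\cdot\mathbf 1+\tilde y\cdot\mathbf 1\le k-C/z$, so the staircase/averaging step yields a cell with $\tilde x_{j+1}+\tilde y_{i+1}\le 1-C/(kz)$; the resulting bound on the off-diagonal mass is $S\,e^{C/k}\le \text{(trace term)}\le |f_k|+kz$, i.e.\ $S\le kz\,e^{-C/k}$, which is vacuous once $z\gg e^{C/k}$. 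In the limit, all staircase cells can have $\tilde x_{j+1}+\tilde y_{i+1}$ exactly $1$ (e.g.\ $\tilde y=0$ and $\tilde x$ with exactly $k$ unit entries), and thresholding then produces an independent set of size only $m+n-k$, not $m+n-k+1$. This is why the paper's rounding statement (Theorem \ref{thm:find-indep}) is proved only at the decremented parameter $k'=k-\tfrac{1}{2n}$, where the slack gives the exponent $e^{z/2nk}$ that beats the linear term $kz$; and note that $\capa_k\Phi=-\infty$ does \emph{not} a priori give $\capa_{k'}\Phi=-\infty$ (indeed $f_{k'}\ge f_k$ pointwise, and concavity of $k\mapsto\capa_k\Phi$ only says the finiteness set is an interval, possibly open on the right) --- in the paper that implication is itself a consequence of the theorem being proved (Theorem \ref{thm:indep-scaling}), so invoking it here would be circular.

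The paper sidesteps this by never proving ``$\capa_k=-\infty\Rightarrow\ncrank\caA<k$'' via rounding. Instead it proves the positive direction $\ncrank\caA\ge k\Rightarrow\capa_k\Phi>-\infty$ constructively: Lemma \ref{lem:ncrank-decomp} uses the dominant independent set to exhibit an explicit family of scalings showing $k$-scalability, and then the ``scalable $\Rightarrow$ capacity finite'' half of Theorem \ref{thm:maj-scal} is established through the triangular-scaling capacity of \cite{Franks2018} (Theorem \ref{thm:tri}), unitary invariance, compactness, and Lemma \ref{lem:caps-ineq}. That machinery (or some substitute certificate of boundedness, as in Gurvits'/King's original arguments) is the substantive content of this direction; your plan as written tries to obtain it for free from a divergent sequence, and that step fails.
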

The above theorem characterizes the maximum size of an independent set as $m + n - k^*$, where $k^*$ is the largest $k$ such that $\capa_k \Phi > -\infty$. Because $\capa_n \Phi = \capa \Phi$, Theorem \ref{thm:k-gurv} implies Gurvits' theorem. We will not explicitly prove Theorem \ref{thm:k-gurv} because it is implied by stronger theorems stated later on (Theorems \ref{thm:indep-scaling} and \ref{thm:maj-scal}).

\subsubsection*{Characterization of $r^*$ and rounding}
We would also like to characterize $r^*$, i.e. how small $R$ can be in a maximum independent set $(L,R)$. For this we will need a perturbed version of $f_k$. For a Hermitian matrix $H \in \C^{n \times n}$, let $\lambda(H) \in \R^{n}$ denote the spectrum of $H$ in non-increasing order. Define the function
\begin{align}
f_{k,r}(X,Y,z) := f_k(X,Y,z) - \frac{1}{n} \sum_{i = n - r + 1}^n \lambda_i (\log(X)) \label{eq:perturb-capacity}
\end{align}
on $\caD$, and again define
$$ \capa_{k,r} \Phi = \inf_{\Upsilon \in \caD} f_{k,r}(\Upsilon).$$
We show that this optimization problem characterizes the dimensions of the dominant independent set. In fact one can use the slightly larger quantity $\capa_{k',r} \Phi$ for $k' = k - \frac{1}{2n}$; this will turn out to be helpful in finding independent sets.
\begin{theorem}\label{thm:indep-scaling}
For a CP map $\Phi: \C^{n\times n} \to \C^{m \times m}$ and $0 \leq k,r \leq n$ integers, the following are equivalent:
\begin{enumerate}
\item\label{it:lex} $(k,r) \leq (k^*,r^*)$ in lexicographic order, i.e. there is no independent set violating $(k,r)$.
\item\label{it:cap} $\capa_{k,r} \Phi > -\infty$.
\item\label{it:cap'} $\capa_{k',r} \Phi > -\infty$ for $k' = k - \frac{1}{2n}$.
\end{enumerate}

\end{theorem}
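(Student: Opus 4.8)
The plan is to prove the cycle of implications $\text{\ref{it:lex}} \Rightarrow \text{\ref{it:cap}} \Rightarrow \text{\ref{it:cap'}} \Rightarrow \text{\ref{it:lex}}$, using convex duality to translate the optimization quantities into combinatorial statements about independent sets. The implication $\text{\ref{it:cap}} \Rightarrow \text{\ref{it:cap'}}$ is immediate since $f_{k',r} \geq f_{k,r}$ on $\caD$ (we have $k' \leq k$ and $z \geq 0$), so $\capa_{k',r}\Phi \geq \capa_{k,r}\Phi > -\infty$. The substance is in the other two implications.

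For $\text{\ref{it:cap'}} \Rightarrow \text{\ref{it:lex}}$, I would argue the contrapositive: if some independent set $(L,R)$ violates $(k,r)$, I construct a direction along which $f_{k',r}$ tends to $-\infty$. The natural scaling direction is $X_t = \pi_R e^{-t} + \pi_{R^\perp}$ (pushed into $\caD$ by adding a multiple of $I_n$, or rather $X_t = \pi_R + e^{t}\pi_{R^\perp} \succeq I_n$) and $Y_t = \pi_{L^\perp} + e^{t}\pi_{L}$, with $z = z(t)$ chosen appropriately — this is the analogue of the matrix-scaling direction $\tilde x = x/z$ in Section \ref{sec:warm-up}. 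Since $(L,R)$ is independent, $\tr \pi_L \Phi(\pi_R) = 0$, so the first term $\tr(\Phi(X_t^{-1})Y_t^{-1}e^z)$ stays bounded (the ``dangerous'' cross term $\pi_L \Phi(\pi_R)$ vanishes and everything else is damped by $e^{-t}$ factors or controlled by $e^z$). The $\log\det$ terms contribute roughly $(\dim L + \dim R)\,t$, the $-kz$ term must be tuned, and crucially the perturbation $-\frac1n\sum_{i=n-r+1}^n \lambda_i(\log X_t)$ contributes about $+\frac{t}{n}\min(r, \dim R^\perp)$ — wait, one must be careful about the sign and which eigenvalues of $\log X_t$ are the smallest. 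With $X_t = \pi_R + e^t \pi_{R^\perp}$, $\log X_t$ has eigenvalue $0$ with multiplicity $\dim R$ and $t$ with multiplicity $\dim R^\perp$; the $r$ smallest eigenvalues are zeros (assuming $\dim R \geq r$, which holds in the violating case via the dimension count), so the perturbation term vanishes and we are back to the unperturbed estimate, which already diverges because $\dim L + \dim R > m+n-k$ strictly, or equals it but then one uses the refined direction. Actually the cleaner approach: separately handle the two ways $(k,r)$ can be violated, and in the second case ($\dim L + \dim R = m+n-k$, $\dim R < r$) the perturbation term becomes the source of divergence — with $\dim R < r$, the $r$ smallest eigenvalues of $\log X_t$ include some copies of $t$, giving a $-\frac{t}{n}(r - \dim R) < 0$ contribution that pushes $f_{k,r} \to -\infty$ even when the unperturbed part is bounded. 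This case analysis is the step I expect to require the most care, particularly reconciling the use of $k'$ versus $k$.

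For $\text{\ref{it:lex}} \Rightarrow \text{\ref{it:cap}}$, the natural route is convex duality: $f_{k,r}$ is convex on the convex domain $\caD$, so I would compute its Fenchel-type dual and show that the finiteness of $\capa_{k,r}\Phi$ is equivalent to the feasibility of a dual program whose feasible points are, after rounding, independent sets. This is exactly the operator-scaling generalization of the fractional-cover-to-independent-set rounding in the matrix-scaling lemma of Section \ref{sec:warm-up}; the subgradient optimality conditions at a near-minimizer of $f_{k,r}$ yield an approximately doubly-substochastic majorization-scaling whose marginal spectra, read off via the eigenvalue-sorting structure, certify that no independent set violates $(k,r)$. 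Concretely: if $\capa_{k,r}\Phi = -\infty$ I would extract a sequence $\Upsilon_j \in \caD$ with $f_{k,r}(\Upsilon_j) \to -\infty$, normalize, pass to a limiting ray, and show the limiting direction must ``see'' a genuine independent set violating $(k,r)$ — this uses that the only way the bounded-below terms can be overwhelmed is if $\Phi$ vanishes on the relevant block, i.e. a true independent set appears. The main obstacle here, as in the matrix case, is handling the interaction between the eigenvalue-sorting in the perturbation term $\sum_i \lambda_i(\log X)$ (which is convex but non-smooth) and the rest of the objective: one must use that at the relevant near-optimal $X$ the sorted order stabilizes, so that locally the perturbation is linear and the subgradient/duality argument goes through. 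I would lean on the geodesic/Euclidean convexity noted in the remark after Theorem \ref{thm:gurv} to justify that alternating-minimization limits exist and that first-order conditions characterize minimizers, then convert the first-order conditions into the combinatorial certificate.
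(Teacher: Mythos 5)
Your cycle \((1)\Rightarrow(2)\Rightarrow(3)\Rightarrow(1)\) is coherent, \((2)\Rightarrow(3)\) by pointwise domination matches the paper, and proving \((3)\Rightarrow(1)\) by exhibiting a destabilizing direction is a legitimate, more elementary alternative to the paper's argument (the paper instead uses Theorem~\ref{thm:maj-scal} to obtain \((\alpha_r,\mathbf 1_m)\)-majorized scalings of size close to \(k'\) and rules out a violating set via the cover inequality \eqref{eq:cover} and Courant--Fischer; your construction is essentially the mechanism of the paper's Lemma~\ref{lem:cap-upper}). However, your direction is mis-specified: with \(Y_t=\pi_{L^\perp}+e^{t}\pi_{L}\) the block that gets amplified by \(e^{z}\) in \(\tr(\Phi(X_t^{-1})Y_t^{-1}e^{z})\) is \(\tr(\pi_{L^\perp}\Phi(\pi_R))\), which independence does not kill, and your log-det count \((\dim L+\dim R)t\) matches neither choice. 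You need \(Y_t=\pi_L+e^{t}\pi_{L^\perp}\) and \(z=t\): then the only block weighted by \(e^{t}\) is \(\tr\pi_L\Phi(\pi_R)=0\), the adjacent blocks have weight \(1\), and the drift is \(t(m+n-\dim L-\dim R-k')\) plus the perturbation \(-\tfrac{t}{n}(r-\dim R)\) when \(\dim R<r\), giving divergence at rate at least \(t/2n\) in both violation cases. This is a fixable slip.

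The genuine gap is \((1)\Rightarrow(2)\), which you propose to prove in contrapositive form: \(\capa_{k,r}\Phi=-\infty\) implies an exact independent set violating \((k,r)\). That is the Hilbert--Mumford/Kempf--Ness-type direction and is the real content of the theorem; ``compute the Fenchel dual and pass to a limiting ray'' does not deliver it. The objective is only geodesically convex, the sorted-spectrum term \(\alpha_r\cdot\lambda(\log X)\) is not linear in any fixed chart, and, as the paper stresses, an approximately independent pair need not be near an exact one, so ``the limiting direction must see a genuine independent set'' is precisely what has to be proven. Concretely you would need (a) an extraction statement: from \(f(X,Y,z)\le -C\) one can read off, in the eigenbases of \(X,Y\), an \(\eps(C)\)-independent pair of the right dimensions violating \((k,r)\) --- this is the paper's Theorem~\ref{thm:find-indep}, whose proof is a nontrivial averaging argument over carefully chosen cut indices (the two-slope path handling the \(\alpha_r\) weights) and which crucially exploits the \(1/2n\) slack of \(k'\); in your contrapositive of \((1)\Rightarrow(2)\) you only have \(f_{k,r}\to-\infty\), with no slack, so you must either re-derive the extraction without it or restructure which implication carries the slack; and (b) a compactness step converting \(\eps\)-independent violating pairs for every \(\eps>0\) into an exact one (as in Corollary~\ref{cor:alphar-scalable}). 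The paper avoids duality entirely: it proves \((1)\Rightarrow(2)\) constructively, decomposing \(\Phi\) along the dominant independent set (Lemma~\ref{lem:ncrank-decomp}) to build explicit \(k^*\)-scalings to \((\alpha_{r^*},\mathbf 1_m)\), and then invokes Theorem~\ref{thm:maj-scal}, whose hard half rests on the triangular-scaling capacity results of \cite{Franks2018}. Without supplying (a) and (b) (or the scalability route), your sketch leaves the central implication unproven.
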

Note that $\capa_{k,0}\Phi = \capa_k \Phi$, so Theorem \ref{thm:indep-scaling} implies the equivalence between (1) and (3) in Theorem \ref{thm:k-gurv}. There will also be a scaling characterization analogous to of (2) in Theorem \ref{thm:gurv} and Theorem \ref{thm:k-gurv}, but the cap equivalence will hold at a more general level described in Section \ref{sec:maj}. Fortunately, there is an optimization algorithm to decide the finiteness of $\capa_k \Phi$ and $\capa_{k,r} \Phi$, and if $\capa_{k,r} \Phi = - \infty$, we can also use the optimization problem to find an approximate independent set $(L,R)$ that violates $(k,r)$.

Firstly, there is an optimization algorithm, \textsc{SinkhornDecision}$(\Phi, k,r)$ (Algorithm \ref{alg:sinkhorn-dec}), which can decide whether $\capa_{k,r} \Phi$ is finite in time $\poly(\log M, n, m)$. Formally:
\begin{theorem}\label{thm:decision-correct} \textsc{DecisionSinkhorn}$(\Phi, k,r)$ (Algorithm \ref{alg:sinkhorn-dec}) correctly decides whether $\capa_{k,r} \Phi > -\infty$ or not and takes
$O((m+n+p)(k(n + m)^4 \log(m + n) + n(n + m)^2 \log Mp))$
arithmetic operations if $r = 0$ and
$O((m+n+p)(kn^2(n + m)^4 \log(m + n) + n(n + m)^2 \log Mp))$
operations if $r > 0$. \end{theorem}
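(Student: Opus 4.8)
The plan is to analyze the Sinkhorn-style alternating minimization of the (possibly perturbed) objective $f_{k,r}$ over the domain $\caD$ and to bound the number of iterations needed to decide finiteness of $\capa_{k,r}\Phi$. The algorithm \textsc{DecisionSinkhorn} should, in each round, first rescale by $z$ to make the size of $\Phi$ equal to $k$ (this is the minimization over $z\geq 0$), then update $X$ (resp.\ $Y$) to the minimizer of $f_{k,r}$ with the other variables fixed, which amounts to scaling $\Phi$ so that its appropriate marginal is substochastic (or, in the perturbed $r>0$ case, so that the marginal is majorized by $(\underbrace{1,\dots,1}_{n-r},\underbrace{1-1/n,\dots,1-1/n}_{r})$, which is the closed-form eigenvalue-truncation update). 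The first step I would carry out is to record the closed-form of each update and verify that each is well-defined (the relevant matrices stay positive definite, in particular $X\succeq I_n$, $Y\succeq I_m$ are maintained), and that each update weakly decreases $f_{k,r}$; this is routine from convexity of $f_{k,r}$ in each block separately (noted in the geodesic-convexity remark) together with the fact that $\sum_{i=n-r+1}^n\lambda_i(\log X)$ is concave in $\log X$, so $-\frac1n\sum_{i=n-r+1}^n\lambda_i(\log X)$ is convex.

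The core of the argument is the standard three-part potential analysis: (i) an upper bound on the initial value $f_{k,r}(\Upsilon_0)$ at the starting point (the identity scaling, giving something like $O(n\log(Mp))$ using $\|A_i\|$-bounds and $\log\det = 0$); (ii) a lower bound $-\infty$ vs.\ a concrete finite value: if $\capa_{k,r}\Phi>-\infty$ then by Theorem~\ref{thm:indep-scaling} (equivalently the yet-to-be-stated scaling characterization of Section~\ref{sec:maj}) the infimum is attained/bounded below by a polynomially-bounded quantity, so the potential cannot drop below roughly $-\poly(n,m,\log M)$; and (iii) a per-iteration progress bound showing that as long as the current scaling is $\eps$-far from satisfying the substochasticity/majorization marginal constraints, each Sinkhorn step decreases $f_{k,r}$ by at least $\Omega(\eps^2)$ (or $\Omega(\eps^2/n)$ in the perturbed case — this extra $n$ factor is the source of the $n^2$ discrepancy between the $r=0$ and $r>0$ running times). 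Combining (i)--(iii), after $\poly$ many iterations either the marginals are within $\eps$ of feasible, certifying a near-feasible $k$-scaling and hence $\capa_{k,r}\Phi>-\infty$, or else the potential has dropped below the threshold from (ii), certifying $\capa_{k,r}\Phi=-\infty$. Choosing $\eps$ inverse-polynomially small (but with $\log(1/\eps)=O(\log(n+m))$) and counting the per-iteration cost — each step requires eigendecompositions and applications of $\Phi,\Phi^*$, costing $O((m+n)^3 + p(m+n)^2)$ per iteration with an extra $n^2$ factor of iterations when $r>0$ — yields exactly the claimed operation counts.

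I expect the main obstacle to be the quantitative version of (iii) in the perturbed ($r>0$) case: showing that even with the eigenvalue-truncation perturbation term, a single alternating step makes definite progress proportional to the squared constraint violation, and that the relevant "diameter" bound degrades by only a factor of $n$. The subtlety is that the perturbed marginal update is no longer a clean "divide by the marginal" operation but involves projecting the spectrum of $\Phi(X^{-1})$-type matrix onto the majorization constraint, so one must control how far the updated $X$ can move and relate the decrease in $f_{k,r}$ to a suitable Bregman-divergence-type quantity (the log-det and trace terms) minus a bounded contribution from the perturbation. The cleanest route is to establish a general progress lemma for the majorization-scaling Sinkhorn step — which will be proved anyway in Section~\ref{sec:maj} for the operator case — and then specialize it here; I would cite that lemma for the per-step decrease and for the lower bound on $\capa_{k,r}$, reducing the proof of Theorem~\ref{thm:decision-correct} to assembling (i), (ii), (iii) and the arithmetic bookkeeping of the per-iteration cost. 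The remaining details — bit-length control so that "arithmetic operations" is the honest complexity measure, and the binary-search-free direct handling of $k$ via the $z$ variable — are routine given the structure already set up in Sections~\ref{sec:warm-up} and the definitions of $\caD$, $f_k$, and $f_{k,r}$.
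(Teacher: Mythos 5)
Your overall architecture (alternating minimization of $f_{k,r}$, potential upper bound at the start, capacity lower bound when finite, per-step progress $\Omega(\eps^2)$ when the marginals are $\eps$-far from feasible) matches the paper's use of \textsc{MajSinkhorn} plus Corollary~\ref{cor:sink-running}, and deferring the progress lemma and the capacity lower bound to the majorization-scaling section is exactly what the paper does. But there is a genuine gap in the direction ``algorithm terminates $\Rightarrow$ $\capa_{k,r}\Phi>-\infty$.'' You assert that marginals within $\eps$ of feasible ``certify a near-feasible $k$-scaling and hence $\capa_{k,r}\Phi>-\infty$,'' but a single approximately feasible scaling does not imply finiteness of the capacity for an arbitrary inverse-polynomial $\eps$; one must prove a robustness (gap) statement: if $\capa_{k,r}\Phi=-\infty$, then \emph{no} scaling of size at least $k-1/4n$ can have marginals whose KL-divergence from $P_{\alpha_r}$ (resp.\ $P_{\mathbf 1_m}$) is below an explicit threshold, namely $\Omega(1/n^3)$ (resp.\ $\Omega(1/mn^2)$), and $\Omega(1/n)$, $\Omega(1/m)$ when $r=0$. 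This is the paper's Lemma~\ref{lem:term}, proved by taking an independent set violating $(k,r)$, applying the cover inequality \eqref{eq:cover} and Courant--Fischer to show the top partial sums of the marginal spectra exceed the majorization bounds by $\Omega(1/n)$, and converting the resulting $\ell_1$-violation into a divergence bound via (unnormalized) Pinsker. Without this lemma the termination test can, for all you have argued, fire on an instance with $\capa_{k,r}\Phi=-\infty$, and the choice of $\eps_1,\eps_2$ in the algorithm has no justification.

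Relatedly, your accounting for the extra $n^2$ in the $r>0$ running time is off: the per-iteration progress bound (Lemma~\ref{lem:prog-bound}) is the same in both cases ($\geq \min\{\alpha_1,\beta_1\}\eps^2/2$, with $\alpha_1=1$), so there is no ``$\Omega(\eps^2/n)$'' degradation; the $n^2$ comes from the smaller gap in Lemma~\ref{lem:term} for the perturbed objective (divergence only $\Omega(1/n^3)$ instead of $\Omega(1/n)$, because a violating independent set with $\dim L+\dim R = m+n-k$ and $\dim R>r$ forces a margin of only $1/n$ through $\alpha_r$ and $1/4n$ in the size), which forces $\eps_2^2=\Theta(\eps_1^2/n^2)$ and hence $n^2$ times as many iterations. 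In short, the ``main obstacle'' is not the perturbed progress bound you flag, but the gap lemma for the $\capa_{k,r}\Phi=-\infty$ case; with that lemma added, the rest of your assembly (initial potential $O(n\log Mp)$ after normalization, Corollary~\ref{cor:sink-running}, and the $O((m+n)^3+p(m+n)^2)$ cost per iteration) reproduces the stated operation counts.
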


Secondly, there is another algorithm, \textsc{ApproxIndep}, which on input $\Phi$ with $\capa_{k,r} \Phi= - \infty$ outputs an $\eps$-independent set $(L,R)$ violating $(k,r)$ in time $\poly(\log M, n, m, \log(1/\eps))$.
\begin{theorem}\label{thm:indep-correct}
If $\capa_{k,r} \Phi= -\infty$, \textsc{ApproxIndep}$(\Phi,k,r,\eps)$ (Algorithm \ref{alg:approx-indep}) produces an $\eps$-independent set violating $(k,r)$ and requires
$O(k^2 n^3 (m+n)^3 (m+n+p) \log (2 e^2 n k^2/\eps))$
arithmetic operations.
\end{theorem}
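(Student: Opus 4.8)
The plan is to analyze Algorithm~\ref{alg:approx-indep}, which runs an alternating-minimization (Sinkhorn-style) iteration on the perturbed objective $f_{k',r}$ of~\eqref{eq:perturb-capacity} with $k' = k - \tfrac1{2n}$ — naturally, continuing the run of \textsc{DecisionSinkhorn}$(\Phi,k,r)$, which has already certified $\capa_{k,r}\Phi=-\infty$ — and halts as soon as the current iterate $\Upsilon_t = (X_t,Y_t,z_t)\in\caD$ satisfies $f_{k',r}(\Upsilon_t)\le -C$ for a threshold $C=\Theta(n\log(2e^2 n k^2/\eps))$, then applies a rounding subroutine to $\Upsilon_t$. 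By the equivalence of items~\ref{it:cap} and~\ref{it:cap'} in Theorem~\ref{thm:indep-scaling}, the hypothesis $\capa_{k,r}\Phi=-\infty$ is equivalent to $\capa_{k',r}\Phi=-\infty$, so $f_{k',r}$ is unbounded below and such a $t$ is reached; what remains is to bound the number of rounds and to verify that the rounded pair $(L,R)$ is $\eps$-independent and violates $(k,r)$.

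\textbf{Bounding the number of rounds.} Under the change of variables $X\gets\log X$, $Y\gets\log Y$ the objective $f_{k',r}$ is Euclidean-convex on $\caD$ and each Sinkhorn step is an exact block minimization, so the objective is monotone nonincreasing. The key estimate I would prove is a per-round progress lemma: one full round decreases $f_{k',r}$ by at least $\delta = 1/\poly(n,m)$ \emph{unless} the scaled map $\Phi_{Y_t^{-1/2},X_t^{-1/2}}$ is within $\delta$ of being doubly substochastic of size $\ge k'$ while meeting the spectral (majorization) constraint built into $f_{k',r}$. The second alternative cannot arise when $\capa_{k,r}\Phi=-\infty$: by a compactness argument mirroring the proof of Theorem~\ref{thm:indep-scaling} — this is exactly where the slack $k-k'=\tfrac1{2n}$ is spent — such a near-scaling would force $\capa_{k,r}\Phi>-\infty$. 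Hence every round decreases $f_{k',r}$ by $\ge\delta$; combined with a priori bounds on the singular values and the $z$-coordinate of the iterates throughout the run (so that all iterates stay polynomially representable and $z_t$ grows only linearly in $t$), the objective drops below $-C$ within $\poly(n,m,k)\log(1/\eps)$ further rounds.

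\textbf{Rounding.} Given the halting iterate $\Upsilon^*=(X^*,Y^*,z^*)$ with $f_{k',r}(\Upsilon^*)\le -C$, I would write $\log X^* = \sum_{j=1}^n a_j\,u_ju_j^\dagger$ with $a_1\ge\cdots\ge a_n\ge0$ and $\log Y^* = \sum_{i=1}^m b_i\,w_iw_i^\dagger$ with $b_1\ge\cdots\ge b_m\ge0$, nonnegativity coming from $X^*\succeq I_n$, $Y^*\succeq I_m$. Since $\tr\Phi((X^*)^{-1})(Y^*)^{-1}e^{z^*}=\sum_{l,i,j}e^{z^*-a_j-b_i}\,|w_i^\dagger A_lu_j|^2$ is a sum of nonnegative terms bounded by $S := k'z^* - \log\det X^* - \log\det Y^* + \tfrac1n\sum_{i=n-r+1}^n a_i$, one gets $|w_i^\dagger A_lu_j|^2\le S\,e^{a_j+b_i-z^*}$; hence the span of the $u_j$ with $a_j$ above a threshold $\theta_1$ and the span of the $w_i$ with $b_i$ above a threshold $\theta_2$ form an $\eps$-independent pair whenever $\theta_1+\theta_2\ge z^*-\Theta(\log(Smn/\eps^2))$. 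Since the remaining terms of $f_{k',r}(\Upsilon^*)$ are nonnegative, $z^*\ge C/k'$ and $\log\det X^*+\log\det Y^*-\tfrac1n\sum_{i=n-r+1}^n a_i\le k'z^*-C$; as $\log S = O(\log C)$, the choice $C=\Theta(n\log(2e^2 nk^2/\eps))$ makes $\Theta(\log(Smn/\eps^2))/z^*<\tfrac1{2n}$. Dividing the last inequality by $z^*$ yields a ``fractional cover'' $\alpha := a/z^*$, $\beta := b/z^*$ (nonincreasing, nonnegative) with $\sum_j\alpha_j+\sum_i\beta_i-\tfrac1n\sum_{i=n-r+1}^n\alpha_i\le k'-C/z^*<k'$. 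A staircase/averaging argument over an anti-diagonal $\{(i,j):i+j=\text{const}\}$ — identical to the matrix-scaling case in Section~\ref{sec:warm-up} — then produces indices $\ell,s$ with $\alpha_\ell+\beta_s$ at least the required threshold-sum and with $\ell+s>m+n-k$, or else $\ell+s=m+n-k$ with the $R$-dimension strictly below $r$; the perturbation term $\tfrac1n\sum_{i>n-r}\alpha_i$ together with the $\tfrac1{2n}$ slack is exactly what forces this boundary-case count. Taking $(L,R)$ to be the corresponding top-eigenvector spans gives an $\eps$-independent set violating $(k,r)$. Charging one round to $p$ matrix products plus an eigendecomposition and a sort on each block, multiplying by the $\poly(n,m,k)\log(1/\eps)$ round count, and adding the one-time rounding cost yields the claimed $O(k^2 n^3 (m+n)^3 (m+n+p)\log(2e^2 nk^2/\eps))$ arithmetic operations.

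\textbf{Main obstacle.} The crux is the per-round progress lemma and, within it, the argument that the ``good near-scaling'' alternative is incompatible with $\capa_{k,r}\Phi=-\infty$ precisely because of the $\tfrac1{2n}$ slack; controlling the bit-growth of the iterates so that the round count converts into the stated arithmetic bound is routine but needs care. The rounding step, by contrast, is essentially the matrix-scaling argument transported to the eigenbases of $X^*$ and $Y^*$, the only genuinely new ingredient being the bookkeeping of the perturbation term in the boundary-case dimension count.
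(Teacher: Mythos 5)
Your overall route is the paper's: pass from $\capa_{k,r}\Phi=-\infty$ to $\capa_{k',r}\Phi=-\infty$ with $k'=k-\tfrac1{2n}$ via Theorem~\ref{thm:indep-scaling}, argue that each Sinkhorn iteration decreases $f_{k',r}$ by an inverse-polynomial amount, and round the final iterate through the eigenbases of $X,Y$ with an anti-diagonal averaging argument. But your rounding step is inverted. From $e^{z^*-a_j-b_i}\,|w_i^\dagger A_l u_j|^2\le S$ one gets $|w_i^\dagger A_l u_j|^2\le S\,e^{a_j+b_i-z^*}$, which is small only when $a_j+b_i$ is \emph{small}; so the $\eps$-independent pair must be the spans of the eigenvectors whose eigenvalues lie \emph{below} the thresholds (the tails beyond the pivot on the anti-diagonal), exactly as in Steps~\ref{it:get_basis}--\ref{it:check_basis} of Algorithm~\ref{alg:approx-indep} and in the warm-up lemma of Section~\ref{sec:warm-up}. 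You instead take ``the $u_j$ with $a_j$ above a threshold'' and output ``top-eigenvector spans'': with that choice the displayed bound gives nothing (it becomes large, not small), and the dimension count also fails, since the top spans have total dimension about $k$ rather than $m+n-k$, so they cannot violate $(k,r)$. Relatedly, your threshold $C=\Theta(n\log(2e^2nk^2/\eps))$ does not support your own check ``$\Theta(\log(Smn/\eps^2))/z^*<\tfrac1{2n}$'': one only knows $z^*\ge C/k'$, and averaging over the roughly $k$ anti-diagonal cells leaves a margin of order $\tfrac1{2nk}$, so the exponent gained is of order $C/(nk^2)$; the paper accordingly takes $C=nk^2\log(2enk^2/\eps)$ in Theorem~\ref{thm:find-indep}, and your $C$ is short by at least a factor of about $k$ even under the most favorable bookkeeping.

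The second genuine gap is the per-round progress bound, which you rightly call the crux but justify only by ``a compactness argument mirroring the proof of Theorem~\ref{thm:indep-scaling}.'' Compactness yields, for a fixed instance, \emph{some} positive separation from near-scalability, not the uniform $\delta=1/\poly(n,m)$ rate that the iteration count requires. The paper proves the needed quantitative statement in Lemma~\ref{lem:term}: if $\capa_{k,r}\Phi=-\infty$, every scaling either has size at most $k-\tfrac1{4n}$ (which the $z$-update rules out) or has KL-divergence to the majorization constraints at least $\Omega(1/n^3)$, resp.\ $\Omega(1/mn^2)$, proved via a violating independent set, the cover inequality~\eqref{eq:cover}, Courant--Fisher, and unnormalized Pinsker; combined with the alternating-minimization progress bound (Lemma~\ref{lem:prog-bound}) this gives $T=O(k^2(m+n)n^3\log(2e^2nk^2/\eps))$ iterations at cost $O((m+n)^2(m+n+p))$ each, hence the stated operation count. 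Until ``compactness'' is replaced by such a quantitative robustness lemma, and the rounding direction and the value of $C$ are corrected, the proposal does not establish the theorem.
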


Here it may be helpful to consider the difference between the operator scaling and matrix scaling cases. In the matrix scaling case discussed in Section \ref{sec:warm-up}, there was no need to round the independent set (Hall blocker). This is because of the discrete nature of the covers - they are finite subsets of indices. In the operator scaling case, however, the independent sets range over all subspaces. One can slightly perturb an independent set to obtain another approximate independent set - without some additional rounding, we can never be sure we didn't find one of these slightly perturbed versions of the true one. From another viewpoint, in the operator scaling case there are many possible eigenbases for approximate minimizers of the objective function.

If we choose $k = k^*$, $r = r^* + 1$ and $\eps$ small enough, but still inverse exponential, the next theorem shows the only approximate independent set that can be found is very close to the dominant independent set. For a subspace $S \subseteq \C^n$, let $\pi_S$ denote the orthogonal projection to $S$.

\begin{theorem}\label{thm:close} Suppose the Kraus operators of $\Phi$ have integer entries of absolute value at most $M$, and that $(L,R)$ is an $\eps$-independent set for $\Phi$ that violates $(k^*,r^* + 1)$. For $\eps \leq \eps_0 = e^{ - O (n (n + m) \log (m + n)   + n \log M)}$, we have
$$\|\pi_{L} - \pi_{L^*}\|_2, \|\pi_{R} - \pi_{R^*}\|_2  = e^{ O ( n^2(n + m) \log (m + n)   + n^2 \log Mp)} \eps=: M_0 \eps$$
\end{theorem}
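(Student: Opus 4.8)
The plan is to exploit the uniqueness and lattice structure of the dominant independent set $(L^*, R^*)$ together with a quantitative "gap" statement: among all independent sets, the dominant one is extremal, and any other independent set either has strictly smaller size or strictly larger $\dim R$, and moreover, any subspace pair that is only $\eps$-independent but violates $(k^*, r^*+1)$ must be close to a genuine independent set violating $(k^*,r^*+1)$ — of which $(L^*,R^*)$ is the only candidate (since an independent set with $\dim L + \dim R = m+n-k^*$ and $\dim R < r^*+1$ forces $\dim R = r^*$ and hence, being a maximum independent set with the minimum possible $\dim R$, equals $(L^*,R^*)$; the other case $\dim L + \dim R > m+n-k^*$ is impossible for a true independent set by the definition of $k^*$). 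So the real content is: \emph{rounding an approximate independent set to an exact one, with a quantitative bound, when we know the exact one is essentially unique.}

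First I would record that $(L,R)$ being $\eps$-independent means $\|\pi_L \Phi(\pi_R)\|$-type quantities are $O(\eps \cdot \mathrm{poly})$; concretely, $\tr \pi_L \Phi(\pi_R) \le \eps^2 p \cdot (\dim L)(\dim R) M^2$ or a similar crude bound, using that each $A_i$ has operator norm at most $nM$. Next, I would pass to the matrix $B := \Phi^*(\pi_L) = \sum_i A_i^\dagger \pi_L A_i \succeq 0$ acting on $\C^n$: the condition says $\tr \pi_R B$ is tiny, so $\pi_R$ is essentially supported on the near-kernel of $B$. Since $B$ has integer-combination structure, its nonzero eigenvalues are bounded below by $e^{-\poly}$ (a standard bit-complexity argument: $B$ is a sum of at most $p$ rank-one-ish pieces with entries of polynomially bounded numerator/denominator after clearing the projector $\pi_L$ — here I would instead work with $\pi_L$ replaced by a near-projector and use a perturbation bound, or first round $\pi_L$ to $\pi_{L'}$ for an \emph{exact} subspace $L'$ with $\dim L' \ge \dim L$). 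The cleanest route: use the Wong-sequence / randomized-rounding analysis that underlies Theorem \ref{thm:round} to show that the exact dominant independent set has eigenvalue/bit complexity $e^{O(n(n+m)\log(m+n) + n\log M)}$, so the relevant spectral gap $\gamma$ of the associated operators is at least $e^{-O(n(n+m)\log(m+n)+n\log M)} = \eps_0$. Then the Davis–Kahan $\sin\Theta$ theorem gives $\|\pi_R - \pi_{R^*}\| \le \|\text{(the tiny term)}\|/\gamma$, and unwinding the crude bounds produces the stated $M_0 \eps$ with $M_0 = e^{O(n^2(n+m)\log(m+n) + n^2 \log Mp)}$ — the extra factor of $n$ in the exponent coming from iterating a one-dimensional gap argument up to $n$ times (peeling off one dimension at a time, as each rounding step degrades the gap geometrically).

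Concretely, the steps in order: (1) translate $\eps$-independence of $(L,R)$ into $\tr\pi_R \Phi^*(\pi_L)$ and $\tr\pi_L\Phi(\pi_R)$ being $\le \eps^2\,\poly(n,m,M)$; (2) invoke the bit-complexity bound on $(L^*,R^*)$ (Theorem \ref{thm:round}) to get a lower bound $\gamma \ge \eps_0$ on the smallest nonzero eigenvalue of $\Phi^*(\pi_{L^*})$ restricted to $(R^*)$'s complement, and symmetrically; (3) argue that the only exact independent set violating $(k^*, r^*+1)$ is $(L^*,R^*)$, using the lattice/uniqueness of the dominant independent set and the definition of $k^*, r^*$; (4) run an induction on dimension: show that if $(L,R)$ $\eps$-violates $(k^*,r^*+1)$ then $\pi_R$ is within $\poly(n,m,M)\eps/\gamma$ of a projector $\pi_{R_1}$ with $\dim R_1 = \dim R$ whose image is closer to $(R^*)$, then $(L, R_1)$ is $\eps'$-independent for $\eps'$ only mildly larger, and repeat; after $O(n)$ rounds we reach an exact independent set, which must be $(L^*,R^*)$, and the accumulated error is the stated bound; (5) apply Davis–Kahan at each step to convert "$\tr\pi_R B$ small, $B$ has gap $\gamma$" into "$\|\pi_R - \pi_{R^*}\|$ small," and sum the geometric series of errors.

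The main obstacle I anticipate is step (4)/(2): getting a \emph{clean} spectral-gap lower bound that survives the approximation. One cannot directly say "$\Phi^*(\pi_L)$ has a polynomial spectral gap" because $\pi_L$ is only an approximate projector onto an approximate subspace, so its kernel is only approximately a coordinate subspace; the honest argument must either (a) first round $L$ to an exact subspace $L'$ of the same dimension lying within $\poly\cdot\eps$ and with $\tr\pi_{L'}\Phi(\pi_R)$ still controlled — which risks a chicken-and-egg dependence since rounding $L$ seems to need $R$ rounded first — or (b) set up a joint fixed-point/contraction argument that rounds $L$ and $R$ simultaneously. I expect the resolution (and presumably the paper's) is a careful \emph{alternating} rounding: round $R$ against the approximate $L$, then $L$ against the freshly rounded $R$, showing the errors contract by a factor related to $\gamma$ each time, so that the process converges to the unique exact pair; making the contraction factor and the per-step error blow-up precise, and checking that the total is $M_0\eps$ with the claimed exponent, is the real work, but it is all linear algebra (Davis–Kahan, Weyl's inequality, bounds on $\|A_i\|$) once the bit-complexity input from Theorem \ref{thm:round} is in hand.
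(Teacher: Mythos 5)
There is a genuine gap, and it is exactly the one you flag at the end: your argument never resolves the chicken-and-egg problem, and the resolution you sketch (alternating rounding with a contraction) cannot be closed without a global ingredient that is missing from the proposal. The spectral-gap/Davis--Kahan step only makes sense for $\Phi^*(\pi_{L^*})$ (whose kernel is exactly $R^*$); what you actually control is $\tr \pi_R\,\Phi^*(\pi_L)$ for the \emph{approximate} $L$, and when $L$ is far from $L^*$ the operator $\Phi^*(\pi_L)$ has no usable gap and its near-kernel can be unrelated to $R^*$. The alternating map ``round $R$ against $L$, then $L$ against $R$'' is an alternating minimization of the bilinear quantity $\tr \pi_L \Phi(\pi_R)$ over a product of Grassmannians; such a scheme can have spurious near-fixed points with small but nonzero value far from $(L^*,R^*)$, and nothing in your outline excludes them --- indeed, excluding them is essentially the statement of Theorem~\ref{thm:close} itself, which is why the paper stresses that it is not a priori clear that an approximate shrunk subspace is near an exact one. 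Uniqueness of the \emph{exact} violating independent set (your step (3), which is correct) gives no quantitative control over approximate ones.

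The paper supplies the missing global ingredient by capacity bounds rather than bit complexity: Lemma~\ref{lem:cap-upper} shows an $\eps$-independent set violating $(k^*,r^*)$ forces $\capa_{k^*,r^*}\Phi$ to be small, and the capacity lower bound (Theorem~\ref{thm:cap-lb}) then yields Theorem~\ref{thm:eps0}: no $\eps$-independent set violating $(k^*,r^*)$ exists at all for $\eps<\eps_0$. (So your $\eps_0$ and ``gap $\gamma\ge\eps_0$'' are not the same object; Theorem~\ref{thm:round}'s denominator bound $e^{O(n^5\log(Mn))}$ is used only for the final rounding step in Theorem~\ref{thm:roundsub-correct}, and a gap derived from it would not give the stated exponents.) Theorem~\ref{thm:close} is then proved by contradiction: assuming $(L,R)$ is at distance $\delta$ from $(L^*,R^*)$, one combines the two pairs via an approximate modular-lattice operation --- implemented by conjugating the Kraus operators with $X(t),Y(t)$ (capacity-preserving by Lemma~\ref{lem:same-cap}) with $t$ chosen through a pigeonhole over the $\le 2n$ principal-angle sines to find a multiplicative gap $\delta_{p-1}/\delta_p\le 0.01\eps_0$ --- producing new pairs $(L^*+H,K')$ and $(H',R^*+K)$ that are $(O(\delta^{-1}\eps_0^{-2n}\eps)+0.5\eps_0)$-independent and violate $(k^*,r^*)$; if $\delta\gg\eps_0^{-2n-1}\eps$ this contradicts Theorem~\ref{thm:eps0}. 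Your proposal contains neither this combination step nor a substitute for the capacity-theoretic non-existence bound, so as written it does not prove the theorem.
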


The proof of this theorem is inspired by the modular lattice property of the maximum independent sets -  if an $\eps$-independent set violates $(k^*, r^* + 1)$ and it is sufficiently far from $(L^*,R^*)$, then we can combine them using something similar to the lattice operation to obtain an $\eps'$-independent set $(L,R)$ where $\dim R < \dim R^*$ and $\eps'$ is larger, but not too much larger, than $\eps$. The technical work of the proof is in defining the approximate version of the lattice operation and showing that $\dim R \geq \dim R^*$ for $\eps$ small enough.

Finally, we show that the orthogonal projection to $R^*$ has rational entries with exponential denominators. This shows we only need to find an approximate independent set that is exponentially close to the true one, and then we can round to the dominant independent set.

\begin{theorem}\label{thm:round} The entries of $\pi_{R^*}$ are rational numbers with common denominator at most $M_1 = e^{O(n^5 \log(Mn))}$.
\end{theorem}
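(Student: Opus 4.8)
The plan is to bound the bit complexity of $\pi_{R^*}$ by exhibiting $R^*$ (equivalently the dominant independent set $(L^*,R^*)$) as the output of the promised simple randomized algorithm based on Wong sequences, and then tracking the arithmetic along that algorithm. The key point is that $R^*$ is an intrinsic object --- the unique maximal element of a lattice --- so whatever correct procedure computes it, its output is determined, and a procedure that only performs a polynomial number of rational arithmetic operations on integer inputs of magnitude at most $M$ must produce rationals of controlled denominator.

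First I would recall the randomized algorithm: pick a generic substitution $X_i \in \C^{d \times d}$ with $d = n-1$ (by \eqref{eq:IQS}, this computes $\ncrank \caA = k^*$ as $\frac 1d \rk(\sum_i A_i \otimes X_i)$), and then use the (second) Wong sequence of the pencil $\sum_i A_i \otimes X_i$ to extract a maximum-size vanishing subspace pair, from which the dominant one $(L^*,R^*)$ is recovered by taking the limit of the Wong sequence (the Wong sequence stabilizes to the ``maximal'' vanishing subspace, which descends to the dominant independent set of $\caA$). Each step of this computation --- forming the tensor, row-reducing to compute ranks and kernels, iterating the Wong map, and intersecting with the relevant coordinate subspaces to descend from $\C^{nd}$ back to $\C^n$ --- is Gaussian elimination on matrices of dimension $O(nd) = O(n^2)$ with entries that are products of at most $O(n)$ of the original integer entries and the random entries, hence (choosing the random $X_i$ to have small integer entries, say bounded by $\poly(n)$, which still works for a generic choice) integers of magnitude $M^{O(n)} \poly(n)^{O(n)} = e^{O(n \log(Mn))}$. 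Running Gaussian elimination on such a matrix, its sequence of pivots are ratios of subdeterminants, and by Hadamard's inequality every such subdeterminant of an $N \times N$ integer matrix with entries at most $B$ is an integer bounded by $(NB)^N$ in absolute value. With $N = O(n^2)$ and $B = e^{O(n\log(Mn))}$ this gives denominators at most $e^{O(n^2 \cdot (n \log(Mn) + \log n))} = e^{O(n^3 \log(Mn))}$ per elimination; composing the $O(n)$ stages of the Wong iteration and clearing a common denominator across the $O(n)$ Wong steps and the final orthogonal projection (which requires inverting one Gram matrix of a basis of $R^*$, contributing one more Hadamard factor) multiplies these bounds, yielding a common denominator of the form $e^{O(n^5 \log(Mn))} = M_1$.

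The one subtlety to be careful about --- and what I expect to be the main obstacle --- is that the intermediate quantities in the randomized/Wong-sequence computation involve the random matrices $X_i$, so the raw output of the algorithm is not literally $\pi_{R^*}$ but a rational matrix depending on the random choice; I must argue that $\pi_{R^*}$ itself, being independent of this choice, has denominator dividing the (uniform in the randomness, on the event the algorithm succeeds) denominator bound established above. This follows because for a suitable generic integer choice of the $X_i$ the algorithm succeeds and outputs exactly $\pi_{R^*}$ --- succeeding with positive probability over small integer $X_i$ by the Schwartz--Zippel lemma applied to the (nonzero) polynomials whose nonvanishing certifies correctness at each rank computation --- and on that event the denominator bound is the one derived; since the output on that event is $\pi_{R^*}$, the bound transfers. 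The rest is the routine Hadamard/Gaussian-elimination bookkeeping sketched above, which I would carry out to pin down the constant in the exponent and confirm it is $O(n^5 \log (Mn))$.
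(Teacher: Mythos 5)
Your overall route is the same as the paper's: exhibit $R^*$ as the output of the randomized Wong-sequence algorithm on the blow-up space and then bound the entries of the orthogonal projection of an integral basis (the paper does exactly this via Theorem~\ref{thm:shrunk} followed by the Gram--Schmidt bound of Theorem~\ref{thm:Gram-Schmidt}). The genuine gap is in the step you dismiss as ``routine Hadamard/Gaussian-elimination bookkeeping.'' If at each Wong step you recompute a basis by Gaussian elimination, the input magnitude of stage $i+1$ is the output magnitude of stage $i$, so the Hadamard bound compounds \emph{multiplicatively in the exponent}: from $\log B_{i+1} = O(N \log B_i + N\log N)$ with $N = \Theta(nd)$ you get $\log B_t = N^{\Theta(t)}$, an exponential blow-up over the (up to $nd$, not $O(n)$) stages needed for the sequence to stabilize in $\C^{nd}$. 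Your claim that composing the stages ``multiplies these bounds'' to give $e^{O(n^5\log(Mn))}$ silently assumes the log-magnitudes only \emph{add} across stages, which naive elimination does not provide. This is precisely the issue the paper flags and resolves with the trick of Ivanyos et al.: writing $W_i = (\caA A^{+})^i \ker(AA^{+})$ for a scaled integer pseudoinverse $A^{+}$, one maintains basis vectors of $W_i$ in the explicit form $(A_{j_i}A^{+})\cdots(A_{j_1}A^{+})v_k$, i.e.\ products of \emph{fixed} integer matrices applied to integer vectors; each stage then adds only $O(nd\log(M_And))$ to the exponent (the independent-subset selection merely discards vectors rather than re-eliminating), yielding the $e^{O(n^2d^2\log(Mndk)+nd\log M)} = e^{O(n^4\log(Mn))}$ basis bound, after which one Gram-matrix/Gram--Schmidt step gives the stated $e^{O(n^5\log(Mn))}$ denominator for $\pi_{R^*}$.

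Two smaller points: the subtlety you single out (that the output depends on the random $X_i$) is not the real obstacle --- on the success event the algorithm outputs exactly the intrinsic $R^*$, and the paper handles this exactly as you do, via Schwartz--Zippel with small integer entries --- whereas the bit-growth across iterations is; and your count of ``$O(n)$ Wong stages'' should be $O(nd)=O(n^2)$ in the blow-up space, which matters for the exponent accounting once the additive-growth mechanism is in place.
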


Putting these things together, we obtain Algorithm~\ref{alg:scaling-shrunk} and the following main theorem of this paper.

\begin{Algorithm}[h]
Algorithm \textsc{RoundSubspaces}$(\caA)$:
\begin{description}
\item[\hspace{.2cm}\textbf{Input:}] A matrix space $\caA$.

\item[\hspace{.2cm}\textbf{Output:}] The dominant independent set $(L^*, R^*)$ for $\caA$.

\item[\hspace{.2cm}\textbf{Algorithm:}] Set constants $M_0 = e^{ O (n^2(n + m) \log (m + n)   + n^2 \log Mp)}$ and $M_1 = e^{O(n^5 \log(Mn))}$ as in Theorems \ref{thm:close} and \ref{thm:round}.
\end{description}
\begin{enumerate}
\item Using binary search over $k$ and \textsc{DecisionSinkhorn}$(\Phi, k,0)$, determine $k^*$, the largest $k$ such that $\capa_k \Phi > -\infty$.
\item Using binary search over $r$ and \textsc{DecisionSinkhorn}$(\Phi, k^*,r)$, determine $r^*$, the largest $r$ such that $\capa_{k^*,r} \Phi > -\infty$.
\item Using \textsc{ApproxIndep}$(\Phi, k^*,r^* +1, \eps)$ for $\eps = 1/2 M_0 M_1^2$, find an $\eps$-independent set $(L,R)$ that violates $(k^*, r^* + 1)$.
\item Using the continued fraction expansion, round each entry of $\pi_R$ to the nearest rational number with denominator at most $M_1$.
Let $R'$ be the image of the rounded matrix.
\item \textbf{Output} $(\caA (R')^\perp, R')$.
\end{enumerate}
\caption{Optimization algorithm to output the smallest shrunk subspace.}\label{alg:scaling-shrunk}
\end{Algorithm}

\begin{theorem}\label{thm:roundsub-correct} \textsc{RoundSubspaces} (Algorithm~\ref{alg:scaling-shrunk}) finds the dominant independent set (and hence smallest shrunk subspace) in
$O(n^7 (m+n)^3 (m+n+p)((m+n)\log(m+n) + n^3\log(Mnp)))$
arithmetic operations.
Furthermore, the bit complexity of intermediate numbers is $\poly(m,n,p,\log M)$.
\end{theorem}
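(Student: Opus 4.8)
The plan is to check the five steps of \textsc{RoundSubspaces} (Algorithm~\ref{alg:scaling-shrunk}) one at a time, invoking Theorems~\ref{thm:k-gurv}, \ref{thm:indep-scaling}, \ref{thm:decision-correct}, \ref{thm:indep-correct}, \ref{thm:close} and~\ref{thm:round} as black boxes, and then to add up the running times. Clearing denominators, we may assume the Kraus operators of $\Phi$ have integer entries of absolute value at most $M$. In Step~1, Theorem~\ref{thm:k-gurv} says $\capa_k\Phi>-\infty$ exactly when $\ncrank\caA\geq k$, and \textsc{DecisionSinkhorn}$(\Phi,k,0)$ decides this correctly by Theorem~\ref{thm:decision-correct}, so the binary search returns $k^*=\ncrank\caA$. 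In Step~2, Theorem~\ref{thm:indep-scaling} (equivalence of items~\ref{it:lex} and~\ref{it:cap}, with $k$ fixed to $k^*$) says $\capa_{k^*,r}\Phi>-\infty$ iff $(k^*,r)\leq(k^*,r^*)$, i.e.\ iff $r\leq r^*$, so the binary search returns $r^*$. Consequently $(k^*,r^*+1)$ is violated, so $\capa_{k^*,r^*+1}\Phi=-\infty$, the hypothesis of Theorem~\ref{thm:indep-correct} holds, and Step~3 returns an $\eps$-independent set $(L,R)$ violating $(k^*,r^*+1)$ with $\eps=1/(2M_0M_1^2)$.

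The crux is the rounding in Steps~4--5. Comparing exponents, $M_0M_1^2=e^{O(n^5\log(Mn)+n^2(n+m)\log(m+n)+n^2\log Mp)}$ dominates $1/\eps_0=e^{O(n(n+m)\log(m+n)+n\log M)}$, so $\eps\leq\eps_0$ and Theorem~\ref{thm:close} applies, giving $\|\pi_R-\pi_{R^*}\|_2\leq M_0\eps=1/(2M_1^2)$. Since the operator norm bounds each entry in absolute value, every entry of $\pi_R$ is within $1/(2M_1^2)$ of the corresponding entry of $\pi_{R^*}$, which by Theorem~\ref{thm:round} is a rational of denominator at most $M_1$. As distinct rationals with denominators at most $M_1$ differ by more than $M_1^{-2}$, that entry of $\pi_{R^*}$ is the unique rational of denominator at most $M_1$ within distance $1/(2M_1^2)$ of the entry of $\pi_R$, and hence is exactly what the continued-fraction expansion outputs. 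Therefore the rounded matrix equals $\pi_{R^*}$ and $R'=\Img\pi_{R^*}=R^*$. Finally, since $\dim L^*+\dim R^*=m+n-k^*$ is maximal among independent sets and $L^*\subseteq(\caA R^*)^\perp$, a dimension count forces $L^*=(\caA R^*)^\perp$, so Step~5 outputs $(\caA(R')^\perp,R')=((\caA R^*)^\perp,R^*)=(L^*,R^*)$, as claimed.

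For the running time, the single call to \textsc{ApproxIndep}$(\Phi,k^*,r^*+1,\eps)$ dominates. By Theorem~\ref{thm:indep-correct} it costs $O(k^{*2}n^3(m+n)^3(m+n+p)\log(2e^2nk^{*2}/\eps))$; using $k^*\leq n$ and $\log(1/\eps)=O(\log(M_0M_1^2))=O(n^2(m+n)\log(m+n)+n^5\log(Mnp))=O(n^2((m+n)\log(m+n)+n^3\log(Mnp)))$, this is $O(n^7(m+n)^3(m+n+p)((m+n)\log(m+n)+n^3\log(Mnp)))$, the claimed bound. Steps~1 and~2 make $O(\log n)$ calls each to \textsc{DecisionSinkhorn} with $k\leq n$, so by Theorem~\ref{thm:decision-correct} they contribute $O(\log n\cdot(m+n+p)(n^3(m+n)^4\log(m+n)+n(m+n)^2\log Mp))$, which is dominated; Step~4 performs $m^2$ continued-fraction expansions of cost $\poly(\log M_1)=\poly(n,\log M)$ each, and Step~5 is $\poly(m,n,p)$ linear algebra, both of lower order. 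For bit complexity: the numbers arising inside \textsc{DecisionSinkhorn} and \textsc{ApproxIndep} have bit length $\poly(m,n,p,\log M)$ by (the proofs of) Theorems~\ref{thm:decision-correct} and~\ref{thm:indep-correct}; the continued fractions output rationals of denominator at most $M_1=e^{O(n^5\log(Mn))}$, hence of polynomial bit length; and computing $\caA(R')^\perp$ is linear algebra over inputs of polynomial bit length, so all intermediate numbers stay of polynomial bit length.

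I expect the delicate point to be the balance between accuracy and cost in the rounding. The stability theorem (\ref{thm:close}) requires $\eps\leq\eps_0$, and the bit-complexity bound (\ref{thm:round}) forces $\eps$ down to about $M_1^{-2}$ before the rounding recovers $\pi_{R^*}$; yet $\log(1/\eps)$ must stay within $O(n^2((m+n)\log(m+n)+n^3\log(Mnp)))$ so that the $\log(1/\eps)$ factor in \textsc{ApproxIndep}'s complexity does not overshoot the target bound. Checking $\eps\leq\eps_0$ and, simultaneously, that $\log(1/\eps)$ has exactly the right size is where essentially all of the arithmetic lies; the rest is bookkeeping with the cited theorems.
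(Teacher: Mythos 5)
Your proposal is correct and follows essentially the same route as the paper's proof: correctness of Steps 1--3 via Theorems~\ref{thm:k-gurv}, \ref{thm:indep-scaling}, \ref{thm:decision-correct}, \ref{thm:indep-correct}, the rounding argument combining Theorems~\ref{thm:close} and~\ref{thm:round} with the entrywise bound of Lemma~\ref{lem:entries}, and the runtime dominated by the single \textsc{ApproxIndep} call with $\log(1/\eps)=O(n^2((m+n)\log(m+n)+n^3\log(Mnp)))$. The only differences are cosmetic: you additionally spell out the check $\eps\leq\eps_0$ and the identification of the output with $(L^*,R^*)$, details the paper leaves implicit.
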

\noindent As the input size is $mnp \log M$, the time complexity is at most the tenth power of the input size.

\begin{remark}[finite arithmetic]
In the following sections, we describe our algorithms assuming that we can compute the exact eigenvalue decomposition of Hermitian matrices and the exact values of simple functions, e.g., $\log$, $\exp$, and square root, for the simplicity of exposition.
In Appendix~\ref{sec:finite}, we show how to modify them to polynomial time algorithms in the \emph{finite arithmetic} model with only polynomially many bits.
\end{remark}

\section{Majorization scaling}\label{sec:maj}
To describe \textsc{SinkhornDecision}, we'll cast it as a special case of a more general problem we call \emph{majorization scaling}.

\begin{definition}[Majorization scalability]
For a vector $x \in \R^n$, let $x^\downarrow \in \R^n$ denote $x$ sorted in non-increasing order. The vector $x \in \R^n$ is \emph{weakly majorized} by $y \in \R^n$, denoted by $x \preceq y$, if
$$ \sum_{i = 1}^s x^\downarrow_i \leq \sum_{i = 1}^s y_i^\downarrow $$ for each $s \in [n]$.
For non-increasing vectors $\alpha \in \R^n$ and $\beta \in \R^m$, we say $\Phi$ is $(\alpha,\beta)$-\emph{majorized} if
$$\lambda(\Phi^*(I_m)) \preceq \alpha, \quad \lambda(\Phi(I_n)) \preceq \beta.$$
We say $\Phi$ is $k$-\emph{scalable} to $(\alpha, \beta)$ if for every $\eps > 0$, there is a scaling of $\Phi$ which is $(\alpha,\beta)$-majorized and has size at least $k - \eps$.
\end{definition}

\begin{definition}[Down-closure of permutahedron]
We denote
$$P_\alpha := \{x \in \R^n_{\geq 0}: x \preceq \alpha\};$$
 it is the down-closure of the permutahedron of $\alpha$.
\end{definition}
Note that $\Phi$ is $k$-scalable if and only if it is $k$-scalable to $(\mathbf 1_n, \mathbf 1_m)$, the appropriate length all-ones vectors. We'll show a variational characterization of majorization scalability as well. For decreasing vectors $\alpha,\beta \in \R^d$ and $k\geq 0$, consider the following objective function:
\begin{align}
f^{\alpha,\beta}_k(X,Y,z) = \tr \Phi(X^{-1}) Y^{-1} e^{z} +  \alpha \cdot \lambda(\log X) + \beta \cdot \lambda(\log Y) - kz, \label{eq:maj-capacity}
\end{align}
and define the $\alpha,\beta$ capacity as
\begin{align*}
\capa^{\alpha,\beta}_k \Phi = \inf_{\Upsilon \in \caD} f^{\alpha,\beta}_k(\Upsilon).
\end{align*}
Note that $f^{\mathbf 1_n, \mathbf 1_m}_k = f_k$ (and hence $\capa^{\mathbf 1_n, \mathbf 1_m}_k \Phi = \capa_k \Phi$) and if we set $$\alpha_r = (\underbrace{1, \dots, 1}_{n - r}, \underbrace{1 - 1/n, \dots, 1 - 1/n}_{r}),$$
i.e. $n - r$ ones followed by $r$ entries equal to $1 - 1/n$, then we have $f^{\alpha_r, \mathbf 1_m}_k = f_{k,r}$ and $\capa^{\alpha_r, \mathbf 1_m}_k \Phi = \capa_{k,r} \Phi.$ Note that $\capa^{\alpha,\beta} \Phi$ is \emph{unitarily invariant}, i.e. $\capa^{\alpha,\beta} \Phi_{U,V} = \capa^{\alpha,\beta} \Phi$ for any $U,V$ unitary. The finiteness of $\capa^{\alpha,\beta}$ characterizes majorization scalability:
\begin{theorem}\label{thm:maj-scal}
$\capa^{\alpha,\beta} \Phi > -\infty$ if and only if $\Phi$ is $k$-scalable to $(\alpha, \beta)$.
\end{theorem}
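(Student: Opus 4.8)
The plan is to prove the two directions separately, in both cases via a careful analysis of the infimum of $f^{\alpha,\beta}_k$ over $\caD$. I will first reduce the problem to a more symmetric form using the substitution $X \gets e^X$, $Y \gets e^Y$ (so $X \succeq 0$, $Y \succeq 0$, i.e. $X,Y$ range over positive semidefinite matrices), under which $\lambda(\log X) = \lambda(X)$ and the domain constraint $X \succeq I_n$ becomes $X \succeq 0$. In these coordinates $f^{\alpha,\beta}_k(X,Y,z) = \tr \Phi(e^{-X}) e^{-Y} e^z + \alpha \cdot \lambda(X) + \beta \cdot \lambda(Y) - kz$, which is jointly geodesically convex and separately Euclidean convex in each block as noted in the geodesic convexity remark; the term $\alpha \cdot \lambda(X)$ is convex in $X$ because it is a maximum of linear functions $\tr(U \diag(\alpha) U^\dagger X)$ over unitaries $U$ (this is the standard fact that $\alpha \cdot \lambda(\cdot)$ is the support function of the permutahedron orbit, and it is exactly here that we use $\alpha$ non-increasing).

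For the direction ``$k$-scalable $\implies \capa^{\alpha,\beta}_k > -\infty$'': suppose $\Phi$ has, for each $\eps>0$, a scaling $\Phi_{B,C}$ that is $(\alpha,\beta)$-majorized with size $\geq k-\eps$. Unrolling the definition of the objective at the point $(X,Y,z)$ corresponding to $B,C$ and using $\capa^{\alpha,\beta}_k \Phi_{B,C} = \capa^{\alpha,\beta}_k \Phi$ (unitary invariance plus the fact that scaling by positive definite $B,C$ merely translates the optimization variables), I would show $f^{\alpha,\beta}_k$ evaluated at the identity for the scaled map is $\tr \Phi_{B,C}(I_n) - k \cdot 0 = \tr \Phi_{B,C}(I_n) \geq k - \eps$ plus the majorization correction. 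More carefully: the key inequality is that for any positive semidefinite $H$ with $\lambda(H) \preceq \alpha$ one has $\alpha \cdot \lambda(X) - \tr(HX) \geq $ something bounded below (a Fenchel-type inequality: $\alpha \cdot \lambda(X)$ is the support function of $P_\alpha$, so it dominates $\tr(HX)$ whenever $\lambda(H) \in P_\alpha$, i.e. $\lambda(H) \preceq \alpha$ and $H \succeq 0$). Applying this with $H = \Phi^*_{B,C}(I_m)$ and the analogous statement for $\beta$ and $\Phi_{B,C}(I_n)$, and choosing the evaluation point to exploit doubly-substochasticity, the objective stays $\geq k - O(\eps)$, bounding the capacity below. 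This is essentially the ``easy'' direction and mirrors the classical Gurvits argument.

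For the harder direction ``$\capa^{\alpha,\beta}_k > -\infty \implies k$-scalable'': this is where the main work lies. I would argue the contrapositive is false by a compactness/first-order-optimality argument, or more directly: if the capacity is finite, run alternating minimization (the Sinkhorn-style updates: optimize $z$, then $X$, then $Y$, cyclically). The finiteness of the infimum plus geodesic convexity and the fact that each block update strictly decreases the objective unless a fixed point is reached should force the iterates to approach a point where the gradient (in each block) vanishes. At such an approximate fixed point, the stationarity conditions in $X$ and $Y$ translate precisely into the statement that $\lambda(\Phi^*(I_m)) \preceq \alpha$ and $\lambda(\Phi(I_n)) \preceq \beta$ approximately for the correspondingly-scaled map, and stationarity in $z$ gives $\tr \Phi(I_n) \geq k$ approximately; together these say the scaled map is an approximate $k$-scaling to $(\alpha,\beta)$. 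The main obstacle — and the reason this needs care — is that the minimum of $f^{\alpha,\beta}_k$ may not be attained (the infimum can be approached only in a limit where $X$ or $Y$ has eigenvalues going to infinity), so one cannot simply invoke existence of a minimizer and read off its stationarity conditions. The standard fix, which I would follow, is to show that along the alternating minimization the objective decrease per step can be lower-bounded in terms of the failure of the majorization/size constraints (a potential-function / KL-divergence-style argument analogous to the operator Sinkhorn analysis of \cite{Garg2019}, but now with the permutahedron constraint replacing the doubly-stochastic one), so that after finitely many steps the constraints must be satisfied to within $\eps$. Verifying that the first-order optimality condition for the non-smooth convex function $X \mapsto \alpha \cdot \lambda(X)$ — namely, that its subdifferential at $X$ consists of matrices of the form $U \diag(\alpha) U^\dagger$ where $U$ simultaneously diagonalizes $X$ — correctly encodes weak majorization of $\lambda(\Phi^*(I_m))$ by $\alpha$ is the delicate technical point, and I expect most of the proof's length to be spent there and on the quantitative per-step progress bound that makes the alternating minimization terminate.
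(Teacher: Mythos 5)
Your direction ``$\capa^{\alpha,\beta}_k \Phi > -\infty \implies$ $k$-scalable'' is essentially the paper's: it is exactly the analysis of \textsc{MajSinkhorn} (Theorem~\ref{thm:maj-sinkhorn}), where the per-iteration progress is measured by the generalized KL divergence to the down-closed permutahedron (Lemmas~\ref{lem:dual}, \ref{lem:prog-bound}) and small divergence is converted to exact majorization by a slight scale-down (Lemma~\ref{lem:kl-scale}); the subdifferential worry you flag is handled there via the explicit KL-projection dual rather than nonsmooth first-order conditions. The problem is the other direction, which you call ``easy.'' Your argument hinges on the claim that $\capa^{\alpha,\beta}_k \Phi_{B,C} = \capa^{\alpha,\beta}_k \Phi$ because ``scaling by positive definite $B,C$ merely translates the optimization variables.'' That is false for non-constant $\alpha,\beta$: under the change of variables $X \mapsto CXC^\dagger$ the term $\alpha\cdot\lambda(\log X)$ is not additive (only weakly subadditive, by Lidskii, i.e.\ Lemma~\ref{lem:subadd}), and the constraint $X \succeq I_n$ defining $\caD$ is not preserved. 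Only in the classical case $\alpha = \bfone$, where $\alpha\cdot\lambda(\log X) = \log\det X$, does the Gurvits-style invariance you invoke hold. Moreover, even granting a change of variables, a single $\eps$-approximate scaling cannot give a uniform lower bound: with size only $\geq k-\eps$ the directional-derivative/Fenchel estimate yields $f \geq (k-\eps) - \eps z - (\text{scaling-dependent terms})$, which tends to $-\infty$ as $z \to \infty$; and once the domain is relaxed to all $X \succ 0$ (as the change of variables forces), the support-function inequality $\tr(HX') \leq \alpha\cdot\lambda(X')$ for $\lambda(H)\preceq\alpha$ fails in the needed direction unless $X' \succeq 0$ or $\sum_i \lambda_i(H) = \sum_i \alpha_i$ exactly (cf.\ the two cases of Lemma~\ref{lem:logdet}, Item~\ref{it:maj}).

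This is precisely why the paper's proof of this implication is not a short Fenchel computation. It first uses compactness to extract exact targets $\mu \preceq \alpha$, $\nu \preceq \beta$ with $\sum_i \mu_i = \sum_i \nu_i = k$ such that $\Phi$ is $(\mu,\nu)$-scalable, then compares the majorization capacity to the triangular-scaling capacity via $\capa_{\mu,\nu}\Phi \leq \capa^{\alpha,\beta}_k\Phi$ (Lemma~\ref{lem:caps-ineq}) — the triangular capacity does enjoy an exact transformation rule (Lemma~\ref{lem:change-vars}) — and finally invokes the nontrivial result of \cite{Franks2018} (Theorem~\ref{thm:tri}, Item~\ref{it:scal}) that $(\mu,\nu)$-scalability is equivalent to the existence of unitaries $U,V$ with $\capa_{\mu,\nu}\Phi_{U,V} > -\infty$; unitary invariance of $\capa^{\alpha,\beta}_k$ then closes the argument. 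Your proposal replaces this machinery with an invariance statement that does not hold, so the ``scalable $\implies$ finite capacity'' half has a genuine gap.
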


The proof goes through a notion of scaling by triangular matrices considered in \cite{Franks2018}. We postpone the proof to Section \ref{sec:tri} because it is not needed for the description of our majorization scaling algorithm. Note that Theorem \ref{thm:maj-scal} implies the equivalence between (2) and (3) of Theorem \ref{thm:k-gurv}.

\subsection{Sinkhorn algorithm for majorization scaling}\label{sec:maj-sinkhorn}

Here we describe a Sinkhorn-style algorithm, \textsc{MajSinkhorn}, for finding a scaling of $\Phi$ to $(\alpha, \beta)$ with size $k$.
The idea is alternating minimization of the function $f_k^{\alpha, \beta}$ (see \eqref{eq:maj-capacity} for definition).
We describe how to update $X, Y, z$ efficiently.

\subsubsection*{Update of $X$.}
First, we describe the subroutine \textsc{KLProject} for how to update $X$ while fixing $Y$ and $z$.
It suffices to solve
\begin{alignat*}{3}
    &\text{minimize} & \quad & \tr(X^{-1}C) + \alpha \cdot \lambda(\log X) \\
    &\text{subject to} & \quad & X \succeq I.
\end{alignat*}
where $C = e^z\Phi^*(Y^{-1}) \succeq O$.
Let $X = Ue^{\diag(x)}U^*$, where $U$ is an unitary matrix and $x \geq \bfzero$ is a nonnegative vector with nondecreasing entries, i.e., $x_1 \geq \dots \geq x_n \geq 0$.
We can get rid of $U$ using $\tr(X^{-1}C) = \tr(Ue^{\diag(-x)}U^*C) \geq e^{\diag(-x)}\cdot\lambda(C)$, where the equality is attained when $U$ diagonalizes $C$, and the other term $\alpha \cdot \lambda(\log X)$ does not depend on $U$.
Thus, we obtain an equivalent optimization problem only in $x$.
\begin{equation}\label{eq:Sinkhorn-X}
\begin{alignedat}{3}
    &\text{minimize}   & \quad & \sum_{i=1}^n \lambda_i(C) e^{-x_i} + \alpha_ix_i \\
    &\text{subject to} & \quad & x_1 \geq \dots \geq x_n \geq 0.
\end{alignedat}
\end{equation}
We show that this problem is the dual of generalized KL-projection of $\lambda(C)$ onto $P_\alpha$, (the down-closure of) the permutahedron of $\alpha$.

\begin{definition}[Generalized KL divergence]
For two nonnegative vectors $p, q \in \R^n$, define
$$D(p|| q) = \sum_{i=1}^n p_i \log\frac{p_i}{q_i} - p_i + q_i$$
define $D(P_\alpha|| q)$ to be $\min_{p \in P_\alpha} D(p|| q)$.
\end{definition}

KL-projection onto the permutahedron of $\alpha$ can be solved in $O(n^2)$ time, see e.g., \cite{Suehiro2012}.
Furthermore, we can construct the optimal $x$ from a dual optimal solution.

\begin{lemma}\label{lem:dual}
    Let $q = \lambda(C)$. The dual of \eqref{eq:Sinkhorn-X} is equivalent to the following convex optimization:
    \begin{equation}\label{eq:KL-permuta}
    \begin{alignedat}{3}
        &\text{minimize}   & \quad & D(p|| q) \\
        &\text{subject to} & \quad & p(S) \leq \sum_{i=1}^{\abs{S}} \alpha_i \quad (S \subseteq [n]) \\
        &                  & \quad & p \geq 0,
    \end{alignedat}
    \end{equation}
i.e. the generalized KL projection of $q$ onto $P_\alpha$.
In particular, the value of $\eqref{eq:Sinkhorn-X}$ is $\tr C - D(P_\alpha|| q)$.
Furthermore, we can solve \eqref{eq:Sinkhorn-X} as well as \eqref{eq:KL-permuta} in $O(n^2)$ time.
\end{lemma}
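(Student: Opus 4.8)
The plan is to identify \eqref{eq:Sinkhorn-X} as a convex optimization problem over a nonnegative vector $x$ (after reparametrizing $x_i \gets e^{-x_i}$, or keeping $x_i$ and using convexity of $t \mapsto \lambda_i(C) e^{-t} + \alpha_i t$), and then compute its Lagrangian dual. First I would rewrite \eqref{eq:Sinkhorn-X} with the substitution $p_i = \alpha_i$ replaced by dual variables: the constraints $x_1 \geq \dots \geq x_n \geq 0$ form a polyhedral cone, so I would introduce a multiplier $\mu_S \geq 0$ for each "prefix-type" constraint. Concretely, the constraint set $\{x : x_1 \geq \dots \geq x_n \geq 0\}$ is exactly the dual cone of the cone generated by the indicator vectors $\mathbf 1_S$ for $S$ a prefix $\{1,\dots,j\}$; equivalently, writing the constraints as $x_j - x_{j+1} \geq 0$ and $x_n \geq 0$, the nonnegative combinations of these are precisely the vectors $c$ with $\sum_{i=1}^{j} c_i \geq 0$ for all $j$, i.e. $-c \in$ (the cone dual to $P_\alpha$'s recession structure). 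I would then form the Lagrangian $L(x,\mu) = \sum_i \lambda_i(C) e^{-x_i} + \alpha_i x_i - \sum_j \mu_j (\text{constraint}_j)$, minimize over $x \in \R^n$ (unconstrained), and simplify. Carrying out the inner minimization over $x_i$ gives the stationarity condition $-\lambda_i(C) e^{-x_i} + \alpha_i - (\text{multiplier combination})_i = 0$, which, after setting $p_i := \lambda_i(C) e^{-x_i}$, becomes exactly $p_i = \alpha_i - (\text{something} \geq 0 \text{ in a prefix sense})$; plugging back in, the dual objective collapses to $\tr C - D(p \| q)$ up to sign, with $p$ ranging over precisely $P_\alpha = \{p \geq 0 : p(S) \leq \sum_{i=1}^{|S|}\alpha_i \ \forall S\}$. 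Strong duality holds because the primal is convex with a strictly feasible point (e.g. $x = (n, n-1, \dots, 1)$ lies in the relative interior), so Slater's condition applies.

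The identification of the feasible region of the dual as $P_\alpha$ is the combinatorial heart, and I expect it to require the following observation: minimizing $\sum_i \lambda_i(C)e^{-x_i} + \alpha_i x_i$ over the \emph{ordered} cone $x_1 \geq \dots \geq x_n \geq 0$ is, by a standard exchange/rearrangement argument, the same as minimizing over \emph{all} $x \geq 0$ but with $\alpha$ replaced by its "majorization envelope" — and since $\lambda(C)$ is already sorted in nonincreasing order and $\alpha$ is nonincreasing, the ordering constraint on $x$ is automatically satisfied at the optimum of the relaxed problem, so the two problems coincide. This lets me drop the monotonicity constraints, and then the dual of $\min_{x \geq 0} \sum_i q_i e^{-x_i} + \alpha_i x_i$ is a textbook computation giving the generalized KL projection of $q$ onto the box $\{0 \leq p \leq \alpha\}$; the upgrade from the box to the permutahedral down-set $P_\alpha$ comes back in precisely because of the reintroduced ordering. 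I would double-check this by verifying the KKT conditions of \eqref{eq:KL-permuta} match the stationarity of \eqref{eq:Sinkhorn-X}: at the optimum, $p_i/q_i = e^{-x_i}$, i.e. $x_i = \log(q_i/p_i)$, and complementary slackness on the tight sets $S$ recovers the ordering $x_1 \geq \dots \geq x_n$ together with $x_n \geq 0$ (the latter from $p_i \leq q_i$, which holds since $\alpha_i \leq 1$... more carefully, $x_n \geq 0 \iff p_n \leq q_n$, which needs a separate short argument using that $C = e^z \Phi^*(Y^{-1})$ and $z \geq 0$, $Y \succeq I$ — actually this is where the specific form of $C$ enters, though one could also just note the constraint $X \succeq I$ is what forces $x \geq 0$ and handle it as one more multiplier).

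For the final "value equals $\tr C - D(P_\alpha\|q)$" claim, I would combine strong duality with the observation that the dual optimum is attained at the KL projection $p^* = \argmin_{p \in P_\alpha} D(p\|q)$, so the common optimal value is $\tr C - D(p^*\|q) = \tr C - D(P_\alpha \| q)$. Finally, for the $O(n^2)$ running time: KL projection onto the (down-closure of the) permutahedron of $\alpha$ is a known algorithmic primitive — I would cite \cite{Suehiro2012}, which handles projection onto the base polytope / permutahedron under Bregman divergences via a pool-adjacent-violators-type sweep, and note that the down-closure only adds the trivial box constraint $p \leq q$ coordinatewise, handled in $O(n)$ extra time. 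Reconstructing the primal $x$ from $p^*$ is immediate via $x_i = \log(q_i / p^*_i)$ (with the convention $x_i = $ large when $p^*_i = 0$, or more carefully $x_i = +\infty$ meaning the corresponding eigendirection of $X$ is sent to infinity — but in the algorithm this is capped, see the finite-arithmetic remark), costing $O(n)$. The main obstacle I anticipate is getting the duality bookkeeping exactly right so that the monotonicity constraints on $x$ translate cleanly into the permutahedral (rather than merely box) constraints on $p$; the rearrangement-inequality argument sketched above is the cleanest route, but one must be careful that it is used only to justify dropping the constraints at the optimum, not to claim the feasible regions coincide.
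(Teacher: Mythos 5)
Your first-paragraph plan (dualize \eqref{eq:Sinkhorn-X} directly, with one multiplier per ordering constraint $x_j - x_{j+1}\geq 0$ and $x_n\geq 0$, use stationarity to set $p_i = q_i e^{-x_i}$, and invoke Slater) is sound, and it is essentially the paper's argument run in the opposite direction: the paper starts from \eqref{eq:KL-permuta}, restricts to the $n$ prefix constraints, forms the Lagrangian with multipliers $y_j\geq 0$, obtains $p_i = q_i\exp(-\sum_{j\geq i} y_j)$, and the change of variables $x_i=\sum_{j\geq i}y_j$ recovers \eqref{eq:Sinkhorn-X}; Slater gives strong duality, and the $O(n^2)$ bound is obtained by citing \cite{Suehiro2012} for the projection onto the permutahedron together with \cite{Lim2016a} to pass to its down-closure. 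However, two points in your write-up are genuinely problematic.

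First, the ``rearrangement'' shortcut you call the combinatorial heart is false: if you drop the ordering constraints, the problem separates and its minimizer is $x_i=\max\{0,\log(q_i/\alpha_i)\}$, and $q_i/\alpha_i$ need not be nonincreasing even though both $q$ and $\alpha$ are (try $q=(10,1)$, $\alpha=(10,1/100)$: the relaxed optimum has $x_1=0<x_2$). So the relaxed problem has a strictly smaller value in general, and its dual is the KL projection onto the box $\{0\leq p\leq\alpha\}$, not onto $P_\alpha$; following that route the lemma would come out wrong, so you must dualize the ordering constraints directly as in your first paragraph. Second, even along the correct route, stationarity only yields the $n$ prefix constraints $\sum_{i\leq j}p_i\leq\sum_{i\leq j}\alpha_i$; to identify the dual with the projection onto $P_\alpha$ (constraints for all $S\subseteq[n]$) and hence to conclude that the value is $\tr C - D(P_\alpha||q)$, you still need the fact that the minimizer of $D(\cdot||q)$ over $P_\alpha$ is sorted nonincreasingly because $q$ is — the paper proves this with a short swap argument, and your KKT remark gestures at it without supplying it. Relatedly, the down-closure $P_\alpha$ is not the permutahedron intersected with a coordinatewise box $p\leq q$ (it relaxes the total-sum equality); the paper instead computes the permutahedron projection with its dual variable and zeroes out the negative dual entries, which is what keeps the whole procedure $O(n^2)$.
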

\begin{proof}
    Note that $q_1 \geq \dots \geq q_n$.
    First observe that any optimal solution $p^*$ also obeys the same ordering, i.e., $p^*_1 \geq \dots \geq p^*_n$.
    To see this, suppose for the contrary that $p_i^* < p_j^*$ for some $i < j$.
    Let $p$ be a vector obtained by swapping the $i$th entry with the $j$th entry in $p^*$.
    Then, $p$ is also feasible and
    \[
        D(p || q) - D(p^* || q)
        = (p^*_j - p^*_i) \log\frac{p^*_i}{q_i} + (p^*_i - p^*_j)\log\frac{p^*_j}{q_j}
        \leq (p^*_j - p^*_i) \log\frac{p^*_i}{q_j} + (p^*_i - p^*_j)\log\frac{p^*_j}{q_j}
        = (p^*_i - p^*_j) \log\frac{p^*_j}{p^*_i} < 0,
    \]
    so $D(p || q) < D(p^* || q)$, which is a contradiction.

    Hence, it suffices to consider constraints $\sum_{i=1}^j p_i \leq \sum_{i=1}^{j} \alpha_i$ for $j=1,\dots, n$.
    The Lagrangian of the problem \eqref{eq:KL-permuta} is
    \begin{align*}
        L(p, y) = \sum_{i=1}^n \left( p_i \log\frac{p_i}{q_i} - p_i + q_i \right) + \sum_{j=1}^n y_j \sum_{i=1}^j (p_i - \alpha_i),
    \end{align*}
    where $y_j \geq 0$ ($j=1,\dots,n$) is a dual variable.
    Solving $\frac{\partial L}{\partial p_i} = 0$, we obtain
    $
        p_i = q_i \exp\left(-\sum_{i=j}^n y_j\right).
        $
    Substituting this, we obtain dual problem:
    \begin{equation}\label{eq:KL-permuta-y}
    \begin{alignedat}{3}
        &\text{maximize}   & \quad & -\sum_{i=1}^n q_i \exp \left(-\sum_{j=i}^n y_j\right) - \alpha_i \sum_{j=i}^n y_j + \sum_{i=1}^n q_i \\
        &\text{subject to} & \quad & y_1, \dots, y_n \geq 0.
    \end{alignedat}
    \end{equation}
    Chainging a variable to $x_i = \sum_{j=i}^n y_j$, we obtain \eqref{eq:Sinkhorn-X} (with an extra constant $\sum_i q_i = \tr C$ in the objective function).
    Note that \eqref{eq:Sinkhorn-X} satisfies the Slater condition, so the strong duality holds.

    The algorithm for solving \eqref{eq:Sinkhorn-X} works as follows.
    First, it computes the KL-projection of $q$ onto the permutahedron of $\alpha$ as well as the corresponding dual variable $y'$, which can be found in $O(n^2)$ time by e.g. \cite{Suehiro2012,Lim2016a}.
    Then, the optimal solution $y$ of \eqref{eq:KL-permuta-y} can be obtained by setting negative entries of $y'$ to zero~\cite[Theorem~3.3]{Lim2016a}.
    Once we have an optimal solution $y$ of \eqref{eq:KL-permuta-y}, the optimal solution $x$ of \eqref{eq:KL-permuta} is given by $x_i = \sum_{j=i}^n y_j$. \end{proof}

    \begin{remark}[The case $\alpha = \mathbf 1_n$]\label{rem:simple-update} In this case, the update step is very simple. The algorithm reduces to the variant of the Sinkhorn algorithm mentioned in Section \ref{sec:warm-up} - one simply chooses $X$ to be diagonal in the same basis as $C$, and then sets each eigenvalue $x_i \geq 1$ of $X$ to scale down the corresponding eigenvalue $\lambda_i$ of $C$ so that $x_i^{-1} \lambda_i \geq 1$ (and if already $\lambda_i \leq 1$, $x_i = 1$). In the matrix scaling case, this amounts to scaling down the columns which have sums larger than 1 so that they have unit sum.
    \end{remark}

Computing the matrix $C$ takes $O(m^2n + mn^2 + pn^2)$ time and computing the eigen-decomposition of $C$ takes $O(n^3)$ time.
Overall, the time complexity of update of $X$ is $O(n^3 + m^2n + mn^2 + pn^2) = O((m+n)^3 + pn^2)$ time.

\subsubsection*{Update of $Y$.}
We can also update $Y$ in $O((m+n)^3 + pm^2)$ time mutatis mutandis.

\subsubsection*{Update of $z$.} Let $a = \tr(\Phi(X^{-1})Y^{-1}).$ Set $z \gets\log(k/a)$ if $a \leq k$. Otherwise, set $z = 0$.

\begin{Algorithm}
Algorithm \textsc{MajSinkhorn}$(\Phi, \alpha, \beta, \eps)$:
\begin{description}
\item[\hspace{.2cm}\textbf{Input:}] A CP map $\Phi:\C^{n\times n} \to \C^{m\times m}$, non-increasing vectors $\alpha \in \R^n, \beta \in \R^m$, and a parameter $\eps > 0$.

\item[\hspace{.2cm}\textbf{Output:}] A scaling of $\Phi$ that is $(\alpha, \beta)$-majorized and has size at least $(1 -  \eps)k$.

\item[\hspace{.2cm}\textbf{Algorithm:}] Set $X = I_n, Y = I_m, z = 0$. For $t \in \{0,..,T\}$:
\end{description}
\begin{enumerate}
\item
Let $a = \tr \Phi(X^{-1})Y^{-1}$. Set $z \gets\log(k/a)$ if $a \leq k$. Otherwise, set $z = 0$.
\item Set $\mu = \lambda( e^{z} X^{-1} \Phi^*(Y^{-1}))$ and $\nu = \lambda( e^{z} Y^{-1} \Phi(X^{-1}))$. Using \textsc{KLProject}, compute the divergences $\eps_X := D(P_\alpha || \mu)$ and $\eps_Y:=D( P_\beta || \nu)$, where $P_\alpha$ and $P_\beta$ are the down-closure of the permutahedron of $\alpha$ and $\beta$, respectively.
\begin{description}
\item[\textbf{If}]$\eps_X, \eps_Y \leq \min\{\alpha_1, \beta_1\} \frac{\eps^2}{2}$: \textbf{Output} $e^{z - \eps} \Phi_{Y^{-1/2}, Z^{-1/2}}.$
\item[\textbf{Else: }] $ $
\begin{description}
\item[\textbf{If}] $\eps_X \geq \eps_Y$: Using \textsc{KLProject}, set
$$X = \argmin_{X \succeq I_n} \tr(X^{-1} \Phi^*(Y^{-1}) e^z) + \alpha \cdot \lambda(\log X).$$
\item[\textbf{Else:}] Using \textsc{KLProject}, set
$$Y = \argmin_{Y \succeq I_m} \tr(Y^{-1} \Phi(X^{-1}) e^z) + \beta \cdot \lambda(\log Y).$$
\end{description}
\end{description}
\end{enumerate}
\caption{Algorithm for majorization scaling.}\label{alg:maj-scaling}
\end{Algorithm}

\subsection{Analysis of \textsc{MajSinkhorn}}
Here we describe the runtime of \textsc{MajSinkhorn}. The proof of our progress bounds requires an inequality relating products of matrices and spectra of logarithms by majorization.

\begin{lemma}\label{lem:subadd}
Let $X,Z \succ 0$ and $\alpha \in \R^n$ nondecreasing. Then
$$\alpha \cdot \lambda ( \log X^{1/2} Z X^{1/2}) \leq \alpha \cdot \lambda( \log Z) + \alpha \cdot \lambda ( \log X).$$
\end{lemma}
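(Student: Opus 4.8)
The plan is to reduce Lemma~\ref{lem:subadd} to the \emph{majorization} statement
\[
\lambda\!\left(\log X^{1/2} Z X^{1/2}\right) \;\prec\; \lambda(\log Z) + \lambda(\log X),
\]
where on the right-hand side both eigenvalue vectors are sorted in nonincreasing order, so that their sum is again sorted. Given this, the lemma follows by summation by parts: writing $S_k(H) := \sum_{i=1}^k \lambda_i(H)$ for the sum of the $k$ largest eigenvalues, one has $\alpha \cdot \lambda(H) = \alpha_n S_n(H) + \sum_{k=1}^{n-1} (\alpha_k - \alpha_{k+1}) S_k(H)$. The $k = n$ terms agree because $\tr \log X^{1/2} Z X^{1/2} = \log \det(X^{1/2} Z X^{1/2}) = \log\det X + \log\det Z = \tr\log X + \tr\log Z$, and for $k < n$ the majorization gives $S_k(\log X^{1/2} Z X^{1/2}) \le S_k(\log X) + S_k(\log Z)$; since the differences $\alpha_k - \alpha_{k+1}$ all have the same sign (by monotonicity of $\alpha$), summing these inequalities against those coefficients yields $\alpha \cdot \lambda(\log X^{1/2} Z X^{1/2}) \le \alpha \cdot \lambda(\log X) + \alpha \cdot \lambda(\log Z)$. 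Equivalently, $H \mapsto \alpha \cdot \lambda(H)$, with eigenvalues taken in the order compatible with the monotonicity of $\alpha$, is a maximum of linear functionals by the rearrangement inequality, hence Schur-convex and therefore monotone under majorization.

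To prove the majorization it is enough to check, for every $k \in [n]$, the partial-sum bound $S_k(\log X^{1/2} Z X^{1/2}) \le S_k(\log X) + S_k(\log Z)$. Set $A := \log X$ and $B := \log Z$, which are Hermitian since $X, Z \succ 0$. Then
\[
X^{1/2} Z X^{1/2} = e^{A/2} e^{B} e^{A/2} = \left(e^{A/2} e^{B/2}\right)\left(e^{A/2} e^{B/2}\right)^{\dagger},
\]
so the eigenvalues of $X^{1/2} Z X^{1/2}$ are exactly the squares of the singular values $s_i(P)$ of $P := e^{A/2} e^{B/2}$. Now apply the classical multiplicative singular-value inequality for a product of matrices (Horn's inequality / the Gel'fand--Naimark inequality), $\prod_{i=1}^k s_i(CD) \le \prod_{i=1}^k s_i(C)\, s_i(D)$, to $C = e^{A/2}$ and $D = e^{B/2}$; these matrices are positive definite, so $s_i(e^{A/2}) = e^{\lambda_i(A)/2}$ and $s_i(e^{B/2}) = e^{\lambda_i(B)/2}$. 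Taking logarithms and doubling (to pass from $P$ to $P P^{\dagger}$),
\[
S_k(\log X^{1/2} Z X^{1/2}) = 2 \sum_{i=1}^k \log s_i(P) \le \sum_{i=1}^k \left(\lambda_i(A) + \lambda_i(B)\right) = S_k(\log X) + S_k(\log Z),
\]
which is the required bound.

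The only step that is not pure bookkeeping is the multiplicative singular-value inequality for products, which is standard; the rest is the determinant identity and summation by parts, so I expect the write-up to be brief. The one thing to handle with a little care is making the ordering in $\lambda(\cdot)$ and the monotonicity direction of $\alpha$ consistent in the summation-by-parts step. An alternative to Horn's inequality is Thompson's exponential formula, which writes $e^{A/2} e^{B} e^{A/2} = e^{U A U^{\dagger} + V B V^{\dagger}}$ for suitable unitaries $U, V$ and then invokes the Ky Fan / Lidskii majorization $\lambda(U A U^{\dagger} + V B V^{\dagger}) \prec \lambda(A) + \lambda(B)$; this reaches the same conclusion but uses heavier machinery.
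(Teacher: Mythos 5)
Your proof is correct in substance, but it takes a genuinely different route from the paper. You reduce the lemma to the log-majorization $\lambda(\log X^{1/2}ZX^{1/2}) \prec \lambda(\log X)+\lambda(\log Z)$, which you establish by writing $X^{1/2}ZX^{1/2}=PP^{\dagger}$ with $P=e^{A/2}e^{B/2}$ and invoking Horn's multiplicative singular-value inequality, with equality of the full sums from the determinant identity; Abel summation against the differences of $\alpha$ then finishes. This is essentially the classical derivation of the Lidskii/Gel'fand--Naimark statement that the paper itself cites as the equivalent form (Corollary III.4.6 in Bhatia) --- your key tool is the theorem immediately preceding that corollary --- so what you buy is brevity at the cost of importing that standard inequality as a black box. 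The paper instead advertises ``a different proof'': it works directly with the weighted principal-minor function $\log\det(\alpha,\cdot)$ already developed for triangular scaling (Lemma \ref{lem:logdet}), assuming WLOG that $X^{1/2}ZX^{1/2}$ is diagonal, choosing a unitary $U$ with $X^{1/2}U$ lower triangular, and using multiplicativity of $\log\det(\alpha,\cdot)$ under triangular congruence together with the interlacing bound $\log\det(\alpha,\cdot)\le \alpha\cdot\lambda(\log\,\cdot\,)$ and unitary invariance. That argument never passes through the full majorization statement and keeps the paper self-contained, reusing machinery it needs anyway for the capacity comparisons.

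One point you flagged but should nail down: the Abel-summation step needs the differences $\alpha_k-\alpha_{k+1}$ to be \emph{nonnegative}, not merely of constant sign, since only the $k=n$ partial sum is an equality; with the paper's convention that $\lambda(\cdot)$ is sorted nonincreasingly, the inequality requires $\alpha$ nonincreasing (the ``nondecreasing'' in the statement is an ordering slip --- it is the nonincreasing version that is used, e.g., in Lemma \ref{lem:prog-bound}, and the literal nondecreasing pairing can fail). So state explicitly that $\alpha$ and $\lambda$ are paired in the same monotone order; with that fixed, your argument is complete.
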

This is equivalent to $\lambda(\log (XZ)) \preceq \lambda(\log X ) + \lambda (\log Z)$ for $X,Z$ positive definite, which is due to Lidskii \cite[Corollary~III.4.6]{bhatia2013matrix}. We include a different proof in Section \ref{sec:tri} for completeness. We now analyze \textsc{MajSinkhorn}. The following lemma describes progress in terms of the KL projection.

\begin{lemma}\label{lem:prog-bound}
Let $X \succeq I_n$ and $C \succeq 0$ be $n \times n$ matrices, and let $\alpha \in \R^n$ be a non-increasing vector. Let $\mu = \lambda( X^{-1/2} C X^{-1/2}).$ Then
$$ \tr (X^{-1} C) + \alpha \cdot \log(X) \geq D(P_\alpha || \mu) + \tr C - D(P_\alpha|| \lambda(C)).$$
In particular, the progress made in the $X$ update of \textsc{MajSinkhorn} is at least
$D(P_\alpha || \mu) \geq \eps^2/2.$
\end{lemma}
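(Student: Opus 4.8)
The plan is to prove the inequality
$$\tr(X^{-1}C) + \alpha\cdot\lambda(\log X) \geq D(P_\alpha\|\mu) + \tr C - D(P_\alpha\|\lambda(C))$$
by passing to the diagonalizing basis of $C$ and reducing everything to a statement about the scalar optimization problem \eqref{eq:Sinkhorn-X}, then recognizing the two sides as values of that problem and its perturbation by $X$. Concretely, write $C = V e^{\diag(\log q)} V^\dagger$ with $q = \lambda(C)$. As in the derivation of \textsc{KLProject}, for \emph{any} $X \succeq I_n$ we have $\tr(X^{-1}C) \geq e^{-x}\cdot q$ where $x = \lambda(\log X)$ is nonincreasing and nonnegative, with equality iff $X$ is diagonal in the basis $V$; moreover $\alpha\cdot\lambda(\log X) = \alpha\cdot x$ regardless. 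So the left-hand side is at least $\sum_i q_i e^{-x_i} + \alpha_i x_i$, which is the objective of \eqref{eq:Sinkhorn-X} evaluated at $x$. By Lemma \ref{lem:dual}, the \emph{minimum} of that objective over $x_1 \geq \cdots \geq x_n \geq 0$ equals $\tr C - D(P_\alpha\|q)$, which accounts for two of the three terms on the right. Thus it remains to show the ``suboptimality gap'' of our particular $x$ in \eqref{eq:Sinkhorn-X} is at least $D(P_\alpha\|\mu)$.

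For that I would track the dual side. The minimizer $x^*$ of \eqref{eq:Sinkhorn-X} corresponds (via $x_i = \sum_{j\geq i} y_j$, Lemma \ref{lem:dual}) to the KL projection $p^* \in P_\alpha$ of $q$ onto the down-closed permutahedron, with $p^*_i = q_i e^{-x^*_i}$ and $D(P_\alpha\|q) = D(p^*\|q)$. Now our vector $x$ defines the \emph{feasible} primal point $p := (q_i e^{-x_i})_i$; this lies in $P_\alpha$ precisely because $\mu = \lambda(X^{-1/2}CX^{-1/2})$ — wait, more carefully, $p$ is exactly $\mu$ up to the same basis: since $X$ and $C$ need not commute, $\mu = \lambda(X^{-1/2}CX^{-1/2})$ is \emph{majorized} by $(q_ie^{-x_i})_i^\downarrow$ by Lemma \ref{lem:subadd} (applied with $\log$ of the product, noting $X^{-1/2}CX^{-1/2}$ has the same spectrum as $X^{-1}C$), hence $D(P_\alpha\|\mu) \leq D(\mu\|\text{the vector }(q_ie^{-x_i}))$ type comparison lets us replace $\mu$ by the cleaner vector. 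The key inequality is then the standard Bregman/Pythagorean inequality for the generalized KL divergence: for the convex set $P_\alpha$, the projection $p^*$ of $q$ satisfies, for any feasible $p$,
$$D(p\|q) \geq D(p\|p^*) + D(p^*\|q).$$
Specializing to $p = \mu$ (or the vector $(q_ie^{-x_i})$, then using monotonicity under majorization to pass to $\mu$), and using $D(\mu\|q) = \sum_i q_ie^{-x_i}(\text{stuff}) = \big(\sum_i q_ie^{-x_i} + \alpha_ix_i\big) - \alpha\cdot x - \tr C + \ldots$ — i.e. rewriting $D(\mu\|q)$ in terms of the objective of \eqref{eq:Sinkhorn-X} at $x$ — collapses the bookkeeping: the suboptimality gap of $x$ equals $D(\mu\|q) - D(P_\alpha\|q) \geq D(\mu\|p^*) \geq D(P_\alpha\|\mu)$, the last step because $p^* \in P_\alpha$. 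Chaining gives exactly the claimed bound.

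For the ``in particular'' clause: the $X$ update replaces $X$ by the minimizer of $\tr(X^{-1}\Phi^*(Y^{-1})e^z) + \alpha\cdot\lambda(\log X)$, i.e. drives the suboptimality gap to $0$; so the decrease in $f_k^{\alpha,\beta}$ during that step is exactly the gap, which by the main inequality (with $C = e^z\Phi^*(Y^{-1})$) is at least $D(P_\alpha\|\mu)$ where $\mu = \lambda(e^zX^{-1}\Phi^*(Y^{-1}))$ as in the algorithm. Since the $X$ update is only performed when $\eps_X = D(P_\alpha\|\mu) \geq \eps_Y$ and not both are below $\min\{\alpha_1,\beta_1\}\eps^2/2$, we get $D(P_\alpha\|\mu) \geq \eps^2/2$ after absorbing the $\min\{\alpha_1,\beta_1\}$ factor appropriately (matching the normalization in the algorithm statement). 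I expect the main obstacle to be the noncommutativity step: carefully justifying that one may replace $\mu = \lambda(X^{-1/2}CX^{-1/2})$ by the commuting vector $(q_ie^{-\lambda_i(\log X)})$ at the cost only of making the bound stronger, which is exactly where Lemma \ref{lem:subadd} (Lidskii) and monotonicity of $D(P_\alpha\|\cdot)$ under majorization are needed; the rest is the standard Bregman projection inequality plus algebraic rearrangement.
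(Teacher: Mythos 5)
Your opening reduction is where the argument breaks. After bounding $\tr(X^{-1}C)\geq \sum_i q_ie^{-x_i}$ (the aligned von Neumann pairing), you are left needing the purely scalar claim that the suboptimality gap of $x$ in \eqref{eq:Sinkhorn-X} dominates $D(P_\alpha\|\mu)$, where $\mu=\lambda(X^{-1/2}CX^{-1/2})$ still remembers the eigenbasis of $X$. That claim is false. Take $n=2$, $\alpha=(1,1)$, $C=4e_1e_1^\dagger$ (so $q=(4,0)$), and $X$ with eigenvalues $(4,1)$ whose eigenvectors sit at $45^\circ$ to $e_1,e_2$. Then $x=(\log 4,0)$ is the exact minimizer of \eqref{eq:Sinkhorn-X}, so the scalar gap is $0$, yet $\mu=(5/2,0)\not\preceq(1,1)$, so $D(P_\alpha\|\mu)>0$. (The lemma itself survives in this example only because $\tr(X^{-1}C)=5/2$ exceeds $q\cdot e^{-x}=1$; your first inequality discards exactly the slack that is needed.) Relatedly, the majorization you invoke to bridge the noncommutativity, $\mu\preceq (q_ie^{-x_i})^\downarrow$ ``by Lemma \ref{lem:subadd}'', is not what Lidskii gives: log-majorization pairs the \emph{decreasingly sorted} spectra of $X^{-1}$ and $C$, i.e.\ it bounds $\log\mu$ by the anti-aligned sums $(\log q_i - x_{n+1-i})_i$, and the aligned version fails in the same example ($\mu_1=5/2>1=q_1e^{-x_1}$). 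The final Pythagorean step $D(\mu\|p^*)\geq D(P_\alpha\|\mu)$ is also unjustified, since KL is asymmetric and $D(P_\alpha\|\mu)=\min_{p\in P_\alpha}D(p\|\mu)$, which compares to $D(p^*\|\mu)$, not $D(\mu\|p^*)$.

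The missing idea is to never compare $\mu$ with the commuting vector at all. The paper uses Lemma \ref{lem:dual} to write $\tr C - D(P_\alpha\|\lambda(C))=\inf_{\tilde X\succeq I_n}\tr(\tilde X^{-1}C)+\alpha\cdot\lambda(\log\tilde X)$, restricts the infimum to $\tilde X\succeq X$, substitutes $\tilde X=X^{1/2}ZX^{1/2}$ with $Z\succeq I_n$, and applies Lemma \ref{lem:subadd} in the direction $\alpha\cdot\lambda(\log X^{1/2}ZX^{1/2})\leq \alpha\cdot\lambda(\log Z)+\alpha\cdot\lambda(\log X)$; a second application of Lemma \ref{lem:dual}, now to the matrix $X^{-1/2}CX^{-1/2}$ itself, produces $\tr(X^{-1}C)-D(P_\alpha\|\mu)+\alpha\cdot\lambda(\log X)$, and rearranging gives the inequality. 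Your treatment of the ``in particular'' clause (the $X$ update minimizes the relevant terms, so the decrease equals the gap, which the non-termination test bounds below) matches the paper, but it rests on the main inequality, so the proof as proposed does not go through.
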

\begin{proof} The proof is a change of variables argument. By Lemma \ref{lem:dual} and because $X \succeq I_n$, we can write
\begin{align*}
 \tr C - D(P_\alpha|| \lambda(C)) & = \inf_{\tilde X \succeq I_n} \tr (\tilde X^{-1} C) + \alpha \cdot \lambda(\log \tilde X)\\
&\leq\inf_{\tilde X \succeq X} \tr(\tilde X^{-1} C) + \alpha \cdot \lambda(\log \tilde X)\\
&=\inf_{Z \succeq I_n} \tr (Z^{-1} X^{-1/2} C X^{-1/2}) + \alpha \cdot \lambda(\log (X^{1/2} Z X^{1/2} )) \\
&\leq \inf_{Z \succeq I_n} \left[\tr (Z^{-1}  X^{-1/2} C X^{-1/2}) + \alpha \cdot \lambda( \log Z)\right] + \alpha \cdot \lambda ( \log X)\\
&= \tr (X^{-1/2} C X^{-1/2}) - D(P_\alpha || \mu) + \alpha \cdot \lambda (\log X).
\end{align*}
The second equality used the change of variables $\tilde{X} = X^{1/2} Z X^{1/2}$, the second inequality used Lemma \ref{lem:subadd}, and the final equality used Lemma \ref{lem:dual} again. Rearranging the terms yields the inequality.

For the progress claim, observe that in the $X$ update the terms in the function depending on $X$ change from $\tr X^{-1} C +  \alpha \cdot \log(X)$ to $\min_X (\tr (X^{-1} C) + \alpha \cdot \log(X))$, which is equal to $\tr C - D(P_\alpha|| \lambda(C))$ by Lemma \ref{lem:dual}. \end{proof}

We can now prove our running time bounds.
\begin{theorem}\label{thm:maj-sinkhorn}
Suppose $\Phi$ has size at most $1$. If $\capa^{\alpha, \beta} \Phi$ is finite, \textsc{MajSinkhorn} terminates in
$$2 \cdot \frac{1 - \capa^{\alpha, \beta} \Phi }{\min \{\alpha_1, \beta_1\} \eps^2 }$$ steps, and the output is an $(\alpha, \beta)$ scaling of $\Phi$ of size at least $(1 - \eps) k$. \end{theorem}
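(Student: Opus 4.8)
The plan is to prove Theorem~\ref{thm:maj-sinkhorn} by the standard three-part potential-function argument for Sinkhorn-type algorithms: \emph{(i)} the objective function $f_k^{\alpha,\beta}$ is bounded below by $\capa^{\alpha,\beta}\Phi$ on $\caD$; \emph{(ii)} each iteration that does not terminate decreases $f_k^{\alpha,\beta}$ by at least a fixed amount $\delta>0$; \emph{(iii)} the starting value $f_k^{\alpha,\beta}(I_n,I_m,0)$ is not too large. Combining these, the number of non-terminating iterations is at most $(f_k^{\alpha,\beta}(I_n,I_m,0) - \capa^{\alpha,\beta}\Phi)/\delta$. I expect $\delta = \min\{\alpha_1,\beta_1\}\eps^2/2$ from Lemma~\ref{lem:prog-bound}, and $f_k^{\alpha,\beta}(I_n,I_m,0) \le 1$ using the hypothesis that $\Phi$ has size at most $1$.

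First I would verify the starting value. At $(X,Y,z)=(I_n,I_m,0)$ we have $\log X = \log Y = 0$, so the $\alpha,\beta$ terms vanish, $z=0$ kills the $-kz$ term, and $\tr\Phi(X^{-1})Y^{-1}e^z = \tr\Phi(I_n) = \text{size}(\Phi) \le 1$. Hence $f_k^{\alpha,\beta}(I_n,I_m,0)\le 1$, and more importantly this remains an upper bound on the current objective value after the $z$-update, since the $z$-update minimizes over $z\ge 0$ and $z=0$ is feasible. (One should check that the algorithm is genuinely doing alternating minimization of $f_k^{\alpha,\beta}$: the $z$-update sets $z=\log(k/a)$ when $a=\tr\Phi(X^{-1})Y^{-1}\le k$, which is exactly the unconstrained minimizer $\partial_z(ae^z - kz)=0$, projected onto $z\ge 0$; and the $X$- and $Y$-updates are by definition the $\argmin$ over $X\succeq I_n$, $Y\succeq I_m$ of the relevant terms, solved via \textsc{KLProject}.)

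Next I would establish the per-iteration progress. In a non-terminating iteration we have $\max\{\eps_X,\eps_Y\} > \min\{\alpha_1,\beta_1\}\eps^2/2$, and the algorithm updates whichever of $X,Y$ has the larger divergence. By Lemma~\ref{lem:prog-bound} (applied with $C = e^z\Phi^*(Y^{-1})$ for the $X$-update, noting $\mu = \lambda(e^z X^{-1/2}\Phi^*(Y^{-1})X^{-1/2})$ has the same nonzero spectrum as $\lambda(e^z X^{-1}\Phi^*(Y^{-1}))$ used in the algorithm, and symmetrically for $Y$), the decrease in $f_k^{\alpha,\beta}$ is at least $D(P_\alpha\|\mu) = \eps_X$ (resp. $\eps_Y$), which exceeds $\min\{\alpha_1,\beta_1\}\eps^2/2$. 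The $z$-update can only decrease the objective further. So each non-terminating iteration decreases $f_k^{\alpha,\beta}$ by more than $\min\{\alpha_1,\beta_1\}\eps^2/2$. Since the objective stays $\ge \capa^{\alpha,\beta}\Phi$ (finite by hypothesis) and starts $\le 1$, the total number of such iterations is at most $\frac{1 - \capa^{\alpha,\beta}\Phi}{\min\{\alpha_1,\beta_1\}\eps^2/2} = 2\cdot\frac{1-\capa^{\alpha,\beta}\Phi}{\min\{\alpha_1,\beta_1\}\eps^2}$, giving the claimed bound on $T$.

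Finally I would verify the output guarantee. Upon termination we have $\eps_X,\eps_Y \le \min\{\alpha_1,\beta_1\}\eps^2/2$. The output is $e^{z-\eps}\Phi_{Y^{-1/2},X^{-1/2}}$ (the excerpt writes $Z^{-1/2}$, which I read as $X^{-1/2}$); its marginals are $e^{z-\eps}$ times $X^{-1/2}\Phi^*(Y^{-1})X^{-1/2}$ and $Y^{-1/2}\Phi(X^{-1})Y^{-1/2}$, whose spectra are $e^{-\eps}\mu$ and $e^{-\eps}\nu$. The main obstacle is the quantitative step: showing that $D(P_\alpha\|\mu)$ small forces $e^{-\eps}\mu \preceq \alpha$ (so the scaled map is $(\alpha,\beta)$-majorized) while simultaneously the size $e^{z-\eps}\tr\mu$ stays at least $(1-\eps)k$. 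For majorization: if $p^* \in P_\alpha$ achieves $D(p^*\|\mu) = \eps_X \le \min\{\alpha_1,\beta_1\}\eps^2/2$, then $p^*$ and $\mu$ are close (by Pinsker-type control of $D$), and one argues that shrinking $\mu$ by the factor $e^{-\eps}$ more than compensates for the discrepancy $\mu - p^*$, so that $e^{-\eps}\mu$ lies in $P_\alpha$; here the factor $\alpha_1$ in the threshold is what makes the partial-sum inequalities $\sum_{i\le s} e^{-\eps}\mu_i^\downarrow \le \sum_{i\le s}\alpha_i$ go through. For the size: after the $z$-update, if termination happens with $a\le k$ then $z = \log(k/a)$ gives $\tr\Phi(X^{-1})Y^{-1}e^z = k$, and $\tr\mu = \tr(e^z X^{-1}\Phi^*(Y^{-1})) = \tr(e^z \Phi(X^{-1})Y^{-1}) = k$ (using $\tr\Phi(X^{-1})Y^{-1} = \tr X^{-1}\Phi^*(Y^{-1})$); so the output size is $e^{-\eps}k \ge (1-\eps)k$. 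The case $a > k$ at termination needs a short separate check that $\tr\mu \ge k$ still holds, which I would handle by noting that the entire objective decrease argument forces $a\le k$ eventually, or by the boundedness of the iterates. I would present these quantitative estimates as a lemma and fill in the Pinsker-style constant chase.
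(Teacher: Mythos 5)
Your overall skeleton is exactly the paper's: the objective is bounded below by $\capa^{\alpha,\beta}_k\Phi$, the initial value is at most $1$ by the size hypothesis, the algorithm is alternating minimization (the $z$-update being the constrained minimizer of $ae^z-kz$ over $z\ge 0$), and each non-terminating iteration decreases $f^{\alpha,\beta}_k$ by at least $\min\{\alpha_1,\beta_1\}\eps^2/2$ via Lemma~\ref{lem:prog-bound}, giving the stated iteration bound; your reading of $Z^{-1/2}$ as $X^{-1/2}$ and your observation that $\lambda(e^zX^{-1}\Phi^*(Y^{-1}))=\lambda(e^zX^{-1/2}\Phi^*(Y^{-1})X^{-1/2})$ are both correct. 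Your worry about the case $a>k$ at termination is a non-issue and needs none of the machinery you suggest: if $a>k$ then $z=0$ and the size of $e^{z}\Phi_{Y^{-1/2},X^{-1/2}}$ is $a>k$, so the size is at least $k$ in either branch of the $z$-update, and the output size is $e^{-\eps}k\ge(1-\eps)k$.

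The genuine gap is the step you defer to ``a lemma with a Pinsker-style constant chase,'' namely that $D(P_\alpha\|\mu)\le\alpha_1\eps^2/2$ forces $e^{-\eps}\mu\preceq\alpha$ (the paper's Lemma~\ref{lem:kl-scale}). The route you sketch --- bound $\|\mu-p^*\|_1$ by a global Pinsker inequality and absorb it by the factor $e^{-\eps}$ --- does not close: Pinsker only gives $\|\mu-p^*\|_1=O\bigl(\eps\sqrt{\alpha_1\|\mu\|_1}\bigr)$, and $\|\mu\|_1$ is of order $k$, while the slack that multiplying by $e^{-\eps}$ buys on a short prefix is only of order $\eps\sum_{i\le s}\alpha_i$, which for $s=1$ is $\eps\alpha_1\ll\eps\sqrt{\alpha_1 k}$. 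The paper's argument is different and essentially forced: take $L$ to be the smallest number with $e^{-L}\mu\preceq\alpha$, let $i^*$ be a tight prefix, and plug the step vector $x=L\,\mathbf 1_{[i^*]}$ into the dual characterization of $D(P_\alpha\|\cdot)$ from Lemma~\ref{lem:dual}; this localizes the divergence to the violated prefix and yields $D(P_\alpha\|\mu)\ge(e^L-1-L)\sum_{i\le i^*}\alpha_i\ge\frac{L^2}{2}\alpha_1$, hence $L\le\eps$. In other words, you need a bound on the \emph{multiplicative} violation obtained through the dual, not an additive $\ell_1$ bound; without this (or an equivalent localized estimate) the majorization claim for the output is unproven.
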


\begin{proof} First we verify that, assuming termination, the output is an $(\alpha, \beta)$ scaling of $\Phi$ of size $k - \eps$. Set $\tilde \Phi = e^{z} \Phi_{Y^{-1/2}, X^{-1/2}}$. By the assumption that we have terminated, for $\mu = \lambda(\tilde \Phi^* (I_m))$ and $\nu = \lambda(\tilde \Phi (I_n))$, we have $D(P_\alpha || \mu),D( P_\beta || \nu) \leq \min\{\alpha_1, \beta_1\} \frac{\eps^2}{2}$. By Lemma \ref{lem:kl-scale}, we have $ e^{- \eps} \mu \preceq \alpha$ and $e^{- \eps}\nu  \preceq \beta$, so the output $e^{z-\eps} \Phi_{Y^{-1/2}, X^{-1/2}}$ is $(\alpha,\beta)$-majorized. Because $\tilde{\Phi}$ has size at least $k$ by the $z$-update, $e^{z-\eps} \Phi_{Y^{-1/2}, X^{-1/2}}$ has size $e^{- \eps} k \geq (1 -\eps) k$.

Now let's verify the running time. First note that the algorithm is alternating minimization, so the objective function decreases throughout the algorithm. The only step for which this is non-trivial is the $z$ update step. This follows from the fact that the function $g_\lambda(z) := \lambda e^z - 1$ is minimized on the domain $z > 0$ at $z = \max\{0, -\log(\lambda)\}$, which can be shown by straightforward calculus. We now claim that each time the algorithm does not terminate, it decreases the objective function $f^{\alpha, \beta}_k$ by $\min\{\alpha_1, \beta_1\} \frac{\eps^2}{2}$. Initially, the function value is at most $1$ by our assumption on the size. By definition, $f^{\alpha,\beta}_k (X,Y,z) \geq \capa^{\alpha,\beta}_k \Phi.$ This implies the desired running time bound. For the progress bound, by Lemma \ref{lem:prog-bound} the change in the objective function by an $X$ (resp. $Y$) update is at least $D(P_\alpha || \mu)$ (resp. $D( P_\beta || \nu)$), whichever is larger.
Whenever the algorithm doesn't terminate, the larger of the two of these is at least $\min\{\alpha_1, \beta_1\} \frac{\eps^2}{2}$. \end{proof}

The next lemma shows that if a vector is closed to being majorized, one can scale it down slightly and it will become majorized.

\begin{lemma}[Rescaling to majorize]\label{lem:kl-scale} Let $\eps > 0$. If $D(P_\alpha || q)\leq \alpha_1 \frac{\eps^2}{2}$, then $e^{- \eps} q \preceq \alpha$.
\end{lemma}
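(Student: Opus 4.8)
The plan is to exploit the explicit structure of the generalized KL projection established in Lemma~\ref{lem:dual}. Since $P_\alpha$ is permutation-invariant and $D$ is permutation-covariant (i.e. $D(\sigma p\|\sigma q)=D(p\|q)$), the quantity $D(P_\alpha\|q)$ depends only on the multiset of entries of $q$, and likewise $e^{-\eps}q\preceq\alpha$ depends only on $q^\downarrow$; hence we may assume $q=q^\downarrow$. Let $p^*\in P_\alpha$ be a minimizer of $D(\cdot\|q)$ — which, by the argument in the proof of Lemma~\ref{lem:dual}, may be taken sorted non-increasingly — and let $y\geq\bfzero$ be the KKT multipliers for the prefix constraints $\sum_{i=1}^j p_i\leq\sum_{i=1}^j\alpha_i$, so that $p^*_i=q_ie^{-x_i}$ with $x_i:=\sum_{j\geq i}y_j$ non-increasing and nonnegative, and complementary slackness holds. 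I would first reduce to proving $x_1\leq\eps$: for any $s$, since $q$, $p^*$ are sorted and $x_i\leq x_1\leq\eps$, we get $\sum_{i=1}^s q_i=\sum_{i=1}^s p^*_i e^{x_i}\leq e^{\eps}\sum_{i=1}^s p^*_i\leq e^{\eps}\sum_{i=1}^s\alpha_i$ using $p^*\in P_\alpha$, and as $q=q^\downarrow$ this is precisely $e^{-\eps}q\preceq\alpha$ (and $e^{-\eps}q\geq\bfzero$), so $e^{-\eps}q\in P_\alpha$.

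It then remains to bound $x_1$. Using $\log(p^*_i/q_i)=-x_i$ and $q_i=p^*_i e^{x_i}$ one rewrites $D(p^*\|q)=\sum_i p^*_i\,(e^{x_i}-1-x_i)$, a sum of nonnegative terms. If $x_1=0$ we are done, so assume $x_1>0$ and set $s^*:=\max\{s: x_s=x_1\}$, the largest prefix on which $x$ equals its maximum. By maximality of $s^*$ (with the convention $x_{n+1}=0$) we have $y_{s^*}=x_{s^*}-x_{s^*+1}>0$, so complementary slackness pins the $s^*$-th constraint to equality: $\sum_{i=1}^{s^*}p^*_i=\sum_{i=1}^{s^*}\alpha_i\geq\alpha_1$, the last inequality using $\alpha\geq\bfzero$. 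Restricting the divergence to the first $s^*$ coordinates, on which $x_i=x_1$, and using $e^t-1-t\geq t^2/2$ for $t\geq0$,
\[
 D(p^*\|q)\;\geq\;\sum_{i=1}^{s^*}p^*_i\,(e^{x_1}-1-x_1)\;=\;(e^{x_1}-1-x_1)\sum_{i=1}^{s^*}\alpha_i\;\geq\;\frac{x_1^2}{2}\,\alpha_1 .
\]
Combined with the hypothesis $D(P_\alpha\|q)=D(p^*\|q)\leq\alpha_1\eps^2/2$, this gives $x_1^2\leq\eps^2$, hence $x_1\leq\eps$, which by the previous paragraph finishes the proof.

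The routine bookkeeping is harmless: a minimizer exists because $P_\alpha$ is compact and $D(\cdot\|q)$ is lower semicontinuous, and coordinates $i$ with $q_i=0$ force $p^*_i=0$ and contribute nothing, so one works on $\supp(q)$. The one genuinely delicate point is the choice of the prefix $s^*$ in the divergence lower bound: one must invoke complementary slackness to equate $\sum_{i\leq s^*}p^*_i$ with $\sum_{i\leq s^*}\alpha_i$ (rather than settle for the weaker $p^*(S)\le\sum_{i\le|S|}\alpha_i$ for an arbitrary tight set $S$), and the nonnegativity of $\alpha$ is essential in the step $\sum_{i\leq s^*}\alpha_i\geq\alpha_1$ — this always holds in our applications, where $\alpha$ has entries in $\{1,\,1-1/n\}$, but the statement genuinely fails for $\alpha$ with negative entries.
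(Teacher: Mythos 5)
Your proof is correct, but it runs on the opposite side of the duality from the paper's. The paper's proof never touches the optimizer: it defines $L$ as the smallest log-rescaling for which $e^{-L}q \preceq \alpha$, locates a tight prefix $i^*$, and plugs the explicit dual-feasible point $x = L\,\mathbf 1_{[i^*]}$ into the dual expression of Lemma \ref{lem:dual}, which immediately gives $D(P_\alpha\|q) \geq (e^{L}-1-L)\sum_{i\leq i^*}\alpha_i \geq \alpha_1 L^2/2$ and hence $L\leq\eps$ — weak duality plus $e^t-1-t\geq t^2/2$, nothing more. You instead work at the primal optimum, invoke the KKT representation $p^*_i=q_ie^{-x_i}$ with complementary slackness, bound the top multiplier $x_1$ by the same quadratic inequality applied on the maximal prefix $s^*$, and then translate $x_1\leq\eps$ back into the majorization; your $x_1$ plays exactly the role of the paper's $L$, and both arguments rest on Lemma \ref{lem:dual}. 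What each buys: the paper's route is shorter and sidesteps the existence/positivity issues you flag — in particular the possibility $p^*_i=0$ with $q_i>0$, which would make your finite-multiplier representation degenerate (this can only occur when $\alpha_1=0$, where the hypothesis already forces $D=0$, but your write-up glosses over it, whereas an explicitly chosen dual point needs no such care). Your route makes the first-order structure of the KL projection and the role of complementary slackness explicit, and in fact yields the slightly sharper conclusion $e^{-x_1}q\preceq\alpha$ for the optimal multiplier. Your closing remark that nonnegativity of $\alpha$ is genuinely needed applies equally to the paper's step $\sum_{i\leq i^*}\alpha_i\geq\alpha_1$.
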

\begin{proof}
Assume $q$ is in decreasing order. If $q \preceq \alpha$, the claim holds vacuously, so assume that this is not the case. Let $L = \log \max_k (\sum_{i = 1}^k q_i)/(\sum_{i = 1}^k \alpha_i) > 0$, so that $L$ is the smallest number such that $e^{- L} q \preceq \alpha$. Let $i^*$ be the constraint where $e^{- L} q \preceq \alpha$ is tight, i.e., $\sum_{i=1}^{i^*} e^{-L}q_i = \sum_{i=1}^{i^*} \alpha_i$.
It suffices to show that $L \leq \eps$. We know from Lemma \ref{lem:dual} that
$$D(P_\alpha || q) = \sum_{i = 1}^n  q_i - \min_{x_1 \geq x_2 \geq \dots \geq x_n} \sum_{i=1}^n q_i e^{-x_i} + \alpha_ix_i.$$
If we set $x_i = L \cdot \mathbf 1_{[i^*]}$, then
\begin{align*}D(P_\alpha || q) &\geq \sum_{i = 1}^{i^*} q_i - e^{- L} \sum_{i = 1}^{i^*} q_i  - L (\sum_{i = 1}^{i^*} \alpha_i)\\
&= (e^L - 1 - L)\sum_{i = 1}^{i^*} \alpha_i \geq \frac{L^2}{2} \sum_{i = 1}^{i^*} \alpha_i \geq \frac{L^2}{2} \alpha_1.\qedhere
\end{align*}
\end{proof}
\FloatBarrier

\subsection{Triangular scaling and the proof of Theorem~\ref{thm:maj-scal}}\label{sec:tri}

The proof of Theorem~\ref{thm:maj-scal} will go through triangular scaling, a notion studied in \cite{Franks2018}. Our capacity lower bounds in the next section will also rely on triangular scaling.

\begin{definition}[Scalability and triangular scalability]
We say $\Phi_{B,C}$ is a \emph{triangular scaling} of $\Phi$ if $B \in \C^{m \times m},C \in \C^{n\times n}$ are upper triangular. For non-increasing, nonnegative vectors $\alpha \in \R^n, \beta \in \R^m$ we say $\Phi$ is $(\alpha, \beta)$-\emph{scalable} (resp. \emph{triangularly scalable}) if for every $\eps > 0$ there is a scaling (resp. triangular scaling) $\tilde{\Phi}$ satisfying
$$\|\tilde \Phi^*(I_m) - \diag(\alpha)\|_F \leq \eps \text{ and } \|\tilde{\Phi}(I_n) - \diag(\beta)\|_F \leq \eps.$$
\end{definition}
We next introduce the corresponding notion of capacity.
\begin{definition}[Triangular scaling capacity]\label{def:other-cap}
First, for a positive-semidefinite matrix $Y$ we define
$$
\log \det(\nu, Y) := \sum_i (\nu_i - \nu_{i + 1}) \log \prm_i Y,
$$
where $\prm_i$ denotes the $i^{th}$ leading principal minor of $Y$. Let
$$
g_{\alpha,\beta}(X,Y) = \tr \Phi(X^{-1}) Y^{-1}  + \log \det(\alpha , X) + \log \det(\beta , Y),
$$
and let $\capa_{\alpha,\beta} \Phi$ be defined by $\capa_{\alpha, \beta} = \inf_{X\succ 0, Y \succ 0} g_{\alpha,\beta}(X,Y)$.
\end{definition}
It is known that $\capa_{\alpha, \beta} \Phi > -\infty$ characterizes $(\alpha, \beta)$-triangular scalability, and $\capa_{\alpha,\beta} \tilde \Phi$ for \emph{random} scalings $\tilde \Phi$ of $\Phi$ characterizes $(\alpha, \beta)$-scalability. The next theorem collects this and several other known properties of the capacity.
\begin{theorem}[\cite{Franks2018}]\label{thm:tri} Let $\Phi:\C^{n\times n} \to \C^{m\times m}$ be a CP map.
\begin{enumerate}
\item\label{it:tri-cap} $\Phi$ is $(\alpha, \beta)$-triangularly scalable if and only if $\capa_{\alpha,\beta} \Phi > -\infty.$
\item\label{it:scal} $\Phi$ is $(\alpha,\beta)$-scalable if and only if there are unitary matrices
$U \in \C^{m\times m},V \in \C^{n\times n}$ such that
$$\capa_{\alpha,\beta} \Phi_{U,V} > -\infty.$$
\item\label{it:degree} If $\alpha, \beta$ are integral, the set of matrices $A \in \C^{m\times m},B \in \C^{n\times n}$ such that $\capa_{\alpha,\beta} \Phi_{A,B} =-\infty$ is an affine variety generated by polynomials of degree at most $2 (\sum_i \alpha_i)^2$.
\item \label{it:polytope} The set of nonnegative, decreasing vectors $\alpha,\beta$ such that $\capa_{\alpha,\beta} \Phi > -\infty$ (i.e. such that $\Phi$ is $(\alpha, \beta)$-triangularly scalable) is a polyhedral cone cut out by inequalities with Boolean coefficients. The same is true of the set of $\alpha, \beta$ such that $\Phi$ is $(\alpha,\beta)$-scalable.
\item\label{it:concave} The function $(\alpha, \beta) \mapsto \capa_{\alpha,\beta,k} \Phi$ is concave over the set of nonnegative, decreasing vectors $(\alpha, \beta)$ when we take $0 \log 0:=0$.
\end{enumerate}
\end{theorem}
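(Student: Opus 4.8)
The plan is to note that all five items are established in \cite{Franks2018}; I will lay out the common mechanism and then indicate which parts carry the weight. The starting point is that scaling $\Phi$ by upper triangular matrices realizes the action of the product of Borel (invertible upper triangular) subgroups on the Kraus tuple $(A_1,\dots,A_p)$, and that, after a Cholesky-type change of variables, $g_{\alpha,\beta}$ is the \emph{Kempf--Ness function} for this action twisted by the character reading off the diagonal entries of the scaling matrices with exponents $\alpha$ and $\beta$; schematically, up to the precise conventions for the triangular groups and the coordinate ordering,
\[
g_{\alpha,\beta}(X,Y)\ =\ \sum_{i=1}^p \norm{T A_i S^\dagger}_F^2\ -\ 2\sum_j \alpha_j\log S_{jj}\ -\ 2\sum_j \beta_j\log T_{jj},
\]
where $S,T$ are the triangular factors of $X,Y$. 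The key elementary input is the Abel-summation identity $\log\det(\nu,Y)=\sum_j\nu_j\log(\prm_j Y/\prm_{j-1}Y)$ (with $\prm_0 Y:=1$, $\nu_{n+1}:=0$), together with the fact that $\prm_j Y/\prm_{j-1}Y$ is a squared diagonal entry of the Cholesky factor of $Y$; in particular $\log\det(\nu,\cdot)$ is \emph{linear} in $\nu$.

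For item \ref{it:tri-cap} I would run the standard Kempf--Ness dichotomy. The function $g_{\alpha,\beta}$ is geodesically convex along the curves $t\mapsto(e^{tH_1}S,\,e^{tH_2}T)$ with $H_i$ in the Lie algebra of the triangular group (indeed ordinarily convex after exponentiating the diagonal coordinates), so $\capa_{\alpha,\beta}\Phi>-\infty$ if and only if $g_{\alpha,\beta}$ does not tend to $-\infty$ along any one-parameter subgroup. From boundedness below, a minimizing sequence yields — after rescaling and a compactness argument controlling the unipotent directions — triangular scalings whose marginals converge to $\diag(\alpha),\diag(\beta)$ up to the majorization slack; conversely such a scaling plugged back into $g_{\alpha,\beta}$ certifies finiteness. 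Item \ref{it:scal} is then immediate from the $QR$ (Iwasawa) decomposition: every scaling factors as a triangular scaling of a unitary conjugate $\Phi_{U,V}$, so $(\alpha,\beta)$-scalability of $\Phi$ is exactly triangular $(\alpha,\beta)$-scalability of some $\Phi_{U,V}$, and item \ref{it:tri-cap} closes the loop.

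The substantive items are \ref{it:degree} and \ref{it:polytope}. For \ref{it:degree}, when $\alpha,\beta$ are integral the set $\{(A,B):\capa_{\alpha,\beta}\Phi_{A,B}=-\infty\}$ is the null cone of the $\chi_{\alpha,\beta}$-twisted action, i.e.\ the common zeros of all $\chi_{\alpha,\beta}$-semi-invariants; realizing the twist as an honest linear representation built from $\caA$ whose size parameter is controlled by $\sum_i\alpha_i$ (along the lines of the quiver/blow-up picture behind \eqref{eq:IQS}) and invoking Derksen-type quadratic degree bounds for generators of the relevant semi-invariant ring shows that this variety is cut out in degree at most $2(\sum_i\alpha_i)^2$. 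For \ref{it:polytope}, I would apply the Hilbert--Mumford criterion: a destabilizing one-parameter subgroup of the diagonal torus inside the triangular group is a pair of integer weight vectors, and requiring $g_{\alpha,\beta}$ to diverge along it imposes, for each of the finitely many combinatorial types of such weights (indexed by pairs of coordinate flags), a single linear inequality on $(\alpha,\beta)$ whose coefficients have the form $\dim\caA V-\dim(\caA V\cap W)$ for coordinate subspaces $V,W$ — integers that are Boolean after the normalization built into the setup. Hence $\{(\alpha,\beta):\capa_{\alpha,\beta}\Phi>-\infty\}$ is a polyhedral cone cut out by Boolean inequalities; the statement for general $(\alpha,\beta)$-scalability then follows from the moment-polytope formalism, which identifies the corresponding union over unitary conjugates as again such a cone.

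Finally, item \ref{it:concave} is soft: for fixed $(X,Y)$ (resp.\ $(X,Y,z)$) the objective defining $\capa_{\alpha,\beta}\Phi$ (resp.\ $\capa_{\alpha,\beta,k}\Phi$) is affine in $(\alpha,\beta)$ by the linearity of $\log\det(\nu,\cdot)$ noted above, so the capacity, being an infimum of affine functions, is concave; the convention $0\log 0:=0$ only enters on the boundary where a leading principal minor of $X$ or $Y$ degenerates, which is handled by a limiting argument. I expect the main obstacle to be items \ref{it:degree} and \ref{it:polytope}: the triangular (Borel) group is not reductive, so classical GIT is not directly available, and extracting the explicit degree bound $2(\sum_i\alpha_i)^2$ and the Boolean facet description requires the careful invariant-theoretic work of \cite{Franks2018} — reductions to quiver semi-invariants with their degree bounds and an explicit Hilbert--Mumford computation — whereas items \ref{it:tri-cap}, \ref{it:scal}, and \ref{it:concave} are routine once the Kempf--Ness/Cholesky reformulation above is in place.
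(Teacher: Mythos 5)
The paper does not actually prove this theorem: it is imported wholesale from \cite{Franks2018}, the only in-house content being Lemma \ref{lem:compare-capacities}, which translates the paper's expression $\capa_{\alpha,\beta}\Phi$ into (an affine function of the logarithm of) the capacity used in \cite{Franks2018}. Your proposal matches this treatment — you correctly defer the substantive items (\ref{it:degree}, \ref{it:polytope}) to \cite{Franks2018}, and your Cholesky/Kempf--Ness rewriting of $g_{\alpha,\beta}$ via the identity $\log\det(\nu,Y)=\sum_j\nu_j\log(\prm_j Y/\prm_{j-1}Y)$ is essentially the same change of variables the paper carries out in Lemma \ref{lem:compare-capacities}, so the approach is the same.
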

\cite{Franks2018} actually used a slightly different expression for the capacity, but we show in Lemma \ref{lem:compare-capacities} that one is (an affine function of) the logarithm of the other. We now collect a few properties of $\log\det(\mu, \cdot)$.
\begin{lemma}\label{lem:logdet} $ $
\begin{enumerate}
\item\label{it:tri} If $b$ is lower triangular, then
$$\log \det(\mu, b X b^\dagger) = \log\det(\mu,  b b^\dagger) + \log\det(\mu, X).$$
\item\label{it:maj} If $\mu \preceq \alpha$ and $X \succeq I_n$, then
$$\log \det(\mu, X) \geq \lambda(\log X) \cdot \alpha.$$
If also $\sum \alpha_i = \sum \mu_i$, then the inequality holds with only the constraint $X \succ 0$.
\item \label{it:diag}If $X$ is diagonal then
$$\log \det(\alpha, X) = \lambda(\log X) \cdot \alpha.$$
\end{enumerate}
\end{lemma}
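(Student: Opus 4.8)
The plan is to prove the three items separately, exploiting the telescoping structure of $\log\det(\mu,\cdot) = \sum_i (\mu_i - \mu_{i+1}) \log \prm_i(\cdot)$ (with $\mu_{n+1}:=0$), so that everything reduces to statements about the leading principal minors $\prm_i$.

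\emph{Item \ref{it:tri}.} For $b$ lower triangular, write $b$ in block form with the top-left $i\times i$ block $b_i$ (itself lower triangular). Since $b$ is lower triangular, the leading $i\times i$ principal submatrix of $bXb^\dagger$ is exactly $b_i X_i b_i^\dagger$, where $X_i$ is the leading $i\times i$ submatrix of $X$: indeed, row $j\le i$ of $b$ has nonzero entries only in columns $\le j\le i$, so $(bXb^\dagger)_{jk}$ for $j,k\le i$ depends only on $b_i$ and $X_i$. Hence $\prm_i(bXb^\dagger) = \det(b_i)\overline{\det(b_i)}\,\prm_i(X) = \prm_i(bb^\dagger)\,\prm_i(X)$, where the last equality applies the same computation with $X=I$. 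Taking logs and the telescoping sum gives the claim. (If one prefers to state it for upper triangular $b$, apply this to $b^\dagger$.)

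\emph{Item \ref{it:maj}.} This is the substantive one and I expect it to be the main obstacle. First reduce to the diagonal case via item \ref{it:diag} plus a monotonicity/majorization argument: diagonalize $X = U e^{\diag(x)} U^\dagger$ with $x = \lambda(\log X)$ nonincreasing, and show $\prm_i(X) \ge \prm_i(e^{\diag(x)}) = e^{x_1 + \dots + x_i}$; this is a consequence of the Cauchy interlacing / Fischer-type inequality that the product of the top $i$ eigenvalues bounds the $i\times i$ leading principal minor below — actually the right fact is that $\prm_i(X)$ lies between the product of the $i$ smallest and the product of the $i$ largest eigenvalues, and since we want a lower bound by $e^{x_1+\dots+x_i}$ (the product of the \emph{largest} $i$), that direction does \emph{not} hold in general for arbitrary $X\succ 0$. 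So instead I would argue directly: by Lemma \ref{lem:dual} (taking $C = 0$, or rather by the variational characterization of the KL projection specialized appropriately) or, more cleanly, by the fact that $\log\det(\mu,\cdot)$ is the support-function-type expression whose minimum over $X\succeq I$ with fixed spectrum is attained at the diagonal matrix, combined with $\mu\preceq\alpha$. Concretely: $\log\det(\mu, X) = \sum_i(\mu_i-\mu_{i+1})\log\prm_i(X)$, and $\log \prm_i(X) \ge \sum_{j=1}^i \log\lambda_j^\downarrow(X)$ is \emph{false}, but $\log\prm_i(X)\ge \sum_{j=n-i+1}^n\log\lambda_j(X)$ (product of $i$ smallest eigenvalues) \emph{is} true; since $X\succeq I_n$ all $\log\lambda_j\ge 0$, so in fact $\log\prm_i(X)\ge 0$ as well. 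The cleanest route is: use that for $X\succeq I_n$, $\log\det(\mu,X)$ is concave-or-convex as needed and minimized, subject to $\lambda(\log X)$ fixed, by the diagonal matrix $\diag(e^{\lambda(\log X)})$ arranged in nonincreasing order — then item \ref{it:diag} gives $\log\det(\mu,\diag)=\mu\cdot\lambda(\log X)$, and finally $\mu\preceq\alpha$ with $\lambda(\log X)\ge 0$ nonincreasing gives $\mu\cdot\lambda(\log X)\ge$ ... wait, majorization gives the \emph{wrong} direction again. The correct statement must be: $\mu\cdot c \ge \alpha\cdot c$ is false for $\mu\preceq\alpha$; rather $\mu\cdot c \le \alpha\cdot c$ when $\mu\preceq\alpha$ and $c$ nonincreasing nonnegative. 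So we need $\log\det(\mu,X)\ge \alpha\cdot\lambda(\log X)$ with $\mu\preceq\alpha$ — this forces us to get a \emph{lower} bound on $\log\det(\mu,X)$ that beats the $\mu$-value, which is possible precisely because $\prm_i(X)$ can be large. The right argument: $\log\det(\mu,X) = \int$-type expression; use $\log\prm_i(X) \ge \sum_{j=1}^i \lambda_j(\log X)$? This holds when $X\succeq I$ is... no. I will instead derive item \ref{it:maj} as a corollary of Lemma \ref{lem:dual}: the infimum $\inf_{X\succeq I_n}\big(\tr X^{-1}C + \log\det(\alpha,X)\big)$ (with $\log\det$ replaced appropriately) equals $\tr C - D(P_\alpha\|\lambda(C))\ge 0$ for suitable $C$, and compare term-by-term; running $C\to 0$ isolates $\log\det(\alpha,X)\ge\alpha\cdot\lambda(\log X)$. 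For the version with only $X\succ 0$ when $\sum\alpha_i=\sum\mu_i$, the extra slack terms in the telescoping cancel because the last coefficient $\mu_n$ multiplies $\log\prm_n(X)=\log\det X = \sum_j\lambda_j(\log X)$ and the constraint makes the boundary term vanish.

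\emph{Item \ref{it:diag}.} If $X=\diag(d_1,\dots,d_n)$ then $\prm_i(X)=d_1\cdots d_i$, so $\log\det(\mu,X)=\sum_i(\mu_i-\mu_{i+1})\sum_{j=1}^i\log d_j = \sum_j \log d_j \sum_{i\ge j}(\mu_i-\mu_{i+1}) = \sum_j \mu_j\log d_j$. If moreover $\lambda(\log X)$ is $X$ sorted so that $\log d_j$ is nonincreasing then this equals $\mu\cdot\lambda(\log X)$; if not, since $\mu$ is nonincreasing the rearrangement inequality shows the sorted arrangement is what $\log\det(\mu,\cdot)$ records, but here $X$ is literally diagonal so $\lambda(\log X)$ is just the sorted list of $\log d_j$ and both sides are permutation-invariant in an appropriate sense — the identity $\log\det(\alpha,X)=\lambda(\log X)\cdot\alpha$ holds because $\log\det(\alpha,\cdot)$ is unitarily... no, it is not unitarily invariant, but it \emph{is} invariant under permutation conjugation only if the permutation is consistent with the flag; the safe statement is that for diagonal $X$ the principal minors are the partial products in the \emph{given} order, so strictly $\log\det(\alpha,X)=\sum_j\alpha_j\log d_j$, which equals $\alpha\cdot\lambda(\log X)$ exactly when $d$ is already sorted nonincreasingly — and in the applications (Lemma \ref{lem:prog-bound}) that is how it is used. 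I will state item \ref{it:diag} under the convention that the diagonal entries are in nonincreasing order, matching its usage.

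The genuine difficulty is thus item \ref{it:maj}; the clean proof is to invoke Lemma \ref{lem:dual} / the KL-projection duality to identify $\inf_{X\succeq I_n}\big(\tr X^{-1}C + \log\det(\alpha,X)\big)$ and specialize, rather than to fight with interlacing inequalities for principal minors directly. Items \ref{it:tri} and \ref{it:diag} are short direct computations with principal minors and telescoping sums.
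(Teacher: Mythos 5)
Your items \ref{it:tri} and \ref{it:diag} are fine: for item \ref{it:tri} you give a correct self-contained computation with leading principal blocks (the paper simply cites \cite[Lemma~15]{Franks2018}), and your caveat on item \ref{it:diag} about the diagonal entries needing to be in nonincreasing order is legitimate — that is indeed the convention under which the paper uses it. The genuine gap is item \ref{it:maj}, where you never arrive at a proof, and the route you finally settle on cannot work.

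Concretely, the inequality as printed, $\log\det(\mu,X)\geq \alpha\cdot\lambda(\log X)$, is false: take $X=\diag(2,1)\succeq I_2$, $\mu=(1,1)\preceq\alpha=(2,0)$; then $\log\det(\mu,X)=\log 2 < 2\log 2=\alpha\cdot\lambda(\log X)$. Your repeated observation that "the direction does not hold" was the signal that the sign in the statement is a typo, not that a cleverer argument is needed. The inequality the paper actually proves, and the only direction its applications use (in Lemma \ref{lem:caps-ineq} one needs $\alpha\cdot\lambda(\log X)\geq\log\det(\mu,X)$ to lower bound $\capa^{\alpha,\beta}_k\Phi$, and similarly in the proof of Lemma \ref{lem:subadd}), is $\log\det(\mu,X)\leq\alpha\cdot\lambda(\log X)$. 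You already listed all of its ingredients: interlacing gives $\prm_i X\leq\prod_{j=1}^i\lambda_j(X)$, Abel summation turns the telescoped sum into $\mu\cdot\lambda(\log X)$, and your own (correct) remark that $\mu\preceq\alpha$ together with $c$ nonincreasing and nonnegative gives $\mu\cdot c\leq\alpha\cdot c$ — applied with $c=\lambda(\log X)\geq 0$, which is exactly where $X\succeq I_n$ enters — finishes the three-line proof; this is precisely the paper's argument. By contrast, your fallback of deriving the $\geq$ from Lemma \ref{lem:dual} by "running $C\to 0$" cannot succeed: that lemma concerns the objective $\sum_i\lambda_i(C)e^{-x_i}+\alpha_i x_i$, i.e. the term $\alpha\cdot\lambda(\log X)$, and contains no $\log\det(\alpha,\cdot)$ expression from which such an inequality could be isolated (and the inequality you would be extracting is false anyway). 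Finally, your sketch of the extension to $X\succ 0$ under $\sum_i\alpha_i=\sum_i\mu_i$ is hand-wavy; the paper's reduction is simply to rescale $X\mapsto cX$ so that $cX\succeq I_n$, noting that both sides then shift by $(\sum_i\mu_i)\log c=(\sum_i\alpha_i)\log c$, so the inequality is scale-invariant.
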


\begin{proof}
Item \ref{it:tri} is \cite[Lemma~15]{Franks2018}. For Item \ref{it:maj}, we have
\begin{align*}
\log \det(\mu, X) &= \sum_i (\mu_i - \mu_{i + 1}) \log \prm_i X\\
& \leq \sum_i (\mu_i - \mu_{i + 1}) \log \prod_{j = 1}^i \lambda_j(X)\\
&= \sum_i (\mu_i - \mu_{i + 1}) \sum_{j = 1}^i \lambda_j (\log X) = \mu \cdot \lambda(\log X) \leq \alpha \cdot \lambda(\log X).
\end{align*}
In the inequality we used interlacing, in the final equality we used partial summation, and in the final equality we used $\nu \preceq \beta$ and the fact that $\lambda(\log X) \geq 0$. If $\sum_i \alpha_i = \sum_i \mu_i$ we can reduce to the case $X \succeq I_n$ by multiplying by an appropriate scalar. The third item is a straightforward calculation.
\end{proof}

We have the following inequality relating the capacity for majorization scaling with the one for triangular scaling.
\begin{lemma} \label{lem:caps-ineq}
For any $\mu \preceq \alpha, \nu \preceq \beta$ and $\sum_i \mu_i = \sum_i \nu_i = k$, we have
 $$\capa_{\mu,\nu} \Phi  \leq \capa^{\alpha,\beta}_k \Phi.$$
\end{lemma}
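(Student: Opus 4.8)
The plan is to prove the inequality $\capa_{\mu,\nu} \Phi \leq \capa^{\alpha,\beta}_k \Phi$ by exhibiting, for each point $(X,Y,z)$ in the domain $\caD$ of $f^{\alpha,\beta}_k$, a point $(X',Y')$ with $X',Y' \succ 0$ at which $g_{\mu,\nu}(X',Y')$ is at most $f^{\alpha,\beta}_k(X,Y,z)$; taking infima on both sides then gives the result. The natural guess is to absorb the scalar $e^z$ into the positive-definite matrices and then compare the logarithmic terms using the properties of $\log\det(\mu,\cdot)$ collected in Lemma~\ref{lem:logdet}.

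Concretely, given $(X,Y,z) \in \caD$, I would set $X' = e^{z/k} X$ and $Y' = e^{z/k} Y$ (any split of the scalar between the two arguments that totals the right amount works, since $\sum_i \mu_i = \sum_i \nu_i = k$). Then the trace term transforms as $\tr \Phi(X'^{-1}) Y'^{-1} = e^{-2z/k}\tr\Phi(X^{-1})Y^{-1}$, which is at most $\tr\Phi(X^{-1})Y^{-1}e^z$ provided $e^{-2z/k} \le e^z$, i.e. provided $z \ge 0$ — which holds on $\caD$. Actually, to make the bookkeeping cleanest, I would instead absorb $e^z$ asymmetrically or just track constants carefully; the key point is that $z \ge 0$ lets us trade the explicit $e^z$ factor for scalar multiples of $X,Y$ without loss. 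For the log-determinant terms, Lemma~\ref{lem:logdet}\eqref{it:maj} gives $\log\det(\mu, X) \ge \lambda(\log X)\cdot\alpha$ since $\mu \preceq \alpha$ and $X \succeq I_n$ (and likewise for $Y$, $\nu \preceq \beta$), so $g_{\mu,\nu}(X,Y)$'s logarithmic part is bounded below by the $\alpha,\beta$-weighted spectral logarithm terms appearing in $f^{\alpha,\beta}_k$. Meanwhile the $-kz$ term in $f^{\alpha,\beta}_k$ must be accounted for by the rescaling: pushing $e^{z/k}$ into each of $X,Y$ contributes $+z\cdot\frac{1}{k}\sum_i\mu_i = +z$ to each of $\log\det(\mu,e^{z/k}X)$ and $\log\det(\nu,e^{z/k}Y)$ via Lemma~\ref{lem:logdet}\eqref{it:tri} with $b = e^{z/(2k)}I$, i.e. a total of $+2z$... so I will need to choose the scalar split so the total extra from the two log-dets is exactly $-(-kz) = $ whatever cancels the $-kz$ term — using $\sum\mu_i = \sum\nu_i = k$, putting a factor $e^{z}$ total (split as $e^{z/2}$ into each, or all into one) contributes $+z\cdot(k/k) \cdot(\text{fraction})$, arranged to total $+kz$ is impossible from log-dets of degree-$k$ forms with a scalar; rather the scalar $e^{c}$ into $X$ contributes $c\sum_i\mu_i = ck$. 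So taking $X' = e^{z}X$ (all the scalar into $X$, none into $Y$) and controlling the trace term via $\tr\Phi(X'^{-1})Y^{-1} = e^{-z}\tr\Phi(X^{-1})Y^{-1} \le e^z \tr\Phi(X^{-1})Y^{-1}$ for $z\ge 0$ — this wastes a factor but only in the right direction — gives $\log\det(\mu, X') = \log\det(\mu,X) + z k \ge \alpha\cdot\lambda(\log X) + kz$, hmm, that adds $+kz$ where we want to cancel $-kz$; so instead I should scale \emph{down}: take $X' = e^{-z}X$? But then $X'$ may not be $\succeq I_n$ and the trace term blows up. The cleaner route is to reabsorb: note $f^{\alpha,\beta}_k(X,Y,z) = \tr\Phi((e^{-z}X)^{-1})Y^{-1} + \alpha\cdot\lambda(\log X) + \beta\cdot\lambda(\log Y) - kz$ and set $\hat X = e^{-z/2}X$, $\hat Y = e^{-z/2}Y$ absorbing into the trace term symmetrically; then $\alpha\cdot\lambda(\log X) = \alpha\cdot\lambda(\log\hat X) + \frac{z}{2}\sum\alpha_i$... but $\sum\alpha_i$ need not equal $k$. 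This is exactly the delicate point, so let me flag it.

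The main obstacle, as the scribbling above indicates, is reconciling the $-kz$ term with the log-determinant identities, since $\sum\alpha_i$ and $\sum\beta_i$ need not equal $k$ while $\sum\mu_i = \sum\nu_i = k$. The resolution is to \emph{not} try to match $f^{\alpha,\beta}_k$ term-by-term at the same matrices, but rather: absorb $e^z$ into $X$ (or symmetrically into $X$ and $Y$) when passing to $g_{\mu,\nu}$, using $\log\det(\mu,\cdot)$'s behavior under scalar multiplication (Lemma~\ref{lem:logdet}\eqref{it:tri} with $b$ a scalar multiple of $I$) which produces exactly a $k z$ shift because $\sum_i\mu_i = k$ — and this $+kz$ is what we want, because we are comparing $g_{\mu,\nu}$ of the rescaled point against $f^{\alpha,\beta}_k$, and the $-kz$ lives on the $f$ side. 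Then $\log\det(\mu, X) \ge \alpha\cdot\lambda(\log X)$ (resp. for $Y$) handles the rest via weak majorization, and $z \ge 0$ together with $X \succeq I_n$ ensures the rescaled matrices stay in the domain $X' \succ 0$ (indeed $\succeq I_n$ if we are careful) and the trace term only decreases. I would write this out as a short display chain: starting from $f^{\alpha,\beta}_k(X,Y,z)$, lower-bound the logarithmic terms by $\log\det(\mu,X) - kz + \log\det(\nu,Y)$ using Lemma~\ref{lem:logdet}\eqref{it:maj}, then recognize $\log\det(\mu,X) - kz = \log\det(\mu, e^{-z}X)$ by Lemma~\ref{lem:logdet}\eqref{it:tri}, combine with the trace term $\tr\Phi(X^{-1})Y^{-1}e^z = \tr\Phi((e^{-z}X)^{-1})Y^{-1}$, and observe this equals $g_{\mu,\nu}(e^{-z}X, Y) \ge \capa_{\mu,\nu}\Phi$. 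Hmm — but now $e^{-z}X$ need not be $\succ 0$'s issue is moot (it is, since $X \succ 0$, $e^{-z} > 0$), so $(e^{-z}X, Y)$ is a legitimate point in the domain $\{X \succ 0, Y \succ 0\}$ of $g_{\mu,\nu}$. That closes the argument cleanly.

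So in summary: (i) fix $(X,Y,z) \in \caD$; (ii) apply Lemma~\ref{lem:logdet}\eqref{it:maj} twice ($\mu \preceq \alpha$, $\nu \preceq \beta$, $X \succeq I_n$, $Y \succeq I_m$) to get $\alpha\cdot\lambda(\log X) + \beta\cdot\lambda(\log Y) \ge \log\det(\mu,X) + \log\det(\nu,Y)$; (iii) apply Lemma~\ref{lem:logdet}\eqref{it:tri} with a scalar $b = e^{-z/2}I$ (using $\sum_i\mu_i = k$) to rewrite $\log\det(\mu,X) - \tfrac{k z}{2}\cdot 2$... i.e. $\log\det(\mu, e^{-z/2}X) = \log\det(\mu,X) - \tfrac{kz}{2}$, and symmetrically absorb the other half into $Y$ via $\log\det(\nu, e^{-z/2}Y) = \log\det(\nu,Y) - \tfrac{kz}{2}$, using $\sum_i \nu_i = k$ — together accounting for the full $-kz$; (iv) note $\tr\Phi(X^{-1})Y^{-1}e^z = \tr\Phi((e^{-z/2}X)^{-1})(e^{-z/2}Y)^{-1}$; (v) conclude $f^{\alpha,\beta}_k(X,Y,z) \ge g_{\mu,\nu}(e^{-z/2}X, e^{-z/2}Y) \ge \capa_{\mu,\nu}\Phi$; (vi) take the infimum over $(X,Y,z) \in \caD$. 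The only thing to double-check is that the scalar-absorption identity in Lemma~\ref{lem:logdet}\eqref{it:tri} genuinely applies to $b$ a scalar multiple of the identity (it does, as scalars are lower triangular, and $\log\det(\mu, c\cdot c I) = \log\det(\mu, c^2 I) = (\sum_i\mu_i)\log c^2$ matches $2\log c \sum_i\mu_i$), and that the domain membership for $g_{\mu,\nu}$ only requires $\succ 0$, which is immediate.
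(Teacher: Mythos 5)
Your final summary is correct and is essentially the paper's own proof: lower-bound $\alpha\cdot\lambda(\log X)+\beta\cdot\lambda(\log Y)$ by $\log\det(\mu,X)+\log\det(\nu,Y)$, absorb a scalar $e^{-z/2}$ into each of $X,Y$ (the scalar case of Item~\ref{it:tri} of Lemma~\ref{lem:logdet}, with $\sum_i\mu_i=\sum_i\nu_i=k$ making the two shifts cancel the $-kz$), and recognize the result as $g_{\mu,\nu}$ evaluated at a positive-definite point, hence at least $\capa_{\mu,\nu}\Phi$. One caution: the inequality your summary actually uses, $\alpha\cdot\lambda(\log X)\ge\log\det(\mu,X)$, is the direction established in the proof of Item~\ref{it:maj} of Lemma~\ref{lem:logdet} (the displayed statement of that item has the sign reversed), so the opposite-direction quotation in your exploratory middle paragraph should be discarded — only the summary's direction makes the comparison with $f^{\alpha,\beta}_k$ go the right way.
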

\begin{proof}
By the inequality in Item \ref{it:maj} of Lemma~\ref{lem:logdet},
\begin{align*}\capa^{\alpha,\beta}_k \Phi &\geq \inf_{(X,Y,z) \in \caD} \tr \Phi(X^{-1}) Y^{-1} e^{z}  + \log \det(\mu , X) + \log \det(\nu , Y) - kz\\
&= \inf_{X \succeq I_n, Y \succeq I_n, z_1, z_2 \geq 0} \tr \Phi((X e^{-z_1})^{-1}) (Y e^{-z_2})^{-1}   + \log \det(\mu , X e^{-z_1}) + \log \det(\nu , Y e^{-z_2} )\\
&= \inf_{\tilde X, \tilde Y \succeq 0} \tr \Phi(\tilde X^{-1}) \tilde Y^{-1}   + \log \det(\mu , \tilde X) + \log \det(\nu , \tilde Y).\qedhere
\end{align*}

\end{proof}
Finally, we prove Theorem~\ref{thm:maj-scal}.

\begin{proof}[Proof of Theorem \ref{thm:maj-scal}]
We have already seen in Theorem~\ref{thm:maj-sinkhorn} that if $\capa^{\alpha, \beta} \Phi$ is finite, then for any $\eps > 0$, \textsc{MajSinkhorn} finds a scaling of $\Phi$ that is majorized by $(\alpha, \beta)$ and has size $k(1-\eps)$.
Therefore, it suffices to show that $\Phi$ is not $k$-scalable to $(\alpha, \beta)$ if $\capa^{\alpha,\beta}\Phi = -\infty$.

Suppose for the contrary that $\capa^{\alpha,\beta} \Phi = - \infty$ and $\Phi$ is $k$-scalable to $(\alpha, \beta)$.
Then, by compactness, there exists some $\mu \leq \alpha, \nu\leq  \beta$ such that $\sum_i \mu_i= \sum_i \nu_i = k$ such that $\Phi$ is $(\mu, \nu)$-scalable.
By unitary invariance of $\capa^{\alpha,\beta} \Phi$, we have $\capa^{\alpha,\beta} \Phi_{U,V} = - \infty$ for all unitary matrices $U,V$. By Lemma \ref{lem:caps-ineq}, $\capa_{\mu,\nu} \Phi_{U,V} = - \infty$. By Item \ref{it:scal} of Theorem \ref{thm:tri}, this contradicts $\Phi$ being $(\mu, \nu)$-scalable. \end{proof}
As a side benefit, the properties of $\log\det(\mu, \cdot)$ give a quick proof of Lemma \ref{lem:subadd}.
\begin{proof}[Proof of Lemma \ref{lem:subadd}]
 Without loss of generality, $X^{1/2} Z X^{1/2}$ is diagonal. Then by Item \ref{it:diag} of Lemma \ref{lem:logdet},
$$\alpha \cdot \lambda ( \log X^{1/2} Z X^{1/2}) = \log\det(\alpha, X^{1/2} Z X^{1/2})$$ Choose $U$ unitary so that $X^{1/2} U$ is lower triangular. Then
\begin{align*}
\log\det(\alpha, X^{1/2} Z X^{1/2})
&= \log\det(\alpha, X^{1/2} U U^\dagger Z U U^\dagger X^{1/2})\\
&= \log\det (\alpha, X) + \log\det(\alpha, U^\dagger Z U) \\
&\leq \alpha \cdot \lambda ( \log X) + \alpha \cdot \lambda( \log Z).
\end{align*}
The second equality used Item \ref{it:maj} of Lemma \ref{lem:logdet}, and the inequality used Item \ref{it:maj} of Lemma \ref{lem:logdet} and the unitary invariance of $ \lambda( \log Z)$.
\end{proof}

\subsection{Capacity lower bounds}\label{sec:cap-lb}

\begin{theorem}\label{thm:cap-lb} Suppose the Kraus operators of $\Phi$ have Gaussian integer entries. Then $\capa^{\alpha, \beta}_k \Phi$ is either $- \infty$ or
$$-\capa^{\alpha, \beta}_k \Phi =  O(k (n + m) \log (m + n) + k \log p).$$ \end{theorem}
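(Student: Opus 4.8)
The plan is to lower-bound $\capa^{\alpha,\beta}_k\Phi$ by reducing, in a few moves, to a GGOW-style capacity lower bound for an \emph{integer-entried} instance of triangular scaling, where such a bound follows the template of \cite{Franks2018}. Assume $\capa^{\alpha,\beta}_k\Phi>-\infty$; by Theorem~\ref{thm:maj-scal} and compactness (exactly as in its proof) there is a decreasing pair $\mu\preceq\alpha,\nu\preceq\beta$ with $\sum_i\mu_i=\sum_i\nu_i=k$ such that $\Phi$ is $(\mu,\nu)$-scalable. Using the fact that the scalability polytope is cut out by Boolean-coefficient inequalities (Theorem~\ref{thm:tri}(\ref{it:polytope})), I would take $(\mu,\nu)$ to be a vertex of its slice by $\{\alpha'\leq\alpha,\ \beta'\leq\beta,\ \sum\alpha'_i=\sum\beta'_i=k\}$, so that by Cramer's rule and Hadamard's bound the common denominator $t$ of $(\mu,\nu)$ has $\log t=O((m+n)\log(m+n))$ (this absorbs the bit complexity of $\alpha,\beta$, which is polynomial in all our applications). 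By Theorem~\ref{thm:tri}(\ref{it:scal}) there are unitaries $U,V$ with $\capa_{\mu,\nu}\Phi_{U,V}>-\infty$, so by unitary invariance of $\capa^{\alpha,\beta}_k$ and Lemma~\ref{lem:caps-ineq} applied to $\Phi_{U,V}$,
\[\capa^{\alpha,\beta}_k\Phi=\capa^{\alpha,\beta}_k\Phi_{U,V}\ \geq\ \capa_{\mu,\nu}\Phi_{U,V}.\]
Hence it suffices to lower-bound $\capa_{\mu,\nu}\Phi_{U,V}$.

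The second move converts this into a statement about an integer instance. Passing to the integer weights $(t\mu,t\nu)$ (which changes only the target, not finiteness), Theorem~\ref{thm:tri}(\ref{it:degree}) says the locus $\{(A,B):\capa_{t\mu,t\nu}\Phi_{A,B}=-\infty\}$ is an affine variety cut out by polynomials with Gaussian-integer coefficients of degree at most $D:=2(tk)^2$. Since $(U,V)$ lies outside it, this variety is proper, so (throwing in the conditions $\det A_0\neq 0,\det B_0\neq 0$, which raise the degree only by $m+n$) the polynomial method produces an invertible integer point $(A_0,B_0)$ with entries $O(D)$ and $\capa_{\mu,\nu}\Phi_{A_0,B_0}>-\infty$; note $\Phi_{A_0,B_0}$ has Gaussian-integer Kraus operators $A_0A_iB_0^\dagger$. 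Taking LQ decompositions $A_0=L_AU_A$, $B_0=L_BU_B$ with $L_A,L_B$ lower triangular invertible and $U_A,U_B$ unitary, and applying the factorization of $\log\det(\,\cdot\,,\,\cdot\,)$ under lower-triangular scalings (Lemma~\ref{lem:logdet}(\ref{it:tri}), twice) yields the exact identity
\[\capa_{\mu,\nu}\Phi_{A_0,B_0}=\log\det(\nu,L_AL_A^\dagger)+\log\det(\mu,L_BL_B^\dagger)+\capa_{\mu,\nu}\Phi_{U_A,U_B}.\]
Combining with $\capa^{\alpha,\beta}_k\Phi=\capa^{\alpha,\beta}_k\Phi_{U_A,U_B}\geq\capa_{\mu,\nu}\Phi_{U_A,U_B}$ (unitary invariance and Lemma~\ref{lem:caps-ineq} once more),
\[\capa^{\alpha,\beta}_k\Phi\ \geq\ \capa_{\mu,\nu}\Phi_{A_0,B_0}-\log\det(\nu,L_AL_A^\dagger)-\log\det(\mu,L_BL_B^\dagger).\]
Using $\prm_i(L_AL_A^\dagger)\leq\|L_A\|_1^{2i}=\|A_0\|_1^{2i}\leq(O(mD))^{2i}$, the decreasingness of $\mu,\nu$, and $\sum_i i(\nu_i-\nu_{i+1})=\sum_i\nu_i=k$, the two correction terms are at most $2k\log(O(mD))$ and $2k\log(O(nD))$; since $\log D=O((m+n)\log(m+n))$ these are $O(k(m+n)\log(m+n))$ and, importantly, free of $M$.

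The last step is the GGOW-style bound for the integer-entried, $(\mu,\nu)$-triangularly-scalable instance $\Phi_{A_0,B_0}$: one shows $-\capa_{\mu,\nu}\Phi_{A_0,B_0}=O(k(m+n)\log(m+n)+k\log p)$. This follows \cite{Franks2018}: triangular scalability, through the blow-up characterization underlying Theorem~\ref{thm:tri}(\ref{it:degree}) (cf.\ \eqref{eq:IQS}), furnishes a nonzero \emph{integer} invariant of degree $O(D)$ built from $\Phi_{A_0,B_0}$ and a blow-up of size $O(n)$ with entries of size $O(n)$ (hence $M$-free); this invariant has magnitude $\geq 1$ but, by Hadamard's inequality, at most $e^{O(k(m+n)\log(m+n)+k\log p)}$, and dividing its logarithm by the blow-up size and comparing with the capacity gives the bound. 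Substituting into the displayed inequality and absorbing constants proves the theorem; the $M$-freeness is also consistent with the elementary fact that replacing $A_i$ by $cA_i$ with $c\geq 1$ only increases $\capa^{\alpha,\beta}_k\Phi$ (shift $z$ by $2\log c$), so large $M$ is the easy regime. I expect the main obstacle to be precisely this final estimate together with the bit-complexity bookkeeping of the first step — isolating a clean $M$-free lower bound for the triangular-scaling capacity of an integer instance, and verifying that $(\mu,\nu)$, and hence $D=2(tk)^2$ and all resulting error terms, stay within the claimed $O(k(m+n)\log(m+n)+k\log p)$.
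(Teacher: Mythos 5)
Your overall architecture is the same as the paper's: reduce $\capa^{\alpha,\beta}_k\Phi$ to a triangular capacity $\capa_{\mu,\nu}$ at some $\mu\preceq\alpha,\nu\preceq\beta$ via unitary invariance and Lemma~\ref{lem:caps-ineq}, use the degree bounds of Theorem~\ref{thm:tri} and a Schwartz--Zippel-type argument to produce an integer-entried instance $\Phi_{A_0,B_0}$ with finite triangular capacity, peel off the triangular factors via Lemma~\ref{lem:logdet}\,(\ref{it:tri}) (the paper does this through Lemma~\ref{lem:change-vars}), and finish with a Franks-style lower bound for the integer instance. Your LQ arrangement is a nice variant: since your inequality needs the correction terms $\log\det(\nu,L_AL_A^\dagger)$, $\log\det(\mu,L_BL_B^\dagger)$ bounded only from \emph{above}, the operator-norm bound $\prm_i\le\|A_0\|_1^{2i}$ together with $\sum_i i(\nu_i-\nu_{i+1})=k$ suffices, with no control of small singular values.

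There are two genuine problems. The serious one is the last step. You assert $-\capa_{\mu,\nu}\Phi_{A_0,B_0}=O(k(m+n)\log(m+n)+k\log p)$ via ``a nonzero integer invariant of degree $O(D)$ \dots magnitude at most $e^{O(k(m+n)\log(m+n)+k\log p)}$ by Hadamard, divided by the blow-up size,'' but the accounting does not close: your $D=2(tk)^2$ with $\log t=O((m+n)\log(m+n))$ is itself exponentially large, so the evaluation of a degree-$O(D)$ integer polynomial at points of bit length $O((m+n)\log(m+n))$ is only bounded by $e^{O(D\,(m+n)\log(m+n))}$, vastly exceeding your claim; and dividing by a blow-up size $O(n)$ (rather than by the degree of the relevant semi-invariant, which is proportional to $t$) does not repair this. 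The correct bookkeeping — that the loss depends on $\log t$ (the bit length of the common denominator of $\mu,\nu$), not on $t$ or $D$ — is exactly the content of \cite[Theorem~50]{Franks2018}, stated here as Theorem~\ref{thm:fra-cap-lb}, combined with the translation Lemma~\ref{lem:compare-capacities}; this is how the paper's Theorem~\ref{thm:arb-cap-lb} proceeds, and you should invoke it rather than re-sketch it. The lesser issue is your choice of $(\mu,\nu)$ as a vertex of a slice cut out by the constraints $\mu\preceq\alpha,\nu\preceq\beta$: the resulting denominator $t$, hence your final bound, then depends on the bit complexity of $\alpha,\beta$, whereas the theorem's bound (and the paper's proof) has no such dependence — the paper avoids it by working with a possibly irrational $(\mu,\nu)$ and pushing the denominator issue into Theorem~\ref{thm:arb-cap-lb}, where concavity (Theorem~\ref{thm:tri}\,(\ref{it:concave})) reduces to vertices of the Boolean-cut scalability polytope with the normalization $\sum\mu_i=\sum\nu_i=1$ only. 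Both issues are repairable within your outline, but as written the final estimate is not proved and the stated $M$- and $(\alpha,\beta)$-free bound is not established.
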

Before proving the theorem, we combine it with Theorem \ref{thm:maj-sinkhorn} and one small lemma about how normalizing affects the capacity to obtain a concrete bound for how many steps \textsc{MajSinkhorn} requires.

\begin{lemma}[Capacity under scalar multiplication]\label{lem:cap-scalar} Let $0 < c < 1$. Then
$$\capa^{\alpha,\beta}_k c \Phi \geq \capa^{\alpha,\beta}_k \Phi + (\sum \alpha_i) \log c.$$
\end{lemma}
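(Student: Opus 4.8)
The plan is to absorb the scalar $c$ into the variable $X$ in the definition of $\capa^{\alpha,\beta}_k$. Fix $(X,Y,z) \in \caD$ and set $X' := c^{-1}X$. Since $0 < c < 1$ we have $X' \succeq c^{-1} I_n \succeq I_n$, so $(X',Y,z) \in \caD$ as well; in fact $(X,Y,z) \mapsto (c^{-1}X,Y,z)$ sends $\caD$ bijectively onto the subset $\caD' := \{(X',Y,z) \in \caD : X' \succeq c^{-1} I_n\} \subseteq \caD$.

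The next step is the elementary bookkeeping relating the objective for $c\Phi$ at $(X,Y,z)$ to the objective for $\Phi$ at $(X',Y,z)$. By linearity of $\Phi$, $(c\Phi)(X^{-1}) = c\,\Phi\big((cX')^{-1}\big) = \Phi\big((X')^{-1}\big)$, so the term $\tr \Phi(X^{-1})Y^{-1}e^z$ is unchanged. Since $cX'$ and $X'$ commute, $\log(cX') = (\log c) I_n + \log X'$, hence $\lambda(\log X) = \lambda(\log X') + (\log c)\bfone_n$ and $\alpha \cdot \lambda(\log X) = \alpha \cdot \lambda(\log X') + (\log c)\sum_i \alpha_i$; the remaining terms $\beta \cdot \lambda(\log Y)$ and $-kz$ involve neither $X$ nor $c$. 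Writing $f^{\alpha,\beta}_k$ for the objective of $\Phi$ and $\tilde f^{\alpha,\beta}_k$ for that of $c\Phi$ on $\caD$, this gives
\[
\tilde f^{\alpha,\beta}_k(X,Y,z) = f^{\alpha,\beta}_k(X',Y,z) + (\log c)\sum_i \alpha_i .
\]

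Finally I would take the infimum over $\caD$ on both sides. As $(X,Y,z)$ ranges over $\caD$, the point $(X',Y,z) = (c^{-1}X,Y,z)$ ranges over $\caD' \subseteq \caD$, and the infimum of a function over a subset is at least its infimum over the whole set, so
\[
\capa^{\alpha,\beta}_k c\Phi = (\log c)\sum_i\alpha_i + \inf_{\caD'} f^{\alpha,\beta}_k \;\geq\; (\log c)\sum_i\alpha_i + \inf_{\caD} f^{\alpha,\beta}_k = \capa^{\alpha,\beta}_k \Phi + (\log c)\sum_i\alpha_i,
\]
which is exactly the claimed inequality (and it holds trivially when $\capa^{\alpha,\beta}_k\Phi = -\infty$). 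There is essentially no obstacle beyond choosing the right variable to absorb $c$ into: absorbing it into $X$ produces the coefficient $\sum_i\alpha_i$ (absorbing into $Y$ would give $\sum_i\beta_i$, and into $z$ would break the constraint $z \geq 0$), and the hypothesis $c < 1$ is used precisely to ensure $c^{-1}X \succeq I_n$ so that the substitution stays inside $\caD$.
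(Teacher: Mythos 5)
Your proof is correct and is essentially the paper's argument: the same change of variables $\tilde X = X/c$, the observation that $(c\Phi)(X^{-1}) = \Phi(\tilde X^{-1})$ and that the $\alpha\cdot\lambda(\log X)$ term contributes $(\sum_i \alpha_i)\log c$, followed by relaxing the constraint $\tilde X \succeq c^{-1} I_n$ to $\tilde X \succeq I_n$ to get the inequality. Nothing further is needed.
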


\begin{proof} This is a change of variables argument. Let $\tilde{X} = X/c$.
\begin{align*}
\capa_{k,r} c \Phi &= \inf_{X,Y \succeq I, z \geq 0} \tr Y^{-1} \Phi((X/c)^{-1})e^{z} + \alpha \cdot \lambda(\log X) + \beta \cdot \lambda(\log Y) - k r\\
& = \inf_{\tilde{X} \succeq I/c,Y \succeq I, z \geq 0} \tr Y^{-1} \Phi(\tilde X^{-1})e^{z} + \alpha \cdot \lambda(\log \tilde X + (\log c) I_n) +  \beta \cdot \lambda(\log Y) - k r\\
& \geq \inf_{\tilde{X} \succeq I,Y \succeq I, z \geq 0} \tr Y^{-1} \Phi(\tilde X^{-1})e^{z} + \alpha \cdot \lambda(\log \tilde X) + (\sum_i \alpha_i) \log c +  \beta \cdot \lambda(\log Y) - k r\\
&= \capa_{k,r} c \Phi + (\sum_i \alpha_i) \log c. \qedhere
\end{align*}

\end{proof}

\begin{corollary}\label{cor:sink-running}
Suppose the Kraus operators of $\Phi$ have Gaussian integer entries and $\capa^{\alpha, \beta} \Phi$ is finite. Then \textsc{MajSinkhorn} terminates in
$$O \left(\frac{k (n + m) \log (m + n) + (\sum_i \alpha_i)\log pmnM}{\min \{\alpha_1, \beta_1\} \eps^2 }\right)$$ steps, and the output is an $(\alpha, \beta)$ scaling of $\Phi$ of size at least $(1 - \eps) k$.
\end{corollary}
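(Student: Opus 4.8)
The plan is to feed the explicit capacity lower bound of Theorem~\ref{thm:cap-lb} into the abstract step bound of Theorem~\ref{thm:maj-sinkhorn}, using Lemma~\ref{lem:cap-scalar} to reduce to the normalized situation ($\Phi$ of size at most $1$) that Theorem~\ref{thm:maj-sinkhorn} assumes.

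First I would normalize. Since the Kraus operators have Gaussian integer entries, $\tr\Phi(I_n)=\sum_{i=1}^p\norm{A_i}_F^2$ is a positive integer with $1\le\tr\Phi(I_n)\le pmnM^2$; set $c=1/\tr\Phi(I_n)\le 1$, so $c\Phi$ has size exactly $1$ and $\log(1/c)=O(\log(pmnM))$. Run \textsc{MajSinkhorn} on $c\Phi$. Because $(c\Phi)_{B,C}=\Phi_{\sqrt{c}B,C}$, every scaling of $c\Phi$ is also a scaling of $\Phi$, so the output guarantee of Theorem~\ref{thm:maj-sinkhorn} (an $(\alpha,\beta)$-majorized scaling of size at least $(1-\eps)k$) transfers to $\Phi$ unchanged.

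Next I would bound $1-\capa^{\alpha,\beta}_k(c\Phi)$. Lemma~\ref{lem:cap-scalar}, applied with $0<c\le 1$, gives $-\capa^{\alpha,\beta}_k(c\Phi)\le -\capa^{\alpha,\beta}_k\Phi+(\sum_i\alpha_i)\log(1/c)$. We are in the case $\capa^{\alpha,\beta}\Phi$ finite, so Theorem~\ref{thm:cap-lb} bounds $-\capa^{\alpha,\beta}_k\Phi=O(k(n+m)\log(m+n)+k\log p)$. One small point is that $k\le\sum_i\alpha_i$: finiteness of $\capa^{\alpha,\beta}_k\Phi$ means, by Theorem~\ref{thm:maj-scal}, that $\Phi$ has $(\alpha,\beta)$-majorized scalings $\tilde\Phi$ of size arbitrarily close to $k$, and any such scaling has size $\tr\tilde\Phi^*(I_m)=\sum_j\lambda_j(\tilde\Phi^*(I_m))\le\sum_i\alpha_i$ since $\lambda(\tilde\Phi^*(I_m))\preceq\alpha$. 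Hence both $k\log p$ and the normalization term $(\sum_i\alpha_i)\log(1/c)$ are $O((\sum_i\alpha_i)\log(pmnM))$, so $1-\capa^{\alpha,\beta}_k(c\Phi)=O(k(n+m)\log(m+n)+(\sum_i\alpha_i)\log(pmnM))$. Substituting this into the bound $2(1-\capa^{\alpha,\beta}_k(c\Phi))/(\min\{\alpha_1,\beta_1\}\eps^2)$ from Theorem~\ref{thm:maj-sinkhorn} yields the claimed step count.

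There is no genuine obstacle here, since the content is packed into Theorems~\ref{thm:maj-sinkhorn} and \ref{thm:cap-lb}, which we may assume. The only things needing care are tracking the normalization scalar exactly via Lemma~\ref{lem:cap-scalar} instead of sweeping it aside, and the bookkeeping that merges $k\log p$ and $(\sum_i\alpha_i)\log(1/c)$ into the single term $(\sum_i\alpha_i)\log(pmnM)$ of the statement.
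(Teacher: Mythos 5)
Your proposal is correct and follows essentially the same route as the paper: normalize $\Phi$ to size at most $1$, control the resulting capacity shift via Lemma~\ref{lem:cap-scalar}, invoke the lower bound of Theorem~\ref{thm:cap-lb}, and plug into the step bound of Theorem~\ref{thm:maj-sinkhorn}. Your extra observation that $k\le\sum_i\alpha_i$ (via Theorem~\ref{thm:maj-scal} and $\tr\tilde\Phi^*(I_m)=\tr\tilde\Phi(I_n)$), used to absorb the $k\log p$ term into $(\sum_i\alpha_i)\log(pmnM)$, is a correct bookkeeping detail that the paper's proof leaves implicit.
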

\begin{proof} At the first step of \textsc{MajSinkhorn} we normalize $\Phi$ to obtain $\tilde \Phi$ of size at most $1$. This entails dividing by a constant $1 \leq c \leq pmn M^2$. By Lemma \ref{lem:cap-scalar}, $\capa^{\alpha,\beta}_k \tilde \Phi \geq \capa_{k,r} \Phi + (\sum_i \alpha_i) \log c.$ Theorem \ref{thm:maj-sinkhorn} shows that \textsc{MajSinkhorn} terminates in at most $2 (1 - \capa^{\alpha, \beta} \tilde \Phi )/\min \{\alpha_1, \beta_1\} \eps^2 $ further steps; plugging in our estimate for $\capa^{\alpha,\beta}_k \tilde \Phi$ completes the proof. \end{proof}

We now collect a few lemmas needed for the proof of Theorem \ref{thm:cap-lb}. Our main tool is Theorem \ref{lem:caps-ineq} and known capacity lower bounds from \cite{Franks2018} for the capacity $\capa_{\mu, \nu}$ for triangular scaling. From a compactness argument similar to the proof of Theorem \ref{thm:maj-scal}, we know that there is some $\mu \preceq \alpha, \nu \preceq \beta$ and unitaries $U,V$ such that $\capa_{\mu, \nu} \Phi_{U,V} > 0$. If the bit complexity of $U,V$ is controlled, then we can use known capacity lower bounds. Fortunately, Theorem \ref{thm:tri} tells us that we have $\capa_{\mu, \nu} \Phi_{A,B} > 0$ for \emph{random} matrices. The next lemma combines Theorem \ref{thm:tri} with Schwarz-Zippel to see how large a range we need to take for the random entries of $A,B$. A similar argument appeared in the capacity lower bounds of \cite{burgisser2019towards}. We'll need a quick folklore bound on the bit complexity of vertices of polyhedra generated by Boolean inequalities.

\begin{observation}\label{obs:vertices}
Let $P = \{x: a_i \cdot x \geq b_i, i \in [m]\}\subseteq \R^n$ be a polytope where $a_i \in \{0,1\}^n$ and $b_i \in \{0,1\}$. Then every vertex of $P$ is the ratio of an integral vector with entries at most $e^{O(n \log n)}$ in absolute value with an integer at most $e^{O(n \log n)}$. As a consequence, any polyhedral cone inside the nonnegative orthant of $\R^n$ cut out by inequalities with Boolean coefficients is the conic hull of some number of integer vectors with entries at most $e^{O(n \log n)}$.
\end{observation}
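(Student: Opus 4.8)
The plan is to prove the two statements in turn, the second being an immediate consequence of the first. For the vertex bound: a vertex $v$ of $P = \{x : a_i \cdot x \geq b_i\}$ is the unique solution of a subsystem $A'x = b'$, where the rows of $A'$ are $n$ linearly independent vectors $a_i \in \{0,1\}^n$ and $b'$ has entries in $\{0,1\}$. By Cramer's rule, each coordinate of $v$ is $\det(A'_j)/\det(A')$, where $A'_j$ is $A'$ with its $j$th column replaced by $b'$. Both numerator and denominator are integers, since $A', b'$ are integral. It remains to bound the determinant of an $n\times n$ $\{0,1\}$-matrix (with one column possibly replaced by another $\{0,1\}$ vector, which is still $\{0,1\}$). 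By the Hadamard inequality, $|\det(A')| \leq \prod_{i} \|(\text{row }i)\| \leq (\sqrt n)^n = n^{n/2} = e^{(n/2)\log n} = e^{O(n\log n)}$. The same bound applies to the numerator. This gives the claimed bound, with the common denominator $\det(A')$ being a nonzero integer of absolute value at most $e^{O(n\log n)}$.

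For the second statement, let $K \subseteq \R^n_{\geq 0}$ be a polyhedral cone cut out by inequalities with Boolean coefficients; that is, $K = \{x \geq 0 : a_i \cdot x \geq 0,\ i \in [m']\} \cap \{x \geq 0 : c_j \cdot x \leq 0,\ j\}$ with all $a_i, c_j \in \{0,1\}^n$, or more simply $K = \{x : a_i \cdot x \geq 0\}$ for Boolean $a_i$ once we fold the nonnegativity constraints $x_\ell \geq 0$ and any $\leq$ constraints into the list (a $\leq 0$ constraint with Boolean coefficients combined with $x \geq 0$ forces the relevant coordinates to $0$, which is again Boolean). A pointed polyhedral cone is the conic hull of its extreme rays. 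Each extreme ray is spanned by a vector that is, up to scaling, a vertex-like solution: it satisfies $n-1$ of the defining inequalities at equality with a rank-$(n-1)$ subsystem, so after fixing a normalization it is determined by Cramer's rule applied to an $n\times n$ integer system whose entries are Boolean (the $n-1$ tight rows together with one normalizing row such as $\sum_\ell x_\ell = 1$ restricted to the support, or a coordinate equation). Clearing the rational denominator, each extreme ray is generated by an integer vector whose entries are, up to the Hadamard bound above, at most $e^{O(n\log n)}$ in absolute value. Taking the conic hull over all extreme rays gives the claim. If $K$ is not pointed, one first decomposes $K = L + K'$ where $L$ is its lineality space and $K'$ is pointed; but since $K \subseteq \R^n_{\geq 0}$, the lineality space is trivial, so this case does not arise.

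The only mildly delicate point is making sure the ``normalizing row'' used to pin down an extreme ray is itself Boolean (or at least integral with small entries) so that the Hadamard estimate still applies; choosing a single coordinate equation $x_\ell = 1$ on a coordinate in the support of the ray achieves this, and the support is nonempty since the ray is nonzero. I expect no real obstacle here; the argument is entirely standard linear algebra, and the content is purely the Hadamard determinant bound $|\det| \le n^{n/2}$ for $\{0,1\}$-matrices together with Cramer's rule.
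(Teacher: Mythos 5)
Your proof is correct and takes essentially the same route as the paper: identify an invertible Boolean subsystem of tight constraints, apply Cramer's rule, and bound the determinants by Hadamard's inequality $|\det|\le n^{n/2}=e^{O(n\log n)}$. The only cosmetic difference is in the cone statement, where the paper intersects the cone with the simplex and reuses the vertex bound, whereas you normalize each extreme ray directly by adding a coordinate equation $x_\ell=1$ on its support (which indeed yields an invertible Boolean system, since the tight rows span exactly the hyperplane orthogonal to the ray); both reduce to the same Cramer--Hadamard computation.
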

\begin{proof}
Each vertex is a basic feasible solution, i.e. is obtained from setting some entries to zero and the rest according to the equation $A x = b$ where $A$ is some invertible square submatrix of the $m \times n$ matrix with entries $(a_i)_j$ and $b$ is the corresponding subset of the entries of $b_1, \dots, b_m$. By Cramer's rule, the $i^{th}$ entry of $x = A^{-1} b$ is obtained by $\det A(i)/\det A$ where $A(i)$ has the $i^{th}$ column of $A$ replaced by $b$. Thus $\det A$ is a common denominator. By Hadamard's inequality, $\det A, \det A(i) \leq n^{n/2} = e^{O( n \log n)}.$ For the statement about cones, we simply take the intersection of the cone with the simplex and apply the theorem to this polytope. The extreme points of the polytope generate the extremal rays of the cone.
\end{proof}

\begin{lemma}\label{lem:schwarz-zippel}
Suppose there are some invertible matrices $A',B'$ such that $\capa_{\mu, \nu} \Phi_{A', B'} > 0$. Then there is a pair of invertible matrices $A, B$ with integer entries at most $e^{ O((n + m) \log (n + m))}$ such that
$\capa_{\mu, \nu} \Phi_{A,B} > 0$.
\end{lemma}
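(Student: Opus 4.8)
The plan is a standard Schwartz--Zippel argument: I will show that the set of scalings $(A,B)$ with $\capa_{\mu,\nu}\Phi_{A,B}=-\infty$ is a low-degree affine variety $Z$ that does not contain the given witness $(A',B')$, so a random \emph{integer} point of modest magnitude lies outside $Z$ and is invertible.

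First I set up the variety. By Theorem~\ref{thm:tri}\ref{it:degree}, for an \emph{integral} target pair $(\gamma,\delta)$ the set $\{(A,B):\capa_{\gamma,\delta}\Phi_{A,B}=-\infty\}$ is the common zero locus of polynomials of degree at most $2(\sum_i\gamma_i)^2$ in the $n^2+m^2$ entries of $(A,B)$. To reduce to this case, note that triangular scalability to a target $(\mu,\nu)$ is equivalent to triangular scalability to $(D\mu,D\nu)$ for any scalar $D>0$: dividing a triangular scaling by $D$ again gives a triangular scaling (the factor $B$ rescales to $B/\sqrt D$, which is still upper triangular) and rescales the two marginals by $1/D$. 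Hence, by Theorem~\ref{thm:tri}\ref{it:tri-cap}, the set $Z:=\{(A,B):\capa_{\mu,\nu}\Phi_{A,B}=-\infty\}$ coincides with the variety for the integral pair $(D\mu,D\nu)$, where $D$ is a common denominator of $\mu,\nu$. In the situation we care about, $\mu,\nu$ can be chosen as a vertex of the polytope obtained by intersecting the Boolean-coefficient cone of Theorem~\ref{thm:tri}\ref{it:polytope} with the majorization constraints $\mu\preceq\alpha,\ \nu\preceq\beta$ and the hyperplane $\sum_i\mu_i=\sum_i\nu_i=k$, so by Observation~\ref{obs:vertices} (applied with this slightly more general right-hand side, which does not change the denominator bound) we may take $D\le e^{O((n+m)\log(n+m))}$. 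Since $k=\sum_i\mu_i$ is an integer at most $n$, the defining polynomials of $Z$ thus have degree at most $2(Dk)^2=e^{O((n+m)\log(n+m))}$.

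Second I run Schwartz--Zippel. As $\capa_{\mu,\nu}\Phi_{A',B'}>-\infty$, the point $(A',B')$ is not in $Z$, so $f(A',B')\neq0$ for at least one defining polynomial $f$ of $Z$; moreover $\det A'\neq0$ and $\det B'\neq0$. Therefore $g:=f\cdot\det(A)\cdot\det(B)$ is a nonzero polynomial of degree $N\le 2(Dk)^2+m+n=e^{O((n+m)\log(n+m))}$. Drawing each entry of $A,B$ independently and uniformly from $\{0,1,\dots,N\}\subseteq\Z$, the Schwartz--Zippel lemma gives $\Pr[g(A,B)=0]\le N/(N+1)<1$; hence some integer choice satisfies $g(A,B)\neq0$, which forces $A,B$ invertible and $(A,B)\notin Z$, i.e.\ $\capa_{\mu,\nu}\Phi_{A,B}>-\infty$, with all entries at most $N=e^{O((n+m)\log(n+m))}$. (The statement is phrased with ``$>0$''; as throughout this section this stands for finiteness of the capacity, equivalently $(\mu,\nu)$-triangular scalability by Theorem~\ref{thm:tri}\ref{it:tri-cap}. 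If one literally wants a positive value, it suffices to first replace $\Phi$ by a large scalar multiple $c\Phi$, which a one-line change of variables shows increases $\capa_{\mu,\nu}$ by $(\sum_i\mu_i)\log c$.)

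The one non-routine point --- and the reason the bound has the shape $e^{O((n+m)\log(n+m))}$ rather than something polynomial --- is controlling the degree of the defining equations of $Z$, which requires passing to an integral target and hence bounding the common denominator of $\mu,\nu$; this is exactly where Observation~\ref{obs:vertices} enters. The equivalence of triangular scalability under rescaling the target and the Schwartz--Zippel estimate itself are both routine.
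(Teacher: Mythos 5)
Your argument takes a genuinely different route from the paper's: you make the target integral by rescaling (using that $(\mu,\nu)$-triangular scalability is invariant under $(\mu,\nu)\mapsto(D\mu,D\nu)$), so that Item \ref{it:degree} of Theorem \ref{thm:tri} applies to a single variety $Z$, and then you apply Schwartz--Zippel once to $f\cdot\det(A)\cdot\det(B)$. That is correct, and cleaner than the paper's proof, \emph{provided} $\mu,\nu$ admit a common denominator $D=e^{O((n+m)\log(n+m))}$. The paper's proof never assumes this: it writes $(\mu,\nu)$ in the conic hull of at most $n+m$ extremal rays of the scalability cone from Item \ref{it:polytope} of Theorem \ref{thm:tri}; those rays are integral with entries $e^{O((n+m)\log(n+m))}$ by Observation \ref{obs:vertices}, the degree bound and Schwartz--Zippel are applied simultaneously to the finitely many corresponding varieties, and finiteness of $\capa_{\mu,\nu}\Phi_{A,B}$ for the sampled $(A,B)$ then follows because, for fixed $(A,B)$, the set of scalable targets is itself a convex cone containing all the rays.

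The genuine gap is that Lemma \ref{lem:schwarz-zippel} is stated for an arbitrary $(\mu,\nu)$, and at the one place it is invoked (the proof of Theorem \ref{thm:cap-lb}) the pair $(\mu,\nu)$ is produced by a compactness argument, so it carries no a priori rationality or denominator control --- it may be irrational, in which case no bounded $D$ exists and your degree bound, hence your sampling range, blows up. Your parenthetical repair (``in the situation we care about, $\mu,\nu$ can be chosen as a vertex'') is not a proof of the lemma but a modification of its application: to make it rigorous you would have to redo the choice of $(\mu,\nu)$ in Theorem \ref{thm:cap-lb} and extend Observation \ref{obs:vertices} to the polytope you describe, whose defining system involves the partial sums of $\alpha$ (denominator $n$ when $\alpha=\alpha_r$) and monotonicity constraints with coefficients $\pm1$, neither of which is covered verbatim; you assert rather than prove that the denominator bound survives. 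As written, you have established a weaker statement than the lemma. The paper's extremal-ray decomposition, combined with the cone structure of the scalable targets for fixed $(A,B)$, is exactly the device that removes all dependence on the arithmetic of $(\mu,\nu)$, and that idea is missing from your write-up.
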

\begin{proof}

By Item \ref{it:polytope} of Theorem \ref{thm:tri}, the set of $\mu, \nu$ such that there exists invertible $A,B$ such that $\capa_{\mu, \nu} \Phi_{A,B} > -\infty$ (i.e. such that $\Phi_{A,B}$ is $(\mu, \nu)$-scalable) is a polyhedral cone by Item \ref{it:polytope} of Theorem \ref{thm:tri}. Thus it is enough to check that $\capa_{\mu_i, \nu_i} \Phi_{A,B} > - \infty$ for $n + m$ (the ambient dimension) many extremal rays $(\mu_i, \nu_i)$ which have $(\mu,\nu)$ is in their conic hull. Because the cone is cut out by Boolean inequalities, the extremal rays of the cone have integral entries of magnitude $e^{O((n + m) \log (n + m))}$ by Observation \ref{obs:vertices}. By Item \ref{it:degree} of Theorem \ref{thm:tri}, as $\mu_i, \nu_i$ are integral then the set of pairs $A,B$ such that $\capa_{\mu_i, \nu_i} \Phi_{A,B} = -\infty$ is a variety $\mathcal V(\mu_i,\nu_i)$ generated by polynomials of degree $2(\mu_i \cdot \mathbf 1)^2 = e^{O((n + m) \log (n + m))}$. For each $i$ there exists $A,B$ such that $\capa_{\mu_i, \nu_i} \Phi_{A,B} > -\infty$, so there is some nonzero polynomial $p_i$ of degree $e^{O(m + n) \log (n + m)}$ in the ideal of $\mathcal V(\mu_i,\nu_i)$ such that $p_i(A,B) \neq 0$. By the Schwarz-Zippel lemma, for $A,B$ with i.i.d. random entries in $[0, (m + n) e^{O((m + n) \log (m + n))}]$ there is a nonzero probability that $p_i(A,B) \neq 0$ for all $i$ - so that there is some $A,B$ such that $\capa_{\mu_i, \nu_i} \Phi_{A,B} > -\infty$ for all $i$ and hence $\capa_{\mu,\nu} \Phi_{A,B} > -\infty$. \end{proof}

Next we recall the capacity lower bound we need from \cite{Franks2018}, which is for a slightly different capacity.
\begin{theorem}[{\cite[Theorem~50]{Franks2018}}]\label{thm:fra-cap-lb}  Define
$$\capa(\Phi, \mu, \nu) := \inf_{g,h \textrm{ upper triangular, invertible}} e^{-H(\mu)}\frac{\det(\nu, \Phi( hh^\dagger))}{\det(\mu, h^\dagger h)}.$$
If $\mu, \nu$ are rational with common denominator at most $e^b$ and $\sum \mu_i = \sum \nu_i = 1,$
then
$$\textrm{either }\capa(\Phi, \mu, \nu) = 0 \textrm{ or } \capa(\Phi, \mu, \nu) \geq e^{- O(b  + \log p)}.$$
\end{theorem}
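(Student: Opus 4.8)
This is quoted as \cite[Theorem~50]{Franks2018}, so one option is simply to cite it; I outline how I would reprove it, reorganizing ingredients of \cite{Franks2018} already recalled here. Write the common denominator as $D\le e^{b}$, so $\mu=\mu'/D$ and $\nu=\nu'/D$ with $\mu',\nu'$ non-increasing nonnegative integer vectors and $\sum_i\mu'_i=\sum_i\nu'_i=D$. Two reductions come first. First, an \emph{entropy bound}: every nonzero entry of $\mu$ (resp.\ $\nu$) is a multiple of $1/D$ and they sum to $1$, so at most $D$ of them are nonzero and hence $H(\mu),H(\nu)\le\log D\le b$ --- this is why the final bound carries no $n,m$ dependence. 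Second, \emph{passing to integer powers}: using $\sum_j j(\nu'_j-\nu'_{j+1})=\sum_j\nu'_j=D$ by partial summation and the analogous identity for $\mu'$,
\[
\capa(\Phi,\mu,\nu)^{D}=e^{-DH(\mu)}\,\inf_{h}F(h),\qquad F(h):=\frac{\prod_j\bigl(\prm_j\Phi(hh^\dagger)\bigr)^{\nu'_j-\nu'_{j+1}}}{\prod_i|h_{ii}|^{2\mu'_i}},
\]
where the numerator is a polynomial in $h,\overline h$ of bidegree $(D,D)$ with \emph{integer} coefficients (here the hypothesis that the Kraus operators are Gaussian integers is used) and the denominator is a monomial of the same bidegree, so $F$ is invariant under $h\mapsto\lambda h$ and under conjugation of $h$ by diagonal unitaries.

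For the dichotomy, recall from item~\ref{it:tri-cap} of Theorem~\ref{thm:tri}, together with Lemma~\ref{lem:compare-capacities} (which makes $\capa(\Phi,\mu,\nu)$ an affine function of the logarithm of the triangular-scaling capacity $\capa_{\mu,\nu}\Phi$), that $\capa(\Phi,\mu,\nu)=0$ precisely when $\Phi$ is not $(\mu,\nu)$-triangularly scalable; this is the first alternative. Otherwise $\inf_h F(h)>0$, and after taking $D$-th roots and absorbing $e^{-DH(\mu)}$ with the entropy bound it suffices to prove $\inf_h F(h)\ge e^{-O(D(b+\log p))}$. I would get this from two ingredients. \emph{(a) From scalability to an integer witness:} by item~\ref{it:degree} of Theorem~\ref{thm:tri}, the locus of scalings $A,B$ with $\capa_{\mu',\nu'}\Phi_{A,B}=-\infty$ is an affine variety cut out by polynomials of degree at most $2D^2$; since $\Phi$ is scalable, the integral tuple $(A_1,\dots,A_p)$ lies off this variety, so some integer-coefficient polynomial of degree $\le 2D^2$ is nonzero, hence of absolute value $\ge 1$, at $(A_1,\dots,A_p)$. \emph{(b) From the witness to a capacity bound:} run the triangular specialization of \textsc{MajSinkhorn} --- alternating minimization of $-\log F$ over the diagonal and unipotent factors of $h$ --- along which the potential is monotone with quantified per-step progress, starting from the positive-integer value $\prod_j(\prm_j\Phi(I))^{\nu'_j-\nu'_{j+1}}\ge 1$ (valid since, as we may assume, $\Phi(I)=\sum_iA_iA_i^\dagger\succ 0$ is a Gaussian-integer Hermitian matrix); standard Sinkhorn accounting then lower-bounds $\inf_h F(h)$ in terms of the size $\tr\Phi(I)=\poly(p)$ of $\Phi$ and the witness from (a), with the degree $2D^2$ entering only through its logarithm $O(b)$ (the source of the $b$ term) and the size through its logarithm (the $\log p$ term). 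Taking $D$-th roots and using the entropy bound yields $\capa(\Phi,\mu,\nu)\ge e^{-O(b+\log p)}$.

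The main obstacle is exactly the tension between $D=e^{b}$ and the target, which must be polynomial in $b=\log D$ rather than in $D$: a naive Sinkhorn argument costs polynomially in $D$ --- the unnormalized potential ranges over an interval of length $\Theta(Db)$ --- and would only give $\capa(\Phi,\mu,\nu)\ge e^{-\poly(D)}$, far too weak. The whole point of invoking item~\ref{it:degree} of Theorem~\ref{thm:tri} is that its \emph{logarithm} $\log(2D^2)=O(b)$, not $D$ itself, governs how fast near-balancedness is reached, and the delicate part of a careful write-up is verifying that the passage from the algebraic witness to the lower bound never reintroduces a factor polynomial in $D$ (and that the two logarithmic contributions add to $O(b+\log p)$ rather than multiply). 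A secondary subtlety is that the infimum runs over the Borel (upper-triangular) group and not merely its diagonal torus, so one must rule out a unipotent direction driving $F$ to $0$ while the diagonal stays bounded; this is handled via the multiplicativity of $\det(\mu,\cdot)$ and $\det(\nu,\cdot)$ under triangular conjugation (item~\ref{it:tri} of Lemma~\ref{lem:logdet}), which reduces the analysis to the torus.
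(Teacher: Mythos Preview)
The paper does not prove this theorem; it is simply imported as \cite[Theorem~50]{Franks2018}, so there is nothing in the paper to compare your sketch against beyond the citation itself, which you correctly identify as the legitimate first option.

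Your outline beyond the citation has the right shape --- pass to the common denominator $D$, recognize the numerator of $F(h)$ as an integer-coefficient polynomial of bidegree $(D,D)$, and try to extract an integer witness yielding $e^{-O(b+\log p)}$ after $D$-th roots --- but step~(a) contains a category error. Item~\ref{it:degree} of Theorem~\ref{thm:tri} describes a variety in the space of \emph{scaling matrices} $(A,B)\in\C^{m\times m}\times\C^{n\times n}$, with coefficients that depend on the fixed Kraus operators of $\Phi$; it is not a variety in the space of Kraus-operator tuples. Saying ``the integral tuple $(A_1,\dots,A_p)$ lies off this variety'' conflates the two ambient spaces: what triangular scalability of $\Phi$ actually gives you is that the pair $(I_m,I_n)$ lies off the variety in $(A,B)$-space, and the defining polynomials of that variety need not have integer coefficients in the variables $A,B$ (their coefficients involve the entries of the $A_i$). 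So the ``nonzero integer $\Rightarrow$ absolute value $\ge 1$'' step does not follow from item~\ref{it:degree} as stated. In the present paper item~\ref{it:degree} is used only for a Schwartz--Zippel argument (Lemma~\ref{lem:schwarz-zippel}), where the degree bound controls the size of the random range; that is a different mechanism from the one you need here.

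Step~(b) is correspondingly too vague to close the gap you yourself flag. ``Standard Sinkhorn accounting'' lower-bounds the capacity by (initial value) minus (number of steps)$\times$(progress per step), and the natural scale of the unnormalized potential is $\Theta(D\log D)$; nothing in your sketch explains why the witness from~(a), even if it existed, would make the Sinkhorn analysis depend only on $\log(2D^2)=O(b)$ rather than on $D$. The actual proof in \cite{Franks2018} does not go through Sinkhorn iteration at all: it identifies a specific Borel-relative invariant (a highest-weight vector) with integer coefficients in the Kraus operators whose value bounds the capacity from below directly, and the $b$ and $\log p$ terms arise from counting monomials in that polynomial. Your entropy bound and the reduction to $F(h)$ are correct preliminary steps toward that argument, but the bridge from ``$F$ has integer-coefficient numerator'' to the quantitative lower bound requires the invariant-theoretic input, not items~\ref{it:degree} and a Sinkhorn run.
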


The expression for capacity in the previous theorem is related to our expression $\capa_{\mu, \nu} \Phi$ by an affine transformation:

\begin{lemma} \label{lem:compare-capacities}
We have
$$\capa_{\mu, \nu} \Phi = \nu \cdot\mathbf 1 + \log \capa(\Phi, \mu, \nu) + H(\nu) + H(\mu)$$
where $H$ denotes the Shannon entropy. In particular, $\capa_{\mu, \nu} \Phi$ and
$\log \capa(\Phi, \mu, \nu)$ are either both finite or both $-\infty$.
\end{lemma}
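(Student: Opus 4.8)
The plan is to evaluate the infimum defining $\capa_{\mu,\nu}\Phi$ by two successive changes of variables: first minimize over $Y$ with $X$ held fixed, then re-parametrize $X$ through an upper-triangular factor. After these substitutions the remaining infimum is exactly the one defining $\capa(\Phi,\mu,\nu)$, and the identity falls out after collecting the additive constants $\nu\cdot\mathbf 1$, $H(\nu)$, $H(\mu)$.

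\textbf{Step 1: a closed form for the inner minimization.} I would first establish the variational identity
$$\inf_{Y\succ 0}\ \tr(WY^{-1}) + \log\det(\nu, Y)\ =\ \nu\cdot\mathbf 1 + H(\nu) + \log\det(\nu, W)$$
for every $W\succeq 0$, with the conventions $\log 0 = -\infty$ and $0\log 0 = 0$, so that both sides are $-\infty$ exactly when a leading principal minor of $W$ with positive weight vanishes. To prove it, write $Y = \ell\ell^\dagger$ with $\ell$ lower triangular and positive diagonal (Cholesky) and put $m = \ell^{-1}$. A direct computation gives $\prm_i(\ell\ell^\dagger) = \prod_{j\le i}|\ell_{jj}|^2$, hence $\log\det(\nu,\ell\ell^\dagger) = 2\sum_k\nu_k\log|\ell_{kk}| = -2\sum_k\nu_k\log|m_{kk}|$, while $\tr(WY^{-1}) = \tr(mWm^\dagger) = \sum_k (mWm^\dagger)_{kk}$. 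Since the $k$-th row of the lower-triangular $m$ is supported on $\{1,\dots,k\}$, the objective splits as $\sum_k\big[(mWm^\dagger)_{kk} - 2\nu_k\log|m_{kk}|\big]$ with the $k$-th summand depending only on the $k$-th row of $m$. Minimizing row $k$ with $m_{kk}$ fixed is a positive-semidefinite quadratic minimization whose value, by completing the square (equivalently, via the Schur complement / Cramer's rule for $(W_{[k]}^{-1})_{kk}$), is $|m_{kk}|^2\,\prm_k W/\prm_{k-1}W$; a one-variable optimization over $|m_{kk}|$ then gives $k$-th optimal value $\nu_k - \nu_k\log\nu_k - \nu_k\log(\prm_{k-1}W/\prm_k W)$. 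Summing over $k$ and telescoping the minor logarithms by Abel summation (with $\prm_0 W := 1$, $\nu_{n+1}:=0$) yields exactly $\nu\cdot\mathbf 1 + H(\nu) + \log\det(\nu, W)$; the degenerate cases where some $\prm_k W = 0$ are checked to send both sides to $-\infty$.

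\textbf{Step 2: re-parametrizing $X$ and combining.} Every $Z\succ 0$ equals $hh^\dagger$ for some upper triangular $h$ with positive diagonal (apply Cholesky to $JZJ$ for the reversal permutation $J$, then conjugate back), so $h\mapsto (hh^\dagger)^{-1}$ is onto the positive-definite cone. Substituting $X^{-1}=hh^\dagger$ in $g_{\mu,\nu}$ turns the first term into $\tr\Phi(hh^\dagger)Y^{-1}$; writing $(hh^\dagger)^{-1} = \ell'\ell'^\dagger$ with $\ell'=(h^\dagger)^{-1}$ lower triangular and $|\ell'_{kk}| = |h_{kk}|^{-1}$, the minor computation from Step~1 gives $\log\det(\mu,(hh^\dagger)^{-1}) = -2\sum_k\mu_k\log|h_{kk}| = -\log\det(\mu, h^\dagger h)$. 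Applying Step~1 with $W = \Phi(hh^\dagger)\succeq 0$ (the $-\infty$ conventions cover the case where it is singular) gives
$$\capa_{\mu,\nu}\Phi = \nu\cdot\mathbf 1 + H(\nu) + \inf_{h\text{ upper triangular, invertible}}\Big[\log\det(\nu,\Phi(hh^\dagger)) - \log\det(\mu, h^\dagger h)\Big].$$
On the other hand, taking $\log$ in the definition of $\capa(\Phi,\mu,\nu)$ and pulling the infimum through $\exp$ and $\log$ gives $\log\capa(\Phi,\mu,\nu) = -H(\mu) + \inf_h\big[\log\det(\nu,\Phi(hh^\dagger)) - \log\det(\mu, h^\dagger h)\big]$. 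Comparing the two displays yields $\capa_{\mu,\nu}\Phi = \nu\cdot\mathbf 1 + \log\capa(\Phi,\mu,\nu) + H(\nu) + H(\mu)$. The finiteness dichotomy is then immediate: both infima are bounded above (take $h=I$), so the only alternative to finiteness is $-\infty$ on both sides simultaneously.

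\textbf{Main obstacle.} The crux is Step~1: recognizing that the \emph{lower}-triangular Cholesky factorization of $Y$ (paired with $m=\ell^{-1}$) makes $\tr(WY^{-1})$ decouple row by row, and then matching the telescoping sum of principal-minor logarithms with $\log\det(\nu, W)$, including the boundary term at $k=n$. A secondary nuisance is the careful bookkeeping of the $-\infty$ cases when $\Phi(hh^\dagger)$ is singular, needed so that the identity and the dichotomy statement hold with both sides infinite.
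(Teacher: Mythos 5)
Your proof is correct and follows essentially the same route as the paper: first eliminate $Y$ by an inner minimization with $X$ fixed, then reparametrize via $X^{-1}=hh^\dagger$ with $h$ upper triangular (using Lemma \ref{lem:logdet}-style minor identities such as $\log\det(\mu,(hh^\dagger)^{-1})=-\log\det(\mu,h^\dagger h)$) and match the result with the definition of $\capa(\Phi,\mu,\nu)$. The only real difference is that the paper treats the inner $Y$-minimization as ``straightforward calculus'' and defers it to an external reference, whereas you prove the closed form $\inf_{Y\succ 0}\bigl[\tr(WY^{-1})+\log\det(\nu,Y)\bigr]=\nu\cdot\mathbf 1+H(\nu)+\log\det(\nu,W)$ directly via the lower-triangular Cholesky factorization and row-by-row decoupling, which is a correct, self-contained replacement that also makes the degenerate case of singular $\Phi(hh^\dagger)$ (both sides $-\infty$) explicit.
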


\begin{proof} The identity follows by holding one of the variables fixed and optimizing over the other in the objective function for $\capa_{\mu,\nu} \Phi$. Recall that
$$\capa_{\mu, \nu} \Phi = \inf_{X,Y \succ 0} \tr \Phi(X^{-1}) Y^{-1}  + \log \det(\mu , X) + \log \det(\nu , Y).$$
Infimizing over $Y$ with $X$ held fixed leads us to choose $Y = g^\dagger g$ for $g$ upper triangular such that $g^{-\dagger} \Phi(X^{-1}) g^{-1} \to \diag(\nu)$, or $g^\dagger\diag (\nu) g \to \Phi(X).$ If no such $g$ exists, then the infimum over $Y$ alone is infinite. The proof is straightforward calculus, but we omit it and refer to \cite[Theorem~5.7]{franks2020minimal}.

 Plugging this family of $Y$ values and taking a limit yields
 $$\capa_{\mu, \nu} \Phi = \inf_{X \succ 0} \nu \cdot \mathbf 1 +  \log\det(\mu, X) + \log\det(\nu,  g^\dagger g).$$
 One checks using Item \ref{it:tri} of Lemma \ref{lem:logdet} that
 $$\log\det(\nu,  g^\dagger \diag(\nu) g) = \log\det(\nu, g^\dagger g) + \log\det(\nu,\diag(\nu)) = \log\det(\nu, g^\dagger g) - H(\nu).$$
Thus $\log\det(\nu, g^\dagger g) = \log\det(\nu,  g^\dagger \diag(\nu) g) + H(\nu) = \log\det(\nu,  \Phi(X^{-1}))$. We now have
 $$\capa_{\mu, \nu} \Phi =  \nu \cdot\mathbf 1  + \inf_{X \succ 0} \log\det(\mu, X) + \log\det(\nu,  \Phi(X^{-1})) + H(\nu).$$
 Writing $X^{-1} = hh^\dagger$ for $h$ upper triangular using the Cholesky decomposition, we find
\begin{align*}
\capa_{\mu, \nu} \Phi &= \nu \cdot\mathbf 1 + \inf_{h} \log\det(\mu, h^{-\dagger} h^{-1}) + \log\det(\nu,  \Phi(hh^{\dagger})) + H(\nu)\\
 &= \nu \cdot\mathbf 1  + \inf_{h} -  \log\det(\mu, h^\dagger h) + \log\det(\nu,  \Phi(hh^{\dagger})) + H(\nu)\\
 &=\log \capa(\Phi, \mu, \nu) + H(\nu) + H(\mu) + \nu \cdot\mathbf 1.\qedhere
\end{align*}
\end{proof}

We now combine the previous lemma with Theorem \ref{thm:fra-cap-lb} to obtain a lower bound on $\capa_{\mu, \nu} \Phi$.

\begin{theorem}[Capacity lower bound for $\capa_{\mu, \nu} \Phi$]\label{thm:arb-cap-lb}
Let the $p$ Kraus operators of $\Phi$ have Gaussian integer entries, and let $\mu \cdot \mathbf 1= \nu \cdot \mathbf 1 = k$. If $\capa_{\mu, \nu} (\Phi) > - \infty$ we have
$$ -\capa_{\mu, \nu} (\Phi) = O( k (n + m) \log (m + n) + k \log p).$$
\end{theorem}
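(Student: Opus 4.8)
The plan is to reduce to the rational case already handled by Theorem~\ref{thm:fra-cap-lb}. The key point is that although $(\mu,\nu)$ may be irrational, the set of marginals attainable by triangular scaling of $\Phi$ is a \emph{rational} polyhedral cone whose generators have polynomial bit complexity, and on the slice of fixed size the capacity is concave in the marginals, hence minimized at a generator. Concretely: by Item~\ref{it:tri-cap} of Theorem~\ref{thm:tri}, $\capa_{\mu,\nu}\Phi>-\infty$ iff $\Phi$ is $(\mu,\nu)$-triangularly scalable, and by Item~\ref{it:polytope} the set $\caC$ of nonnegative nonincreasing pairs $(\alpha,\beta)$ with this property is a polyhedral cone cut out by Boolean inequalities. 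Since every triangular scaling $\tilde\Phi$ has $\tr\tilde\Phi^*(I_m)=\tr\tilde\Phi(I_n)$ (both equal $\sum_i\norm{C_i}_F^2$ for its Kraus operators $C_i$), taking traces in the defining approximations gives $\caC\subseteq\{(\alpha,\beta):\sum_i\alpha_i=\sum_i\beta_i\}$. By Observation~\ref{obs:vertices}, $\caC$ is therefore the conic hull of finitely many nonzero integer pairs $(\mu^{(j)},\nu^{(j)})$ with entries at most $e^{O((n+m)\log(n+m))}$; write $k_j:=\sum_i\mu^{(j)}_i=\sum_i\nu^{(j)}_i$, a positive integer bounded the same way.

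Next I would use concavity to replace $(\mu,\nu)$ by a generator. Because $\capa_{\alpha,\beta}\Phi=\inf_{X,Y\succ 0}\big(\tr\Phi(X^{-1})Y^{-1}+\log\det(\alpha,X)+\log\det(\beta,Y)\big)$ and $\log\det(\alpha,X)=\sum_i(\alpha_i-\alpha_{i+1})\log\prm_iX$ is linear in $\alpha$, the map $(\alpha,\beta)\mapsto\capa_{\alpha,\beta}\Phi$ is an infimum of functions affine in $(\alpha,\beta)$, hence concave (cf.\ Item~\ref{it:concave} of Theorem~\ref{thm:tri}). Our pair lies in the polytope $\caC_k:=\caC\cap\{\sum_i\alpha_i=k\}$, whose vertices are exactly the points where the extreme rays of $\caC$ meet this hyperplane, i.e.\ the rational pairs $\hat\mu^{(j)}:=\tfrac{k}{k_j}\mu^{(j)}$, $\hat\nu^{(j)}:=\tfrac{k}{k_j}\nu^{(j)}$, which have common denominator $k_j\le e^{O((n+m)\log(n+m))}$ and $\sum_i\hat\mu^{(j)}_i=k$. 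A concave function is minimized over a polytope at a vertex, so $\capa_{\mu,\nu}\Phi\ge\min_j\capa_{\hat\mu^{(j)},\hat\nu^{(j)}}\Phi$, and it suffices to lower bound a single $\capa_{\hat\mu,\hat\nu}\Phi$ with $\hat\mu=\hat\mu^{(j)}$, $\hat\nu=\hat\nu^{(j)}$.

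For that I would invoke Lemma~\ref{lem:compare-capacities} and Theorem~\ref{thm:fra-cap-lb}. Lemma~\ref{lem:compare-capacities} gives $\capa_{\hat\mu,\hat\nu}\Phi=k+\log\capa(\Phi,\hat\mu,\hat\nu)+H(\hat\nu)+H(\hat\mu)$, and finiteness of the left side forces $\capa(\Phi,\hat\mu,\hat\nu)>0$. From the definition of $\capa(\Phi,\cdot,\cdot)$ and the identity $H(c\mu)=cH(\mu)-c(\log c)\sum_i\mu_i$ one checks $\capa(\Phi,c\mu,c\nu)=c^{\,c\sum_i\mu_i}\capa(\Phi,\mu,\nu)^{c}$ for $c>0$; with $c=k$ this yields $\capa(\Phi,\mu^{(j)}/k_j,\nu^{(j)}/k_j)=\capa(\Phi,\hat\mu/k,\hat\nu/k)>0$ and $\log\capa(\Phi,\hat\mu,\hat\nu)=k\log k+k\log\capa(\Phi,\hat\mu/k,\hat\nu/k)$. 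Now $\mu^{(j)}/k_j,\nu^{(j)}/k_j$ are probability vectors with common denominator at most $e^{O((n+m)\log(n+m))}$, so Theorem~\ref{thm:fra-cap-lb} gives $\log\capa(\Phi,\hat\mu/k,\hat\nu/k)\ge-O((n+m)\log(n+m)+\log p)$. For the entropy terms I only need lower bounds: $H(\hat\mu)=-k\log k+kH(\hat\mu/k)\ge-k\log k$ since $\hat\mu/k$ is a probability vector, and likewise $H(\hat\nu)\ge-k\log k$. Substituting,
\[
\capa_{\hat\mu,\hat\nu}\Phi\ \ge\ k-k\log k-O\big(k(n+m)\log(n+m)+k\log p\big),
\]
and since $k\le n$ in every application of this theorem (e.g.\ through Lemma~\ref{lem:caps-ineq}, where $\mu\preceq\alpha$ with $\norm{\alpha}_1\le n$) the term $k\log k\le k(n+m)\log(m+n)$ is absorbed, yielding $-\capa_{\mu,\nu}\Phi=O\big(k(n+m)\log(n+m)+k\log p\big)$.

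The main obstacle is the first two steps: passing from an arbitrary, possibly irrational $(\mu,\nu)$ to rational marginals with a polynomially bounded denominator. This is precisely where the rational polyhedral structure of the triangular-scalability region (Theorem~\ref{thm:tri}, Items~\ref{it:tri-cap} and~\ref{it:polytope}), the generator bit-complexity bound of Observation~\ref{obs:vertices}, and the concavity of the capacity in $(\alpha,\beta)$ must be combined; the fact that all of $\caC$ lies in $\{\sum_i\alpha_i=\sum_i\beta_i\}$, so that each generator has a single ``size'' $k_j$, is what lets Theorem~\ref{thm:fra-cap-lb} apply after normalizing. The last step is routine bookkeeping: reconciling the normalization in $\capa(\Phi,\cdot,\cdot)$ (probability vectors and an entropy prefactor) with that of $\capa_{\cdot,\cdot}\Phi$, and crudely bounding the entropies.
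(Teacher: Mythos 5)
Your proposal is correct and follows essentially the same route as the paper: use concavity of $(\mu,\nu)\mapsto\capa_{\mu,\nu}\Phi$ together with the Boolean-polyhedral description of the triangular-scalability region (Theorem~\ref{thm:tri}) and Observation~\ref{obs:vertices} to reduce to rational marginals with common denominator $e^{O((n+m)\log(n+m))}$, then invoke Theorem~\ref{thm:fra-cap-lb} through Lemma~\ref{lem:compare-capacities} using the linear scaling of $\log\capa(\Phi,\cdot,\cdot)+H(\cdot)$ in $k$. You merely carry out the bookkeeping more explicitly than the paper (the trace identity forcing $\sum_i\alpha_i=\sum_i\beta_i$ on the cone, and the $k\log k$ entropy terms that the paper absorbs with the remark that the remaining terms are nonnegative), which is fine.
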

\begin{proof}
From Theorem \ref{thm:fra-cap-lb}, we know that either $\capa(\Phi, \mu, \nu) = 0$ or $\capa(\Phi, \mu, \nu) \geq e^{- O(b  + \log p)}$ where $e^b$ is a common denominator of $\mu, \nu$ for $\mu, \nu$ summing to $1$. In our setting $\mu$ and $\nu$ sum to $k$, so we observe
\begin{align*}
\capa_{k\mu, k\nu} \Phi
&= k \nu \cdot\mathbf 1 + \log \capa(\Phi, k\mu, k\nu) + H(k\nu) + H(k\mu)\\
&\geq k ( \log \capa(\Phi, \mu, \nu) + H(\mu))\\
&= - O(k(b + \log p)).
\end{align*}
where we used in the first inequality that the expression for $\log \capa(\Phi, \mu, \nu) + H(\mu)$ scales linearly in $k$ and that the other terms are nonnegative. The last line was by Lemma \ref{lem:compare-capacities} and the fact that $H(\mu) \geq 0$. Changing variables, for $\mu, \nu$ summing to $k$, we have $ -\capa_{\mu, \nu}(\Phi) = O( k ( b + \log p ))$, where $e^{b}$ is a common denominator of $\mu/k, \nu/k$.

Using the concavity of $\capa_{\mu, \nu} \Phi$ in $\mu, \nu$ (Item \ref{it:concave} of Theorem \ref{thm:tri}), we know that the capacity is minimized at an extreme point of the polytope of $\mu, \nu$ such that $\capa_{\mu, \nu} \Phi>-\infty$ and $\sum \mu_i = \sum \nu_i = 1$. It remains to bound the common denominators of the extreme points of the polytope, which is cut out by Boolean inequalities by Item \ref{it:polytope} of Theorem \ref{thm:tri}. By Observation \ref{obs:vertices}, the each vertex of such a polytope is a rational vector with common denominator $e^{O((n + m) \log (n + m))}.$
\end{proof}

\begin{lemma}[capacity under scaling]\label{lem:change-vars}
For $h,g$ upper triangular,
$$ \capa_{\mu, \nu} \Phi_{g,h} = \capa_{\mu, \nu} \Phi - \log\det(\mu, g^\dagger g) - \log\det(\nu, h^\dagger h).$$

\end{lemma}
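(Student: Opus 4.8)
The plan is a change-of-variables argument inside the infimum that defines the capacity, in exactly the spirit of Lemmas~\ref{lem:cap-scalar} and~\ref{lem:caps-ineq}; the only new ingredient is that here we absorb a \emph{triangular} rescaling instead of a scalar one. Recall from Definition~\ref{def:other-cap} that
\[
\capa_{\mu,\nu} \Psi = \inf_{X \succ 0,\ Y \succ 0} \tr \Psi(X^{-1}) Y^{-1} + \log\det(\mu, X) + \log\det(\nu, Y),
\]
and that $\Phi_{g,h}$ is obtained from $\Phi$ by conjugating the input and the output by (powers of) $h$ and $g$ respectively, so that $\Phi_{g,h}(Z) = g\,\Phi(h^\dagger Z h)\,g^\dagger$ up to replacing $g,h$ by inverses, depending on the precise convention for $\Phi_{g,h}$. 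First I would substitute for $X$ (resp. $Y$) the appropriate triangular conjugate of a fresh positive-definite variable $\hat X$ (resp. $\hat Y$), chosen so that the conjugations built into $\Phi_{g,h}$ are exactly undone; a short cyclicity-of-trace computation then turns $\tr \Phi_{g,h}(X^{-1})Y^{-1}$ back into $\tr \Phi(\hat X^{-1}) \hat Y^{-1}$, and the map $(X,Y) \mapsto (\hat X, \hat Y)$ is a bijection of $\{X \succ 0\} \times \{Y \succ 0\}$ with itself because $g, h$ are invertible.

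It then remains to see how the two $\log\det$ terms transform. Here I would invoke the multiplicativity of $\log\det(\mu, \cdot)$ under triangular conjugation, Item~\ref{it:tri} of Lemma~\ref{lem:logdet}: for $b$ lower triangular, $\log\det(\mu, b Z b^\dagger) = \log\det(\mu, bb^\dagger) + \log\det(\mu, Z)$. Since $g, h$ are upper triangular, $g^\dagger, h^\dagger$ are lower triangular, and the conjugations arising in the substitution are precisely of the form $b\,(\cdot)\,b^\dagger$ with $b$ equal to $g^\dagger$, $h^\dagger$ (or their inverses); combining this identity with the relation $\log\det(\mu, h^{-\dagger} h^{-1}) = -\log\det(\mu, h^\dagger h)$ for triangular $h$ (the same diagonal-entries manipulation used in the proof of Lemma~\ref{lem:compare-capacities}), one rewrites $\log\det(\mu, X)$ as $\log\det(\mu, \hat X)$ plus a constant depending only on $g$, and $\log\det(\nu, Y)$ as $\log\det(\nu, \hat Y)$ plus a constant depending only on $h$. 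Plugging everything back in, the $\hat X, \hat Y$-dependent part of the objective is exactly the objective defining $\capa_{\mu,\nu}\Phi$, so taking $\inf_{\hat X, \hat Y \succ 0}$ yields $\capa_{\mu,\nu}\Phi$ together with the two constants, which work out to $-\log\det(\mu, g^\dagger g)$ and $-\log\det(\nu, h^\dagger h)$.

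The only point requiring genuine care — and thus the ``hard part'' — is the bookkeeping: one must match the triangular-shape conventions so that Item~\ref{it:tri} of Lemma~\ref{lem:logdet} applies (this is why one passes through adjoints, using that $h$ upper triangular is equivalent to $h^\dagger$ lower triangular), keep straight which of $\mu, \nu$ pairs with which conjugator (dictated by the dimensions $\mu \in \R^n$, $\nu \in \R^m$), and track the signs, the minus signs being exactly the ``cost'' of the rescaling, in analogy with the $(\sum_i \alpha_i)\log c$ term of Lemma~\ref{lem:cap-scalar}. Beyond this there is no analytic content whatsoever: every step is either a bijective linear change of variables on the positive-definite cone or an invocation of an already-established algebraic identity for $\log\det(\mu, \cdot)$.
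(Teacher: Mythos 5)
Your proposal is correct and takes essentially the same route as the paper: a change of variables inside the infimum defining the capacity, cyclicity of the trace to recover $\tr \Phi(\hat X^{-1})\hat Y^{-1}$, and Item~\ref{it:tri} of Lemma~\ref{lem:logdet} together with $\log\det(\mu, h^{-\dagger}h^{-1}) = -\log\det(\mu, h^\dagger h)$ to split off the constants. Your observation that the pairing of $\mu,\nu$ with the two conjugators is dictated by dimensions is exactly right: carried out under the paper's convention that in $\Phi_{g,h}$ the matrix $g$ is the $m\times m$ output-side scaler, it yields $-\log\det(\mu, h^\dagger h) - \log\det(\nu, g^\dagger g)$, which is what the paper's proof (and its later application) actually uses, the lemma statement's pairing being a typo.
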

\begin{proof} This is a straightforward change of variables argument.
\begin{align*}
\capa_{\mu, \nu} \Phi_{g,h} &= \inf_{X,Y \succ 0} \tr g^\dagger Y^{-1}g \Phi(h^\dagger X^{-1} h) + \log\det(\mu, X) + \log\det(\nu, Y) \\
& = \inf_{X,Y \succ 0} \tr g^\dagger Y^{-1}g \Phi(h^\dagger X^{-1} h) + \log\det(\mu, h^{-\dagger} X h^{-1}) + \log\det(\nu, g^{-\dagger} Y g^{-1}) \\
&= \capa_{\mu, \nu} \Phi - \log\det(\mu, h^\dagger h) - \log \det (\nu, g^\dagger g).
\end{align*}
The last equality used Item \ref{it:tri} of Lemma \ref{lem:logdet}. \end{proof}

\begin{proof}[Proof of Theorem \ref{thm:cap-lb}]
By the unitary invariance of $\capa^{\alpha, \beta}_k \Phi$ and Lemma \ref{lem:caps-ineq}, $\capa^{\alpha, \beta}_k \Phi$ is bounded below by \emph{every} $\capa_{\mu, \nu} \Phi_{U,V}$ such that $\mu \preceq \alpha, \nu \preceq \beta $ and $U,V$ unitary. By compactness, there exists some $\mu \preceq \alpha, \nu\preceq  \beta$ such that $\sum_i \mu_i= \sum_i \nu_i = k$ such that $\Phi$ is $(\mu, \nu)$-scalable. By Theorem \ref{thm:tri}, there is a pair of unitaries $U,V$ such that $\capa_{\mu, \nu} \Phi_{U,V} > - \infty.$ By Lemma \ref{lem:schwarz-zippel} there is a pair $A,B$ of invertible matrices with entries $e^{O((m + n) \log (m + n))}$ such that $\capa_{\mu, \nu} \Phi_{A,B} > - \infty$. If we write $A = g \tilde U, B = h \tilde V$ such that $\tilde U,\tilde V$ are unitary and $h,g$ upper triangular, by Lemma \ref{lem:change-vars} we have
$$ \capa_{\mu, \nu} \Phi_{A,B} = \capa_{\mu, \nu} \Phi_{g\tilde{U},h \tilde{V}} = \capa_{\mu, \nu} \Phi_{\tilde U,\tilde V} - \log\det(\mu, h^\dagger h) - \log\det(\nu, g^\dagger g).$$ Both $g^\dagger g$ and $h^\dagger h$ have the same spectra as $A, B$, so all their eigenvalues are in the range $e^{O(( m + n) \log (m + n)}$. Hence the last two terms are $O(k (m + n) \log (m + n))$. The operator $\Phi_{A,B}$ has Gaussian integral Kraus operators and hence $- \capa_{\mu, \nu} \Phi_{A,B}  = O( k (n + m) \log (m + n) + k \log p)$ by Theorem \ref{thm:arb-cap-lb}. Finally,
\begin{align*}
\capa^{\alpha, \beta}_k \Phi &\geq  \capa_{\mu, \nu} \Phi_{\tilde U,\tilde V}\\
& =  \capa_{\mu, \nu} \Phi_{A,B} + \log\det(\mu, h^\dagger h) + \log\det(\nu, g^\dagger g) = O( k (n + m) \log (m + n) + k \log p).\qedhere
\end{align*}
\end{proof}

\section{Analysis of \textsc{DecisionSinkhorn} and \textsc{ApproximateIndep}}\label{sec:decsion}
We now specialize $\alpha, \beta$ to our specific values $\alpha_r$ and $\mathbf 1_m$ in order describe and analyze \textsc{DecisionSinkhorn} and \textsc{ApproxIndep} (Algorithms \ref{alg:sinkhorn-dec} and \ref{alg:approx-indep}) and prove Theorem~\ref{thm:indep-scaling}.

\subsection{Deciding finiteness of capacity with \textsc{DecisionSinkhorn}}
In this section we describe and analyze \textsc{DecisionSinkhorn} (Algorithm \ref{alg:sinkhorn-dec}).

\begin{Algorithm}
Algorithm \textsc{DecisionSinkhorn}$(\Phi, k,r)$:
\begin{description}
\item[\hspace{.2cm}\textbf{Input:}] A CP map $\Phi:\C^{n\times n} \to \C^{m\times m}$, integers $k, r \leq n$.

\item[\hspace{.2cm}\textbf{Output:}] Either \textbf{Unbounded} if $\capa_{k,r} \Phi = -\infty$ or \textbf{Bounded} if not.%
\item[\hspace{.2cm}\textbf{Algorithm:}]
\end{description}
\begin{enumerate}
\item Normalize $\Phi$ to have size at most $1$.
\item If $r = 0$, run \textsc{MajSinkhorn}$(\Phi, \mathbf 1_n ,\mathbf 1_m, k, \eps_1)$ for $\eps_1 =  O( (m + n)^{-1/2})$ and $$T = O(k  (n + m)^2 \log (m + n) + n \log Mp)$$
iterations. If \textsc{MajSinkhorn} terminates, output \textbf{Bounded}; otherwise output \textbf{Unbounded}.
\item If $r > 0$, run \textsc{MajSinkhorn}$(\Phi, \alpha_r,\mathbf 1_m, k, \eps_2)$ for $\eps_2 =  O(1/ \sqrt{ (m + n) n^{2}})$ and $$T = O(k n^2 (n + m)^2 \log (m + n) + n \log Mp)$$
iterations. If \textsc{MajSinkhorn} terminates, output \textbf{Bounded}; otherwise output \textbf{Unbounded}.
\end{enumerate}
\caption{Algorithm for deciding finiteness of $\capa_{k,r} \Phi$.}\label{alg:sinkhorn-dec}
\end{Algorithm}

The main work of the section will be proving the correctness of the algorithm (Theorem \ref{thm:decision-correct}). To do this, we first show that for small enough $\eps$, \textsc{MajSinkhorn} will never terminate if $\capa_{k,r} \Phi = -\infty.$

\begin{lemma}\label{lem:term} If $\capa_k \Phi = -\infty$, then any scaling $\tilde{\Phi}$ of $\Phi$ satisfies one of the following:
\begin{enumerate}
\item \label{it:small-size} $\tr \tilde \Phi(I_n) \leq k - 1/3$.
\item $D(P_{\mathbf 1_n} || \mu) = \Omega(1/n)$
\item $D(P_{\mathbf 1_m} || \nu) = \Omega(1/m)$,
\end{enumerate}
where $\mu$ and $\nu$ denote the spectrum of $\tilde\Phi(I_m)$ and $\tilde\Phi^*(I_n)$, respectively.
Similarly, if $\capa_{k,r} \Phi = -\infty$, then any scaling $\tilde{\Phi}$ of $\Phi$ satisfies one of the following:
\begin{enumerate}
\item \label{it:small-size-r} $\tr \tilde \Phi(I_n) \leq k - 1/4n$.
\item\label{it:left} $D(P_{\mathbf \alpha_r} || \mu) = \Omega(1/n^3)$
\item\label{it:right} $D(P_{\mathbf 1_m} || \nu) = \Omega(1/m n^2).$
\end{enumerate}
\end{lemma}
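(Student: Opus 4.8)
\emph{Plan.} I would prove the contrapositive of each half: if some scaling $\tilde\Phi$ of $\Phi$ has size $\tr\tilde\Phi(I_n) > k - \tfrac13$ (resp.\ $> k - \tfrac{1}{4n}$) and both relevant generalized KL divergences of its marginals to the appropriate down-closed permutahedra lie strictly below the stated thresholds, then $\capa_k\Phi$ (resp.\ $\capa_{k,r}\Phi$) is finite. The first step is the observation that $\capa_{k,r}\Phi = -\infty$ forces $\capa_{k,r}\tilde\Phi = -\infty$ for \emph{every} scaling $\tilde\Phi$: by Theorem~\ref{thm:maj-scal} the hypothesis says $\Phi$ is not $k$-scalable to $(\alpha_r,\mathbf 1_m)$, and since every scaling of $\tilde\Phi$ is again a scaling of $\Phi$, $\tilde\Phi$ cannot be $k$-scalable to $(\alpha_r,\mathbf 1_m)$ either. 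Hence, by Theorem~\ref{thm:indep-scaling} applied to $\tilde\Phi$, there is an independent set $(L,R)$ of $\tilde\Phi$ violating $(k,r)$. (The $\capa_k$ statement is the case $\alpha = \mathbf 1_n$, $r = 0$, but with a looser size hypothesis, which accounts for its larger thresholds.)

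\emph{From the independent set to partial spectral sums.} For such $(L,R)$, the cover inequality \eqref{eq:cover} applied to $\tilde\Phi$ reads $\tr\pi_{R^\perp}\tilde\Phi^*(I_m) + \tr\pi_{L^\perp}\tilde\Phi(I_n) \ge \tr\tilde\Phi(I_n) =: s$. Let $a := n - \dim R$ and $b := m - \dim L$ be the ranks of the two projections, and let $\mu,\nu$ be the non-increasing spectra of $\tilde\Phi^*(I_m)$ and $\tilde\Phi(I_n)$ (these are the $\mu,\nu$ of the lemma). Bounding each trace by the sum of the largest eigenvalues of the corresponding marginal gives
\[
\sum_{i=1}^{a}\mu_i + \sum_{i=1}^{b}\nu_i \ge s .
\]
Since $(L,R)$ violates $(k,r)$, either $\dim L + \dim R \ge m+n-k+1$, whence $a + b \le k - 1$, or else $\dim L + \dim R = m+n-k$ and $\dim R \le r - 1$, whence $a + b = k$ and $a \ge n - r + 1$.

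\emph{The key estimate.} The heart of the proof is an upper bound on $\sum_{i\le a}\mu_i$ in terms of $D(P_{\alpha_r}||\mu)$. Starting from the dual description $D(P_\alpha||q) = \sum_i q_i - \min\{\sum_i (q_i e^{-x_i} + \alpha_i x_i) : x_1 \ge \dots \ge x_n \ge 0\}$ of Lemma~\ref{lem:dual} and evaluating at the feasible point $x = L\cdot\mathbf 1_{[a]}$ for a scalar $L > 0$, one gets $(1 - e^{-L})\sum_{i\le a} q_i \le D(P_\alpha||q) + L\sum_{i\le a}\alpha_i$; minimizing over $L$ (taking $L \asymp \sqrt{D(P_\alpha||q)/\sum_{i\le a}\alpha_i}$) yields
\[
\sum_{i=1}^{a} q_i \le \sum_{i=1}^{a}\alpha_i + O\!\Big(\sqrt{\textstyle\big(\sum_{i\le a}\alpha_i\big)\,D(P_\alpha||q)}\Big) + O\big(D(P_\alpha||q)\big).
\]
Using this with $(q,\alpha) = (\mu,\alpha_r)$, where $\sum_{i\le a}(\alpha_r)_i \le a - \tfrac1n$ because $a \ge n - r + 1$, and with $(q,\alpha) = (\nu,\mathbf 1_m)$, where $\sum_{i\le b}(\mathbf 1_m)_i = b$, and plugging into the inequality above gives, in the tight case $a + b = k$,
\[
s \le k - \tfrac1n + O\!\big(\sqrt{n\,D(P_{\alpha_r}||\mu)}\big) + O\!\big(\sqrt{m\,D(P_{\mathbf 1_m}||\nu)}\big) + O\big(D(P_{\alpha_r}||\mu) + D(P_{\mathbf 1_m}||\nu)\big),
\]
with $k - \tfrac1n$ replaced by $k - 1$ (even more slack) in the other case. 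Comparing with $s > k - \tfrac{1}{4n}$ forces $D(P_{\alpha_r}||\mu) = \Omega(1/n^3)$ or $D(P_{\mathbf 1_m}||\nu) = \Omega(1/mn^2)$; the $O(D)$ terms are negligible unless one divergence is already $\Omega(1)$, in which case there is nothing to prove. The same computation with the looser slack $s > k - \tfrac13$ and $\alpha = \mathbf 1_n$ (so $\sum_{i\le a}\alpha_i = a$) gives the thresholds $\Omega(1/n)$ and $\Omega(1/m)$.

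\emph{Main obstacle.} The delicate point is that the key estimate produces $\sqrt{\sum_{i\le a}\alpha_i} = O(\sqrt n)$, rather than $\sum_{i\le a}\alpha_i = O(n)$, multiplying $\sqrt{D}$. The obvious shortcut — use Lemma~\ref{lem:kl-scale} to scale $\tilde\Phi$ down by $e^{-\varepsilon}$ with $\varepsilon = O(\sqrt{D})$ so that its marginals become genuinely majorized — is too crude: it costs $\varepsilon\cdot s = O(n\sqrt D)$ in size (since $s$ can be $\Theta(n)$), weakening the thresholds by a factor of $n$. Bounding the specific partial sum directly, with $L$ chosen per instance, is what recovers the claimed bounds. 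The remaining items are routine: the degenerate cases $a = 0$ or $b = 0$ are immediate, the additive $O(D)$ terms are absorbed as noted, and in the perturbed setting the first violation case is subsumed because $\Omega(1/n) \ge \Omega(1/n^3)$ and $D(P_{\alpha_r}||\cdot) \ge D(P_{\mathbf 1_n}||\cdot)$.
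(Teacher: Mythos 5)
Your argument is correct, and its skeleton coincides with the paper's: both extract a violating independent set for the scaled map (the paper transports one from $\Phi$ under the invertible scaling, you re-derive it via Theorem~\ref{thm:maj-scal} and Theorem~\ref{thm:indep-scaling} applied to $\tilde\Phi$; either route is legitimate and non-circular), apply the cover inequality \eqref{eq:cover} and Courant--Fischer to get $\sum_{i\le a}\mu_i+\sum_{i\le b}\nu_i\ge s$ with the dimension bookkeeping you describe. Where you genuinely diverge is the last quantitative step. The paper converts the partial-sum surplus into an $\ell_1$-distance gap of $\Omega(1)$ (resp.\ $\Omega(1/n)$) between $\mu$ (or $\nu$) and the down-closed permutahedron, splits on whether the total mass exceeds $2n$ (handling the large-mass case with Lemma~\ref{lem:kl-scale}), and then invokes the unnormalized Pinsker inequality (Lemma~\ref{lem:pinsk}) to get the divergence lower bounds. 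You instead plug the step dual variable $x=L\cdot\mathbf 1_{[a]}$ into the dual formula of Lemma~\ref{lem:dual} and optimize $L$, obtaining $\sum_{i\le a}q_i\le\sum_{i\le a}\alpha_i+O(\sqrt{(\sum_{i\le a}\alpha_i)\,D})+O(D)$ --- essentially the same trick the paper uses inside the proof of Lemma~\ref{lem:kl-scale}, but applied at the specific index $a$ rather than at the tight majorization index. Both routes land on the same thresholds $\Omega(1/n),\Omega(1/m)$ and $\Omega(1/n^3),\Omega(1/mn^2)$: yours is self-contained given Lemma~\ref{lem:dual} and avoids Pinsker and the mass case-split, while the paper's is shorter given that Lemma~\ref{lem:pinsk} and Lemma~\ref{lem:kl-scale} are already on hand; your ``main obstacle'' remark is also accurate, since the crude global rescaling via Lemma~\ref{lem:kl-scale} alone would indeed lose a factor of $n$ relative to the per-index estimate.
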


\begin{proof} We begin with $f_k$; the proof for $f_{k,r}$ is similar. If $f_k$ is unbounded, then there is an independent set of $\Phi$ of size bigger than $m + n - k$. This implies that the same is true for the scaling $\tilde \Phi$ at each step; let $(L,R)$ be such an independent set for $\tilde \Phi$. By Equation \ref{eq:cover},
\begin{align}\tr \Pi_{R^\perp} \tilde{\Phi}^*(I_m)  + \tr \Pi_{L^\perp} \tilde{\Phi}(I_n) \geq \tr \tilde{\Phi}(I_n).\label{eq:cover-tilde}
\end{align}
Let $\ell' = \dim L^\perp, r' = \dim R^\perp$ and note that $\ell' + r' \leq k - 1$.
By the Courant-Fisher min-max theorem,
$\tr \Pi_{R^\perp} \tilde{\Phi}^*(I_m)  + \tr \Pi_{L^\perp} \tilde{\Phi}(I_n) \leq \sum_{i = 1}^{r'} \mu_i + \sum_{i = 1}^{\ell'} \nu_i$. By Equation \ref{eq:cover-tilde}, either $\tilde{\Phi}$ has size at most $k - 1/3$ or $\sum_{i = 1}^{r'} \mu_i + \sum_{i = 1}^{\ell'} \nu_i \geq k - 1/3$.
If the latter holds, then either $\sum_{i = 1}^{r'} \mu_i \geq r' + 1/3$ or $\sum_{i = 1}^{\ell'} \nu_i \geq \ell' + 1/3$ holds since $r' + \ell' \leq k - 1$.
If the former of these two inequalities holds, then $\|p - \mu\|_1 \geq 1/3$ for any $p \in P_{\mathbf 1_n}$.
Now either $\sum_{i = 1}^n \mu_i > 2n$ in which case Lemma~\ref{lem:kl-scale} implies $D(P_{\mathbf 1_n} || \mu)$ is at least a constant, or $\sum_{i = 1}^n \mu_i \leq 2n$ in which case Lemma~\ref{lem:pinsk} implies $D(P_{\mathbf 1_n} || \mu) = \Omega(1/n)$. Similarly, if $\sum_{i = 1}^{\ell'} \nu_i \geq \ell' + 1/3$, then $D(P_{\mathbf 1_m} || \nu) = \Omega(1/m)$.

For $f_{k,r}$, the proof of the first item is the same. It remains only to prove the second item. The proof proceeds the same, except if $f_{k,r}$ is unbounded we have that $(L,R)$ violates $(k,r)$. So either $\ell' + r' < k$, in which case the proof proceeds as for $f_k$, or $\ell' + r' = k$ and $r' > n - r$. In that case, we obtain that either the size of $\tilde{\Phi}$ is at most $k - 1/4n$ or $\sum_{i = 1}^{r'} \mu_i + \sum_{i = 1}^{\ell'} \nu_i \geq k - 1/4n.$ If the latter holds, then either either $\sum_{i = 1}^{r'} \mu_i \geq r' - 1/2n$ or $\sum_{i = 1}^{\ell'} \nu_i \geq 1 + 1/4n$. If the former holds, note that $\sum_{i = 1}^{r'} p_i \leq r' - 1/n$ for $p \preceq \alpha_r$ because $r' > n - r$. Thus $\|p - \mu\|_1$ is $\Omega(1/n)$. Similar reasoning shows $\|p - \nu\|_1$ is $\Omega(1/n)$ if $\sum_{i = 1}^{\ell'} \nu_i \geq 1 + 1/4n$. We can now proceed using Pinsker's inequality as we did for $f_k$ to get the bound $\Omega(1/ n^2 \max\{n,m\})$ for either divergence.
\end{proof}

\begin{lemma}[Unnormalized Pinsker (cf.~\cite{Van2020})] \label{lem:pinsk}
Let $p$ and $q$ be vectors in $\R^n_{\geq 0}$. Suppose $\supp p \subseteq \supp q$. Then
$$D(p||q) \geq \min\left\{\frac{1}{4 \|q\|_1} \|p - q\|_1^2, (1 - \ln 2)\| p - q\|_1.\right\}.$$

\end{lemma}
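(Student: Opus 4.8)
The plan is to reduce the inequality to a one‑variable convexity statement, coordinate by coordinate. Write $\phi(a,b):=a\ln(a/b)-a+b$ for $a\ge 0$, $b>0$, with the convention $\phi(0,0):=0$, so that $D(p\|q)=\sum_i\phi(p_i,q_i)$; by the support hypothesis only indices with $q_i>0$ contribute, and I may assume $\|q\|_1>0$ (otherwise $p=q=0$ and both sides vanish). The one thing to notice is the exact identity
\[
\phi(a,b)=b\,\psi\!\Big(\frac{a-b}{b}\Big),\qquad \psi(u):=(1+u)\ln(1+u)-u\ \ (u>-1),\ \ \psi(-1):=1,
\]
which is a direct expansion. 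Here $\psi'(u)=\ln(1+u)$ and $\psi''(u)=(1+u)^{-1}>0$, so $\psi$ is convex with minimum $0$ at $u=0$; moreover $\psi(u)\ge\psi(|u|)$ for all $u\ge -1$, because $v\mapsto\psi(-v)-\psi(v)$ vanishes at $v=0$ and has derivative $-\ln(1-v^2)\ge 0$ on $[0,1]$.

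Given this, the argument runs as follows. First, for each $i$ with $q_i>0$ the identity and the symmetry bound give $\phi(p_i,q_i)=q_i\,\psi\big(\tfrac{p_i-q_i}{q_i}\big)\ge q_i\,\psi\big(\tfrac{|p_i-q_i|}{q_i}\big)$. Second, since $\{q_i/\|q\|_1:q_i>0\}$ is a probability vector and $\psi$ is convex, Jensen's inequality yields
\[
D(p\|q)\ \ge\ \|q\|_1\sum_{i:q_i>0}\frac{q_i}{\|q\|_1}\,\psi\!\Big(\frac{|p_i-q_i|}{q_i}\Big)\ \ge\ \|q\|_1\,\psi\!\Big(\frac{\|p-q\|_1}{\|q\|_1}\Big);
\]
it is essential to pass to $|p_i-q_i|$ before applying Jensen, so that the convex combination of the arguments is exactly $\|p-q\|_1/\|q\|_1$. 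It then remains to check the scalar inequality $\psi(u)\ge\min\{u^2/4,\ (1-\ln 2)u\}$ for $u\ge 0$: if $0\le u\le 1$ then $\psi''(s)=(1+s)^{-1}\ge 1/2$ on $[0,1]$ together with $\psi(0)=\psi'(0)=0$ gives $\psi(u)\ge u^2/4$; if $u\ge 1$ then $R(u):=\psi(u)-(1-\ln 2)u$ has $R(1)=3\ln 2-2>0$ and $R'(u)=\ln(1+u)-(1-\ln 2)\ge 2\ln 2-1>0$, so $R>0$. Substituting $u=\|p-q\|_1/\|q\|_1$ into the displayed bound finishes the proof.

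I do not anticipate any real obstacle: the sole idea is the identity $\phi(a,b)=b\,\psi((a-b)/b)$ together with $\psi(u)\ge\psi(|u|)$, which turns the whole statement into one‑variable convexity so that Jensen does the work; the two scalar estimates for $\psi$ are routine calculus. The only points that deserve a line of care are the boundary values ($q_i=0$ excluded by the support hypothesis, $p_i=0$ handled by $0\ln 0=0$, i.e.\ $\psi(-1)=1$) and the symmetrization step above: applying Jensen directly to $\psi((p_i-q_i)/q_i)$ would only give the weaker mass‑mismatch bound $D(p\|q)\ge\phi(\|p\|_1,\|q\|_1)$, which does not involve $\|p-q\|_1$.
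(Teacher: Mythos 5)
Your proof is correct, but it takes a different route from the paper. The paper's argument is a two-line reduction: it rescales by $s = 1/\|q\|_1$, uses the homogeneity $D(sp\,\|\,sq) = s\,D(p\,\|\,q)$, and then invokes the cited unnormalized Pinsker bound of \cite{Van2020} (Lemma~2.1 there) as a black box, splitting into the two cases $\|sp - sq\|_1 \le 1$ and $> 1$. You instead prove everything from scratch: the identity $\phi(a,b) = b\,\psi((a-b)/b)$ with $\psi(u) = (1+u)\ln(1+u) - u$, the symmetrization $\psi(u) \ge \psi(|u|)$, Jensen with weights $q_i/\|q\|_1$ (where the support hypothesis is exactly what makes the convex combination of arguments equal $\|p-q\|_1/\|q\|_1$), and the scalar bound $\psi(u) \ge \min\{u^2/4,\,(1-\ln 2)u\}$ via $\psi'' \ge 1/2$ on $[0,1]$ and monotonicity of $\psi(u) - (1-\ln 2)u$ for $u \ge 1$. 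All of these steps check out, including the boundary cases $p_i = 0$ (i.e.\ $\psi(-1)=1$) and $\|q\|_1 = 0$, and your caveat about applying Jensen before symmetrizing is accurate. What each approach buys: the paper's version is shorter and delegates the analytic content to the reference, while yours is self-contained (essentially reproving the cited lemma in its unnormalized form) and makes explicit where the constants $1/4$ and $1-\ln 2$ come from; it would stand on its own if one did not want to rely on \cite{Van2020}.
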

\begin{proof} Let $s = 1/\sum q_i$. Then $D(s p || sq) = s D(p || q)$. By \cite[Lemma~2.1]{Van2020}, $D(s p || sq) \geq \frac{1}{4} \|sp - sq\|_1^2$ if $\|sp - sq\|_1 \leq 1$ and $(1 - \ln 2) \|sp - sq\|_1$ otherwise.
In the first case $D(p || q) \geq s \| p - q\|^2$, and in the second case $D(p || q) \geq (1 - \ln 2) \| p - q\|.$ \end{proof}
We are now ready to prove the correctness of the algorithm.
\begin{proof}[Proof of Theorem \ref{thm:decision-correct}]
If $\capa_{k,r} \Phi > -\infty$, Corollary \ref{cor:sink-running} implies that \textsc{Majsinkhorn} will terminate in the number of iterations $T$ described in both the $r = 0$ and $r > 0$ cases of \textsc{DecisionSinkhorn}. Now suppose $\capa_{k,r} \Phi = -\infty$. First consider the case $r = 0$, i.e. $\capa_k \Phi = -\infty$. We claim \textsc{MajSinkhorn}$(\Phi, \mathbf 1_n ,\mathbf 1_m, k, \eps_1)$ never terminates for $\eps_1 = O((m + n)^{-1/2})$. Because $\capa_k \Phi = -\infty$, one of the three conditions \ref{it:small-size}, \ref{it:left}, \ref{it:right} in Lemma \ref{lem:term} holds for the scaling $\tilde \Phi = e^{z} \Phi_{X^{-1/2}, Y^{-1/2}}$ after the $z$ update step. After the $z$ update, Item \ref{it:small-size} cannot hold. Thus one of Items \ref{it:left} or \ref{it:right} holds, so the algorithm does not terminate. The proof for the case $r > 0$ is identical but we use the part of Lemma \ref{lem:term} that assumes $\capa_{k,r} \Phi = -\infty$.

For the running times, we have seen in Section \ref{sec:maj-sinkhorn} that both the $X$ and $Y$ update take $O((m+n)^2(m+n+p))$ arithmetic operations. Combining with the bound on the number of iterations yields the stated bound on the number of arithmetic operations.
\end{proof}

\subsection{Finding approximate independent sets with \textsc{ApproximateIndep}}

We now describe and analyze the algorithm \textsc{ApproximateIndep} for finding an approximate independent set.

\begin{Algorithm}
Algorithm \textsc{ApproxIndep}$(\Phi, k,r,\eps)$:
\begin{description}
\item[\hspace{.2cm}\textbf{Input:}] A CP map $\Phi:\C^{n\times n} \to \C^{m\times m}$, integers $k, r \leq n$ such that $\capa_{k,r} \Phi = -\infty$, and a parameter $\eps > 0$.

\item[\hspace{.2cm}\textbf{Output:}] An $\eps$-independent set $(L,R)$ violating $(k,r)$.
\item[\hspace{.2cm}\textbf{Algorithm:}]
\end{description}
\begin{enumerate}
\item Normalize $\Phi$ to have size at most $1$ and set $k' = k - 1/2n$.
\item Run \textsc{MajSinkhorn}$(\Phi, \alpha_r,\mathbf 1_m, k', 0)$ for $T = 4 k ^2 (m + n) n^3 \log (2 e^2 n k^2/\eps)$ iterations. Let $X,Y,z$ be the variables in the final step of the algorithm.
\item\label{it:get_basis} Let $u_1, \dots, u_n$ and $v_1, \dots, v_m$ be the eigenvectors of $X,Y$ in order of decreasing eigenvalue. Let $S$ be the set of integral pairs $i \in 0 \leq i \leq m - 1, 0 \leq j \leq n - 1$ such that $i + j = k - 1$ or $i + j \in \{k-1,k\}$ and $j > n - r.$
\item\label{it:check_basis} For each $i,j \in S$, check if $L = \langle u_{i + 1}, \dots, u_m \rangle, R = \langle v_{j + 1}, \dots v_{n}\rangle$, is $\eps$-independent; it suffices to check if
$$\tr \pi_L \Phi (\pi_R) \leq \eps.$$
If so, \textbf{output} $(L,R)$.
 \end{enumerate}
\caption{Algorithm for finding approximate independent sets.}\label{alg:approx-indep}
\end{Algorithm}

\begin{theorem}\label{thm:find-indep}
Let $k' = k - 1/2n$. If $(X,Y,z) \in \caD$ are such that $f_{k',r}(X,Y,z) \leq - C$, then the procedure in steps \ref{it:get_basis} and \ref{it:check_basis} of Algorithm \ref{alg:approx-indep} for
$$\eps:= 2 e n k^2 e^{- C/nk^2}$$ produces an
 $\eps$-independent set $(L,R)$ that violates $(k,r)$.
\end{theorem}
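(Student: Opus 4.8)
The plan is to read off a weighted fractional cover from the eigenvalue data of $X$ and $Y$, round it to an index pair lying in exactly the set $S$ of step~\ref{it:get_basis}, and then convert the resulting slack into the exponential bound. To set up, diagonalize $X=\sum_{j=1}^n e^{b_j}u_ju_j^\dagger$ and $Y=\sum_{i=1}^m e^{a_i}v_iv_i^\dagger$ with $b_1\ge\cdots\ge b_n\ge 0$, $a_1\ge\cdots\ge a_m\ge 0$, the $u_j,v_i$ being the eigenvectors of step~\ref{it:get_basis}, and set $B_{ij}:=\sum_\ell\abs{v_i^\dagger A_\ell u_j}^2\ge 0$. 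Then $\tr\Phi(X^{-1})Y^{-1}e^z=\sum_{i,j}B_{ij}e^{z-a_i-b_j}=:\Psi$, the normalization gives $\sum_{i,j}B_{ij}=\tr\Phi(I_n)\le 1$, and since $\lambda(\log X)=(b_j)_j$ and $\lambda(\log Y)=(a_i)_i$ we have $f_{k',r}(X,Y,z)=\Psi+\sum_i a_i+\alpha_r\cdot b-k'z$, where $\alpha_r=(1,\dots,1,1-\tfrac1n,\dots,1-\tfrac1n)$ has $n-r$ ones. Put $\sigma:=k'z-\sum_i a_i-\alpha_r\cdot b$; the hypothesis reads $\Psi\le\sigma-C$, which forces $\sigma\ge C$ (as $\Psi\ge 0$) and $z\ge\sigma/k'\ge C/k'$ (as $a,b\ge 0$ and $\alpha_r\le\mathbf 1_n$), in particular $z>0$.

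The rounding step, which I expect to be the main obstacle, uses a shifted cover. Let $\tilde a=a/z$, $\tilde b=b/z$, so $w:=\sum_i\tilde a_i+\alpha_r\cdot\tilde b=k'-\sigma/z<k'<k$. Define $\hat b\in\R^n$ by $\hat b_j=\tilde b_j$ for $j\le n-r$ and $\hat b_j=\tilde b_{j+1}$ for $j\ge n-r+1$ (with $\tilde b_{n+1}:=0$); it is nonincreasing, and $\sum_{j=1}^n\hat b_j=\sum_{j=1}^n\tilde b_j-\tilde b_{n-r+1}\le\sum_j\tilde b_j-\tfrac1n\sum_{j>n-r}\tilde b_j=\alpha_r\cdot\tilde b$, using $\sum_{j>n-r}\tilde b_j\le r\,\tilde b_{n-r+1}$. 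The pairs $(i,j)$ with $i+j=k+1$, $1\le i\le m$, $1\le j\le n$ number exactly $k$ (since $k=\ncrank\caA\le\min(m,n)$), and each index of $\{1,\dots,k\}$ occurs once in each coordinate, so averaging $\tilde a_i+\hat b_j$ over them gives at most $\tfrac1k\big(\sum_i\tilde a_i+\sum_{j=1}^k\hat b_j\big)\le\tfrac1k\big(\sum_i\tilde a_i+\alpha_r\cdot\tilde b\big)=w/k<1$. Fix $(i^*,j^*)$ with $i^*+j^*=k+1$ and $\tilde a_{i^*}+\hat b_{j^*}\le w/k$. If $j^*\le n-r$, set $L=\langle v_{i^*},\dots,v_m\rangle$, $R=\langle u_{j^*},\dots,u_n\rangle$, which is the $0$-indexed pair $(i^*-1,j^*-1)\in S$ with coordinate sum $k-1$, and $\tilde a_{i^*}+\tilde b_{j^*}=\tilde a_{i^*}+\hat b_{j^*}\le w/k$. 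If $j^*\ge n-r+1$, set $L=\langle v_{i^*},\dots,v_m\rangle$, $R=\langle u_{j^*+1},\dots,u_n\rangle$, which is the pair $(i^*-1,j^*)\in S$ with sum $k$, $j^*>n-r$ and $\dim R=n-j^*\le r-1<r$, and $\tilde a_{i^*}+\tilde b_{j^*+1}=\tilde a_{i^*}+\hat b_{j^*}\le w/k$. In either case $(L,R)$ violates $(k,r)$; writing $\gamma:=1-(\tilde a_{i^*}+\tilde b_{j})$ for the relevant index $j\in\{j^*,j^*+1\}$, we have $\gamma\ge 1-w/k$ and, by monotonicity of $\tilde a,\tilde b$, $z-a_{i'}-b_{j'}\ge z\gamma$ for all $i'\ge i^*$, $j'\ge j$.

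To finish, note $w=k'-\sigma/z$ and $1-k'/k=\tfrac1{2nk}$, so $z\gamma\ge z(1-w/k)=\tfrac{z}{2nk}+\tfrac{\sigma}{k}$. Hence $\Psi\ge\sum_{i'\ge i^*,\,j'\ge j}B_{i'j'}e^{z-a_{i'}-b_{j'}}\ge e^{z\gamma}\tr\pi_L\Phi(\pi_R)$, giving $\tr\pi_L\Phi(\pi_R)\le(\sigma-C)\,e^{-z/2nk}\,e^{-\sigma/k}$. On $[C,\infty)$ the function $\sigma\mapsto(\sigma-C)e^{-\sigma/k}$ peaks at $\sigma=C+k$ with value $\tfrac ke e^{-C/k}$, while $e^{-z/2nk}\le e^{-C/2nkk'}\le e^{-C/2nk^2}$ using $z\ge C/k'$ and $k'<k$; therefore $\tr\pi_L\Phi(\pi_R)\le\tfrac ke e^{-C/k-C/2nk^2}\le 2enk^2 e^{-C/nk^2}=\eps$, the last step being the elementary inequality $\tfrac1{2e^2nk}\le e^{C(2nk-1)/2nk^2}$. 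Thus $(L,R)$ is $\eps$-independent and violates $(k,r)$, so the check in step~\ref{it:check_basis} succeeds at $(i^*,j^*)\in S$.

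The delicate part is the second paragraph: one must land on a pair of exactly the staircase-shaped set $S$ (not just on one anti-diagonal) while keeping a margin $\gamma$ that carries the ``$+\sigma/k$'' term, which is what makes the final estimate uniform in $z$ — the hypothesis alone does not bound $z$, so without that term $(\sigma-C)e^{-z\gamma}$ could be large. The shift $\hat b$, together with $\sum_{j=1}^n\hat b_j\le\alpha_r\cdot\tilde b$, is the mechanism that lets plain averaging absorb both types of violation at once, the $1-\tfrac1n$ discount in $\alpha_r$ being exactly what forces $\dim R<r$ in the second case. Getting the degenerate cases right — $r$ near $n$, $j^*+1>n$ so $R=\{0\}$, or $k$ equal to $\min(m,n)$ (in particular $\ncrank\caA$ full) — is the remaining book-keeping; this is the operator-scaling analogue of the ``$f_{k,r}$'' rounding only sketched for matrix scaling in Section~\ref{sec:warm-up}.
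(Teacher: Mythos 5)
Your proof is correct and, while it shares the paper's overall skeleton (diagonalize $X,Y$, pass to the eigenvalue vectors $\tilde a,\tilde b$ normalized by $z$, locate a good index pair in the set $S$ of step~\ref{it:get_basis} by an averaging argument, then bound the tail sum $\tr\pi_L\Phi(\pi_R)$ against the hypothesis on $f_{k',r}$), the two places where real work happens are done differently. For finding the pair, the paper uses a probabilistic argument: it spreads a measure along a piecewise-linear staircase curve whose slope $1/(1-1/n)$ in the region $j>n-r$ encodes the $\alpha_r$ weights, and computes the expectation of $\tilde y_{i+1}+\tilde x_{j+1}$ along it. You instead use a deterministic index-shift: delete the $(n-r+1)$-st entry of $\tilde b$ to form $\hat b$, observe $\sum_j\hat b_j\le\alpha_r\cdot\tilde b$ (this is exactly where the $1-\tfrac1n$ discount is spent), average over a single anti-diagonal, and map the minimizing pair back into $S$ by a two-case analysis according to whether the selected column index exceeds $n-r$; this recovers precisely the two shapes of pairs in $S$ and the two clauses of ``violates $(k,r)$''. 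For the quantitative step, the paper drops the $x\cdot\alpha_r+y\cdot\mathbf 1$ terms, substitutes into $g_\lambda(z)=\lambda e^z-z$ and uses its minimum value, which actually yields $2enk^2e^{-C/(2nk^2)}$ (an exponent weaker by a factor $2$ than the stated $\eps$); you retain the term $\sigma/k$ in the margin, optimize $(\sigma-C)e^{-\sigma/k}$ directly and use $z\ge C/k'$, obtaining $\tfrac ke e^{-C/k-C/(2nk^2)}$, which does dominate the theorem's stated $\eps=2enk^2e^{-C/(nk^2)}$ (for $C\ge 0$) — so your route in fact repairs the paper's small constant mismatch. One caveat applies equally to both arguments: the degenerate configurations you flag (e.g.\ $k=\min(m,n)$ with the good column index equal to $n$, so that the natural violating pair has $R=\{0\}$ and falls outside the range $0\le j\le n-1$ allowed in step~\ref{it:get_basis}, or $n=1$ where the paper's slope $1/\mu$ degenerates) are not handled by the paper's proof either; they require either extending $S$ to allow $j=n$ or a separate easy check, and leaving them as acknowledged bookkeeping does not put you behind the paper's own level of rigor.
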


\begin{proof}[Proof of Theorem \ref{thm:find-indep}]

Suppose $(X,Y,z) \in \caD$ are such that $f_{k',r}(X,Y,z) \leq - C$. As in Algorithm \ref{alg:approx-indep}, let $u_1, \dots, u_n$ and $v_1, \dots, v_m$ be the eigenvectors of $X,Y$ in order of decreasing eigenvalue. In this basis we can write $X,Y = e^{\diag(x)}, e^{\diag(y)}$. Consider the matrix $M_{ij} = \tr v_i v_i^\dagger \Phi(u_j u_j^\dagger)$ so that
$$ f_{k',r}(X,Y,z) = \sum_{ij} M_{ij} e^{- y_i - x_j + z} + x \cdot \alpha_r + y \cdot \mathbf 1 - k' z \leq - C.$$
Note that $z \geq C/k$ by nonnegativity of $x,y$.

Let $(\tilde{x},\tilde{y}) = (x,y)/z$ so that $\tilde x \cdot \alpha_\ell + \tilde y \cdot \mathbf 1 \leq k' - C/z\leq k'$. Our goal is to show that $\tilde{x}_{t + 1} + \tilde{y}_{s + 1} \leq 1 - 1/2nk$ for some $0 \leq s \leq m-1,0 \leq t \leq n-1$ such that $s + t = k - 1$ or $s + t \in \{k-1,k\}$ and $t > n-r.$ By the ordering on $x$ and $y$, for $i > t, j > s$ we have $M_{ij} e^{z - x_i - y_j} \geq M_{ij} e^{z/2nk}$, so by the assumption that $f_{k',r}(X,Y,z) \leq - C$ we have $S e^{z/2nk} - k z \leq -C$ where $S:=\sum_{i > s, j > t} M_{ij}$.
We can write $S e^{z/2nk} - k z = 2nk^2 g_\lambda (z')$ for $g_\lambda (z) := \lambda e^{z} - z$ with $\lambda:= S/2nk^2$ and $z' = z/2nk$. As the function $g_\lambda (z)$ takes its minimum at $z = \max\{0, - \log \lambda\},$ which attains value $\lambda$ if $\lambda > 1$ and $1 + \log \lambda$ otherwise, we must have $\lambda \leq 1$ and $1 + \log \lambda \leq - C/2nk^2$. That is, $\log S/2nk^2 \leq 1- C/2nk^2 $, or $S \leq 2nk^2e^{1 - C/2nk^2}=:\eps$. Hence the pair $L = \langle u_{s + 1}, \dots, u_m \rangle, R = \langle v_{t + 1}, \dots v_{n}\rangle$ satisfies $\tr \pi_L \Phi (\pi_R) \leq \eps,$ which implies $\eps$-independence. The assumption that $s + t = k - 1$ or $s + t \in (k,k-1)$ and $t > n-r$ implies $(L,R)$, which has dimensions $m - s$ and $n - t$, violates $(k,r)$.

We show that such a pair $(s,t)$ exists using the probabilistic method, assigning certain probabilities to each pair $0 \leq i\leq m - 1, 0 \leq j \leq n-1$ of integers such that $i + j = k - 1$ or $i + j \in \{k-1,k\}$ and $j \geq n - r.$ We will assign them such that the joint random variable $(i,j)$ drawn from this probability distribution has $\E[\tilde x_{j + 1} + \tilde y_{i+1}] \leq 1 - 1/2nk,$ and hence some particular $(i,j)$ must satisfy $\tilde x_{j + 1} + \tilde y_{i+1} \leq 1 - 1/2nk$.

We now show how to sample $(i,j)$. We will construct a function $h:[0,k] \to [0,m-1]$, choose $\gamma$ uniformly at random on $[0,k]$, and let $(i,j) = \lfloor \gamma \rfloor, \lfloor h(\gamma) \rfloor$. That is, $(i,j)$ is chosen proportional to the time the curve $(\gamma, h(\gamma))$ spends in the cell $[i,i + 1] \times [j, j + 1]$ for time parameter $\gamma \in [0,k]$. The function $g$ is defined as follows:
$$
h(\gamma) = \left\{ \begin{array}{cc}
m - \ell + \frac{1}{\mu} (m - \ell - \gamma) & 0 < \gamma < m - \ell\\
k - \gamma & m - \ell \leq \gamma \leq k
\end{array} \right.
$$

\begin{figure}
\centering
\includegraphics[width=.3\textwidth]{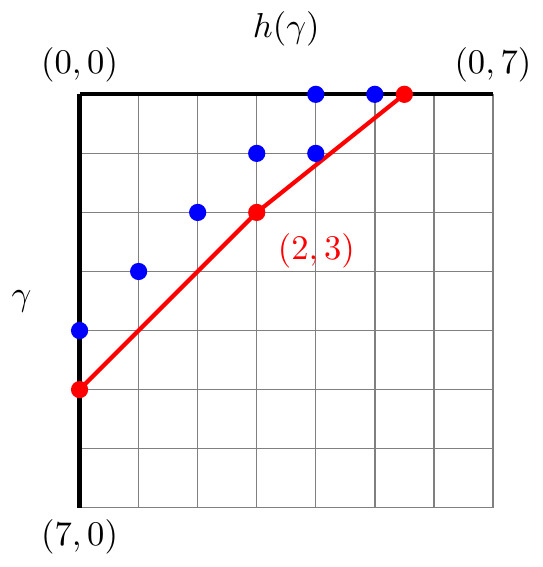}
\caption{An example of the function $h$ for $n = m = 7$, $\ell = 5, r = 4$ and hence $k = 14 - 9 = 5.$ We choose this coordinate system for consistency with matrices. The red line is the curve $(\gamma, h(\gamma))$ for $\gamma \in [0,k]$. The blue points are the support of the probability distribution on $(i,j)$, for these points $(i,j)$ one can see that $i + j = k - 1$ or $i + j \in \{k-1,k\}$ and $j > n - r.$
}\label{fig:example}
\end{figure}

See Figure \ref{fig:example} for an example. We first compute $\E[\tilde x_{j + 1} +  \tilde y_{i+1}] = \E[\tilde x_{j + 1}] + \E[\tilde y_{i + 1}]$. First note that $\gamma, g(\gamma)$ are both non-integral with probability $1$, so $\E[\tilde x_{j + 1}] = \E[ \tilde x_{\lceil g(\gamma)\rceil}]$ and $\E[\tilde y_{i + 1}] = \E[ \tilde y_{\lceil \gamma\rceil}].$ As $\gamma$ is chosen uniformly in $[0,k]$,
$$\E[ \tilde y_{\lceil \gamma\rceil}] = \frac{1}{k} \left(\sum_{i = 1}^k \tilde y_i\right) \leq \frac{1}{k} (\tilde y \cdot \mathbf 1). $$

By the same reasoning, the chance $\lceil h(\gamma)\rceil$ takes a particular value $t$ is $\frac{1}{k}$ if $1\leq t \leq h(m - \ell) = n - r$. If, however, $t > n - r$, then the chance $\lceil h(\gamma)\rceil = t$ is at most $\mu/k$ because this corresponds to the region $\gamma \leq m - r$ where $h$ has slope $1/\mu$. Hence $\E[ \tilde x_{\lceil \gamma\rceil}] \leq (\alpha_r \cdot \tilde x)/k,$
and so
$$\E[\tilde x_{j + 1} +  \tilde y_{i+1}] \leq \frac{1}{k}(y \cdot \mathbf 1 + \alpha_r \cdot \tilde x) \leq k'/k = 1 - \frac{1}{2nk}.$$
It remains to show that all $(i,j)$ in the support of the distribution satisfy $i + j = k - 1$ or $i + j \in \{k-1,k\}$ and $j > n - r.$ Conditioned on $\gamma$ in $(m - r, k)$, i.e. the unit slope part of $h$, with probability $1$ we have $\lfloor \gamma \rfloor + \lfloor h(\gamma) \rfloor = k - 1$ because the line passes along the diagonal of the cells $[i,i + 1] \times [j, j + 1]$ where $i + j = k-1$. Because $\mu = 1 - 1/n$, the curve $(\gamma, h(\gamma)$ lies between the curves $(\gamma, k - \gamma)$ and $(\gamma, k - \gamma + 1)$. Hence $\lfloor \gamma \rfloor + \lfloor h(\gamma) \rfloor$ is always $k - 1$ or $k$. If $h(\gamma) \in (n - r, n - r + 1)$ then $\gamma \in (m - \ell - 1, m - \ell)$ and we have $\lfloor \gamma \rfloor + \lfloor h(\gamma) \rfloor = k - 1.$ Thus if we have $\lfloor \gamma \rfloor + \lfloor h(\gamma) \rfloor  = k$ we also have $h( \gamma ) \geq n -r + 1$ and hence $\lfloor h(\gamma) \rfloor > n - r$. \end{proof}

The following is an easy but handy corollary.
\begin{corollary}\label{cor:alphar-scalable}
    If $(k,r)$ is not violated by any independent set $(L,R)$ of $\Phi$, then $\Phi$ is $k$-scalable to $(\alpha_r, \bfone_m)$.
\end{corollary}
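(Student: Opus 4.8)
The statement is a direct corollary of characterizations already in hand; the plan is simply to compose them. First, recall from the discussion following \eqref{eq:maj-capacity} that, with $\alpha_r=(\underbrace{1,\dots,1}_{n-r},\underbrace{1-1/n,\dots,1-1/n}_{r})$, one has the identity $\capa^{\alpha_r,\bfone_m}_k\Phi=\capa_{k,r}\Phi$. Next, if no independent set $(L,R)$ of $\Phi$ violates $(k,r)$, then item \ref{it:lex} of Theorem~\ref{thm:indep-scaling} holds, so by the implication \ref{it:lex}$\Rightarrow$\ref{it:cap} of that theorem $\capa_{k,r}\Phi>-\infty$, hence $\capa^{\alpha_r,\bfone_m}_k\Phi>-\infty$. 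Finally, Theorem~\ref{thm:maj-scal} applied to the pair $(\alpha,\beta)=(\alpha_r,\bfone_m)$ says precisely that $\capa^{\alpha_r,\bfone_m}_k\Phi>-\infty$ is equivalent to $\Phi$ being $k$-scalable to $(\alpha_r,\bfone_m)$. Chaining these three implications proves the corollary.

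It is worth spelling out a self-contained route that uses only the material of this section and avoids invoking Theorem~\ref{thm:indep-scaling}, since that theorem is itself being assembled here. Argue by contraposition: suppose $\Phi$ is \emph{not} $k$-scalable to $(\alpha_r,\bfone_m)$. By Theorem~\ref{thm:maj-scal} and the identity above, $\capa_{k,r}\Phi=-\infty$. A short combinatorial step (amounting to the equivalence of items \ref{it:cap} and \ref{it:cap'} of Theorem~\ref{thm:indep-scaling}) then shows $\capa_{k',r}\Phi=-\infty$ for $k'=k-1/2n$ as well: indeed $f_{k',r}=f_{k,r}+\tfrac1{2n}z$, and because $k$ is an integer the ``threshold size'' above which no scaling to $(\alpha_r,\bfone_m)$ exists cannot lie strictly between $k-1/2n$ and $k$ (one can make this precise via the cover bound of \eqref{eq:cover}, which forces that threshold to be either $k^*$ or a value in $[k^*-1,\,k^*-1/n]$). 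Given $\capa_{k',r}\Phi=-\infty$, for every $C>0$ there is $\Upsilon=(X,Y,z)\in\caD$ with $f_{k',r}(\Upsilon)\le-C$, so Theorem~\ref{thm:find-indep} produces an $\eps_C$-independent set $(L_C,R_C)$ violating $(k,r)$ with $\eps_C=2enk^2e^{-C/nk^2}\to 0$ as $C\to\infty$. Only finitely many dimension pairs $(\dim L_C,\dim R_C)$ are possible, so one pair recurs along a sequence $C\to\infty$; passing to a subsequence, the corresponding subspaces converge in the compact Grassmannians to a pair $(L,R)$ with the same dimensions, and taking the limit in $|w^\dagger A_i v|\le\eps_C\norm{w}\norm{v}$ shows $(L,R)$ is a genuine ($0$-)independent set, still violating $(k,r)$ --- contradicting the hypothesis.

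There is no serious obstacle in either route: the corollary is essentially bookkeeping on top of Theorems~\ref{thm:maj-scal}, \ref{thm:indep-scaling}, and~\ref{thm:find-indep}. In the self-contained version the only two points that need a sentence of justification are the passage from $\capa_{k,r}\Phi=-\infty$ to $\capa_{k',r}\Phi=-\infty$ (where integrality of $k$ and the gap $1/2n<1$ are used) and the compactness argument extracting an exact independent set from the approximate ones delivered by Theorem~\ref{thm:find-indep}; both are routine.
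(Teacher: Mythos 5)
Your first route is circular inside the paper's own development: the implication (1)$\Rightarrow$(2) of Theorem~\ref{thm:indep-scaling} is proved only in the following subsection, via Lemma~\ref{lem:ncrank-decomp}, and the proof of that lemma invokes Corollary~\ref{cor:alphar-scalable} itself (this corollary exists precisely to feed that lemma). So composing Theorem~\ref{thm:indep-scaling} with Theorem~\ref{thm:maj-scal} cannot serve as the proof here. You evidently sense this, and your second, ``self-contained'' route is in substance the paper's actual proof: argue by contraposition, use Theorem~\ref{thm:maj-scal} to get $\capa_{k,r}\Phi=-\infty$, use Theorem~\ref{thm:find-indep} to extract $\eps$-independent sets violating $(k,r)$ for every $\eps>0$, and finish by compactness. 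Your compactness step (finitely many dimension pairs, so one recurs; pass to a convergent subsequence in the Grassmannians and take limits in $|w^\dagger A_i v|\le\eps\norm{w}\norm{v}$) is a correct and slightly more explicit version of the paper's one-line compactness remark.

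The one place where your second route genuinely differs is the inserted bridge from $\capa_{k,r}\Phi=-\infty$ to $\capa_{k',r}\Phi=-\infty$ with $k'=k-1/2n$. You are right that something must be said here, since Theorem~\ref{thm:find-indep} is stated for $f_{k',r}$ and $f_{k',r}=f_{k,r}+z/2n\ge f_{k,r}$ on $\caD$, so the implication is not automatic; the paper's proof passes over this point and applies the previous theorem directly from $\capa_{k,r}\Phi=-\infty$. However, your justification of the bridge does not stand on its own. Localizing the scalability threshold to $\{k^*\}\cup[k^*-1,\,k^*-1/n]$ needs two ingredients: the upper bounds coming from the cover inequality \eqref{eq:cover} (these are indeed available at this stage), and the matching achievability statement that sizes up to $k^*$ (for $r\le r^*$) are actually attained in the limit by $(\alpha_r,\bfone_m)$-majorized scalings. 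That achievability is exactly the content of Corollary~\ref{cor:alphar-scalable} via Lemma~\ref{lem:ncrank-decomp} and the (1)$\Rightarrow$(2) direction of Theorem~\ref{thm:indep-scaling}, so as sketched this step is circular in the same way as your first route. To make it honest you would need a direct argument that the finiteness threshold of $k\mapsto\capa^{\alpha_r,\bfone_m}_k\Phi$ cannot fall in the interval $(k-1/2n,\,k]$ for integer $k$ without appealing to the characterization being proved, or else a variant of Theorem~\ref{thm:find-indep} whose hypothesis involves $f_{k,r}$ itself.
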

\begin{proof}
    We show the contraposition.
    If $\Phi$ is not $k$-scalable to $(\alpha_r, \bfone_m)$, then $\capa_{k,r} \Phi = -\infty$ by Theorem~\ref{thm:maj-scal}.
    Then, by the precious theorem, there exists an $\eps$-independent set violating $(k, r)$ for any $\eps > 0$.
    By the compactness of approximate independent sets, there exists an independent set violating $(k,r)$.
\end{proof}

\subsection{Independent sets and capacity: Proof of Theorem~\ref{thm:indep-scaling}}

First we have a decomposition lemma using the dominant independent set. Recall that $k^* = \ncrank \caA$ and $(\ell^*,r^*) = (\dim L^*, \dim R^*)$ where $(L^*, R^*)$ are the dominant independent set of $\Phi$. Thus we have $k^* = m + n - \ell^* - r^*$.
\begin{lemma}\label{lem:ncrank-decomp}
Let $(L^*,R^*)$ be the dominant independent set of $\Phi$. Then we can choose orthonormal bases for $\C^m, \C^n$ in which
$$ A _i = \begin{bmatrix} B_i & C_i \\ D_i & 0 \end{bmatrix}$$
where the operator $\Gamma$ with $(m - \ell^*) \times r^*$ Kraus operators $C_i$ is $(m - \ell^*)$-scalable to $((1 - 1/n) \bfone_{r^*}, \bfone_{m - \ell^*})$, and the operator $\Psi$ with $\ell^* \times (n - r^*)$ Kraus operators $D_i$ is $(n-r^*)$-scalable. As a consequence, $\Phi$ is $k^*$-scalable to $(\alpha_{r^*}, \mathbf 1_m)$.\end{lemma}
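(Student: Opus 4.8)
The block form is immediate: since $(L^*,R^*)$ is a maximum vanishing pair we have $w^\dagger A_i v = 0$ for all $w \in L^*$, $v \in R^*$, so choosing orthonormal bases of $\C^m$ and $\C^n$ in which $L^*$ is the span of the last $\ell^*$ coordinates and $R^*$ the span of the last $r^*$ coordinates puts every $A_i$ in the stated form with a zero $\ell^* \times r^*$ bottom-right block. I will use the counting identities $\ell^* + r^* = m+n-k^*$ (so $m-\ell^* \le r^*$ and $n-r^* \le \ell^*$, both needed to make the $\ncrank$ bounds below tight) and $r^* - (m-\ell^*) = n-k^*$, together with $\caA(R^*) = L^{*\perp}$ exactly (clear, since $\caA(R^*) \subseteq L^{*\perp}$ and $\dim \caA(R^*) = \dim R^* - (n-k^*) = m-\ell^*$).

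For $\Psi = \langle D_i\rangle$ the plan is to show $\ncrank \Psi = n - r^*$; then $\Psi$ is $(n-r^*)$-scalable by Theorem~\ref{thm:k-gurv}. Suppose instead $\ncrank\Psi < n - r^*$, so there is $U \subseteq \C^{n-r^*}$ with $\dim U - \dim\langle D_i u : u\in U\rangle \ge 1$. View $U$ inside the first $n-r^*$ coordinates of $\C^n$ and set $R' = U \oplus R^*$. Because the bottom-right block of each $A_i$ vanishes and $L^{*\perp}$ is the ``top'' coordinate subspace of $\C^m$, one checks $\caA(R') = \caA(U) + L^{*\perp} = L^{*\perp} \oplus \pi_{L^*}\caA(U)$ with $\pi_{L^*}\caA(U) = \langle D_i u : u\in U\rangle$, hence the deficiency of $R'$ is $(\dim U - \dim\langle D_i u:u\in U\rangle) + (r^* - (m-\ell^*)) \ge 1 + (n-k^*)$, contradicting that $n-k^*$ is the maximum deficiency.

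The $\Gamma = \langle C_i\rangle$ part is the crux. Here I would establish by a single lifting argument that (i) $\ncrank\Gamma = m-\ell^*$ and (ii) the smallest shrunk subspace of $\Gamma$ is all of $\C^{r^*}$. In case (i), resp.\ (ii), suppose $U \subseteq \C^{r^*}$ has $\dim U - \dim\langle C_i u : u\in U\rangle$ exceeding $n-k^*$, resp.\ equal to $n-k^*$ with $U \subsetneq \C^{r^*}$; view $U$ inside $R^*$. Since $A_i v = \begin{bmatrix} C_i v \\ 0\end{bmatrix}$ for $v\in R^*$, we get $\caA(U) = \langle C_i u : u\in U\rangle$ exactly, so $U$ has the same deficiency as a subspace of $\C^n$: in case (i) this exceeds the maximum deficiency $n-k^*$, and in case (ii) $U$ is an $(n-k^*)$-shrunk subspace of $\caA$ with $\dim U < r^* = \dim R^*$, contradicting minimality of $R^*$. (That $\C^{r^*}$ itself is an $(n-k^*)$-shrunk subspace of $\Gamma$ is just $\langle C_i u : u\in\C^{r^*}\rangle = \caA(R^*) = L^{*\perp}$.) Thus $(m-\ell^*, r^*)$ plays the role of $(k^*,r^*)$ for $\Gamma$ and is violated by no independent set of $\Gamma$, so Corollary~\ref{cor:alphar-scalable} applied to $\Gamma$ (whose domain has dimension $r^*$) gives that $\Gamma$ is $(m-\ell^*)$-scalable to $((1-1/r^*)\bfone_{r^*}, \bfone_{m-\ell^*})$. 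Since $r^* \le n$, $(1-1/r^*)\bfone_{r^*} \preceq (1-1/n)\bfone_{r^*}$, and $k$-scalability to $(\alpha,\beta)$ implies $k$-scalability to $(\alpha',\beta')$ whenever $\alpha \preceq \alpha'$, $\beta \preceq \beta'$ by transitivity of weak majorization; this upgrades the target to $((1-1/n)\bfone_{r^*}, \bfone_{m-\ell^*})$ as claimed.

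For the final statement I would combine the two families of scalings. Choose exact $((1-1/n)\bfone_{r^*},\bfone_{m-\ell^*})$-majorized scalings $\Gamma_{S_t,Q_t}$ of $\Gamma$ of size $\to m-\ell^*$ and exact doubly substochastic scalings $\Psi_{T_t,P_t}$ of $\Psi$ of size $\to n-r^*$, and scale $\Phi$ by the block-diagonal pair $\bigl(\diag(\delta_t S_t,\, T_t),\ \diag(P_t,\, \delta_t^{-1} Q_t)\bigr)$. The $C$- and $D$-blocks then become exactly the Kraus operators of $\Gamma_{S_t,Q_t}$ and $\Psi_{T_t,P_t}$ (the $\delta_t$ cancels in the $C$-block), while the $B$-block is multiplied by $\delta_t$; letting $\delta_t \to 0$ makes both marginals of the scaled $\Phi$ block diagonal with the marginals of $\Gamma_{S_t,Q_t}$ and $\Psi_{T_t,P_t}$ on the diagonal blocks, whose combined spectra are weakly majorized by $\alpha_{r^*}$ and $\bfone_m$ respectively and whose total size tends to $(m-\ell^*) + (n-r^*) = k^*$; a final rescale by $e^{-\eps}$ absorbs the approximation error. (Alternatively, this is immediate from Corollary~\ref{cor:alphar-scalable} applied to $\Phi$ itself with $(k,r) = (k^*,r^*)$.) I expect the $\Gamma$ step to be the main obstacle --- running the lifting argument so that it pins down both $\ncrank\Gamma$ and the minimal shrunk subspace of $\Gamma$, and then converting ``$\C^{r^*}$ is the minimal shrunk subspace of $\Gamma$'' into the $(1-1/n)$-majorized scaling; the rest is bookkeeping.
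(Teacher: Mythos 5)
Your proof is correct and follows essentially the same route as the paper: the same choice of bases and block decomposition, the same lifting of subspaces of $\Gamma$ and $\Psi$ back into $\C^n$ to contradict the dominance of $(L^*,R^*)$ (yours phrased via deficiencies of shrunk subspaces, the paper's via sizes of independent sets), the same appeals to Corollary~\ref{cor:alphar-scalable} and Theorem~\ref{thm:k-gurv}, and the same limiting block-diagonal scaling that kills the $B$-block while preserving the $C$- and $D$-blocks. Your explicit upgrade from $(1-1/r^*)\bfone_{r^*}$ to $(1-1/n)\bfone_{r^*}$ and the observation that the final claim also follows directly (and non-circularly) from Corollary~\ref{cor:alphar-scalable} applied to $\Phi$ are minor refinements of the paper's argument rather than a different approach.
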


\begin{proof}
We choose the orthonormal bases $\{u_i\},\{v_i\}$ for $\C^m, \C^n$ so that $L^* = \langle u_{m- \ell^* + 1}, \dots, u_m\rangle $ and $R^* = \langle v_{n - r^* + 1}, \dots, v_{n} \rangle $. Define $\Gamma, \Psi$ be defined as in the statement of the lemma. We claim that $\Gamma$ has no non-trivial independent set of size at least $r^*$ other than $(\{0\},R^*)$. For the sake of contradiction, suppose there exists an independent set $(L, R)$ for $\Gamma$ of size $r^*$ with $L \neq \{0\}$. Then $(L^* + L, R)$ is an independent set for $\Phi$ of size $\ell^* + \dim L + \dim R = \ell^* + r^*$ with $\dim R < r^*$, which contradicts the dominance of $(L^*,R^*)$.
Thus, by Corollary~\ref{cor:alphar-scalable}, $\Gamma$ is $(m - \ell^*)$-scalable to $((1-1/n)\bfone_{r^*}, \bfone_m)$.

Now let us see why $\Psi$ is $(n - r^*)$-scalable.
If not, there exists an independent set $(L, R)$ for $\Psi$ of size larger than $\ell^*$ by Theorem~\ref{thm:k-gurv}.
By the definition of $\Psi$, $(L, R + R^*)$ is an independent set of $\Phi$ of size $\dim L + \dim R + r^* > \ell^* + r^*$, contradicting the dominance of $(L^*,R^*)$.

We now prove the claim about the scalability of $\Phi$. Suppose the family of scalings guaranteed by the first part of the lemma for $\Psi,\Gamma$ are $\Psi_{X_1,Y_1}$ and $\Psi_{X_2,Y_2},$ respectively. Let $X(\gamma) := e^{\gamma \diag \mathbf 1_{[n -r^*]}}$, $Y(\gamma) := e^{\gamma \diag \mathbf 1_{[m - \ell^*]}}$.
One verifies that for $(X,Y) := X(\gamma) \cdot (X_1 \oplus X_2), Y(\gamma) \cdot  (Y_1 \oplus Y_2)$ the scaling $e^{\gamma} \Phi_{X^{-1/2},Y^{-1/2}}$ is the desired scaling as $\gamma \to \infty$.
\end{proof}

\begin{proof} [Proof of Theorem \ref{thm:indep-scaling}]
Recall that we take $k' = k - \frac{1}{2n}$. First note that $\capa_{k,r} \Phi < \capa_{k',r} \Phi$ because $f_{k,r}$ is dominated pointwise by $f_{k',r}$ on $\caD$. Hence \ref{it:cap} implies \ref{it:cap'}.

We now show \ref{it:lex} implies \ref{it:cap}. We first show that If $L^*,R^*$ is the dominant independent set, then the scaling $\tilde \Phi$ guaranteed by Lemma \ref{lem:ncrank-decomp} is a $k^*$-scaling to $(\alpha_{r^*}, \mathbf 1_m)$, and hence also a $k^*$-scaling to $(\alpha_r, \mathbf 1_m)$ for any $r \leq r^*$. Furthermore, $(1 - 1/k^*) \tilde \Phi$ is a $k$-scaling to $(\alpha_r, \mathbf 1_m)$ for any $k < k^*$ and any $r$. The pairs $(k,r)$ satisfying these conditions are exactly those dominated by $(k^*,r^*)$ in lexicographic order, and hence by Theorem \ref{thm:maj-scal} for any such $(k,r)$ we have $\capa^{\alpha_{r^*}, \mathbf 1_m}_k \Phi = \capa_{k,r} \Phi > - \infty$.

Finally, we show \ref{it:cap'} implies \ref{it:lex}. Suppose $\capa_{k',r}$ is finite. By Theorem \ref{thm:maj-scal}, for any $\eps > 0$ the operator $\Phi$ has a scaling $\tilde \Phi$ that is $(\alpha_r, \mathbf 1_m)$-majorized and has size at least $k' - \eps$. The dominant independent sets of $\tilde{\Phi},\Phi$ are related by linear transformations, so it is enough to bound $r^* = \dim R \geq r$ where $L,R$ is the dominant independent set for $\tilde{\Phi}.$ First suppose $k = k^*$ and $r > r^*$.  Because $(L,R)$ is independent for $\tilde \Phi$, we have $$ \tr \Pi_{R^\perp} \tilde{\Phi}^*(I_m)  + \tr \Pi_{L^\perp} \tilde{\Phi}(I_n) \geq \tr \tilde{\Phi}(I_n) \geq k' - \eps.$$
Without loss of generality assume $\tilde{\Phi}(I), \tilde{\Phi}(I)^*$ are diagonal with decreasing elements along the diagonal, and the diagonals are majorized by $\mathbf 1, \alpha_r, $ respectively.

By the Courant-Fisher min-max theorem, $\tr \tilde{\Phi}^*(I_m) \Pi_{R^\perp} \leq \sum_{i = 1}^{n - r^*} \mu_i$ where $\mu$ is the spectrum of $\tilde{\Phi}^*(I)$. As $\mu$ is majorized by $\alpha_r$, we have $\sum_{i = 1}^{n - r^*} \mu_i \leq (n - r) + (1 - 1/n) (r-r^*) \leq n -  r^* - 1/n$. As $\tilde{\Phi}(I_n) \preceq I_m$, we have $\tr \Pi_{L^\perp} \tilde{\Phi}(I_n) \leq m - \ell^*$. Thus $\tr \tilde{\Phi}^*(I_m)  \pi_{R^\perp} + \tr \Pi_{L^\perp} \tilde{\Phi}(I_n) \leq k^* - 1/n.$ If $\eps < 1/2n$, this leads to the contradiction $k^* - 1/n < k' - \eps \leq k^* - 1/n$. A similar argument leads to a contradiction in the case $k > k^*$.\end{proof}

Now we prove Theorem~\ref{thm:indep-correct}.

\begin{proof}[Proof of Theorem~\ref{thm:indep-correct}]
If $\capa_{k,r} \Phi= -\infty$, then by Theorem \ref{thm:indep-scaling} we have $\capa_{k',r} \Phi= -\infty$. By (the proof of) Theorem \ref{thm:decision-correct}, \textsc{MajSinkhorn}$(\Phi, \alpha_r, \mathbf 1_m, k', \eps_2)$ does not terminate for $\eps_2 =  O(1/ \sqrt{ (m + n) n^{2}})$. By Lemma \ref{lem:prog-bound}, $f_{k',r}(X,Y,z) \leq 1 - \frac{1}{4} T \eps_2^2$ after $T$ steps.  Thus in $4(C + 1)/\eps_2^2$ steps we have $f_{k',r}(X,Y,z) \leq C$. By Theorem \ref{thm:find-indep}, the procedure in steps \ref{it:get_basis} and \ref{it:check_basis} of \textsc{ApproxIndep} for $\eps= 2 e n k^2 e^{- C/nk^2}$ produces an
$\eps$-independent set $(L,R)$ that violates $(k,r)$. Thus $C = n k^2 \log (2 e n k^2 /\eps)$ and hence we require $T =4  k ^2 (m + n) n^3 \log (2 e^2 n k^2/\eps).$

The number of arithmetic operations follows from the bound $O((m + n)^2(m+n+p))$ from Section \ref{sec:maj-sinkhorn} on the number operations required for each iteration of \textsc{MajSinkhorn}. The number of operations for Step \ref{it:check_basis} is dominated by the number of operations for Step \ref{it:get_basis}.
\end{proof}

\section{Rounding approximate shrunk subspaces}\label{sec:round}
In this section, we prove that we can round an $\eps$-approximate dominant independent set to the exact dominant independent set.
To this end, we need a bit complexity bound of the dominant independent set.
We present a simple \emph{randomized} algorithm for finding the minimum shrunk subspace using the Wong sequence, from which the desired bit complexity bound follows.

\subsection{Wong sequence}
Here we describe how to use Wong sequences to find the minimum maximally shrunk subspace. We refer frequently to \cite{Ivanyos2015}; in that work they refer to a $c$-shrunk subspace as a $c$-singularity witness.

\begin{definition}[Wong sequence]
    For a matrix space $\caA \subseteq \C^{n \times n}$ and $A \in \caA$, the (second) Wong sequence of $(A, \caA)$ is defined as
    \begin{alignat*}{3}
        W_0 &:= \{\bfzero\},  & \quad W_i &:= \caA(A^{-1}(W_{i-1}))  \quad (i=1, 2, \dots),
    \end{alignat*}
    where $A^{-1}(W_{i-1})$ is the preimage of $W_{i-1}$ under $A$.
\end{definition}

Here we summarize the necessary properties of the Wong sequence.
Let $c:= n - \ncrank\caA$.

\begin{lemma}[{\cite[Lemma~9]{Ivanyos2015}}]\label{lem:Wong}
    For the Wong sequence $(W_i)$ of $(A, \caA)$, the following holds.
   \begin{enumerate}
       \item $W_i$ is nonincreasing, i.e., $W_{i-1} \subseteq W_i$ for $i=1, 2, \dots$. (Hence there exists the limit of $W_i$, which we denote by $W_\infty$)
       \item If $A \in \caB$ satisfies $\rk A = \ncrank\caA$, then $U^* := A^{-1}W_\infty$ is a $c$-shrunk subspace.
       \item $A^{-1}W_\infty \geq \caA^{-1} W_\infty$, where $\caA^{-1} W_\infty := \bigcap_{B \in \caA} B^{-1}W_\infty$.
       \item If a subspace $U$ satisfies $A^{-1}U \supseteq \caA^{-1} U$, then $U \supseteq W_\infty$.
   \end{enumerate}
\end{lemma}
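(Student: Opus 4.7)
The plan is to verify the four items of this classical lemma, due to Ivanyos--Qiao--Subrahmanyam~\cite{Ivanyos2015}, by leveraging the fact that $W_\infty$ is the least fixed point of the monotone map $\Psi: W \mapsto \caA(A^{-1}(W))$ on the lattice of subspaces of $\C^n$. I would first dispatch items (1), (3), and (4), which are structural consequences of this fixed-point picture, and then attack item (2), which is the only place where the rank hypothesis $\rk A = \ncrank \caA$ enters and is the main technical obstacle.

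\textbf{The easy items.} For (1), I would induct: $\Psi$ is monotone in its argument and $W_0 = \{0\} \subseteq \Psi(\{0\}) = W_1$, so induction yields $W_{i-1} \subseteq W_i$ for all $i$; boundedness of $\dim W_i$ by $n$ forces stabilization at some $W_\infty$, which satisfies $\Psi(W_\infty) = W_\infty$. For (3), every $v \in A^{-1}(W_\infty)$ satisfies $Bv \in \caA(A^{-1}(W_\infty)) = W_\infty$ for every $B \in \caA$, giving $A^{-1}(W_\infty) \subseteq \caA^{-1}(W_\infty)$; combined with the trivial reverse inclusion (since $A \in \caA$) this yields equality. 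For (4), the same observation gives an inductive proof that $W_i \subseteq U$: assuming $W_{i-1} \subseteq U$, any $v \in A^{-1}(W_{i-1}) \subseteq A^{-1}(U)$ lies in $\caA^{-1}(U)$ by hypothesis, so $Bv \in U$ for every $B \in \caA$, hence $W_i = \caA(A^{-1}(W_{i-1})) \subseteq U$; passing to the limit gives $W_\infty \subseteq U$.

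\textbf{Item (2) and the main obstacle.} Set $U^* := A^{-1}(W_\infty)$. From the fixed-point identity, $\caA(U^*) = W_\infty$, so $\dim \caA(U^*) = \dim W_\infty$. The preimage formula $\dim A^{-1}(W) = \dim \ker A + \dim(\Img A \cap W)$ together with $\dim \ker A = c$ yields
\begin{equation*}
    \dim U^* - \dim \caA(U^*) \;=\; c + \dim(\Img A \cap W_\infty) - \dim W_\infty,
\end{equation*}
so the whole of (2) reduces to the inclusion $W_\infty \subseteq \Img A$. This is the main obstacle I expect and is where the maximum-rank hypothesis finally enters. The plan is an infinitesimal perturbation argument. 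I would first prove the \emph{local lemma}: for any $v \in \ker A$ and any $B \in \caA$, $Bv \in \Img A$. Otherwise, choosing a basis with $A = I_r \oplus 0$ (so $\ker A$ is the last $n-r$ coordinates and $\Img A$ the first $r$), the Schur-complement expansion gives $\rk(A + tB) = r + \rk(t B_{22} + O(t^2))$, which is $\geq r+1$ for small generic $t$, contradicting $\rk A = \ncrank \caA$. Then I would promote this to $W_i \subseteq \Img A$ by induction on $i$: for the step, a generator $Bv$ of $W_i$ with $Av \in W_{i-1} \subseteq \Img A$ is split as $v = u + w$ with $w \in \ker A$ and $Au = Av$, chosen so that $Bu \in \Img A$ as well (the compatibility of preimages with the $\caA$-action, which must be threaded along the whole chain $W_0 \subseteq W_1 \subseteq \cdots$); the local lemma then gives $Bw \in \Img A$, hence $Bv \in \Img A$. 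This compatibility-of-preimages step is the only delicate point beyond the local lemma, and is the technical heart of the proof; everything else is bookkeeping around the fixed-point structure of $\Psi$.
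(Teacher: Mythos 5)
Your items (1), (3), and (4) are fine (note that you have silently corrected the direction of the inclusion in (4): as printed, the hypothesis $A^{-1}U \supseteq \caA^{-1}U$ is vacuous since $A \in \caA$; the intended hypothesis, and the one the paper actually verifies when it invokes the lemma in the proof of Lemma~\ref{lem:Wong-minimality}, is $A^{-1}U \subseteq \caA^{-1}U$, which is what your induction uses). Your reduction of (2) to the inclusion $W_\infty \subseteq \Img A$ via $\caA(U^*) = W_\infty$ and $\dim A^{-1}(W_\infty) = \dim\ker A + \dim(\Img A \cap W_\infty)$ is also correct, as is the ``local lemma'' $\caA(\ker A) \subseteq \Img A$: the Schur-complement perturbation shows that $B_{22} \neq 0$ would give $\rk(A+tB) > \rk A$ for small $t\neq 0$, contradicting $\rk A = \ncrank\caA \geq \max_{C \in \caA}\rk C$. (For calibration: the paper itself does not prove this lemma at all; it imports it from \cite{Ivanyos2015}, so the only question is whether your argument stands on its own.)

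It does not: the ``promotion'' step is a genuine gap, and as framed it is circular. Any two preimages of $Av$ under $A$ differ by an element of $\ker A$, and the local lemma already gives $B(\ker A) \subseteq \Img A$, so whether $Bu \in \Img A$ does not depend on the choice of preimage $u$; hence ``choosing $u$ so that $Bu \in \Img A$'' is not a choice but is literally the assertion $W_i \subseteq \Img A$ that the inductive step is supposed to establish. Worse, the step cannot be rescued using only what you have set up, because after the local lemma your argument invokes nothing beyond first-order rank-maximality of $A$ inside $\caA$, and that is not enough: take $\caA$ the $3\times 3$ skew-symmetric matrices and $A = E_{12}-E_{21}$. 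Then $A$ has maximal rank in $\caA$, so the local lemma holds, and indeed $W_1 = \caA(\ker A) = \langle e_1, e_2\rangle = \Img A$; but $W_2 = \caA(A^{-1}(W_1)) = \caA(\C^3) = \C^3 \not\subseteq \Img A$. (This does not contradict the lemma, since here $\ncrank\caA = 3 > \rk A$; it shows that the implication ``local lemma and $W_{i-1}\subseteq \Img A$ imply $W_i \subseteq \Img A$'' is false, so a correct proof must re-use the full hypothesis $\rk A = \ncrank\caA$ in the induction.) One standard repair: take a subspace $U$ with $\dim U - \dim \caA(U) = c$, which exists by \eqref{eq:shrunk}; the dimension-counting chain in the paper's proof of Lemma~\ref{lem:Wong-minimality} forces $A(U) = \caA(U)$ and $\ker A \subseteq U$, and then an easy induction gives $W_i \subseteq \caA(U) = A(U) \subseteq \Img A$ for all $i$, hence $W_\infty \subseteq \Img A$. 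Alternatively one can run the blow-up rank-increment argument of \cite{Ivanyos2015}. Either way, the ``technical heart'' you flagged is exactly the missing content, and your proposal does not supply it.
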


\begin{lemma}[implicit in \cite{Ivanyos2015}]\label{lem:Wong-minimality}
If $A\in \caA$ satisfies $\rk A = \ncrank\caA$, then $U^* = A^{-1}W_\infty$ is the minimum $c$-shrunk subspace.
In particular, the limit $W_\infty$ of the Wong sequence is independent from the choice of $A$.
\end{lemma}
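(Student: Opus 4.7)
The plan is to establish minimality by showing that $U^* := A^{-1}W_\infty$ is contained in every $c$-shrunk subspace $U$; combined with item~2 of Lemma~\ref{lem:Wong}, which already guarantees that $U^*$ is itself a $c$-shrunk subspace, this will identify $U^*$ as the minimum.

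The crucial preliminary observation I would establish is that whenever $\rk A = n-c$ and $U$ is a $c$-shrunk subspace, one has $\ker A \subseteq U$ and $A(U) = \caA(U)$. This follows by a short dimension count: $A(U) \subseteq \caA(U)$ and
$$\dim A(U) = \dim U - \dim(U \cap \ker A) \geq \dim U - c \geq \dim \caA(U),$$
where the first inequality uses $\dim \ker A = c$ and the second is the $c$-shrunk condition. Since the chain starts and ends with $\dim A(U)$, all inequalities are equalities, forcing $\dim(U \cap \ker A) = c = \dim \ker A$ (hence $\ker A \subseteq U$) and $A(U) = \caA(U)$.

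With that in hand, I would show by induction on $i$ that $W_i \subseteq A(U)$. The base case $W_0 = \{0\}$ is trivial. For the inductive step, assuming $W_{i-1} \subseteq A(U) = \caA(U)$, I get
$$A^{-1}(W_{i-1}) \subseteq A^{-1}(\caA(U)) = A^{-1}(A(U)) = U + \ker A = U,$$
where the last equality uses $\ker A \subseteq U$. Applying $\caA$ yields $W_i = \caA(A^{-1}(W_{i-1})) \subseteq \caA(U) = A(U)$. Passing to the limit then gives $W_\infty \subseteq A(U)$, and therefore $U^* = A^{-1}(W_\infty) \subseteq A^{-1}(A(U)) = U + \ker A = U$, as required.

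Finally, the independence of $W_\infty$ from the choice of $A$ falls out immediately: since $W_\infty$ is a fixed point of $W \mapsto \caA(A^{-1}(W))$, I can write $W_\infty = \caA(A^{-1}(W_\infty)) = \caA(U^*)$, which depends only on $\caA$ and the now-canonical minimum shrunk subspace $U^*$, not on the particular $A$. The only genuine step in the argument is the dimension count of the first paragraph; once $\ker A \subseteq U$ and $\caA(U) = A(U)$ are secured, the induction and the conclusion are essentially definitional, so I do not foresee any real obstacles.
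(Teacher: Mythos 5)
Your proof is correct. The heart of it — forcing equality in a dimension count to conclude that any $c$-shrunk subspace $U$ satisfies $\ker A \subseteq U$ and $A(U) = \caA(U)$, and then deducing $W_\infty \subseteq A(U)$, hence $U^* = A^{-1}W_\infty \subseteq A^{-1}(A(U)) = U$ — is exactly the paper's argument, though your dimension count ($\dim A(U) = \dim U - \dim(U\cap\ker A) \geq \dim U - c \geq \dim\caA(U) \geq \dim A(U)$) is organized a bit more directly than the paper's chain involving $U^\perp$ and $A(U^\perp)$. Where you genuinely diverge is after that: the paper gets $W_\infty \subseteq A(U)$ by appealing to the cited stability property of Wong sequences (item 4 of Lemma~\ref{lem:Wong}, applied to the subspace $A(U)$ after noting $A^{-1}(A(U)) = U \subseteq \caA^{-1}(\caA(U)) = \caA^{-1}(A(U))$), whereas you re-derive it from scratch by an induction $W_i \subseteq A(U)$, which is a self-contained and slightly more elementary route that avoids relying on items 3--4 of Lemma~\ref{lem:Wong}. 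You also make explicit the final claim that $W_\infty$ is independent of $A$, via the fixed-point identity $W_\infty = \caA(A^{-1}W_\infty) = \caA(U^*)$ together with the uniqueness of the minimum $c$-shrunk subspace; the paper's proof stops at $U^* \subseteq U$ and leaves this step implicit, so your treatment is, if anything, more complete on that point.
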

\begin{proof}
    Take an arbitrary $c$-shrunk subspace $U$.
    Then,
    \begin{align*}
        \ncrank\caA
           &= \rk A \\
           &\leq \dim A(U) + \dim A(U^\perp)  \\
           &= n - \dim U - \dim U ^\perp + \dim A(U) + \dim A(U^\perp) \\
           &\leq n - \dim U  + \dim \caA(U) + \dim A(U^\perp) - \dim U ^\perp \tag{$A(U) \subseteq \caA(U)$} \\
           &\leq n - \dim U + \dim \caA(U) \tag{$\dim A(U^\perp) \leq \dim U^\perp $}\\
           &\leq n - c \tag{$\dim U - \dim\caA(U) \geq c$} \\
           &=\ncrank\caA.
    \end{align*}
    Therefore, the above inequalities are tight.
    In particular, we have $A(U) = \caA(U)$ and $\dim U - \dim\caA(U) = n - \ncrank\caA = n - \rk A = \dim\ker A$.
    From $\dim U - \dim A(U) = \dim\ker A$, we have $\ker A \leq U$.
    Therefore, $U = A^{-1}(A(U))$.
    Then, $\caA^{-1}(A(U)) = \caA^{-1}(\caA(U)) \supseteq U = A^{-1}(A(U))$.
    By Lemma~\ref{lem:Wong}, $W_\infty \subseteq A U$.
    Hence, $U^* = A^{-1} W_\infty \subseteq A^{-1}(A(U)) = U$.
\end{proof}

Computing a basis of the Wong sequence $(W_i)$ may lead to an explosion of the bit complexity.
Thankfully, one can use a clever trick proposed in \cite{Ivanyos2015} to compute a basis of $(W_i)$ with the bit complexity bounded. Let $M_A$ denote the maximum absolute value of any of the Gaussian integer entries of $A$.
Let $A^{+}$ be a pseudoinverse of $A$ scaled to have Gaussian integer entries.
Then, $W_i = (\caA A^{+})^i \ker(A A^{+})$ for $i \geq 1$~(see \cite[Lemma~10]{Ivanyos2015}).
First, we compute a basis $v_1, \dots, v_{d_0}$ of $\ker(A A^{+})$ by the Gaussian elimination.
A basis of $W_1$ can be found by finding a maximum independent set from $\{A_jA^{+}v_q : j \in [p], q \in [d_0] \}$.
To this end, it suffices to find a maximum independent column subset from a $n \times pd_0$ matrix, which takes $O(n^2pd_0) = O(pn^3)$ time.
Similarly, a basis a basis $w_1, \dots, w_{d'}$ of $W_i$, we can find a basis of $W_{i+1}$ by finding a maximum independent set from $A_jA^+ w_q$ ($j \in [p]$, $q \in [d']$) in $O(mn^3)$ time.
In total, it takes $O(pn^4)$ time to compute a basis of $W_\infty$.
The basis vector of $W_i$ constructed in this way is in the form of
\[
    (A_{j_i} A^{+}) \cdots (A_{j_1}A^+) v_k
\]
for some $j_1, \dots, j_i \in [p]$ and $k \in [d_0]$, so the bit complexity is polynomially bounded.
Specifically, can take $A^+$ and $v_k$ to have integer entries at most $e^{O(n\log(M_An))}$ in absolute value. For $v_k$ this follows because $\ker(AA^+) = \Img(A)^\perp$. Hence we can take any basis vectors of $W_i$ to be integral with maximum entry $e^{O(in\log(M_An)+i\log M)}$. Therefore we have the following.

\begin{lemma}\label{lem:Wong-alg}
A basis of the limit $W_\infty$ of the Wong sequence of $(A, \caA)$ can be computed in $O(pn^4)$ time. If $A$ has Gaussian integer entries with absolute value at most $M_A$ and $\caA$ has a basis with Gaussian integer entries of absolute value at most $M$, then the output basis will have Gaussian integer entries at most $e^{O(n^2\log(M_An)+n\log M)}$ in absolute value, as will all intermediate numbers.
\end{lemma}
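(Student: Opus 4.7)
My plan is to verify the three claims of the lemma (correctness, runtime, bit complexity) by building on the setup already laid out in the paragraph preceding it. The algorithm is the iterative one described in the text: compute an integer-entry pseudoinverse $A^+$ of $A$ (scaled to clear denominators), compute a basis of $\ker(AA^+) = \Img(A)^\perp$, and then inductively build a basis of $W_i$ from a basis of $W_{i-1}$ by taking all vectors $A_j A^+ w$ for $j \in [p]$ and $w$ ranging over the current basis, then extracting a maximum linearly independent subset. Correctness follows immediately from the identity $W_i = (\caA A^+)^i \ker(AA^+)$ cited from~\cite{Ivanyos2015}, together with the fact that $W_i$ is nondecreasing and contained in $\C^n$, so the sequence stabilizes after at most $n$ rounds.

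For the runtime, I would argue that each round produces at most $pn$ candidate vectors in $\C^n$, from which selecting a maximum linearly independent subset can be done in $O(pn^3)$ time by standard Gaussian elimination (or fraction-free elimination, see below). Since there are at most $n$ rounds, the overall cost is $O(pn^4)$. Computing $A^+$ and an initial integer basis of $\ker(AA^+)$ is subsumed by this bound.

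For the bit complexity, the key observation --- already highlighted in the setup --- is that every basis vector produced by the algorithm can be \emph{expressed} as a product $(A_{j_i}A^+)\cdots(A_{j_1}A^+)v_k$, so I only need to bound entries of such products. By Cramer's rule and Hadamard's inequality, I can choose $A^+$ with integer entries of absolute value at most $e^{O(n\log(M_A n))}$, and similarly a basis of $\ker(AA^+)$ with integer entries of the same size. Each factor $A_j A^+$ has entries of absolute value at most $nM\cdot e^{O(n\log(M_A n))}$, and applying one such factor to a vector of entry-magnitude $B$ produces a vector of entry-magnitude at most $nB \cdot e^{O(n\log(M_A n)+\log M)}$. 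Iterating this bound $i\le n$ times starting from $v_k$ gives the claimed entry bound $e^{O(n^2\log(M_A n)+n\log M)}$.

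The one genuinely subtle point, and what I expect to be the main obstacle, is that the maximum linearly independent subset selection requires a rank computation in which naive Gaussian elimination could introduce fractions with huge denominators in intermediate numbers. I would handle this by performing the independence tests using Bareiss's fraction-free Gaussian elimination, whose intermediate integers are minors of the input matrix and thus, by Hadamard's inequality, are bounded by $(n\cdot e^{O(n^2\log(M_A n)+n\log M)})^n = e^{O(n^3\log(M_A n)+n^2\log M)}$ in absolute value --- well within polynomial bit complexity. Crucially, the \emph{output} basis vectors are \emph{not} the row-reduced forms; we output the original candidate vectors $A_j A^+ w$ that Bareiss elimination identified as linearly independent, so the final bit complexity bound $e^{O(n^2\log(M_A n)+n\log M)}$ is preserved.
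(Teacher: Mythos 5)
Your proposal follows essentially the same route as the paper: the identity $W_i = (\caA A^+)^i\ker(AA^+)$ from \cite{Ivanyos2015}, the round-by-round extraction of a maximum independent subset from the $pn$ candidate vectors ($O(pn^3)$ per round, at most $n$ rounds), and the bit-complexity bound obtained by expressing each output vector as a product $(A_{j_i}A^+)\cdots(A_{j_1}A^+)v_k$ with $A^+$ and $v_k$ bounded via Cramer/Hadamard. Your only addition is the explicit Bareiss fraction-free elimination for the independence tests; the paper instead leans on its appendix bound for Gaussian elimination over the rationals, and both give polynomially bounded intermediate numbers, so the argument is correct.
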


\subsection{A randomized algorithm for finding minimum optimal shrunk subspace}
The matrix subspace
\[
    \caA^{\{d\}} = \langle A_i \otimes E_{jk} : i \in [p], j,k \in [d] \rangle
\]
is called the $d$th blow-up space of $\caA$, where $E_{jk}$ is the matrix with $1$ at the $(j,k)$-entry and $0$ elsewhere.
If $d$ is the minimum blow-up size of $\caA$, then there exists a constant matrix $A \in \caA^{\{d\}}$ such that $\rk A = d\ncrank\caA = \ncrank(\caA^{\{d\}})$.
We can find such a matrix $A$ (with high probability) by substituting random integers to the entries of $X_i$.
Then, one can compute the limit $W_\infty$ of the Wong sequence of $(A, \caA^{\{d\}})$ and $U^* = A^{-1}W_\infty$ is the minimum $dc$-shrunk subspace of $\caA^{\{d\}}$ by Lemma~\ref{lem:Wong-minimality}.
One can obtain the minimum $c$-shrunk subspace of $\caA$ from $U^*$ by the following lemmas.
In the following, we denote the set of $d \times d$ complex matrices by $M(d, \C)$.

\begin{lemma}[{cf. \cite[discussion above Proposition~5.2]{Ivanyos2017}}]
    If a subspace $W \subseteq \C^{nd}$ satisfies $(I \otimes M(d, \C)) W = W$, then
    $W = W_0 \otimes \C^d$. Moreover, we can take $W_0 = \{w \in \C^n : u \otimes e_i \in W \text{ for all $i \in [d]$} \}$.
\end{lemma}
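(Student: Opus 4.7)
The plan is to prove both inclusions $W_0 \otimes \C^d \subseteq W$ and $W \subseteq W_0 \otimes \C^d$ directly, exploiting the flexibility of the elementary matrices $E_{ji} \in M(d,\C)$ (the matrix units with a single $1$ in position $(j,i)$). These will act as the essential "extract-and-deposit" operators inside the second tensor factor.

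First, I would verify the easy inclusion $W_0 \otimes \C^d \subseteq W$. By the very definition of $W_0$, for every $w \in W_0$ and every $i \in [d]$ the vector $w \otimes e_i$ lies in $W$; since such vectors span $W_0 \otimes \C^d$, the inclusion follows.

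For the reverse inclusion, I would take an arbitrary $v \in W$ and write it uniquely as $v = \sum_{i=1}^{d} w_i \otimes e_i$ with $w_i \in \C^n$. To show $v \in W_0 \otimes \C^d$ it suffices to show each $w_i$ lies in $W_0$, i.e.\ that $w_i \otimes e_j \in W$ for every $j \in [d]$. Applying the operator $I \otimes E_{ji}$, which by hypothesis preserves $W$, gives
\[
(I \otimes E_{ji})\, v \;=\; \sum_{k=1}^{d} w_k \otimes E_{ji} e_k \;=\; w_i \otimes e_j,
\]
because $E_{ji} e_k = \delta_{ki} e_j$. Hence $w_i \otimes e_j \in W$ for every $j$, so $w_i \in W_0$ and therefore $v \in W_0 \otimes \C^d$, which completes the proof.

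There is no real obstacle; the only thing to be careful about is the indexing convention $E_{ji} e_k = \delta_{ki} e_j$ so that the action of $I \otimes E_{ji}$ correctly isolates the summand $w_i \otimes e_j$. Once that is set up, the identity on display above is a one-line calculation and the whole argument is essentially tautological given the hypothesis $(I \otimes M(d,\C)) W = W$.
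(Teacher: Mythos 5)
Your proof is correct and follows essentially the same route as the paper: the easy inclusion $W_0 \otimes \C^d \subseteq W$ from the definition of $W_0$, and the reverse inclusion by applying $I \otimes E_{ji}$ to an element $\sum_k w_k \otimes e_k$ of $W$ to isolate $w_i \otimes e_j$ and conclude $w_i \in W_0$. (You also implicitly fix the small typo in the statement, where $u \otimes e_i$ should read $w \otimes e_i$.)
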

\begin{proof}
    To see $W_0 \otimes \C^d \subseteq W$, take $w = w_0 \otimes v$ for $w_0 \in W_0$ and $v \in \C^d$ arbitrarily.
    Then, $w = w_0 \otimes (\sum_{i=1}^d v_i e_i) = \sum_{i=1}^d v_i (w_0 \otimes e_i) \in W$.
    On the other hand, take an arbitrary $w \in W$ and write $w = \sum_{j=1}^{d} w_j \otimes e_j \in W$, where $w_j \in \C^d$ is the $j$th $d$-dimensional subvector of $w$.
    Then, we have $(I \otimes E_{ij}) w = w_j \otimes e_i \in W$ for $i, j \in [d]$ by $(I \otimes M(d, \C)) W = W$.
    Therefore, $w_j \in W_0$ for $j \in [d]$, which means $w \in W_0 \otimes \C^d$.
    Hence $W_0 \otimes \C^d \supseteq W$.
\end{proof}

\begin{lemma}\label{lem:shrunk-blowup}
    Let $U$ be a $dc$-shrunk subspace of $\caA^{\{d\}}$. Then, $U = U_0 \otimes \C^d$ and $U_0$ is a $c$-shrunk subspace for $\caA$.
    If $U$ is minimum, then $U_0$ is also minimum.
\end{lemma}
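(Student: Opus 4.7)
My plan is to reduce the statement to the preceding lemma by showing that any $dc$-shrunk subspace $U \subseteq \C^{nd}$ is invariant under $I_n \otimes M(d,\C)$. The first input I would record is that $\caA^{\{d\}}$ itself is stable under right multiplication by $I_n \otimes M(d,\C)$: on generators, $(A_i \otimes E_{jk})(I_n \otimes X) = A_i \otimes (E_{jk}X) \in \caA^{\{d\}}$. It follows that for any $g \in GL(d)$ the subspace $(I_n \otimes g) U$ has the same dimension as $U$ and, crucially, $\caA^{\{d\}}((I_n \otimes g)U) = \caA^{\{d\}}(U)$, so $(I_n \otimes g)U$ is also $dc$-shrunk.

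The key step is then a short dimension argument. Consider $U + (I_n \otimes g) U$. By linearity of the action, $\caA^{\{d\}}(U + (I_n \otimes g) U) = \caA^{\{d\}}(U) + \caA^{\{d\}}((I_n \otimes g) U) = \caA^{\{d\}}(U)$. Hence
\begin{align*}
\dim(U + (I_n \otimes g)U) - \dim \caA^{\{d\}}(U + (I_n \otimes g)U) \;\geq\; \dim U - \dim \caA^{\{d\}}(U) \;=\; dc,
\end{align*}
with strict inequality unless $(I_n \otimes g)U \subseteq U$. But $dc = nd - \ncrank \caA^{\{d\}}$ is the maximum possible shrinkage over subspaces of $\C^{nd}$, so equality must hold and therefore $(I_n \otimes g)U \subseteq U$ for every $g \in GL(d)$. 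By linear extension, $(I_n \otimes X)U \subseteq U$ for all $X \in M(d,\C)$, and the preceding lemma then gives $U = U_0 \otimes \C^d$ with $U_0 = \{w \in \C^n : w \otimes e_i \in U \text{ for all } i\}$.

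The remaining bookkeeping is routine. A short computation on generators shows $\caA^{\{d\}}(U_0 \otimes \C^d) = \caA(U_0) \otimes \C^d$, and substituting this into $\dim U - \dim \caA^{\{d\}}(U) = dc$ yields $\dim U_0 - \dim \caA(U_0) = c$, so $U_0$ is $c$-shrunk. For the minimality claim, if $V_0$ is any $c$-shrunk subspace of $\caA$ then $V_0 \otimes \C^d$ is $dc$-shrunk, so minimality of $U = U_0 \otimes \C^d$ forces $U_0 \otimes \C^d \subseteq V_0 \otimes \C^d$; testing against simple tensors $u \otimes e_1$ for $u \in U_0$ gives $U_0 \subseteq V_0$.

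The main obstacle, really the main insight, is the dimension argument on $U + (I_n \otimes g)U$: adjoining $(I_n \otimes g)U$ to $U$ cannot enlarge the $\caA^{\{d\}}$-image, so the sum would strictly exceed the maximum shrinkage $dc$ unless it equals $U$ itself. Everything else — the stability of $\caA^{\{d\}}$ under right multiplication by $I_n \otimes M(d,\C)$, the identity $\caA^{\{d\}}(U_0 \otimes \C^d) = \caA(U_0) \otimes \C^d$, and the descent of containment from $\otimes \C^d$ to $U_0, V_0$ — is straightforward.
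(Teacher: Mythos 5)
Your proof is correct and follows essentially the same route as the paper: you use the stability of $\caA^{\{d\}}$ under right multiplication by $I_n \otimes M(d,\C)$ together with the maximality of the shrinkage $dc$ to force $U$ to be $(I_n\otimes M(d,\C))$-invariant, then invoke the preceding lemma to get $U = U_0\otimes\C^d$, and finish with the same tensor-factorization and minimality-descent arguments. The only cosmetic difference is that you establish invariance one $g \in GL(d)$ at a time via $U + (I_n\otimes g)U$ and extend by linearity, whereas the paper applies the same dimension count directly to $U' = (I\otimes M(d,\C))U$; both versions also rely, exactly as the paper does, on the surrounding fact that $\ncrank\caA^{\{d\}} = d\,\ncrank\caA$ so that $dc$ is indeed the maximum shrinkage.
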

\begin{proof}
    Below we denote $\caA^{\{d\}}$ by $\caB$ to simplify the notation.
    First note that, by construction, $\caB$ satisfies $(I \otimes M(d, \C)) \caB = \caB$ and $\caB (I \otimes M(d, \C)) = \caB$.
    Let $W = \caB U$. Then,
    \[
        (I \otimes M(d, \C)) W = (I \otimes M(d, \C)) \caB  U = \caB U = W.
    \]
    By the previous lemma, $W = W_0 \otimes \C^d$.
    Let $U' = (I \otimes M(d, \C)) U$.
    Then, $U' \supseteq U$ and $\caB U' = \caB (I \otimes M(d, \C)) U = \caB U = W$, so
    \begin{align*}
       dc \geq \dim U' - \dim \caB U' \geq \dim U - \dim \caB U = dc.
    \end{align*}
    Thus, $\dim U' = \dim U$ and hence $U' = (I \otimes M(d, \C)) U = U$.
    By the previous lemma, $U = U_0 \otimes \C^d$.
    We have $W_0 \subseteq \caA U_0$ since $\caB = \caA \otimes M(d, \C)$.
    Thus,
    \[
     \dim U_0 - \dim \caA U_0 \geq \dim U_0 - \dim W_0 = \frac{\dim U - \dim W}{d} = c
     \]
     and $U_0$ is a $c$-shrunk subspace for $\caA$.

     Now suppose $U$ is the minimum $dc$-shrunk subspace of $\caA^{\{d\}}$. If $U'$ is a $c$-shrunk subspace of $\caA$, then $U' \otimes \C^d$ is a $dc$-shrunk subspace of $\caA^{\{d\}}$.
     Therefore, $U' \otimes \C^d \supseteq U = U_0 \otimes \C^d$ by minimality of $U$.
     Hence, $U' \supseteq U_0$. As this holds for any $c$-shrunk subspace of $\caA$, $U_0$ is minimum.
\end{proof}

We summarize our algorithm in Algorithm~\ref{alg:shrunk}.

\begin{Algorithm}
Algorithm \textsc{RandomizedShrunkSubspace}$(A_1, \dots, A_p, d, k)$:
\begin{description}
\item[\hspace{.2cm}\textbf{Input:}]
Basis $A_i \in \Z[\sqrt{-1}]^{n \times n}$ ($i = 1, \dots, p$) of $\caA$, the minimum blow-up size $d$ of $\caA$, and an upper bound $k$ of the noncommutative rank of $\caA$.
\item[\hspace{.2cm}\textbf{Output:}] The minimum $c$-shrunk subspace of $\caA$, where $c = n-\ncrank\caA$.
\item[\hspace{.2cm}\textbf{Algorithm:}]
\end{description}
\begin{enumerate}
    \item Draw random $d \times d$ matrices $X_i$ ($i = 1, \dots, p$), where each entry of $X_i$ is drawn from $\{0,1,\dots,2dk-1\}$ independently.
    \item Compute $A = \sum_{i=1}^p A_i \otimes X_i$.
    \item Compute the Wong sequence $(W_i)$ of $(A, \caA^{\{d\}})$ and compute a basis of $U^* = A^{-1}W_\infty$.
    \item Compute a basis of $U^*_0 = \{ u \in \C^n : u \otimes e_i \in U^* \text{ for $i \in [d]$} \}$ by Gaussian elimination.
    \item \textbf{output} $U_0^*$.
\end{enumerate}
\caption{Randomized algorithm for finding the minimum shrunk subspace.}\label{alg:shrunk}
\end{Algorithm}

\begin{theorem}\label{thm:shrunk}
    Algorithm~\ref{alg:shrunk} finds the minimum $c$-shrunk subspace with probability at least $1/2$.
    Furthermore, it makes $O(p d^4n^4)$ arithmetic operations over $\Q$ and the output as well as all intermediate vectors have Gaussian integer entries with magnitude at most $e^{O(n^2d^2\log(Mndk)+nd\log M)}$.
\end{theorem}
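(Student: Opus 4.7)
The plan is to verify three things in turn: (i) with probability at least $1/2$ the random matrix $A = \sum_i A_i \otimes X_i$ has rank equal to $d \cdot \ncrank \caA$, (ii) conditional on this event, the output $U_0^*$ is indeed the minimum $c$-shrunk subspace of $\caA$, and (iii) the arithmetic and bit complexity bounds hold. Steps (ii) and (iii) are near-immediate applications of the lemmas already developed; the only place where real work is needed is the Schwartz-Zippel argument in step (i).

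First I would establish (i). By the definition of the minimum blow-up size $d$ (equation~\eqref{eq:IQS}), there exist matrices $X_1^0, \dots, X_p^0 \in \C^{d\times d}$ such that $\rk(\sum_i A_i \otimes X_i^0) = d \cdot \ncrank \caA$. Consequently there is some $(d \cdot \ncrank\caA) \times (d \cdot \ncrank\caA)$ minor $P$ of $\sum_i A_i \otimes X_i$, viewed as a polynomial in the $p d^2$ entries of $X_1,\dots,X_p$, which is not identically zero. Since the entries of $\sum_i A_i \otimes X_i$ are linear in these variables, $\deg P \leq d \cdot \ncrank \caA \leq dk$. The Schwartz-Zippel lemma applied to $P$ with entries drawn uniformly from $\{0,1,\dots,2dk-1\}$ (a set of size $2dk$) shows $\Pr[P \neq 0] \geq 1 - dk/(2dk) = 1/2$. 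On this event $\rk A \geq d \cdot \ncrank \caA$, and by~\eqref{eq:IQS} the reverse inequality always holds, so $\rk A = d \cdot \ncrank \caA = \ncrank(\caA^{\{d\}})$.

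Step (ii) is then a direct chaining of the two Wong-sequence lemmas. Applying Lemma~\ref{lem:Wong-minimality} to $(A,\caA^{\{d\}})$ gives that $U^* = A^{-1} W_\infty$ is the \emph{minimum} $dc$-shrunk subspace of $\caA^{\{d\}}$, where $c = n - \ncrank \caA$. Lemma~\ref{lem:shrunk-blowup} then asserts that $U^* = U_0^* \otimes \C^d$ with $U_0^*$ the minimum $c$-shrunk subspace of $\caA$, which is precisely what the algorithm returns. The step computing $U_0^*$ from $U^*$ by Gaussian elimination is standard.

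For step (iii), I would invoke Lemma~\ref{lem:Wong-alg} applied to the pair $(A, \caA^{\{d\}})$: the ambient dimension becomes $nd$, the number of basis matrices of the matrix space becomes $pd^2$, and the entries of $A$ are bounded in absolute value by $O(pMdk)$. Plugging these into the runtime bound of Lemma~\ref{lem:Wong-alg} and absorbing constants yields the stated $O(pd^4 n^4)$ arithmetic operation count (after accounting for the Gaussian elimination in the final step, which is of lower order in an ambient space of dimension $nd$). The bit-complexity bound from Lemma~\ref{lem:Wong-alg}, applied with $M_A = O(pMdk)$ and ambient dimension $nd$, gives entries of magnitude $e^{O((nd)^2 \log(M_A nd) + nd \log M)} = e^{O(n^2 d^2 \log(Mndk) + nd \log M)}$, matching the claim. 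The main obstacle is not conceptual but is rather in the careful bookkeeping of step (i)---making sure that the degree and variable count for the Schwartz-Zippel application are expressed in terms of the user-supplied upper bound $k$ on $\ncrank \caA$ rather than on $\ncrank \caA$ itself, so that the sampling range $\{0,\dots,2dk-1\}$ is actually known to the algorithm.
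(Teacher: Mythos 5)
Your proposal is correct and follows essentially the same route as the paper's proof: the identical Schwartz--Zippel argument over $\{0,\dots,2dk-1\}$ for the rank event, correctness by chaining Lemma~\ref{lem:Wong-minimality} with Lemma~\ref{lem:shrunk-blowup}, and the complexity bounds read off from Lemma~\ref{lem:Wong-alg} applied to $(A,\caA^{\{d\}})$ with $M_A = O(pMdk)$. The minor extra care you take (phrasing the degree bound via the supplied upper bound $k$ and including the factor $p$ in $M_A$) is harmless and gets absorbed into the same stated bounds.
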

\begin{proof}
    If $A$ is rank maximum in $\caA^{\{d\}}$, the output $U_0^*$ is the minimum $c$-shrunk subspace of $\caA$ by the above discussion.
    Hence, it suffices to bound the probability that $A$ attains the maximum rank in $\caA^{\{d\}}$.
    Let $k^*$ denote $\ncrank\caA$.
    Since $\caA^{\{d\}}$ contains a constant matrix of rank $dk^*$, there exists a $dk^* \times dk^*$ nonvanishing minor in $\sum_{i=1}^p A_i \otimes X_i$, where we regard each entry of $X_i$ as an indeterminate.
    The degree of the minor (as a polynomial) is $dk^*$.
    By the Schwartz-Zippel lemma, the minor does not vanish even with probability at least $1/2$ after substituting random integers from $\{0,1,\dots, 2dk-1\}$ to the indeterminates.
    Thus, the rank of $A$ is at least $dk$ with probability at least $1/2$, which completes the proof of the validity of the algorithm.

    The running time is dominated by computing the Wong sequence of $(A, \caA^{\{d\}})$, which is $O(p d^4n^4)$ time by Lemma~\ref{lem:Wong-alg}. By Lemma~\ref{lem:Wong-alg}, the bases of the Wong sequence can be written down with Gaussian integer entries of magnitude at most $e^{O(n^2d^2\log(M_And)+nd\log M)}$. By construction we have $M_A \leq (2dk-1)M$, so this is at most $e^{O(n^2d^2\log(Mndk)+nd\log M)}$. \end{proof}

Theorem \ref{thm:round} follows.
\begin{proof}[Proof of Theorem \ref{thm:round}]
By Theorem \ref{thm:shrunk}, the smallest shrunk subspace $R^*$ has a basis with Gaussian entries of magnitude at most $e^{O(n^2d^2\log(M ndk)+nd\log M)} = e^{O(n^4\log(Mn))}$, where we use $d, k \leq n$. The orthogonal projection to $R^*$ is then proportional to a matrix with Gaussian integer entries of magnitude at most $e^{O(n^5\log(Mn))}$; see Theorem~\ref{thm:Gram-Schmidt}.
\end{proof}

\subsection{Stability of approximate shrunk subspaces}

We begin by proving a converse to Theorem \ref{thm:find-indep}, namely that approximate independent sets violating $(k,r)$ of certain dimensions imply low capacity is small. Combined with our capacity lower bounds, this implies a lower bound how close to independent a pair of subspaces violating $(k^*,r^*)$ can be.

\begin{lemma}\label{lem:cap-upper}
 Suppose $\Phi$ has size at most $1$ and $(L,R)$ is an $\eps$-independent set that violates $(k,r)$. Then
 $$\capa_{k,r}\Phi  \leq 1  +  \frac{1}{n}( 1+ \log ( n  p  \eps^2)).$$
\end{lemma}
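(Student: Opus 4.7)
The plan is to construct an explicit witness $(X,Y,z) \in \caD$ whose value of $f_{k,r}$ is small. First, I split on the definition of violates $(k,r)$. If $\dim L + \dim R > m + n - k$, then a limiting version of the construction below drives $f_{k,r}\to -\infty$, so $\capa_{k,r}\Phi = -\infty$ and the bound holds vacuously. Henceforth assume $\dim L + \dim R = m + n - k$ and $\dim R < r$; setting $\ell^\perp := m - \dim L$ and $r^\perp := n - \dim R$, one has $\ell^\perp + r^\perp = k$ and $r^\perp \geq n - r + 1$. Working in an orthonormal basis adapted to $\C^n = R^\perp \oplus R$ and $\C^m = L^\perp \oplus L$, write each Kraus operator in block form
$$A_i = \begin{pmatrix} B_i & C_i \\ D_i & E_i \end{pmatrix},$$
where $E_i = \pi_L A_i \pi_R$ satisfies $\|E_i\|_{\mathrm{op}} \le \eps$ by $\eps$-independence, hence $\|E_i\|_F^2 \leq n\eps^2$.

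For a parameter $\mu \geq 1$ to be optimized, take the witness
$$X := \mu\pi_{R^\perp} + \pi_R, \qquad Y := \mu\pi_{L^\perp} + \pi_L, \qquad z := \log \mu.$$
Since $\log X$ has eigenvalue $\log\mu$ with multiplicity $r^\perp \geq n-r+1$ and $0$ elsewhere, the weighted sum picks up at least one discounted entry of $\alpha_r$, giving
$$\alpha_r \cdot \lambda(\log X) + \mathbf 1_m \cdot \lambda(\log Y) \leq \left(r^\perp - \tfrac{r^\perp-(n-r)}{n}\right)\log\mu + \ell^\perp \log\mu \leq \bigl(k - \tfrac{1}{n}\bigr)\log\mu.$$
Combined with $-kz = -k\log\mu$, these contribute at most $-(\log\mu)/n$ to $f_{k,r}$. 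For the trace term, expanding in blocks and using $\mu \geq 1$ so that $\mu^{-2}\leq \mu^{-1}$,
$$\tr \Phi(X^{-1})Y^{-1} = \sum_i \bigl[\mu^{-2}\|B_i\|_F^2 + \mu^{-1}(\|C_i\|_F^2 + \|D_i\|_F^2) + \|E_i\|_F^2\bigr] \leq \mu^{-1} + \sum_i \|E_i\|_F^2 \leq \mu^{-1} + pn\eps^2,$$
where the middle step uses the size constraint $\sum_i \|A_i\|_F^2 = \tr \Phi(I) \leq 1$. Multiplying by $e^z = \mu$ yields $e^z \tr \Phi(X^{-1})Y^{-1} \leq 1 + \mu p n\eps^2$.

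Assembling the three contributions, $f_{k,r}(X,Y,z) \leq 1 + \mu pn\eps^2 - (\log\mu)/n$. Optimizing over $\mu \geq 1$ at $\mu^*$ of order $1/(np\eps^2)$ balances the two non-constant terms and yields the stated bound. The main technical subtlety is pinning down the exact constant inside the logarithm, which hinges on the sharpest available bound for $\sum_i \|E_i\|_F^2$; if the worst-case estimate $\|E_i\|_F^2 \leq n\eps^2$ turns out to be off by a logarithmic factor, it can be refined by using $\mathrm{rank}(E_i) \leq \min(\dim L,\dim R)$ in the Frobenius bound, or by allowing slightly different scaling exponents on the $X$ and $Y$ sides. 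Either refinement leaves the overall structure of the argument unchanged.
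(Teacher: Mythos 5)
Your witness is exactly the one the paper uses ($X=\mu\pi_{R^\perp}+\pi_R$, $Y=\mu\pi_{L^\perp}+\pi_L$, $z=\log\mu$, i.e.\ the paper's $x=t\mathbf 1_{\overline R}$, $y=t\mathbf 1_{\overline L}$, $z=t$ with $\mu=e^t$), and your main-case computation — the block expansion of $\tr\Phi(X^{-1})Y^{-1}$, the bound $\sum_i\|E_i\|_F^2\le pn\eps^2$, and the observation that $\alpha_r\cdot\lambda(\log X)+\mathbf 1_m\cdot\lambda(\log Y)-kz\le -(\log\mu)/n$ because $r^\perp\ge n-r+1$ forces at least one weight $1-1/n$ — is the paper's argument verbatim.

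The genuine problem is your treatment of the first violation case, $\dim L+\dim R>m+n-k$. You claim that there ``a limiting version of the construction drives $f_{k,r}\to-\infty$, so $\capa_{k,r}\Phi=-\infty$ and the bound holds vacuously.'' This is false for $\eps>0$: with your witness the trace term contributes $1+\mu pn\eps^2$, which blows up as $\mu\to\infty$, so $f_{k,r}$ does \emph{not} tend to $-\infty$; and an $\eps$-\emph{approximate} independent set of large size does not force $\capa_{k,r}\Phi=-\infty$ (only exact independent sets do, via Theorem~\ref{thm:indep-scaling}). For instance, $\Phi$ with the single Kraus operator $I_n/\sqrt n$ has finite $\capa_{n,0}\Phi$, yet $(\C^n,\C^n)$ is a $(1/\sqrt n)$-independent set violating $(n,0)$. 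So the lemma is not vacuous in that case and your argument as written leaves it unproved. The fix is trivial and is what the paper does: when $\dim L+\dim R>m+n-k$ one has $\ell^\perp+r^\perp\le k-1$, so the spectral terms plus $-kz$ contribute at most $-\log\mu\le-(\log\mu)/n$, and the identical optimization gives the same (indeed a better) bound — no appeal to unboundedness is needed. Separately, note that your final optimization (like the paper's own substitution $t=\log(1/(np\eps^2))$) actually yields $1+\frac1n\bigl(1+\log(n^2p\eps^2)\bigr)$, respectively $2+\frac1n\log(np\eps^2)$, rather than the stated constant; this slack is present in the paper's proof as well and is immaterial where the lemma is used (inside big-$O$ bounds in Theorem~\ref{thm:eps0}), but your hedge about ``pinning down the exact constant'' is pointing at a real, if harmless, discrepancy rather than something a sharper bound on $\|E_i\|_F$ would remove.
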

\begin{proof}
Let $\dim L = \ell'$ and $\dim R = r$. Choose orthonormal bases so that $L = \langle e_{m- \ell' + 1}, \dots, e_m \rangle $ and $R = \langle e_{n - r' + 1}, \dots, e_{n} \rangle $. In this basis the Kraus operators will take the form
$$ A _i = \begin{bmatrix} B_i & C_i \\ D_i & E_i \end{bmatrix}$$
where $E_i$ has largest singular value at most $\eps$ and hence squared Frobenius norm at most $ n \eps^2$. If we choose $X = e^{\diag(x)}, Y = e^{\diag(y)}$, the objective function is of the form
$$ f(X,Y,z) = \sum_{i,j}  e^{ - y_i  - x_j + z} \sum_s |(A_s)_{ij}|^2 + x^\downarrow \cdot \alpha_r + y \cdot \mathbf 1 - kz. $$
Now take $x =  t\mathbf 1_{\overline{R}}$, $y =  t\mathbf 1_{\overline{L}}$, and $z = t$. Observe that
$$ f(X,Y,z) \leq \sum_s \|A_s\|_F^2 + e^{t} \sum_s \|E_s\|_F^2 +  t \mathbf 1_{\overline{R}} \cdot \alpha_r + (m - \ell') t - k t.$$
Note that $\mathbf 1_{\overline{R}} \cdot \alpha_r < (n - r')$, so if $\ell' + r' > \ell + r$ then $t \mathbf 1_{\overline{R}} \cdot \alpha_r + (m - \ell') t - k t \leq -t$. If instead $\ell' + r' = \ell + r$ but $r < \ell$, then $t \mathbf 1_{\overline{R}} \cdot \alpha_r + (m - \ell') t - k t \leq ((n - r') - 1/n)t + (m - \ell')t - k t \leq - t/n$. In either case,
$$ f(X,Y,z) \leq \sum_s \|A_s\|_F^2 + e^{t} \sum_s\|E_s\|_F^2 - t/n \leq 1 + n p  \eps^2 e^{t} - t/n.$$
The last inequality follows from the assumption that $\tr \Phi(I_n) \leq 1$ and that $L,R$ is $\eps$-independent. The result follows from setting $t = \log ( 1/n p  \eps^2)$. \end{proof}
We now combine the above result with our capacity lower bounds.
\begin{theorem}\label{thm:eps0}Let $\eps_0 = \eps_0 (\Phi)$ be the minimum value of $\eps$ such that there is an $\eps$-independent set of $\Phi$ violating $(k^*, r^*).$ If $\Phi$ has Kraus operators with integer entries at most $M$, then
$$\eps_0 = e^{ - O (k^* (n + m) \log (m + n)   + n \log Mp)}$$
\end{theorem}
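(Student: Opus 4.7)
The plan is to derive matching bounds on $\capa_{k^*,r^*}\Phi=\capa^{\alpha_{r^*},\mathbf 1_m}_{k^*}\Phi$: a lower bound from Theorem~\ref{thm:cap-lb} and an upper bound obtained by plugging any $\eps$-independent violator into Lemma~\ref{lem:cap-upper}. Since $(k^*,r^*)\le(k^*,r^*)$ in lexicographic order, Theorem~\ref{thm:indep-scaling} guarantees $\capa_{k^*,r^*}\Phi>-\infty$, and the Gaussian-integer hypothesis on the Kraus operators then lets Theorem~\ref{thm:cap-lb} deliver
\[
\capa_{k^*,r^*}\Phi \;\ge\; -O\bigl(k^*(n+m)\log(m+n)+k^*\log p\bigr).
\]

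For the matching upper bound, Lemma~\ref{lem:cap-upper} requires $\Phi$ to have size at most $1$, so first I would normalize: set $c:=\tr\Phi(I_n)\le nmpM^2$ and $\tilde\Phi:=\Phi/c$, whose Kraus operators are $A_i/\sqrt c$. Any $\eps$-independent set for $\Phi$ is $(\eps/\sqrt c)$-independent for $\tilde\Phi$. For every $\eps>\eps_0$, fix such a pair $(L,R)$ violating $(k^*,r^*)$ and apply Lemma~\ref{lem:cap-upper} to $\tilde\Phi$, obtaining
\[
\capa_{k^*,r^*}\tilde\Phi \;\le\; O(1)+\tfrac{1}{n}\log\bigl(np\eps^2/c\bigr).
\]
Lemma~\ref{lem:cap-scalar} applied with scalar $1/c<1$ transfers this back to $\Phi$: using $\sum_i(\alpha_{r^*})_i\le n$, we obtain $\capa_{k^*,r^*}\Phi\le\capa_{k^*,r^*}\tilde\Phi+n\log c$.

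Chaining these upper and lower bounds on $\capa_{k^*,r^*}\Phi$ and solving for $\eps$ (using $c\le nmpM^2$ and absorbing the subdominant $\log c$ and $k^*\log p$ contributions into the $n\log Mp$ summand) yields
\[
\eps \;\ge\; \exp\bigl(-O\bigl(k^*(n+m)\log(m+n)+n\log Mp\bigr)\bigr),
\]
and taking the infimum over $\eps>\eps_0$ gives the same lower bound on $\eps_0$ itself.

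The main bookkeeping obstacle is reconciling the two nearly incompatible hypotheses---the size-at-most-$1$ requirement of Lemma~\ref{lem:cap-upper} and the Gaussian-integer requirement of Theorem~\ref{thm:cap-lb}---and controlling the loss from the intermediate rescaling by $c$ so that only a $\log Mp$ factor (rather than a higher power of $\log M$) survives in the final exponent. One also has to verify, in invoking Lemma~\ref{lem:cap-upper}, that the violating pair $(L,R)$ actually forces the $-t/n$ coefficient on the linear-in-$t$ penalty term in the objective (rather than the stronger $-t$ coefficient arising from a strict dimensional violation), since this coefficient is what determines the leading $n$-dependence in the exponent.
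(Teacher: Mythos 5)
Your proposal matches the paper's proof of Theorem~\ref{thm:eps0} essentially step for step: finiteness of $\capa_{k^*,r^*}\Phi$ (via Theorem~\ref{thm:indep-scaling}), the capacity lower bound from Theorem~\ref{thm:cap-lb}, the upper bound from Lemma~\ref{lem:cap-upper} after normalizing $\Phi$ to size at most $1$, Lemma~\ref{lem:cap-scalar} to transfer back to the unnormalized operator, and then solving for $\eps$. The only cosmetic difference is that you divide by $c=\tr\Phi(I_n)$ and track $(\eps/\sqrt{c})$-independence whereas the paper multiplies by $\tr\Phi(I_n)^{-1}\le 1$ and simply keeps $\eps$-independence; the remaining bookkeeping (absorbing the $\log c$ and $k^*\log p$ contributions using $k^*\le n$) is identical to the paper's.
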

\begin{proof}
Suppose there is an $\eps$-independent set violating $(k^*,r^*)$. We rescale $\Phi$ so that Lemma \ref{lem:cap-upper} applies. By the Gaussian integer assumption, $c = \tr \Phi(I_n)^{-1} \leq 1.$ Thus $(L,R)$ remains an $ \eps$-independent set for $c \Phi$.

From Lemma \ref{lem:cap-upper} we have $\capa_{k^*,r^*}c \Phi  \leq 1  +  \frac{1}{n}( 1+ \log ( n p \eps^2)).$ Therefore from Lemma \ref{lem:cap-scalar} we have
$$ 1  +  \frac{1}{n}( 1+ \log ( np \eps)) \geq  \capa_{k^*,r^*}\Phi + n \log c.$$
Note that $\log c \geq - \log (M^2nm p)$.
Rearranging the terms and plugging in Theorem \ref{thm:cap-lb} yields
$$ \eps \geq \frac{1}{np} e^{n (\capa_{k^*,r^*}\Phi + n \log c - 1) - 1} = e^{ - O (k^* (n + m) \log (m + n) + k^* \log p + n \log (M^2 p n m))}. $$
Rearranging terms and using $k^* \leq n$ completes the proof.\end{proof}

The bound on $\eps_0$ from Theorem \ref{thm:eps0} together with the next theorem implies Theorem \ref{thm:close}.

\begin{theorem} Assume $\|A_i\|_F \leq 1$ for all $i$, and that $(L,R)$ is an $\eps$-approximate independent set of size $q := \dim L^* + \dim R^*$ such that $\dim R \leq  \dim R^*$ and $\eps \leq \eps_0$. Then
$$\|\pi_{L} - \pi_{L^*}\|_2, \|\pi_{R} - \pi_{R^*}\|_2  = O( \eps_0^{- 2n - 1} \eps).$$\end{theorem}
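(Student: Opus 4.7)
My plan is to prove the theorem by contradiction: assume $\|\pi_R - \pi_{R^*}\|_2 > C\eps_0^{-2n-1}\eps$ for a suitable constant $C$, and construct another $\eps'$-independent set $(L', R')$ that violates $(k^*, r^*)$ with $\eps' < \eps_0$, contradicting the definition of $\eps_0$ as the minimum such parameter.

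First I would reduce to the case $\dim R = \dim R^*$ and $\dim L = \dim L^*$. Since $\dim L + \dim R = q$ and $\dim R \leq \dim R^*$, if the inequality were strict then $(L,R)$ would itself violate $(k^*, r^*)$, forcing $\eps \geq \eps_0$ and hence $\eps = \eps_0$ by hypothesis; in this boundary case the stated bound is trivial, since $\eps_0^{-2n-1}\eps_0 \gg 1$ once $\eps_0$ is small, while projections differ in norm by at most $2$.

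The central construction is an approximate analog of the modular-lattice meet. Consider the SVD of $\pi_R\pi_{R^*}$ restricted to $R^*$, with singular values $\sigma_1 \geq \cdots \geq \sigma_r$ (cosines of the principal angles between $R$ and $R^*$) and right singular vectors $v_1, \dots, v_r \in R^*$. For a threshold $\eta = \Theta(\delta^2)$ with $\delta = \|\pi_R - \pi_{R^*}\|_2$, at least one index satisfies $\sigma_i < 1 - \eta$; I take $R'$ to be the span of the $v_i$ with $\sigma_i \geq 1 - \eta$, so that $\dim R' < \dim R^*$ and every unit $w \in R'$ satisfies $\|w - \pi_R w\| \leq \sqrt{2\eta}$. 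On the $L$-side I take $L' := L + L^*$. A parallel SVD analysis, together with the coupling between $L$- and $R$-side principal angles enforced by approximate independence, shows $\dim L' + \dim R' \geq q$.

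To verify $(L', R')$ is $\eps'$-independent, fix unit vectors $v \in L'$ and $w \in R'$, and decompose $v = v_1 + v_2$ with $v_1 \in L$, $v_2 \in L^*$ of minimal norm. By the exact independence of $(L^*, R^*)$, $w^\dagger A_i v_2 = 0$. Splitting $w = \pi_R w + (w - \pi_R w)$ and using $\|A_i\|_F \leq 1$ together with the $\eps$-independence of $(L, R)$ yields
\[
|w^\dagger A_i v| = |w^\dagger A_i v_1| \leq (\eps + \sqrt{2\eta})\,\|v_1\| \leq \frac{\eps + \sqrt{2\eta}}{\sin\theta_F(L, L^*)},
\]
where $\theta_F$ is the Friedrichs angle controlling $\|v_1\|/\|v\|$. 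Hence $\eps' = O((\eps + \sqrt{\eta})/\sin\theta_F(L, L^*))$, and since $(L', R')$ violates $(k^*, r^*)$ the inequality $\eps' \geq \eps_0$ rearranges to an upper bound on $\delta$.

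The hard part will be extracting the specific exponent $2n+1$. I expect that $2n$ factors arise from iterating the meet construction up to $n$ times (once per bad direction) on both the $L$- and $R$-sides, with each iteration multiplying the error by $1/\eps_0$; the remaining factor of $\eps_0^{-1}$ comes from absorbing a final Friedrichs-angle estimate. Making this rigorous requires a quantitative perturbation analysis of the exact modular-lattice identity $\dim(L + L^*) + \dim(R \cap R^*) = q$, carefully tracking how $L$- and $R$-side principal angles are coupled by approximate independence. An essential subtask is to lower-bound $\sin\theta_F(L, L^*)$ in terms of $\eps_0$, which I would handle via the capacity lower bound in Theorem~\ref{thm:cap-lb}, ensuring that $(L^*, R^*)$ is well-isolated from other approximate independent sets. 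A symmetric argument using the SVD on the $L$-side establishes the companion bound on $\|\pi_L - \pi_{L^*}\|_2$.
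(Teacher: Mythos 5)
Your high-level plan (perturb the lattice join/meet, produce a new approximate independent set violating $(k^*,r^*)$, and contradict the minimality of $\eps_0$) is indeed the paper's strategy, but the execution has a genuine gap at exactly the point the paper identifies as the crux. Your independence estimate for $(L',R')=(L+L^*,R')$ carries the factor $1/\sin\theta_F(L,L^*)$, and you have no lower bound on this Friedrichs angle: since $(L,R)$ is an approximate independent set that is (as the theorem asserts) $O(\eps)$-close to $(L^*,R^*)$, the smallest nonzero principal angle between $L$ and $L^*$ is typically of order $\eps$, not of order $\eps_0$, so $\eps'=O((\eps+\sqrt{\eta})/\sin\theta_F)$ can be far larger than $\eps_0$ and no contradiction follows. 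Your proposed fix --- lower-bounding $\sin\theta_F(L,L^*)$ via the capacity lower bound of Theorem~\ref{thm:cap-lb} --- cannot work, because capacity bounds constrain extremal objects like $(L^*,R^*)$, not the geometry of an arbitrary nearby approximate subspace $L$; indeed any tiny generic perturbation of $L^*$ already has arbitrarily small Friedrichs angle with $L^*$. The paper's proof is designed precisely to manufacture the missing angle gap: it applies the capacity-preserving diagonal rescaling $\tilde A_i = X(t)A_iY(t)$ of Lemma~\ref{lem:same-cap}, chooses $t=\delta_p^{-1/2}$ at an index where the sorted principal-angle sines $\delta_1\le\cdots\le\delta_q$ (both sides pooled, $q\le 2n$) have a multiplicative gap $\delta_{p-1}/\delta_p\le 0.01\eps_0$ with $\delta_p\ge(0.01\eps_0)^{2n}\delta$ (pigeonhole --- this single step, not an $n$-fold iteration, is where the exponent $2n$ comes from), and splits the rescaled subspaces into a ``far'' part $H$ making angle at least $45^\circ$ with $L^*$ and a ``near'' part $H'$ within $O(\eps_0)$ of $L^*$, so the troublesome $\|v_1\|/\|v\|$ ratio is bounded by a universal constant rather than $1/\sin\theta_F$.

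A second unsupported step is your dimension count: the claim $\dim L'+\dim R'\ge q$ via a ``coupling between $L$- and $R$-side principal angles enforced by approximate independence'' is asserted but not proved, and approximate independence does not obviously force such a coupling. The paper avoids needing any such statement by building \emph{two} complementary pairs, $(L^*+H,\,K')$ and $(H',\,R^*+K)$, whose sizes sum to $2q$ (using $H\cap L^*=\{0\}$ and $K\cap R^*=\{0\}$ from the angle bounds); then either one pair has size exceeding $q$, or both have size $q$ and $\dim(L^*+H)>\dim L^*$ forces $\dim K'<\dim R^*$, so in every case some pair violates $(k^*,r^*)$ and Theorem~\ref{thm:eps0} yields $10\,\delta^{-1}\eps_0^{-2n}\eps\ge 0.5\,\eps_0$, i.e.\ $\delta=O(\eps_0^{-2n-1}\eps)$. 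To repair your argument you would need both ingredients: the rescaling-plus-gap device (or some substitute guaranteeing a spectral gap in the principal angles) and the two-pair accounting (or a proof of your coupling claim); as written, the contradiction cannot be reached.
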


\begin{proof}
The idea is to find new approximate independent sets as follows: write $L = H \oplus H'$ and $R = K \oplus K'$, and hope that $(L^* + H, K')$ and $(H' , R^* + K)$ are $\eps'$-approximate independent sets for $\eps' \leq \eps_0$. One can check that the independent sets sizes sum to $2q$ provided $H \cap L^*, K \cap R^*$ are empty, so either one is strictly larger than $q$ or both are of size $q$ and $\dim (L^* + H) > \dim(L^*)$ and hence $\dim K' < \dim R^*$. Unfortunately, this only works if every element in $H$ is reasonably far from $L^*$ and every element of $H'$ is very close to $L^*$ and the analogous property holds for $K,R$. There's no reason to expect such a gap.

In order to fix this, we modify $A_i$, $L$ and $R$ so that there exist subspaces $H,K$ with this property. Choose a basis such that the Kraus operators $A_i$ are written
$$ A _i = \begin{bmatrix} B_i & C_i \\ D_i & 0 \end{bmatrix}$$
and the dominant independent set corresponds to the $0$ block. Define $X(t) = \diag(t^{-1} \mathbf 1_{m - \ell}  \oplus t \mathbf 1_{\ell})$ and $Y(t) = \diag(t^{-1} \mathbf 1_{m - r}  \oplus t \mathbf 1_{r})$, and let the operator $\tilde{\Phi}$ have Kraus operators
$$ \tilde A _i := X(t) A_i Y(t) =   \begin{bmatrix}  t^{-2} B_i & C_i \\ D_i & 0\end{bmatrix}.$$
By Lemma \ref{lem:same-cap} below we have $\capa_{k,r} \tilde{\Phi} = \capa_{k,r} \Phi$. Clearly $(L^*,R^*)$ is still an independent set for $\tilde \Phi$ and $\Phi$, and because $\capa_{k,r} \tilde \Phi > -\infty$ it is also dominant.

We take $\tilde L = X(t)^{-1} L$ and $\tilde R = Y(t)^{-1} R$. First note that $(\tilde L, \tilde R)$ is a $t^2 \eps$-independent set of $\tilde{\Phi}$: for any unit vectors $v \in L, w \in R$ we have
\begin{align*}
\frac{(X(t)^{-1} v)^\dagger \tilde A_i (Y(t)^{-1} w)}{\|X(t)^{-1} v\| \|Y(t)^{-1} w \|}
\leq t^{-2} \| v^\dagger A_i w\| \leq t^{2} \eps.
\end{align*}
For $t$ large, $X(t)^{-1}$ has the effect of moving vectors in $L$ that are not in $L^*$ closer to $(L^*)^\perp$.
Setting $t$ large tends to make more orthogonal with $L^*$. If the distance from $v$ to $L^*$ is $\delta$, then the distance from $X(t)^{-1} v/\|X(t)^{-1} v\|^2$ to $L^*$ is
$$
\frac{t\delta}{ \sqrt{t^2 \delta^2 + (1 - \delta^2) t^{-2}}},
$$
as can be seen from writing $v = \pi_{L^*} v + (I - \pi_{L^*}) v$.
We will choose $t$ carefully so that vectors that are very close to $L^*$ stay close, and vectors that are not close get mapped far from $L^*$.
Let $\delta_i$ denote the singular values of the projections $L \to (L^*)^\perp$ and $R \to (R^*)^\perp$ in increasing order (so there are $\ell + r = q$ total of them).
More specifically, $\delta_1 \leq \dots \leq \delta_{q}$ are the singular values of $Q_{(L^*)^\perp}^\dagger Q_L$ and $Q_{(R^*)^\perp}^\dagger Q_R$, where $Q_L$ denotes a matrix with $Q_L^\dagger Q_L = I$ and $\Img Q_L = L$, and the others are defined similarly.
By the $\sin(\Theta)$ theorem, they are the sine of principal angles between $L$ and $L^*$ (resp. $R$ and $R^*$).
Let $\delta := \max \delta_i$.
This is the sine of the maximum principal angle, so we have $\delta = \max\{\norm{\pi_{L^*} - \pi_L}_2, \norm{\pi_{R^*} - \pi_R}_2\}$.

Because $q \leq 2n$, there is some $\delta_p \geq ( 0.01\eps_0)^{2n} \delta $ such that $\delta_{p - 1}/\delta_{p} \leq 0.01\eps_0. $ where we define $\delta_0:= 0$. Let $S$ denote the sets of indices where $\delta_i$ came from $L$. Without loss of generality assume $\delta$ belongs to $L$, i.e. $q\in S$. Set $t = \delta_p^{-1/2}$.

For $i \in S$, write the $i$th singular vector $v_i \in L$ as $v_i = \sqrt{1 - \delta_i^2}  x_i + \delta_i y_i$ for unit vectors $x_i \in L^*, y_i \in (L^*)^\perp$.
Note that $v_i \in L$ is the $i$th principal vector, and $x_i \in L^*$ and $y_i \in (L^*)^\perp$ are corresponding the principal vectors of $L^*$ and $(L^*)^\perp$, respectively.
Moreover, because they are right singular vectors, $y_i$ are orthogonal for $i \in S$ and hence $x_i$ are as well. Then $\tilde{L}$ will be spanned by the vectors
$$w_i = X(t)^{-1} v_i = t^{-1} \sqrt{1 - \delta_i^2} x_i + t \delta_i y_i.$$
The $w_i$ will remain orthogonal for all $t \in (0,\infty)$, and normalizing each to a unit vector yields
$$\hat{w}_i = \frac{t^{-1} \sqrt{1 - \delta_i^2}}{ \sqrt{(1 - \delta_i^2) t^{-2} + \delta_i^2 t^2 }} x_i + \frac{t \delta_i}{{ \sqrt{(1 - \delta_i^2) t^{-2} + \delta_i^2 t^2 }}} y_i.$$
Note that the second coefficient is a monotone increasing function of $t \in (0,\infty)$ and $\delta_i \in (0,1)$ and vice versa for the first.
Take
\begin{alignat*}{2}
    H &= \langle \hat{w}_i: i \geq p, i \in S\rangle, & \quad  H' &= \langle \hat{w}_i: i < p, i \in S\rangle \\
    K &= \langle \hat{w}_i: i \geq p, i \not \in S\rangle, & \quad  K' &=  \langle \hat{w}_i: i < p, i \not \in S\rangle.
\end{alignat*}
Obviously, we have $\tilde L = H \oplus H'$ and $\tilde R = K \oplus K'$.
The inner product between $\hat{w}_i \in H$ and any unit vector in $L^*$ is at most
\begin{align}\label{eq:L^*-H}
\frac{\delta_p^{1/2} \sqrt{1 - \delta_p^2}}{ \sqrt{(1 - \delta_p^2)\delta_p + \delta_p }} \leq \frac{1}{\sqrt{2}}.
\end{align}
by our choice of $t$ and $H$.
In particular, this implies $L^* \cap H = \{\bfzero\}$.
On the other hand, for any unit vector in $H'$, the inner product with any unit vector in $L^*$ is at most
\begin{align}\label{eq:L^*-H'}
\frac{\delta_p^{-1/2} \delta_i}{{ \sqrt{(1 - \delta_i^2) \delta_p + \delta_i^2 \delta_p^{-1} }}} \leq \frac{0.01 \delta_p^{1/2}  \eps_0}{{ \sqrt{.5 \delta_p + 0.01 \eps_0 \delta_p }}} \leq 0.03  \eps_0.
\end{align}

The analogous inequalities hold for $K, K'$.
Now we can check the quality of our independent sets. We prove it for $(L^* + H, K')$; the proof for $(H', R^* + K)$ follows mutatis mutandis. Let $v \in L^*, h \in H$ and let $w + z \in K'$ be a unit vector with $w \in R^*$ and $\|z\| \leq 0.03\eps_0$ as guaranteed by the upper bound~\eqref{eq:L^*-H'} of inner product between $K'$ and $R^*$. Then
\begin{align*}
(v + h)^\dagger \tilde A_i (w + z)
&= v^\dagger \tilde A_i w +  h^\dagger \tilde A_i (w + z) + v^\dagger \tilde A_i z \\
&\leq 0 + \delta_p^{-1} \eps \|h\| + 0.03 \eps_0  \|v\|\\
&\leq (\delta^{-1} \eps_0^{-2n} \eps + 0.03 \eps_0) (\| h\| + \|v\|) \| w + z\|.
\end{align*}
The first inequality used that $(L^*,R^*)$ is independent and that $(\tilde{L}, \tilde{R})$ is $t^2\eps = \delta_p^{-1}\eps $-independent for $\tilde \Phi$.
It remains to bound $\| h\| + \|v\|$ by $\| h + v\|.$
But by our upper bound~\eqref{eq:L^*-H} of inner product,
\begin{align*}
\|h + v\|^2 &\geq \|h\|^2 + \|v\|^2 - 2 |\langle h, v \rangle|  \geq \|h\|^2 + \|v\|^2 - \sqrt{2} \|h\| \|v\| \\
&\geq 2 (\frac{1}{2} (\|h\| + \|v\|))^2 - \sqrt{2}( \frac{1}{2}(\|h \| + \|v\|))^2
\geq \frac{1}{4} (2 - \sqrt{2}) (\|h\| + \|v\|)^2
\end{align*}
by two applications of Jensen's inequality. %

Putting this all together, we find that $(L^* + H, K')$ and $(H', R^* + K)$ are $(10 \delta^{-1} \eps_0^{-2n} \eps  + .5 \eps_0)$-independent sets of size $q$. Because $H$ was nonempty and $L^* \cap H = \{\bfzero\}$, we have $\dim (L^* + H) > \dim L^*$. Thus, by Theorem~\ref{thm:eps0}, we must have
$10 \delta^{-1} \eps_0^{-2n} \eps  \geq .5 \eps_0,$ or $\delta  = O( \eps_0^{ - 2n - 1} \eps)$,
or else there is some $< \eps_0$-independent set $(L',R')$ with size strictly bigger than $q$ or of size $q$ with $\dim R' < \dim R^*$.
\end{proof}

\begin{lemma} \label{lem:same-cap}
If $\Phi$ has Kraus operators with block decomposition
$$ A _i = \begin{bmatrix} B_i & C_i \\ D_i & 0 \end{bmatrix}$$
where $E_i$ is $\ell \times r$ for $\ell + r = m + n - k$, and $\tilde \Phi$ has Kraus operators
$$ \tilde A _i = \begin{bmatrix} t^{-2} B_i & C_i \\ D_i & 0 \end{bmatrix},$$
for $t > 0$ then $\capa_{k,r} \tilde \Phi = \capa_{k,r} \Phi$.
\end{lemma}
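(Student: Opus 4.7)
My plan is to prove this by an explicit change of variables that transforms feasible points of one optimization problem into feasible points of the other with matching objective value. The starting observation is that $\tilde A_i = X(t) A_i Y(t)$, where
$X(t) := \diag(t^{-1}I_{m-\ell},\, tI_\ell)$ and $Y(t) := \diag(t^{-1}I_{n-r},\, tI_r)$
are block-diagonal positive definite matrices whose block structure matches the vanishing $\ell \times r$ corner of each $A_i$. In particular $\tilde\Phi = \Phi_{X(t),Y(t)}$ is a triangular scaling of $\Phi$, so Lemma \ref{lem:change-vars} already controls how $\capa_{\mu,\nu}$ transforms, and our task is to upgrade that to the majorized capacity $\capa_{k,r}$.

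I would set $\hat X := Y(t)^{-1} X Y(t)^{-1}$ and $\hat Y := X(t)^{-1} Y X(t)^{-1}$, and also shift $z \mapsto z + c(t)$ for a suitable $c(t) = O(\log t)$. A direct computation using cyclicity of the trace yields $\tr \tilde\Phi(X^{-1}) Y^{-1} = \tr \Phi(\hat X^{-1}) \hat Y^{-1}$, so the trace term is preserved pointwise. The remaining terms $\alpha_r\cdot\lambda(\log X) + \mathbf 1_m \cdot \lambda(\log Y) - kz$ then differ from $\alpha_r\cdot\lambda(\log \hat X) + \mathbf 1_m \cdot \lambda(\log \hat Y) - k(z+c(t))$ by a quantity linear in $\log t$. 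The arithmetic of this discrepancy combines contributions $4(n-r)\log t$ from the $X$-term, $4(m-\ell)\log t$ from the $Y$-term, and $-k c(t)$ from the shift; using the hypothesis $\ell + r = m + n - k$ one finds $4(n-r) + 4(m-\ell) = 4k$, so choosing $c(t) := 4\log t$ forces exact cancellation. Restricting $t \geq 1$ then gives a map $\caD \to \caD$ preserving the objective value, hence $\capa_{k,r}\tilde\Phi \geq \capa_{k,r}\Phi$; applying the same argument with $t$ replaced by $1/t$ (which swaps the roles of $\Phi$ and $\tilde\Phi$) yields the opposite inequality.

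The main obstacle is that $X(t)$ and $Y(t)$ do not commute with a general $X, Y \succ 0$, so the identity $\log(Y(t)\hat X Y(t)) = \log \hat X + 2\log Y(t)$ required for the cancellation in $\alpha_r \cdot \lambda(\log X)$ does \emph{not} hold pointwise. To handle this I would first reduce to the case where $X, Y$ are block-diagonal with respect to the partitions $(n-r,r)$ and $(m-\ell,\ell)$: symmetrizing $X$ over the block-diagonal unitary group $U(n-r)\times U(r)$ leaves $\alpha_r \cdot \lambda(\log X)$ unchanged (by unitary invariance of the spectrum) and only decreases $\tr\tilde\Phi(X^{-1})Y^{-1}$ (by operator convexity of $X \mapsto X^{-1}$ combined with the fact that $\tilde\Phi$ has the matching block structure), so the infimum is already attained along block-diagonal matrices. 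For such $X, Y$, the commutation holds, $\log$ and eigenvalue sorting behave linearly across blocks, and the cancellation above becomes entirely elementary.
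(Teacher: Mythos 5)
Your change of variables is the right first observation ($\tilde A_i = X(t) A_i Y(t)$ with $X(t),Y(t)$ block-diagonal), but the two steps you lean on to turn it into a proof both fail, and the failure is not cosmetic. First, the reduction to block-diagonal $X,Y$ is unsupported: an operation cannot simultaneously be a Haar average over $U(n-r)\times U(r)$ (which is what you need to produce a block-diagonal matrix) and preserve the spectrum of $X$ --- averaging $\int U X U^\dagger\,dU$ destroys $\lambda(\log X)$, while conjugation by a single block unitary preserves the spectrum but does not make $X$ block-diagonal. Likewise, the claimed monotonicity of the trace term under symmetrization would require $\tilde\Phi$ to be equivariant under the block unitary group, and the zero corner of the Kraus operators gives no such equivariance ($U' \tilde A_i U$ is not related to $\tilde A_i$ in any useful way), so operator convexity of $X\mapsto X^{-1}$ alone does not yield the comparison. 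Second, even granting block-diagonal $X = X_1\oplus X_2$, the ``exact cancellation'' is false pointwise because $\alpha_r$ is not a constant vector: after rescaling one block, the interleaving of the eigenvalues of the two blocks changes, so $\alpha_r\cdot\lambda(\log \hat X)$ differs from $\alpha_r\cdot\lambda(\log X)$ plus an $X$-independent multiple of $\log t$ by a term of order $\tfrac1n$ times the change in the sum of the $r$ smallest log-eigenvalues. This error has a favorable sign in one direction, so (with the corrected substitution $\hat X = t^2 Y(t)^{-1}XY(t)^{-1}$, $\hat Y = t^2 X(t)^{-1}YX(t)^{-1}$, $\hat z = z+4\log t$, which is what your arithmetic $4(n-r)+4(m-\ell)=4k$ implicitly assumes) one could hope for $\capa_{k,r}\Phi \le \capa_{k,r}\tilde\Phi$; but the reverse direction cannot be obtained by ``replacing $t$ by $1/t$'': that map leaves the domain $\caD$ (it scales blocks below $I$, or forces $z<0$), and the natural repair --- scaling up the complementary blocks, of sizes $r$ and $\ell$ --- destroys the cancellation, since $4(r+\ell)=4(m+n-k)\neq 4k$ in general. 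So the pointwise-bijection strategy cannot deliver both inequalities.

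This is in fact the essential content of the lemma: $\capa_{k,r}$ is \emph{not} invariant under scalings in any pointwise sense, and the equality here is an asymptotic statement. The paper's proof does not match feasible points at all; it uses Lemma \ref{lem:ncrank-decomp} to exhibit a single one-parameter family $(X(2\gamma)\cdot(X_1\oplus X_2),\,Y(2\gamma)\cdot(Y_1\oplus Y_2),\,2\gamma)$ of feasible points that is asymptotically optimal for $\Phi$, observes that the linear terms cancel exactly along this family (this is where $\ell+r=m+n-k$ enters), and notes that along it the scalings $\Phi_{e^{\gamma}X(-\gamma),Y(-\gamma)}$ and $\tilde\Phi_{e^{\gamma}X(-\gamma),Y(-\gamma)}$ converge to the same operator (the $B_i$ block is damped out), so the same family is asymptotically optimal for $\tilde\Phi$ and the two infima coincide. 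To salvage your approach you would have to replace the symmetrization and the exact pointwise cancellation by some such limiting argument, at which point you are essentially reproducing the paper's proof.
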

\begin{proof}We show that a family of $\eps$-minimizers to both objective functions asymptotically achieves the same function value. Let $X(\gamma) := e^{\gamma \diag \mathbf 1_{n -r}}$, $Y(\gamma) := e^{\gamma \diag \mathbf 1_{m - \ell}}$. We have seen (Lemma \ref{lem:ncrank-decomp}) that there are $X_1, X_2, Y_1, Y_2$ such that $ (X,Y,z) := X(2\gamma) \cdot (X_1 \oplus X_2), Y(2\gamma) \cdot  (Y_1 \oplus Y_2), 2\gamma $
achieves $f_{k,r} (X,Y,z) \leq \capa_{k,r} \Phi + \eps$ as $\gamma \to \infty$. We also have $f_{k,r} \geq \capa_{k,r}(\Phi)$ because $(X,Y,z)$ is feasible. Note that $X,Y$ are Hermitian because the block diagonal matrices commute with $X(\gamma), Y(\gamma)$ resp. One checks that for any $\gamma$,
\begin{align*} f_{k,r} (X,Y,z) &= \tr (Y_1 \oplus Y_2)^{-1} \Phi_{e^{\gamma} X(- \gamma) ,  Y(- \gamma)} ((X_1 \oplus X_2)^{-1}) \\
& + \alpha_r \cdot \lambda \log X(2\gamma) + \alpha_r \cdot \lambda (X_1 \oplus X_2)\\
& + \log\det(Y(2\gamma)) + \log\det (Y_1 \oplus Y_2) - 2 k \gamma.
\end{align*}
Here we have used that $X(\gamma), (X_1 \oplus X_2)$ commute. Some terms cancel:
$$ \alpha_r \cdot \lambda \log X(2\gamma) + \log\det(Y(2\gamma)) - 2k \gamma = 2\gamma ( n -r + m - \ell - k) = 0. $$
Therefore
$$f_{k,r} (X,Y,z) = \tr (Y_1 \oplus Y_2)^{-1} \Phi_{e^{\gamma} X(- \gamma) , Y(- \gamma)} ((X_1 \oplus X_2)^{-1}) +  \alpha_r \cdot \lambda (X_1 \oplus X_2) + \log\det (Y_1 \oplus Y_2).$$
By Lemma \ref{lem:ncrank-decomp}, the same $(X,Y,z)$ are also $\eps$-minimizers for the objective function for $\tilde{\Phi}$, with the expression
$$ \tr (Y_1 \oplus Y_2)^{-1} {\tilde \Phi}_{e^{\gamma} X(- \gamma) , Y(- \gamma)} ((X_1 \oplus X_2)^{-1}) +  \alpha_r \cdot \lambda (X_1 \oplus X_2) + \log\det (Y_1 \oplus Y_2).$$
But $\Phi_{e^{\gamma} X(- \gamma) , Y(- \gamma)}$ and ${\tilde \Phi}_{e^{\gamma} X(- \gamma) , Y(- \gamma)}$ tend to the same operator as $\gamma \to \infty$ - namely the one with Kraus operators
$$ \begin{bmatrix} 0 & C_i \\ D_i & 0 \end{bmatrix}.$$
Thus the function values are asymptotically the same. Letting $\eps \to 0$ completes the proof.
\end{proof}

\subsection{Proof of correctness of \textsc{RoundSubspaces}}

\begin{proof}[Proof of Theorem \ref{thm:roundsub-correct}]
Theorem \ref{thm:decision-correct} implies each call to \textsc{DecisionSinkhorn} produces correct results and takes
$O((m+n)^2(m+n+p)(k(n + m)^2 \log (m + n) + n \log M))$
operations. The number of operations taken for the calls to \textsc{DecisionSinkhorn} is dominated by the calls in the binary search for $r^*$, and there are $O(\log n)$ such calls. Hence the total arithmetic operations here is at most
$O((m+n)^2(m+n+p)(k(n + m)^2 \log^2(m + n) + n \log M \log n))$.

By Theorem~\ref{thm:indep-correct}, \textsc{ApproxIndep} is correct and takes
$O(k^2 n^3 (m+n)^3 (m+n+p) \log (n k/\eps)) = O(n^5 (m+n)^3(m+n+p) \log(n/\eps))$
operations, where
$\eps = 1/2 M_0 M_1^2 = e^{-O((mn^2+n^3)\log(m + n) + n^5 \log (Mnp))}$.
Hence the total number of operations is
$O(n^7 (m+n)^3 (m+n+p)((m+n)\log(m+n) + n^3\log(Mnp)))$.

The application of  \textsc{ApproxIndep} produces an $\eps$-independent pair $(L,R)$ violating $(k^*, r^* + 1)$ for $\eps = 1/2 M_0 M_1^2$. It remains to show that our rounding procedure is correct, i.e. that after rounding $\pi_R$ we obtain $\pi_{R^*}$. By Theorem \ref{thm:round}, there is some number $0 < q \leq M_1 = e^{O( n^5 \log (Mn))}$ such that every entry of $\pi_{R^*}$ takes the form $p/q$ for $p \in \Z$. On the other hand, by Theorem \ref{thm:close} we know that $\|\pi_{R} - \pi_{R^*}\|_2  \leq M_0 \eps,$ so by Lemma \ref{lem:entries} every entry $\alpha$ of $\pi_{R}$ differs from the corresponding entry $p/q$ of $\pi_{R^*}$ by at most $M_0 \eps$. By our choice of $\eps$, $M_0 \eps \leq 1/2M_1^2$. If we round $\alpha$ to the nearest rational number $a/b$ satisfying $b \leq M_1$, we must have $|a/b  - \alpha | \leq M_0\eps$. But by our choice of $\eps$ we have $M_0 \eps < 1/2M_1^2$, and so $|a/b - p/q| < 1/M_1^2$. Hence we must have $a/b = p/q$. The rounding takes time $O(\log M_1) = O( n^5 \log (Mn))$, which is dominated by the other steps.
\end{proof}
\begin{lemma}\label{lem:entries} Let $A \in \C^{n\times n}$ be a Hermitian matrix. Then $\|A\|_2 \geq \sup_{i,j \in [n]} |A_{ij}|.$
\end{lemma}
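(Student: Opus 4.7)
The statement is elementary and does not actually require Hermiticity, so I would prove the inequality $|A_{ij}| \leq \|A\|_2$ directly entry by entry and then take the supremum over $(i,j)$. The plan is to use the variational characterization of the operator norm as $\|A\|_2 = \sup_{\|u\|=\|v\|=1} |u^\dagger A v|$, applied with $u = e_i$ and $v = e_j$ the standard basis vectors.

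Concretely, first I would observe that $A_{ij} = e_i^\dagger A e_j$ for every $i,j \in [n]$. Next, I would apply the Cauchy--Schwarz inequality to get
\[
|A_{ij}| = |e_i^\dagger (A e_j)| \leq \|e_i\| \cdot \|A e_j\| \leq \|e_i\| \cdot \|A\|_2 \cdot \|e_j\| = \|A\|_2,
\]
using $\|e_i\| = \|e_j\| = 1$ and the definition of the operator norm as the largest ratio $\|Ax\|/\|x\|$. Finally, taking the supremum over $i,j \in [n]$ yields the claimed inequality.

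The ``main obstacle'' is essentially nonexistent: there is no step here that requires more than one line, and Hermiticity of $A$ is not used anywhere. The lemma is included purely as a convenient pointer for the rounding argument in the proof of Theorem~\ref{thm:roundsub-correct}, where one wants to conclude that a small operator-norm deviation $\|\pi_R - \pi_{R^*}\|_2$ forces each entry of $\pi_R$ to be close to the corresponding entry of $\pi_{R^*}$, which is exactly the entrywise bound produced above.
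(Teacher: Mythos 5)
Your proof is correct, and it takes a genuinely different (and simpler) route than the paper. The paper first invokes eigenvalue interlacing to reduce to the $2\times 2$ principal submatrix containing $A_{ii},A_{ij},A_{ji},A_{jj}$, and then handles that case by hand: the diagonal entries via $\|A\|_2 \geq e_i^\dagger A e_i$, and the off-diagonal entry via the identity $\lambda_1(A)^2+\lambda_2(A)^2 = A_{11}^2+A_{22}^2+2|A_{12}|^2$, noting that the larger eigenvalue squared is at least half of this. Your argument bypasses both steps: writing $A_{ij}=e_i^\dagger A e_j$ and applying Cauchy--Schwarz together with the definition of the operator norm gives $|A_{ij}|\leq \|A\|_2$ in one line, with no use of Hermiticity and no reduction to the $2\times 2$ case. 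What your approach buys is brevity and generality (it applies verbatim to arbitrary rectangular matrices); what the paper's approach reflects is a Hermitian, eigenvalue-centric viewpoint consistent with how $\|\cdot\|_2$ is used elsewhere in that argument, but nothing in the application (bounding entrywise deviation of $\pi_R$ from $\pi_{R^*}$ in the rounding step of Theorem~\ref{thm:roundsub-correct}) requires the extra structure. Both proofs are valid; yours is the more economical one.
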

\begin{proof} By interlacing it is enough to prove this for $2\times 2$ matrices. For the diagonal entries it is true by $\|A\|_2 \geq e_i^\dagger A e_i$. For the off-diagonal entries, note that $\lambda_1(A)^2 + \lambda_2(A)^2 = A_{11}^2 + A_{22}^2 + 2|A_{12}|^2$. The larger of $\lambda_1(A)^2, \lambda_2(A)^2$ is at least half this expression, which is at least $|A_{12}|^2 = |A_{21}|^2$. \end{proof}

\section{Application to fractional linear matroid matching and rank-2 Brascamp-Lieb polytope}\label{sec:bl}
In this section, we describe applications of our algorithms to fractional linear matroid matching and rank-2 BL polytope.

\subsection{Fractional linear matroid matching and rank-2 Brascamp-Lieb polytope}
Let us recall the definition of \emph{fractional matching polytope}~\cite{VandeVate1992}.
Let $M = (E, \caI)$ be a matroid.
Note that the size of the ground set $E$ may be infinite but $M$ must have a finite rank.
A subset of $E$ of rank one or two is called a \emph{line}.
Note that a line need not to be a flat in general.
Given a matroid $M$ and a set of lines $L$ ($p = |L|$), the fractional matroid polytope is given by
\begin{align}\label{eq:frac-matroid-matching}
    \begin{split}
    a(F) \cdot x &\leq r(F), \qquad \text{($F$: flat of $M$)} \\
    x &\geq \bfzero,
    \end{split}
\end{align}
where $a(F) \in \{0,1,2\}^L$ is defined as
\begin{align*}
    a(F)_\ell =
    \begin{cases}
    0 & (F \cap \ell = \emptyset) \\
    2 & (F \supseteq \ell) \\
    1 & \text{(otherwise)}
    \end{cases}
\end{align*}
and $r$ is the rank function of $M$.
A vector in this polytope is called a fractional matroid matching.
A fractional matroid matching $x$ is said to be \emph{perfect} if $2\sum_{i \in L} x_i = n$.

In particular, we are interested in the case that the matroid is defined by the linear independence in $\C^n$ and each line is a two-dimensional subspace in $\C^n$.
We call it the \emph{fractional linear matroid matching polytope}.
Assume that each line $\ell_i$ is given by the row space of $B_i \in \C^{2\times n}$.
Then, the perfect fractional matching polytope is given by
\begin{align}\label{eq:frac-linear-matroid-matching}
    \begin{split}
    \sum_{i=1}^p x_i \dim(\row B_i \cap V) &\leq \dim V, \qquad \text{($V$: subspace of $\C^n$)} \\
    2\sum_{i=1}^p x_i  &= n, \\
    x_i &\geq 0 \qquad (i=1, \dots, p)
    \end{split}
\end{align}

If all $B_i$ are real matrices of rank-2, the perfect fractional linear matroid matching polytope coincides with the BL polytope of a tuple of operators $B = (B_1, \dots, B_p)$.
Recall that the BL polytope for full row-rank operators $B_i \in \R^{n_i \times n}$ is given by \eqref{eq:BL-V-repr}.

\begin{lemma}
     If $n_i = 2$ for all $i$, \eqref{eq:frac-linear-matroid-matching} and \eqref{eq:BL-V-repr} define the same polytope.
\end{lemma}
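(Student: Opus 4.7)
The plan is to show that, under the normalization $2\sum_i x_i = n$ and nonnegativity, the family of BL inequalities (indexed by subspaces $W \subseteq \C^n$) and the family of matroid matching inequalities (indexed by subspaces $V \subseteq \C^n$) are the same constraints, transported to each other via the correspondence $V = W^\perp$.

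First I would rewrite $\dim(B_i W)$ in terms of intersections. Because $B_i$ has full row rank (rank $2$), $\ker B_i = (\row B_i)^\perp$, so
\begin{equation*}
\dim(B_i W) = \dim W - \dim\bigl((\row B_i)^\perp \cap W\bigr).
\end{equation*}
Then I would convert the intersection with $W$ into an intersection with $V := W^\perp$ using the standard identities $\dim(U \cap W) = \dim U + \dim W - \dim(U+W)$ and $\dim(U+W) = n - \dim(U^\perp \cap W^\perp)$. Applied with $U = (\row B_i)^\perp$, this yields
\begin{equation*}
\dim\bigl((\row B_i)^\perp \cap W\bigr) = \dim W - 2 + \dim\bigl(\row B_i \cap V\bigr),
\end{equation*}
where the $-2$ is exactly the place the rank-$2$ hypothesis enters.

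Next I would substitute this into the BL inequality $\sum_i x_i \dim(B_i W) \geq \dim W$, obtaining
\begin{equation*}
\sum_i x_i\Bigl[2 - \dim(\row B_i \cap V)\Bigr] \geq n - \dim V,
\end{equation*}
and then use $2\sum_i x_i = n$ to cancel the constant term, leaving
\begin{equation*}
\sum_i x_i \dim(\row B_i \cap V) \leq \dim V,
\end{equation*}
which is precisely the matroid inequality \eqref{eq:frac-linear-matroid-matching} at the subspace $V$. Since $W \mapsto W^\perp$ is a bijection on subspaces of $\C^n$, the two families of inequalities coincide, and together with the shared equations $2\sum_i x_i = n$ and $x_i \geq 0$ they define the same polytope.

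The only real subtlety is making sure the rank-$2$ hypothesis is used at the single spot where $\dim \row B_i = 2$ appears; the rest is the bookkeeping of orthogonal complements. I expect no genuine obstacle beyond this verification.
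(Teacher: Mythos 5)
Your proposal is correct and follows essentially the same route as the paper: the paper's proof also sets $W = V^\perp$ and uses the identity $\dim(\row B_i \cap V) = n_i - \dim(B_i W)$ (your computation $\dim(B_i W) = 2 - \dim(\row B_i \cap V)$ is exactly this with $n_i = 2$), then cancels the constant via $2\sum_i x_i = n$. You have simply spelled out the dimension-counting bookkeeping that the paper leaves implicit.
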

\begin{proof}
    For $V \leq \C^n$, let $W = V^\perp$.
    Then $\dim(\row B_i \cap V) = n_i - \dim(B_i W)$.
    So \eqref{eq:frac-linear-matroid-matching} implies \eqref{eq:BL-V-repr}.
    The other direction is analogous.
\end{proof}
Thus, linear optimization on rank-2 BL polytopes can be reduced to the algorithms on fractional matroid matching polytopes~\cite{Chang2001b,Gijswijt2013}.
However, the algorithms of \cite{Chang2001b,Gijswijt2013} are polynomial only in the arithmetic model, i.e., the bit complexity of the intermediate numbers may explode.
Furthermore, the algorithm of \cite{Chang2001b} is fairly complicated. We might hope to use the following ``scaling'' characterization of the BL polytope to find a simpler algorithm.

\begin{theorem}[\cite{Bennett2008, Garg2018}] \label{thm:bl}
Suppose each $B_i$ has rank $n_i$ and $x \in \R^p_{\geq 0}$. The following are equivalent:
\begin{enumerate}
\item $x \in BL(B)$.
\item For every $\eps > 0$ there is an invertible matrix $g \in \R^{n \times n}$ and invertible matrices $h_i \in \R^{n_i \times n_i}$ such that $\tilde{B}_i:= h_i B g^T$ satisfies
\begin{align*}
(1 - \eps) I_n \preceq \sum x_i \tilde{B_i}^\dagger \tilde{B_i} &\preceq (1 + \eps) I_n\\
\text{and } \tilde{B_i} \tilde{B_i}^\dagger &= I_{n_i}\;\forall i \in [p].\end{align*}
\item The following quantity is finite:
$$\capa_x B:= \inf_{X\succ 0, Z_i \succ 0} \sum_{i = 1}^n \tr B_i X^{-1} B_i^\dagger Z_i^{-1} + x_i \log\det Z_i + \log\det X.$$
\end{enumerate}
Moreover, suppose $x = c/d$ for $c \in \Z_{\geq 0}^p$, $d \in \Z_{> 0}$, and that the entries of $B$ are Gaussian integers with absolute value at most $\log M$. Then it is possible to decide if $x \in BL(B)$ in time $\poly( k, d, n, \log M)$ where $k = \sum_i c_i n_i$.
\end{theorem}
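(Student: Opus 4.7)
The plan is to establish the cycle $(2)\Rightarrow(1)\Rightarrow(3)\Rightarrow(2)$ and then derive the algorithmic claim by reducing to the noncommutative rank of an auxiliary CP map. For $(2)\Rightarrow(1)$, the argument is a direct computation: given an $\eps$-scaling $\tilde B_i = h_i B_i g^T$, the condition $\tilde B_i \tilde B_i^\dagger = I_{n_i}$ forces $\tilde B_i \pi_W \tilde B_i^\dagger$ to be a PSD operator of rank at most $\dim(B_i W)$ with operator norm at most $1$, so its trace is at most $\dim(B_i W)$. Then
\[
(1-\eps)\dim W \leq \tr\Bigl(\pi_W \sum_i x_i \tilde B_i^\dagger \tilde B_i\Bigr) = \sum_i x_i \tr(\tilde B_i \pi_W \tilde B_i^\dagger) \leq \sum_i x_i \dim(B_i W),
\]
and letting $\eps\to 0$ yields the subspace inequality; taking $W=\R^n$ recovers $\sum_i x_i n_i = n$.

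For $(3)\Rightarrow(2)$, the objective defining $\capa_x B$ is Euclidean convex under the substitutions $X\leftarrow e^Y$, $Z_i\leftarrow e^{W_i}$ and is geodesically convex on the PD manifold. When the infimum is finite, we run the natural alternating minimization: with all other variables fixed, the minimizers in $X$ and in each $Z_i$ are $X = \sum_i B_i^\dagger Z_i^{-1} B_i$ and $Z_i = x_i^{-1} B_i X^{-1} B_i^\dagger$, which also are precisely the first-order optimality conditions. At a stationary point, $\tilde B_i := x_i^{-1/2} Z_i^{-1/2} B_i X^{-1/2}$ satisfies (2) with $\eps=0$; when the infimum is finite but not attained, a sufficiently long run of this Sinkhorn-style iteration produces an $\eps$-scaling, via the progress argument underlying Theorem~\ref{thm:maj-sinkhorn}.

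For $(1)\Rightarrow(3)$, the cleanest route is Legendre--Fenchel duality. After the exponential substitution, $\capa_x B$ becomes the infimum of a sum of exponentials plus linear terms, and its Fenchel dual is an entropy-maximization program whose feasibility conditions unravel, via the Ky~Fan--von~Neumann trace inequality and majorization, into exactly the subspace inequalities defining $BL(B)$ together with $\sum_i x_i n_i = n$. Thus $x\in BL(B)$ certifies dual feasibility and hence primal finiteness. Equivalently, one can invoke Lieb's theorem that Gaussians extremize Brascamp--Lieb inequalities to identify $\capa_x B$ with the classical Brascamp--Lieb constant up to an explicit additive term, and appeal to the finiteness criterion of \cite{Bennett2008}. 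This direction is the main obstacle: turning polytope feasibility into a quantitative capacity lower bound requires either careful convex duality or an appeal to classical Brascamp--Lieb theory.

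For the algorithmic statement, write $x = c/d$ with $c\in\Z_{\geq 0}^p$, $d\in\Z_{>0}$. The blow-up reduction of \cite{Garg2018} produces an auxiliary CP map $\Psi_{B,x}$ of input size $\poly(k,d,n,\log M)$ with $k = \sum_i c_i n_i$ such that $\ncrank\Psi_{B,x}$ is full iff $\capa_x B > -\infty$. Running the deterministic polynomial-time noncommutative rank algorithms of \cite{garg2016deterministic, Ivanyos2018}, or equivalently the operator Sinkhorn iteration on $\Psi_{B,x}$ to check finiteness of capacity, decides $(3)$ in time $\poly(k,d,n,\log M)$ as claimed.
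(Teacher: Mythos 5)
First, note the ground truth you are being compared against: the paper does not prove this statement at all — it is imported wholesale from \cite{Bennett2008} (the finiteness criterion, via Lieb's theorem on Gaussian extremizability) and \cite{Garg2018} (the algorithmic part), with only the remark that the capacity in item (3) is an easy reformulation of the optimization in \cite{Bennett2008}. Measured against that, the parts of your proposal that you actually carry out are fine and essentially coincide with what the paper does for its analogous in-house statement, Theorem~\ref{thm:bl-less}: your $(2)\Rightarrow(1)$ computation is the paper's Lemma~\ref{lem:proj-ineq} argument (one small repair: $\tr(\tilde B_i \pi_W \tilde B_i^\dagger)$ is bounded by $\dim(\tilde B_i W)=\dim(B_i g^T W)$, not $\dim(B_i W)$, so you must substitute $V=g^T W$ before concluding, exactly as in the paper's proof); your $(3)\Rightarrow(2)$ via exact $Z_i$-updates and approximate stationarity is the standard operator-Sinkhorn argument and matches the paper's one-line treatment; and your algorithmic paragraph is exactly the \cite{Garg2018} blow-up reduction that the paper itself reuses.

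The genuine gap is $(1)\Rightarrow(3)$. Your primary route — ``the Fenchel dual is an entropy-maximization program whose feasibility conditions unravel, via Ky~Fan--von~Neumann and majorization, into exactly the subspace inequalities'' — is not a proof sketch but a restatement of the hard theorem of \cite{Bennett2008}: that polytope membership implies finiteness of the Brascamp--Lieb constant is precisely the content of BCCT, and all known proofs require substantially more (Lieb's theorem plus an induction on critical subspaces, heat-flow monotonicity, or the invariant-theoretic/operator-scaling machinery); it does not fall out of a trace inequality. Your fallback (Lieb $+$ the BCCT finiteness criterion) is legitimate, but it is exactly the citation the theorem already carries, so it does not constitute an independent proof. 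If you want a self-contained argument in the spirit of this paper, imitate the proof of Theorem~\ref{thm:bl-less}: use concavity of $x\mapsto\capa_x B$ so that $\{x:\capa_x B>-\infty\}$ is convex, reduce to the rational extreme points $x=c/d$ of the polytope, and for such an $x$ build the blow-up CP map $\Psi$ with $\capa_k\Psi = d\,\capa_x(cB)$; if this were $-\infty$, Theorem~\ref{thm:k-gurv} produces an independent set, i.e.\ a subspace $V$ with $\sum_i x_i\dim(B_iV^\perp\cap\row B_i)$ exceeding $\dim V^\perp$, contradicting membership. That route replaces BCCT by the shrunk-subspace characterization and is what makes the paper's version of the statement self-contained; for $BL(B)$ one additionally uses $\sum_i n_i x_i=n$ together with the one-sided bound to recover the two-sided bound in item (2).
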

The third item looks different from the optimization problem considered in \cite{Bennett2008}, but it is easily shown that the conditions are equivalent. In the following, we provide a simple Sinkhorn-style algorithm for finding a maximum fractional linear matroid matching.

\subsection{Algorithm for finding maximum fractional linear matroid matching}
We are to solve the following linear program.
\begin{alignat*}{2}
    &\text{maximize}  \quad && \sum_{i=1}^p x_i  \\
    &\text{subject to}\quad && \sum_{i=1}^p x_i \dim(\row B_i \cap V) \leq \dim V, \qquad \text{($V$: subspace of $\C^n$)} \\
    &\quad && x_i \geq 0 \qquad (i=1, \dots, p)
\end{alignat*}
Denote the feasible region of this program by $P(B)$. The dual problem is the \emph{minimum 2-cover problem}.
A pair $(S, T)$ of flats in $M$ is called a \emph{2-cover} if $a_i(S) + a_i(T) \geq 2$ for each $i$.
A 2-cover $(S, T)$ is said to be \emph{nested} if $S \subseteq T$ and \emph{minimum} if $r(S) + r(T)$ is minimized.
\cite{Chang2001} showed that the size of maximum fractional matroid matching is equal to the size of a minimum 2-cover.

While we could define a Sinkhorn-style algorithm directly using the $B_i$'s, we instead relate 2-covers for fractional linear matroid matching with shrunk subspaces of a matrix space in order to re-use the algorithms we have already defined. %

Let $a_i, b_i$ be arbitrary basis of line $\ell_i = \row B_i$ and let $A_i = a_i \wedge b_i := a_i b_i^T - b_i a_i^T \in \C^{n\times n}$ for each $i \in [p]$.
Let $\caA$ be the matrix space spanned by $A_1, \dots, A_p$.
Then, it is easy to see that $(S, T)$ is a nested minimum 2-cover if and only if $(S^\perp, T^\perp)$ is a maximum independent set; see \cite{Oki2022}.
The lattice structure of maximum independent sets implies the existence of minimum 2-cover $(S^*, T^*)$ such that $S^* \subseteq S$ and $T \subseteq T^*$ for any nested minimum 2-cover $(S,T)$.
This special minimum 2-cover is called the \emph{dominant 2-cover}~\cite{Chang2001}.
Thus, if $(L, R)$ is the dominant independent set, then $(L^\perp, R^\perp)$ is the dominant 2-cover.
So we can find the dominant 2-cover with Algorithm~\ref{alg:approx-indep}.

A maximum fractional matching can be also found by \textsc{MajSinkhorn} for a CP map $\Phi: \R^{n \times n} \to \R^{n \times n}$ with the Kraus operator $A_i$.
Note that $\Phi(\overline{X}) = \overline{\Phi^*(X)}$ because $A_i^T = - A_i$.
This implies that $f_k(X, \overline{Y}, z) = f_k(Y, \overline{X}, z)$ (see \eqref{eq:simple-capacity} for the definition of $f_k$).
Let $\tilde\Phi = \Phi_{g,h}$ be a $k$-scaling of $\Phi$ to $(\bfone_n, \bfone_n)$.
First observe that we can take $h = \overline{g}$.
This follows from the g-convexity
\[
    \frac{f_k(X,\overline{Y},z) + f_k(Y, \overline{X}, z)}{2} \geq f_k(X \# Y, \overline{Y} \# \overline{X}, z)
    = f_k(X \# Y, \overline{X \# Y}, z)
\]
where $X\# Y$ denotes the mid point of the geodesic between $X$ and $Y$, i.e., the matrix geometric mean of $X$ and $Y$.

By the scaling condition of $\Phi$, for any $\eps > 0$,
there exists $g$ such that
\begin{alignat}{2}\label{eq:fracmat-scaling}
    \sum_{i=1}^p \tilde A_i \tilde A_i^\dagger &\preceq (1+\eps)I_n,
    & \quad  \sum_{i=1}^p \tr(\tilde A_i \tilde A_i^\dagger) &\geq k,
\end{alignat}
where $\tilde A_i = g A_i g^T = g (a_i \wedge b_i) g^T$.
Denote $\tilde a_i = g a_i$  and $\tilde b_i = g b_i$ so that $\tilde A_i = \tilde a_i \wedge \tilde b_i$.
Apply the Gram-Schmidt orthogonalization to $\tilde a_i, \tilde b_i$ and obtain an orthonormal basis $u_i, v_i$ of $\Img(\tilde A_i)$.
This is equivalent to the thin QR-decomposition:
\[
    \begin{bmatrix}
        \tilde a_i & \tilde b_i
    \end{bmatrix}
    =
    \begin{bmatrix}
        u_i & v_i
    \end{bmatrix}
    \begin{bmatrix}
        r_{11}^{(i)} & r_{12}^{(i)} \\
                   0 & r_{22}^{(i)}
    \end{bmatrix}
    =: Q_i R_i.
\]
By $\tilde a_i = r^{(i)}_{11} u_i$ and $\tilde b_i = r^{(i)}_{12} u_i + r^{(i)}_{22} v_i$, we have $\tilde a_i \wedge \tilde b_i = (r^{(i)}_{11} u_i) \wedge (r^{(i)}_{12} u_i + r^{(i)}_{22} v_i) = r^{(i)}_{11}r^{(i)}_{22} u_i \wedge v_i = \det R_i u_i \wedge v_i$ since $u_i \wedge u_i = 0$.
Plugging this back, we have
\begin{alignat*}{2}
    \sum_{i=1}^p \abs{\det R_i}^2 (u_i \wedge v_i)(u_i \wedge v_i)^\dagger &\preceq (1+\eps)I_n,
    & \quad
    \sum_{i=1}^p \abs{\det R_i}^2 \tr((u_i \wedge v_i)(u_i \wedge v_i)^\dagger )&\geq k.
\end{alignat*}
Now observe that $(u_i \wedge v_i)(u_i \wedge v_i)^\dagger$ is the orthogonal projection onto $\Img(\tilde A_i)$.
This follows from
$(u_i \wedge v_i)(u_i \wedge v_i)^\dagger = (u_i v_i^T - v_i u_i^T)\overline{(v_i u_i^T - u_i v_i^T)} = u_iu_i^\dagger + v_iv_i^\dagger = Q_i Q_i^\dagger$ by the orthonomality of $u_i, v_i$.
So $\tr((u_i \wedge v_i)(u_i \wedge v_i)^\dagger) = \dim\Img(\tilde A_i) = 2$.
Letting $x_i := \abs{\det\tilde R_i}^2$, we have
\begin{align}\label{eq:fracmat-proj}
    \sum_{i=1}^p x_i \pi_{\Img(\tilde A_i)} \preceq (1+\eps)I_n,
\end{align}
and $\sum_{i=1}^p x_i \geq k/2$. We can show that $x$ satisfies the constraints of fractional linear matroid matching up to $(1+\eps)$ multiplicative error.
Here is a useful lemma for obtaining the linear inequalities for all subspaces from a matrix inequality.

\begin{lemma}\label{lem:proj-ineq}
    Let $\pi_{V_i}$ be the orthogonal projection onto a subspace $V_i$ and $x_i \geq 0$ ($i=1,\dots, m$), and $\alpha \geq 0$.
    If $\sum_{i=1}^m x_i \pi_{V_i} \preceq \alpha I_n$, then $\sum_{i=1}^m x_i \dim(V_i \cap V) \leq \alpha \dim V$ for any subspace $V$.
\end{lemma}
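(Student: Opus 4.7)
The plan is to reduce the claim to a matrix trace inequality. For any subspace $W$, one can write $\dim W = \tr(\pi_W)$, and for each $i$ we have the bound
\[
\dim(V_i \cap V) \leq \tr(\pi_V \pi_{V_i} \pi_V) = \tr(\pi_{V_i} \pi_V).
\]
Assuming this, the lemma follows by linearity and monotonicity of trace against a positive semidefinite test matrix: since $\pi_V \succeq 0$, applying the hypothesis $\sum_i x_i \pi_{V_i} \preceq \alpha I_n$ gives
\[
\sum_{i=1}^m x_i \dim(V_i \cap V) \leq \tr\!\Bigl(\pi_V \sum_{i=1}^m x_i \pi_{V_i}\Bigr) \leq \alpha \tr(\pi_V) = \alpha \dim V.
\]

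The only real content is the inequality $\dim(V_i\cap V)\leq \tr(\pi_V\pi_{V_i}\pi_V)$. I would establish this by analyzing the spectrum of $M_i := \pi_V \pi_{V_i} \pi_V$, which is Hermitian and PSD. Since $I - \pi_{V_i}$ is PSD, we have $\pi_V (I - \pi_{V_i}) \pi_V \succeq 0$, i.e. $M_i \preceq \pi_V$, so all eigenvalues of $M_i$ lie in $[0,1]$. I would then identify the $1$-eigenspace of $M_i$ exactly as $V \cap V_i$: if $v\in V\cap V_i$ then $M_i v = v$ trivially; conversely if $M_i v = v$ with $\|v\|=1$, then the chain $\|v\| = \|\pi_V \pi_{V_i} \pi_V v\| \leq \|\pi_{V_i}\pi_V v\| \leq \|\pi_V v\| \leq \|v\|$ forces equality throughout, which by the equality cases of the projection norm inequalities gives $v \in V$ and $\pi_{V_i} v = v$, hence $v \in V \cap V_i$. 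Since all eigenvalues are nonnegative and at least $\dim(V \cap V_i)$ of them equal $1$, the trace bound follows.

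There is no major obstacle; the entire argument is a short exercise in properties of orthogonal projections, with the only mildly delicate step being the identification of the $1$-eigenspace of $M_i$.
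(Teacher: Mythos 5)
Your proof is correct and follows essentially the same route as the paper: test the Loewner inequality against $\pi_V$, take traces, and use $\tr(\pi_{V_i}\pi_V) \geq \dim(V_i \cap V)$. The only difference is that the paper states this last trace inequality without proof, whereas you justify it via the spectrum of $\pi_V\pi_{V_i}\pi_V$ (and for that bound only the easy inclusion of $V\cap V_i$ into the $1$-eigenspace is needed; the converse identification is correct but superfluous).
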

\begin{proof}
    For any subspace $V$, multiplying the orthogonal projection $\pi_V$ onto $V$, we obtain $\sum_{i=1}^m x_i \pi_{V_i}\pi_V \preceq \alpha \pi_V$.
    Taking the trace of both sides, $\sum_{i=1}^m x_i \tr(\pi_{V_i}\pi_V) \leq \alpha \dim V$.
    Now the claim follows from $\tr(\pi_{V_i}\pi_V) \geq \dim(V_i \cap V)$ and $x_i \geq 0$ for each $i$.
\end{proof}

For any subspace $V$, the subspace $\tilde{V} = g V$ satisfies $\Img(A_i) \cap V = \Img(\tilde{A}_i) \cap \tilde{V}$.
By Lemma~\ref{lem:proj-ineq} and \eqref{eq:fracmat-proj}, we have
\[
\sum_{i=1}^p x_i \dim(\Img(\tilde A_i)\cap\tilde{V}) \leq (1 + \eps)\dim\tilde{V} = (1 + \eps)\dim V,
\]
so $x/(1+\eps)$ is a fractional linear matroid matching of size $\frac{k}{2(1+\eps)}$.
Summarizing this argument, we obtain Algorithm~\ref{alg:fracmat}.

\begin{Algorithm}
Algorithm \textsc{FracLinearMatroidMatching}$(B_1, \dots, B_p)$:
\begin{description}
\item[\hspace{.2cm}\textbf{Input:}]
Lines $\{a_i, b_i\} \subseteq \C^n$ ($i = 1, \dots, p$), a parameter $\eps > 0$.
\item[\hspace{.2cm}\textbf{Output:}] A $(1-\eps)$-maximum fractional linear matroid matching $x$.
\item[\hspace{.2cm}\textbf{Algorithm:}]
\end{description}
\begin{enumerate}
    \item Compute $A_i = a_i \wedge b_i$ for each $i$. Let $\caA$ be the matrix space spanned by $A_i$.
    \item Determine the noncommutative rank $k$ of $\caA$ with \textsc{DecisionSinkhorn} (Algorithm~\ref{alg:sinkhorn-dec}) and binary search.
    \item Find a matrix $g$ that satisfies the scaling condition~\eqref{eq:fracmat-scaling} with Algorithm~\ref{alg:maj-scaling}.
    \item For each $i$, let $\tilde A_i = g A_i g^T$ and $x_i = \frac{\abs{\det R_i}^2}{1+\eps}$, where $R_i$ is the upper-triangular factor of the thin QR-decompostion of $\tilde A_i$.
    \item \textbf{output} $x$.
\end{enumerate}
\caption{Deterministic algorithm for finding an approximate maximum fractional linear matorid matching.}\label{alg:fracmat}
\end{Algorithm}

We obtain the following theorem from Theorems~\ref{thm:maj-sinkhorn} and \ref{thm:decision-correct}.

\begin{theorem}\label{thm:fracmat}
    For any $\eps > 0$, Algorithm~\ref{alg:fracmat} finds a $(1-\eps)$-maximum fractional linear matroid matching in $O(n^3(n+p)(n^2\log^2 n + \log n\log M + \frac{1}{\eps^2}(n\log n + \log Mp)))$ arithmetic operations.
    Furthermore, the bit complexity of intermediate numbers is $\poly(n,p,\log M)$.
\end{theorem}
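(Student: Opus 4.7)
The plan is to verify correctness of Algorithm~\ref{alg:fracmat} and then account for its arithmetic operations step by step, leveraging the guarantees already established for the subroutines. Correctness of Step~2 uses the result of \cite{Oki2022} identifying $\ncrank \caA$ with $2\nu^*$, where $\nu^*$ is the maximum size of a fractional linear matroid matching on the lines $\ell_i = \langle a_i, b_i\rangle$; binary search using \textsc{DecisionSinkhorn}$(\Phi,k,0)$ (Theorem~\ref{thm:decision-correct}) then determines $k = \ncrank \caA$ exactly.

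For correctness of Steps~3--4, I would follow the derivation given just before the algorithm. Theorem~\ref{thm:maj-sinkhorn} applied with $\alpha = \beta = \mathbf 1_n$ produces, for the desired tolerance, a scaling $\tilde \Phi = \Phi_{g,h}$ of $\Phi$ satisfying \eqref{eq:fracmat-scaling}; the g-convexity observation in the text (using that $\Phi(\overline X) = \overline{\Phi^*(X)}$ because $A_i^T = -A_i$) lets us take $h = \overline g$, so $\tilde A_i = g A_i g^T$. Thin QR decomposition of $\tilde A_i$ gives $\tilde A_i = \det(R_i)\, u_i \wedge v_i$, and $(u_i \wedge v_i)(u_i \wedge v_i)^\dagger = \pi_{\Img \tilde A_i}$. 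Thus setting $x_i := |\det R_i|^2/(1+\eps)$ gives $\sum_i x_i\, \pi_{\Img \tilde A_i} \preceq I_n$. Applying Lemma~\ref{lem:proj-ineq} to any subspace $V$, using $\dim(\Img A_i \cap V) = \dim(\Img \tilde A_i \cap gV)$, yields the feasibility inequalities of \eqref{eq:frac-linear-matroid-matching}. For the size, taking the trace of \eqref{eq:fracmat-proj} before the $(1+\eps)$ renormalization gives $\sum_i 2|\det R_i|^2 \geq k$, so $\sum_i x_i \geq k/(2(1+\eps)) \geq (1-\eps)\nu^*$.

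For the runtime, I would plug into the bounds already proven. Step~2 runs $O(\log n)$ calls to \textsc{DecisionSinkhorn}$(\Phi,k,0)$; with $m = n$, $k \leq n$, $r = 0$, Theorem~\ref{thm:decision-correct} gives $O((n+p)(n^5\log n + n^3 \log Mp))$ operations per call, so the total for the binary search is $O(n^3(n+p)(n^2\log^2 n + \log n\log Mp))$. Step~3 runs \textsc{MajSinkhorn} with $\alpha = \beta = \mathbf 1_n$ and tolerance $\eps$; by Corollary~\ref{cor:sink-running} the iteration count is $O((n^2\log n + n\log Mp)/\eps^2)$, and each iteration costs $O((n+m)^3 + pn^2) = O(n^2(n+p))$ operations (Section~\ref{sec:maj-sinkhorn}), for a total of $O(n^3(n+p)(n\log n + \log Mp)/\eps^2)$. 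Summing Steps~2 and 3 gives the stated complexity, with the QR decompositions and determinants of Step~4 being lower-order.

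For the bit complexity, all intermediate numbers arise from the subroutines \textsc{DecisionSinkhorn} and \textsc{MajSinkhorn}, whose finite-arithmetic implementations (Appendix~\ref{sec:finite}) keep bit lengths at $\poly(n,p,\log M)$, and from thin QR and $2\times 2$ determinant computations that preserve polynomial bit length. The main technical nuisance I anticipate is not any deep step but rather the justification of the g-convex symmetrization that lets us take $h = \overline g$ (sketched in the text) and the careful bookkeeping to match the iteration-count constants to the claimed expression; everything else reduces to citing the theorems already proved.
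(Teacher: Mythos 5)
Your proposal is correct and follows essentially the same route as the paper: the paper's proof of Theorem~\ref{thm:fracmat} is exactly the derivation preceding Algorithm~\ref{alg:fracmat} (the $h=\overline g$ symmetrization, thin QR, and Lemma~\ref{lem:proj-ineq}) combined with the complexity bounds of Theorems~\ref{thm:maj-sinkhorn}/\ref{thm:decision-correct} (via Corollary~\ref{cor:sink-running}), which is precisely what you assemble. Your operation-count bookkeeping for the binary search and the \textsc{MajSinkhorn} run matches the stated bound, and deferring bit-complexity to the finite-arithmetic implementation of Appendix~\ref{sec:finite} is also what the paper does.
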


For a constant $\eps > 0$, Algorithm~\ref{alg:fracmat} runs in $\tilde O(n^5(n+p))$ time, where the $\tilde O$ notation hides the $\poly(\log n, \log M, 1/\eps)$ factor, which improves $O(n^3p^4)$ time of \cite{Chang2001b} for fractional linear matroid matching.
Furthermore, we can also bound the bit complexity of intermediate numbers, whereas \cite{Chang2001b} did not analyze the bit complexity.

\FloatBarrier

\subsection{Scaling characterization of the fractional linear matroid matching polytope} \label{sec:scaling}
We show a scaling characterization of $P(B)$ analogously to Theorem \ref{thm:bl}. The characterization is identical to that of Theorem \ref{thm:bl} apart from the lower bound by $(1 - \eps)I_n$ in the second item and the constraint $X \succeq I_n$ in the third item.

\begin{theorem} \label{thm:bl-less}
Suppose each $B_i$ has rank $n_i$ and $x \in \R^p_{\geq 0}$. The following are equivalent:
\begin{enumerate}
\item $x \in P(B)$.
\item For every $\eps > 0$ there is an invertible matrix $g \in \R^{n \times n}$ and invertible matrices $h_i \in \R^{n_i \times n_i}$ such that $\tilde{B}_i:= h_i B g^T$ satisfies
\begin{align*}
\sum x_i \tilde{B_i}^\dagger \tilde{B_i} &\preceq (1 + \eps) I_n\\
\text{and } \tilde{B_i} \tilde{B_i}^\dagger &= I_{n_i}\;\forall i \in [p].\end{align*}
\item The following quantity is finite:
$$\capa_x B:= \inf_{X\succeq I_n, Z_i \succ 0} \sum_{i = 1}^n \tr B_i X^{-1} B_i^\dagger Z_i^{-1} + x_i \log\det Z_i + \log\det X.$$
\end{enumerate}
Moreover, suppose $x = c/d$ for $c \in \Z_{\geq 0}^p$, $d \in \Z_{> 0}$, and that the entries of $B$ are Gaussian integers with absolute value at most $\log M$. Then it is possible to decide if $x \in P(B)$ in time $\poly( k, d, n, \log M)$ where $k = \sum_i c_i n_i$.
\end{theorem}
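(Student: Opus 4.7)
The plan is to prove $(1) \Leftrightarrow (2) \Leftrightarrow (3)$ in parallel with Theorem~\ref{thm:bl}, with the key modification being that the one-sided constraint $X \succeq I_n$ in the capacity and the one-sided scaling $\sum_i x_i \tilde B_i^\dagger \tilde B_i \preceq (1+\eps) I_n$ correspond to the down-monotone inequalities defining $P(B)$; I will then derive the complexity claim from the majorization-scaling machinery of Section~\ref{sec:maj}.

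For $(2) \Rightarrow (1)$, the condition $\tilde B_i \tilde B_i^\dagger = I_{n_i}$ forces $\tilde B_i^\dagger \tilde B_i = \pi_{\row \tilde B_i}$; testing the matrix inequality $\sum_i x_i \tilde B_i^\dagger \tilde B_i \preceq (1+\eps)I_n$ against $\pi_V$ and using the elementary bound $\tr(\pi_V \pi_W) \geq \dim(V \cap W)$ gives $\sum_i x_i \dim(V \cap \row \tilde B_i) \leq (1+\eps) \dim V$. Since $\row \tilde B_i = \overline{g}\cdot\row B_i$, setting $V = \overline g F$ and sending $\eps \to 0$ yields the $P(B)$-inequality $\sum_i x_i \dim(F \cap \row B_i) \leq \dim F$ for every subspace $F$. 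For $(3) \Rightarrow (2)$, the objective $f(X, Z) = \sum_i \tr B_i X^{-1} B_i^\dagger Z_i^{-1} + \sum_i x_i \log\det Z_i + \log\det X$ is smooth and geodesically convex on the PSD product manifold; at a (near-)minimizer the stationarity condition $\nabla_{Z_i} f = 0$ gives $B_i X^{-1} B_i^\dagger \approx x_i Z_i$, and the KKT condition for the constraint $X - I_n \succeq 0$ gives $\sum_i B_i^\dagger Z_i^{-1} B_i \preceq X$. Setting $\tilde B_i = Z_i^{-1/2} B_i X^{-1/2}$ (up to conjugation conventions) converts these into the row-isometry and the one-sided scaling to any desired $\eps$.

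The main obstacle is $(1) \Rightarrow (3)$, which I argue by contrapositive. If $\capa_x B = -\infty$, then by the geodesic convexity of $f$ and a Hilbert-Mumford-type extraction there is a geodesic ray $(e^{tH_1}, e^{tH_i})_{t \geq 0}$ along which $f \to -\infty$; the constraint $X(t) \succeq I_n$ for all $t \geq 0$ forces $H_1 \succeq 0$. Spectrally decomposing $H_1 = \sum_a \alpha_a \pi_{W_a}$ and $H_i = \sum_b \beta_b^{(i)} \pi_{V_b^{(i)}}$, boundedness of the exponential term $\sum_i \tr B_i e^{-tH_1} B_i^\dagger e^{-tH_i}$ as $t \to \infty$ forces the compatibility $\pi_{V_b^{(i)}} B_i \pi_{W_a} = 0$ whenever $\alpha_a + \beta_b^{(i)} < 0$. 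A standard discretization to extremal rays of the admissible cone reduces to the single-step choice $H_1 = \alpha \pi_W$ with $\alpha > 0$ and $H_i = -\alpha \pi_{(B_i W^\perp)^\perp}$, the largest negative-eigenspace choice compatible with the condition. The strict negativity of the linear coefficient $\alpha[\dim W - \sum_i x_i(n_i - \dim B_i W^\perp)]$ then reads, at $W' := W^\perp$, as the violation $\sum_i x_i(n_i - \dim B_i W') > n - \dim W'$ of the $P(B)$-inequality, so $x \notin P(B)$.

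For the complexity claim, clear denominators to obtain $c := dx \in \Z_{\geq 0}^p$ with $k = \sum_i c_i n_i$. To $B$ I associate the CP map $\Phi:\C^{n\times n} \to \C^{N\times N}$ (with $N = \sum_i n_i$) whose Kraus operators are $\sqrt{c_i/d}\,e_j^{(i)} B_i$ for $i \in [p], j \in [n_i]$, as in \cite{Garg2018}; then $x \in P(B)$ is equivalent to $\Phi$ being majorization-scalable to $(\mathbf{1}_n, \mathbf{1}_N)$ of size $k/d$ in the sense of Section~\ref{sec:maj}. Running \textsc{MajSinkhorn} (Algorithm~\ref{alg:maj-scaling}) with these targets and $\eps$ inverse polynomial in $(k,d,n,\log M)$, and combining Corollary~\ref{cor:sink-running} with the capacity lower bound Theorem~\ref{thm:cap-lb}, the algorithm either terminates within $\poly(k,d,n,\log M)$ iterations (certifying $x \in P(B)$) or else exhausts the iteration budget, certifying $\capa_x B = -\infty$ and hence $x \notin P(B)$; each iteration costs $\poly(N,n,\log M)$ arithmetic operations.
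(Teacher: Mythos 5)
Your directions $(2)\Rightarrow(1)$ and $(3)\Rightarrow(2)$ track the paper's argument (the projection inequality of Lemma~\ref{lem:proj-ineq} and first-order conditions at an approximate minimizer, respectively) and are fine. The problems are in the two places where the real work lives. For $(1)\Rightarrow(3)$ you assert that $\capa_x B=-\infty$ together with geodesic convexity yields a geodesic ray $(e^{tH_1},e^{tH_i})$ through the base point along which the objective tends to $-\infty$, with $H_1\succeq 0$ forced by $X(t)\succeq I_n$. Geodesic convexity alone does not give this: already in the Euclidean (commutative/diagonal) case a convex function can be unbounded below while bounded below on every ray, so the ``Hilbert--Mumford-type extraction'' is precisely the nontrivial content you would have to prove, and it is further complicated here by the constrained domain $X\succeq I_n$ and by your implicit normalization that the destabilizing ray may be based at the identity; the subsequent ``discretization to extremal rays'' reducing to $H_1=\alpha\pi_W$, $H_i=-\alpha\pi_{(B_iW^\perp)^\perp}$ is also unjustified. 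The paper avoids all of this by a different route: it shows $C\subseteq S\subseteq P(B)$, uses convexity of the capacity in $x$ to reduce $(1)\Rightarrow(3)$ to the (rational) extreme points of $P(B)$, and for such a point builds the \cite{Garg2018} blow-up operator $\Psi$ (with $c_i$ row-block copies and $d$ column blocks) so that Theorem~\ref{thm:k-gurv} converts $\capa_k\Psi=-\infty$ into a large independent set and hence a violated subspace inequality.

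The complexity claim is where your proposal actually fails. You encode the weights by multiplying rank-one ``row'' Kraus operators by $\sqrt{c_i/d}$ and then test plain substochastic ($(\bfone_n,\bfone_N)$-majorized) scalability of size $k/d$. But by Theorem~\ref{thm:k-gurv} (and the cover inequality), $s$-scalability to doubly substochastic depends only on the matrix space spanned by the Kraus operators, which is unchanged by nonzero scalar multipliers; so your criterion depends on $x$ only through which $c_i$ vanish and through $\sum_i x_i n_i$. Concretely, take $n=2$, $p=2$, $n_1=n_2=1$, $B_1,B_2$ two linearly independent rows, and $x=(3/2,1/2)$: then $x\notin P(B)$ (the constraint at $V=\row B_1$ reads $3/2\le 1$), yet the span of $e_1B_1,e_2B_2$ has full noncommutative rank, so your map is $2$-scalable with $k/d=2$ and your test would accept. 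The weights must be encoded structurally (multiplicities, as in the paper's construction of $\Psi$ fed to \textsc{DecisionSinkhorn}, or via $x$-dependent marginal targets), not as scalar factors; as written, this step of your argument is wrong rather than merely incomplete.
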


\begin{proof}Let $S$ be the set of $x \geq 0$ for which the desired $g,h_i$ exist, and let $C$ be the set of $x \geq 0$ for which $\capa_x B$ is finite. It is straightforward from calculus $C \subseteq S$, i.e. (3) implies (2), because if the capacity is finite then we may take $g^\dagger g = X$ and $h_i^\dagger h_i = Z_i$ from $X,Z$ with subgradients of the objective function of norm $\eps$. Furthermore, $S$ is convex by the log-convexity of $\capa_x \Phi_B$ in $x$.

We may also easily check that $S \subseteq P(B)$, i.e. (2) implies (1), by checking that $x \in S$ implies the inequalities for $P(B)$. Let $ \eps > 0$ to be determined, and let $\tilde{B_i}$ be as in the theorem statement.
For any subspace $V$, the subspace $\tilde{V} = g^{\dagger} V$ satisfies $\row B_i \cap V = \row \tilde{B}_i \cap \tilde{V}$. Consider the orthogonal projection $\pi_{\tilde{V}}$ onto $\tilde{V}$.
By the condition on the $\tilde{B}_i$'s and Lemma~\ref{lem:proj-ineq}, we have that
\[
\sum x_i \dim \row B_i \cap V \leq (1 + \eps) \dim V.
\]
Taking $\eps \to 0$, we have $x \in P(B)$.

 In summary, so we have
 $$ C \subseteq S \subseteq P(B)$$
 or equivalently $(3) \implies (2) \implies (1)$. To show that these containments are in fact equalities, and so $(1) \implies (3)$, it suffices to show that every extreme point of $P(B)$ is contained in $C$. To do this, observe that every extreme point $x$ of $P(B)$ may be written as $c/d$ where $d \in \Z$ and $c \in \Z^p$. We may assume all $x_i > 0$, and let $k = \sum_i c_i n_i$. For contradiction, assume $\capa_x \Phi_B = -\infty.$ Construct a CP map $\Psi$ as in \cite{Garg2018} with $k \times nd$ Kraus operators with a block structure having $p$ groups of rows with $c_i n_i$ consecutive rows each, broken into $c_i$ groups of $n_i$ consecutive rows and columns broken into into consecutive groups of $n$ columns. Each of the $d \cdot \prod_i c_i$ many Kraus operators has exactly one of these $n_i \times n$ blocks filled with $B_i$ and the rest equal to zero. It is straightfoward to check that $\capa_k  \Psi = d \capa_x cB = -\infty$ where $cB = (c_1B_1, \dots, c_p B_p)$. By Theorem \ref{thm:k-gurv}, there is an independent set $L,R$ for $\Psi$ such that $\dim L + \dim R \geq k + nd - k = nd.$ We only increase the dimension of $L$ and $R$ by assuming $R = \oplus_{i = 1}^d V$ for $V \subseteq \R^n$ and $L = \oplus_{i = 1}^p \oplus_{j = 1}^{n_i} (B_i V)^\perp$. Thus $$\sum_i c_i \dim (B_i V)^\perp + d \dim V > nd.$$
Using $\dim (B_i V)^\perp = \dim V^\perp \cap \row B_i$ for $B_i$ full row rank and $x_i = c_i/d$, we have
$ \sum_i x_i \dim \row B_i \cap V^\perp  >  d \dim V^\perp$, contradicting $x \in P(B)$.
\end{proof}

The construction of $\Psi$ yields an algorithm for deciding membership in $P(B)$: simply construct $\Psi$ from $B,c,d$ and check if $\capa_k  \Psi > -\infty$ for $k=\sum_i c_i n_i$. By the argument above, this is the case if and only if $x \in P(B)$. The runtime follows from Theorem \ref{thm:decision-correct} applied to $\Psi$. A pseudocode is shown in Algorithm~\ref{alg:bl-membership}.

\begin{Algorithm}
Algorithm \textsc{MemEpsBL}$(B_1, \dots, B_p)$:
\begin{description}
\item[\hspace{.2cm}\textbf{Input:}]
A $p$-tuple of $n_i \times n$ complex matrices $B = (B_1, \dots, B_p)$, a vector $x \in \R^p_{> 0}$ and $\sum_i x_i \leq 1$, a running time $T$,
and a parameter $\eps > 0$.
\item[\hspace{.2cm}\textbf{Output:}] A feasible point $y$ such that $\|x - y\| \leq \eps$ or that $x$ is at least $\eps$-far from $P(B)$.
\item[\hspace{.2cm}\textbf{Algorithm:}] Set $X = I_n, Y_i = I_{n_i}$, and $\delta := \eps/ (2\sqrt{n})$.  \textbf{For} $t = 1, \dots, T$:
\end{description}
    \begin{enumerate}
    \item \textbf{Update }$Y$: Set $Y_i = x_i^{-1} B_i X^{-1} B_i^\dagger$ and then set $Y_i = Y_i/\det(Y_i)^{1/n_i}$.
    \item \textbf{Update }$X$:
    \begin{itemize}
    \item Let  $\mu = \lambda(\sum_i  X^{-1/2} B_i^\dagger Y_i^{-1} B_i X^{-1/2} )$. Set $\eps_X = \sum_{\mu_i > 1} \mu_i - 1 - \log \mu_i$.
    \end{itemize}
    \begin{description}
    \item[\textbf{If:}] $\eps_X \leq \frac{\delta^2}{2}$: \textbf{Output} $y_i = (e^{-\eps}/n_i) \tr X^{-1/2} B_i^\dagger Y_i^{-1} B_i X^{-1/2}$.

    \item[\textbf{Else: }] Choose $X$ in the same eigenbasis as $\sum_i  B_i^\dagger Y_i^{-1} B_i$ with eigenvalues $\lambda_i(X) = \max \{1, \mu_i\}$.
    \end{description}
    \end{enumerate}
\hspace{.2cm}\textbf{Output:} that $x$ is not in $P(B)$.
\caption{Algorithm for deciding $\eps$-membership in the Brascamp-Lieb polytope.}\label{alg:bl-membership}
\end{Algorithm}

We now outline why this algorithm works. We keep it brief because it is so similar to the proof of Corollary \ref{cor:sink-running}.
\begin{proof}[Proof outline of Theorem \ref{thm:bl-membership}]
The proof follows a similar structure to typical analyses of Sinkhorn-type algorithms. We use the objective function $f(X,Y):=\sum_{i = 1}^n \tr B_i X^{-1} B_i^\dagger Y_i^{-1} + x_i \log\det Y_i + \log\det X$ from Theorem \ref{thm:bl-less}.

First suppose that $x \in P(B)$. The capacity lower bound independent of $x_i$ is similar to the proof of \ref{thm:cap-lb}. It follows by using the reduction in the proof of Theorem \ref{thm:bl-less} on the vertices of $P(B)$ and applying concavity of $\capa_{x} (B)$ in $x$. The progress bound is very similar to the proof of Theorem \ref{thm:maj-sinkhorn}. This time the generalized KL projections are to the \emph{point} $\oplus_i x_i \mathbf 1_{n_i}$ (the $Y_i$ update) and the polytope $\{\mu: \mu \leq \mathbf 1_n\}$ (the $X$ update). The former makes progress because it is an alternating minimization step, and the latter was already analyzed Theorem \ref{thm:maj-sinkhorn}.

Finally, Lemma \ref{lem:kl-scale} shows that the output $y$ is in $P(B),$ and it is equal to $e^{-\eps} x$, hence the distance bound. On the other hand, if $x$ is at least $\eps$-far from $P(B)$, then the Sinkhorn steps will not terminate and hence the algorithm will correctly output that $x$ is not in the polytope.
\end{proof}

\begin{proof}[Proof of Theorem \ref{thm:bl-conp}]
As mentioned earlier, an \textsc{NP} certificate for $x^*\in BL(B)$ can be obtained by decomposing $x^*$ as a convex combination of half-integral vertices. These vertices and convex coefficients form a certificate with polynomial bit-complexity. Note that the membership of a half-integral vector can be checked in polynomial time by the result of \cite{Garg2018} because the common denominator is a constant.

For an \textsc{NP} certificate for $x^*\notin BL(B)$, we can proceed as follows. Wlog, we can assume that $\sum_i n_i x_i^* =n$, and therefore we need to certify that $x^*\notin P(B)$.  Not being in $P(B)$ means that there exists a facet-defining valid inequality $\alpha^T x\leq \beta$ for $P(B)$ with $\alpha\in\{0,1,2\}^p$ and $\beta\in \Z_+$ violated by $x^*$. The certificate consists of $p$ affinely independent half-integral points (or vertices) $x^{(i)}\in \frac{1}{2}\Z^p\cap BL(B)$ for $i=1,\cdots,p$ satisfying $\alpha^Tx^{(i)}=\beta$ for all $i$. Either $\alpha\in\{0,1,2\}^p$ and $\beta \in \Z_+$ are also part of the certificate, or the verifier first computes them (by solving a system of linear equations), and the verifier then checks that $\alpha^T x^* >\beta$. The next task for the verifier is to check that $\alpha$ and $\beta$ indeed defines a valid inequality for $P(B)$. For this purpose, the verifier computes 
$$y^*=\frac{1}{p} \sum_{i=1}^{p} x^{(i)} + \frac{1}{4p^2} \bfone.$$ We claim that the validity of $\alpha^T x\leq \beta$ over $P(B)$ can be checked by verifying that $x^{(i)}\in P(B)$ for every $i$ and that $y^*\notin P(B)$. This would complete the proof as these tasks can be done in polynomial time using the result of  \cite{Garg2018} since the common denominator of these points is either $2$ (for the $x^{(i)}$'s) or $4p^2$ (for $y^*$). 

To prove the claim, consider any facet-defining valid inequality $\gamma^T x \leq \delta$ for $P(B)$ with $\gamma\in\{0,1,2\}^p$ and $\delta\in \Z_+$ for which $\gamma^Ty^* >\delta$. Such inequality must exist. If this inequality is not $\alpha^Tx\leq \beta $ then one of the $x^{(i)}$'s would satisfy $\gamma^T x^{(i)} \leq \delta -\frac{1}{2}$ by half-integrality, and therefore
$$\gamma^Ty^* = \frac{1}{p} \sum_{i=1}^{p} \gamma^Tx^{(i)} + \frac{1}{4p^2} \gamma^T \bfone \leq \delta-\frac{1}{2p} + \frac{1}{2p} = \delta,$$ a contradiction. Thus, the unique $\{0,1,2\}$ facet-defining inequality separating the $x^{(i)}$'s from $y^*$ must be $\alpha^Tx\leq \beta$, proving its validity. 
\end{proof}

\subsection{Pseudopolynomial weighted optimization algorithm}\label{sec:pseudopoly}
In this section we describe algorithm \textsc{WeightedSinkhorn} (Algorithm \ref{alg:weighted-sinkhorn}) to find the optimum value $OPT$ of the linear program
\begin{alignat*}{2}
    &\text{maximize}  \quad && w^T x  \\
    &\text{subject to}\quad && \sum_{i=1}^p x_i \dim(\row B_i \cap V) \leq \dim V, \qquad \text{($V$: subspace of $\C^n$)} \\
    &\quad && x_i \geq 0 \qquad (i=1, \dots, p),
\end{alignat*}
i.e. $OPT = \max\{w^T x: x \in P(B)\}$, in time $\poly(\|w\|_1,m,n)$. We will find a feasible point $x$ such that $w^T x > OPT - 1/2$, which if all $n_i = 2$ implies $OPT =  \lceil 2x \rceil/2$ by half-integrality of the feasible region. Let $H_{w,k} = \{x:w^T x \geq k\},$ and let $D(H_{w,k}|| \alpha)$ denote the generalized KL distance from $\alpha$ to $H_{w,k}$. We assume that $B_i$ have full row rank, and so $\|x\|_\infty \leq 1$ for all $x \in P(B)$ and $e_i \in P(B)$ for all $i$. Thus $\|w\|_\infty \leq OPT \leq \|w\|_1$. We also assume $w\geq 0$ because the negative coordinates will be optimized at $x_i = 0$.

\begin{Algorithm}
Algorithm \textsc{WeightedSinkhorn}$(B_1, \dots, B_p)$:
\begin{description}
\item[\hspace{.2cm}\textbf{Input:}]
A $p$-tuple of $n_i \times n$ complex matrices $B = (B_1, \dots, B_p)$, a vector $w \in \R^p_{\geq 0}$, a number $k \in \Z_{> 0}$, a running time $T$,
and a parameter $0 < \eps$.
\item[\hspace{.2cm}\textbf{Output:}] A feasible point $x$ such that $w^T x > (1 - \eps) k$ if $OPT \geq k$, and that $OPT < k$ otherwise.
\item[\hspace{.2cm}\textbf{Algorithm:}] Set $X = I_n, Y_i = I_{n_i}, z = 0$, and $\delta := \eps/ (2\sqrt{n})$. Define $\tilde w_i := w_i/n_i$. \textbf{For} $t = 1, \dots, T$:
\end{description}
    \begin{enumerate}
    \item \textbf{Update }$Y$: Set $Y_i = e^{\tilde w_i z} B_i X^{-1} B_i^\dagger$ and then set $Y_i = Y_i/\det(Y_i)^{1/n_i}$.
    \item \textbf{Update }$X,z$:
    \begin{itemize}
    \item Let  $\mu = \lambda(\sum_i e^{\tilde w_i z} X^{-1/2} B_i^\dagger Y_i^{-1} B_i X^{-1/2} )$. Set $\eps_X = \sum_{\mu_i > 1} \mu_i - 1 - \log \mu_i$.
    \item For $i \in [p]$ set $\nu_i = \tr e^{\tilde w_i z} B_i^\dagger Y_i^{-1} B_i X^{-1}$
     and $\eps_Z =  D(H_{\tilde w,k} || \nu).$
    \end{itemize}
    \begin{description}
    \item[\textbf{If:}] $\eps_X, \eps_Z \leq \frac{\delta^2}{2}$: \textbf{Output} $x_i = e^{-\delta} \nu_i/n_i$.

    \item[\textbf{Else: }]$ $
    \begin{description}
    \item[If:] $\eps_Z$ is larger: set $\alpha_i = \tr  B_i^\dagger Y_i^{-1} B_i X^{-1}$ and choose $z$ to be the minimum value such that $\sum_{i = 1}^p \tilde w_i \alpha_i \geq k$.
    \item[If:] $\eps_X$ is larger: choose $X$ in the same eigenbasis as $\sum_i e^{\tilde w_i z} B_i^\dagger Y_i^{-1} B_i$ with eigenvalues $\lambda_i(X) = \max \{1, \mu_i\}$.
    \end{description}
    \end{description}
    \end{enumerate}
\hspace{.2cm}\textbf{Output} that $OPT < k$.
\caption{Algorithm for optimizing a linear functional over the Brascamp-Lieb polytope.}\label{alg:weighted-sinkhorn}
\end{Algorithm}

To analyze the algorithm, we define another capacity.
\begin{align}
f_{w,k} ( X, Y, z) %
= \sum_{i = 1}^p e^{w_i z/n_i} \tr  B_i X^{-1} B_i^T Y_i^{-1} - kz  + \log \det X. \label{eq:fwk}
\end{align}
Consider the geodesically convex domain $\caD' = \{(X,Y,z):X \succ I_n, \det Y_i = 1, z \geq 0\},$ and define
\begin{align} \capa_{w,k} B:= \inf_{\upsilon \in \caD'} f_{w,k} (\Upsilon).\label{eq:cwk}\end{align}

We will show that our modified capacity characterizes the optimum over the BL polytope. First, we show that it acts as a Lyapunov function much as in Theorem \ref{thm:maj-sinkhorn}.
\begin{theorem}\label{thm:bl-sinkhorn} Suppose $w \in \R^p_{\geq 0}$, $k \geq \|w\|_\infty$, and $0 \leq \eps \leq 1$.
\begin{itemize}
\item  If \textsc{WeightedSinkhorn} terminates before step $T$, then the output is a feasible point $x$ with $w^T x \geq (1 - \eps)k$.
\item If $\capa_{w,k} B$ is finite and $\sum_i \|B_i\|_F^2\leq 1$, \textsc{WeightedSinkhorn} terminates before step
$$T = 2 \cdot \frac{1 - \capa_{w,k} B }{ \eps^2 }.$$
\end{itemize}\end{theorem}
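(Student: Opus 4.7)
The proof follows the two-part Sinkhorn template used throughout Section~\ref{sec:maj}: at termination the output is feasible with weight at least $(1-\eps)k$, and between terminations each update strictly decreases $f_{w,k}$ by a fixed amount. The crucial preliminary observation is that step~1 sets $Y_i$ to a scalar multiple of the unique $\det Y_i = 1$ minimizer of the $Y_i$-terms of $f_{w,k}$, namely $Y_i \propto B_i X^{-1} B_i^\dagger$. Writing $\alpha_i := \det(B_i X^{-1} B_i^\dagger)^{1/n_i}$, this gives the identity $X^{-1/2} B_i^\dagger Y_i^{-1} B_i X^{-1/2} = \alpha_i^{-1}\pi_i$ where $\pi_i$ is the orthogonal projection onto $\row(B_i X^{-1/2})$. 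Consequently $\nu_i = n_i e^{\tilde w_i z}/\alpha_i$, the output vector is $x_i = e^{-\delta}e^{\tilde w_i z}/\alpha_i$, and $\sum_i e^{\tilde w_i z}\alpha_i^{-1}\pi_i = e^{\delta}\sum_i x_i\pi_i$, so both feasibility and the weight are controlled by exactly the two termination quantities $\eps_X$ and $\eps_Z$.

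For the first bullet, suppose $\eps_X, \eps_Z \leq \delta^2/2$. The quantity $\eps_X = \sum_{\mu_i > 1}(\mu_i - 1 - \log\mu_i)$ equals $D(P_{\mathbf 1_n}\|\mu)$ (Lemma~\ref{lem:dual} with $\alpha = \mathbf 1_n$), so Lemma~\ref{lem:kl-scale} gives $e^{-\delta}\mu \preceq \mathbf 1_n$. Combined with the identity above this shows $\sum_i x_i\pi_i \preceq e^{-2\delta}I \preceq I$; applying Lemma~\ref{lem:proj-ineq} in the basis $X^{-1/2}$, using $\dim(\row(B_i X^{-1/2})\cap X^{-1/2}V) = \dim(\row B_i \cap V)$, gives $x \in P(B)$. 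For the weight bound, $\eps_Z = D(H_{\tilde w,k}\|\nu)$ is a KL-divergence to a halfspace, and a direct halfspace analogue of Lemma~\ref{lem:kl-scale} yields $\tilde w^\top\nu \geq e^{-\delta}k$, whence $w^\top x = e^{-\delta}\tilde w^\top\nu \geq e^{-2\delta}k \geq (1-\eps)k$ since $2\delta = \eps/\sqrt n \leq \eps$ on $\eps \leq 1$.

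For the second bullet, $f_{w,k}(I,I,0) = \sum_i \|B_i\|_F^2 \leq 1$ and $f_{w,k} \geq \capa_{w,k}B$ everywhere, so it suffices to show each non-terminating iteration decreases $f_{w,k}$ by at least $\delta^2/2$ (the factor of $n$ inside $\delta^2 = \eps^2/4n$ is absorbed into $T$ as usual). The $Y$-update is pure alternating minimization over the slices $\det Y_i = 1$. The $X$-subproblem $\min_{X\succeq I}\tr X^{-1}C + \log\det X$, with $C = \sum_i e^{\tilde w_i z}B_i^\dagger Y_i^{-1}B_i$, is exactly the $\alpha = \mathbf 1_n$ case of Lemma~\ref{lem:prog-bound} and decreases the objective by $\eps_X$. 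The $z$-subproblem $\min_{z\geq 0}\sum_i e^{\tilde w_i z}\beta_i - kz$ (with $\beta_i$ fixed) is one-dimensional and strictly convex; by the Lagrangian duality of Lemma~\ref{lem:dual}, its minimum value equals $\sum_i\beta_i - D(H_{\tilde w,k}\|\beta)$, so the decrease is exactly $\eps_Z$. Whichever of $\eps_X,\eps_Z$ exceeds $\delta^2/2$ on a non-terminating iteration thus delivers the required progress.

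The main obstacle is the halfspace analogue of Lemma~\ref{lem:kl-scale} used in the second paragraph, since the existing Lemma~\ref{lem:kl-scale} treats only KL-projection onto the down-closure of a permutahedron. I would prove it by dualization: the generalized KL-projection of $\nu$ onto $H_{\tilde w,k}$ has the closed form $\nu_i e^{\lambda\tilde w_i}$ for the unique $\lambda \geq 0$ forcing $\sum_i \tilde w_i \nu_i e^{\lambda\tilde w_i} = k$; expanding $D(\nu e^{\lambda\tilde w}\|\nu) = \sum_i \nu_i(\lambda\tilde w_i e^{\lambda\tilde w_i} - e^{\lambda\tilde w_i} + 1)$ and using $e^x - 1 - x \geq x^2/2$ on the relevant range translates $\eps_Z \leq \delta^2/2$ into $\lambda\tilde w_{\max} \leq \delta$ and hence $\tilde w^\top\nu \geq e^{-\delta}k$, as required. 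Once this lemma is in hand, the remaining accounting is the routine alternating-minimization bookkeeping of Section~\ref{sec:maj}.
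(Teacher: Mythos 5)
Your overall architecture matches the paper's: feasibility at termination via Lemma~\ref{lem:kl-scale} applied to $\mu$ together with Lemma~\ref{lem:proj-ineq} (the paper phrases this through condition (2) of Theorem~\ref{thm:bl-less}, which is the same computation), and per-iteration progress of at least $\delta^2/2$ via alternating minimization, Lemma~\ref{lem:prog-bound} with $\alpha=\mathbf 1_n$ for the $X$-step, and the one-dimensional duality for the $z$-step (the paper's Lemma~\ref{lem:1d-bregman} and Corollary~\ref{cor:z-prog}; note the decrease there is ``at least'', not ``exactly'', $\eps_Z$). The genuine gap is in your weight bound. Your proposed halfspace analogue of Lemma~\ref{lem:kl-scale} is false as stated: with $p=1$, $n_1=1$, $w_1=k$, $\nu_1=u$, the condition $\eps_Z=\log(1/u)-1+u\le\delta^2/2$ allows $u$ as small as roughly $1-\delta+\delta^2/3$, which is strictly below $e^{-\delta}\approx 1-\delta+\delta^2/2$, so $\tilde w^\top\nu\ge e^{-\delta}k$ can fail. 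Worse, the key step of your sketch does not follow: $e^x-1-x\ge x^2/2$ only gives $\eps_Z\ge\frac{\lambda^2}{2}\sum_i\nu_i\tilde w_i^2$, and $\sum_i\nu_i\tilde w_i^2$ can be far smaller than $\tilde w_{\max}^2$, so $\eps_Z\le\delta^2/2$ does not bound $\lambda\tilde w_{\max}$ by $\delta$ (e.g.\ two coordinates with $\tilde w_1=k$, $\nu_1$ tiny, and $\tilde w_2=k/2$ carrying the constraint already give $\lambda\tilde w_{\max}\approx\sqrt2\,\delta$, and spreading mass over many low-weight coordinates pushes this up to order $\sqrt n\,\delta$).

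The idea can be repaired, but only by invoking the hypotheses your sketch never uses, namely $\tilde w_{\max}\le\|w\|_\infty\le k$ and the projection constraint $\sum_i\nu_i\tilde w_ie^{\lambda\tilde w_i}=k$: writing $h(x)=xe^x-e^x+1$ and using the pointwise bound $(e^x-1)^2\le 2e^xh(x)$, Cauchy--Schwarz gives $(k-\tilde w^\top\nu)^2\le \eps_Z\cdot 2\sum_i\nu_i\tilde w_i^2e^{\lambda\tilde w_i}\le 2\eps_Z\,\tilde w_{\max}\sum_i\nu_i\tilde w_ie^{\lambda\tilde w_i}\le 2\eps_Z k^2$, hence $\tilde w^\top\nu\ge(1-\delta)k$ and $w^\top x\ge e^{-\delta}(1-\delta)k\ge(1-\eps)k$. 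The paper instead avoids a halfspace lemma altogether: it takes any $\beta\in H_{\tilde w,k}$ with $D(\beta\|\nu)\le\delta^2/2$, applies the unnormalized Pinsker inequality (Lemma~\ref{lem:pinsk}) together with $\|\nu\|_1\le e^{\delta}n$ (a consequence of $\eps_X$ being small) and $\|\tilde w\|_\infty\le k$; this is precisely where the hypothesis $k\ge\|w\|_\infty$ and the choice $\delta=\eps/(2\sqrt n)$ enter. Your remaining slips are harmless: the $Y$-update contributes the factor $\det(B_iX^{-1}B_i^\dagger)^{1/n_i}$, not its inverse, and the intermediate claim $\sum_i x_i\pi_i\preceq e^{-2\delta}I$ is unjustified but unnecessary, since $\preceq I$ is what follows and all that is needed.
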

The proof is so similar to that of Theorem \ref{thm:maj-sinkhorn} that we omit many of the details, but we include a sketch below. %
We'll use a standard convex duality lemma for generalized KL projections.
\begin{lemma}\label{lem:1d-bregman}
Let $0 \neq \alpha \in \R^p_{> 0}$, $w \in \R^p_{\geq 0}$, and $k\geq 0$. Then
$$ D(H_{w,k}||\alpha) = \sum_{i = 1}^p \alpha_i - \min_{z \geq 0}\left( \sum_{i = 1}^p e^{w_i z} \alpha_i - kz\right).$$
where $D(H_{w,k}||\alpha)$ is the generalized Kullback-Leibler distance from $\alpha$ to the half-space $H_{w,k} = \{x:w^T x \geq k\}$. Furthermore, $\arg\min_{z \geq 0}\left( \sum_{i = 1}^p e^{w_i z} \alpha_i - kz\right) = \min\{z:\sum_i w_i e^{w_i z} \alpha_i \geq k\} \cup \{0\}$.
\end{lemma}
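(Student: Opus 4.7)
The plan is to recognize this as the standard convex/Lagrangian duality between a generalized KL projection onto a half-space and the one-dimensional problem of optimizing over the dual multiplier of the single linear constraint. Everything follows from writing down the Lagrangian, minimizing pointwise in $p$, and identifying the resulting concave dual with the expression claimed.

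First I would write the primal explicitly:
\[
D(H_{w,k}\|\alpha) \;=\; \min_{p \geq 0,\; w^T p \geq k}\; \sum_{i=1}^p \bigl(p_i \log(p_i/\alpha_i) - p_i + \alpha_i\bigr),
\]
and introduce a single nonnegative multiplier $z \geq 0$ for the constraint $w^T p \geq k$. The Lagrangian is separable in the coordinates $p_i$, so minimizing over each $p_i > 0$ (using strict convexity of the entropy term and $\alpha_i > 0$) gives the closed-form minimizer $p_i^\star(z) = \alpha_i e^{w_i z}$. Substituting back, the $p_i \log(p_i/\alpha_i) = p_i w_i z$ term cancels the $-z \sum_i w_i p_i$ contribution from the constraint, leaving the clean dual expression
\[
g(z) \;=\; \sum_i \alpha_i \;-\; \Bigl(\sum_i \alpha_i e^{w_i z} - kz\Bigr),
\]
and so $\sup_{z \geq 0} g(z)$ is exactly the right-hand side of the lemma.

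Next I would invoke strong duality. The primal is a convex program with a single affine inequality constraint; Slater's condition reduces to the existence of some $p > 0$ strictly satisfying $w^T p > k$, which holds whenever $w \neq 0$ (take $p = t\alpha$ for $t$ large), in which case weak duality plus Slater give equality. The edge cases $w = 0$ or $H_{w,k}$ empty need a sentence each: if $w = 0$ and $k > 0$ both sides are $+\infty$ (primal infeasible; dual $\sup$ blows up linearly in $z$), and if $k = 0$ both sides are $0$ (take $p = 0$ on the primal, $z = 0$ on the dual).

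For the second assertion, I would appeal to one-variable calculus on $h(z) := \sum_i e^{w_i z}\alpha_i - kz$. Its derivative $h'(z) = \sum_i w_i e^{w_i z}\alpha_i - k$ is nondecreasing in $z$ (strictly increasing unless all $w_i$ with $\alpha_i > 0$ vanish), so the minimum on $[0, \infty)$ is attained at $z = 0$ if $h'(0) = \sum_i w_i \alpha_i - k \geq 0$, and otherwise at the unique root of $h'$, which is exactly the smallest $z$ with $\sum_i w_i e^{w_i z}\alpha_i \geq k$. Combining the two cases gives the stated formula for the minimizer.

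The only delicate point, and the part I would be most careful about, is handling the degenerate inputs cleanly (e.g.\ some $\alpha_i = 0$ is excluded by hypothesis, but $w_i = 0$ for some $i$ is allowed, and $w = 0$, $k = 0$, or $H_{w,k}$ not meeting $\R^p_{\geq 0}$ all need a line of comment); everything else is mechanical Lagrangian bookkeeping.
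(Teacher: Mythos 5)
Your Lagrangian-duality argument is correct and is essentially the paper's intended route: the paper states this as a ``standard convex duality lemma'' without writing out a proof, and its proof of the analogous Lemma~\ref{lem:dual} runs through exactly the same computation (form the Lagrangian with one multiplier $z$ for the constraint, solve $p_i^\star(z)=\alpha_i e^{w_i z}$ in closed form, substitute to get the dual, and invoke strong duality), with the minimizer characterization following from monotonicity of $h'(z)=\sum_i w_i e^{w_i z}\alpha_i - k$ just as you say. One small slip in your degenerate-case remark: when $k=0$ the correct primal witness is $p=\alpha$ (giving $D(\alpha\|\alpha)=0$), not $p=0$, since $D(0\|\alpha)=\sum_i \alpha_i$; the conclusion that both sides vanish is still right.
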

In particular this lemma shows that the $z$-update step minimizes $f_{w,k}$ in $z$ while holding the other variables fixed. If $w^T \alpha \geq k$, then $\inf_{z \geq 0} \sum_{i = 1}^p e^{w_i z} \alpha_i - kz$ is simply $\sum_i \alpha_i$ because $D(\alpha||\alpha) = 0$.
We also have an analogue of Lemma \ref{lem:prog-bound} for the $z$ update.
\begin{corollary}[$z$-update progress bound] \label{cor:z-prog}
Let $z \geq 0$, $\alpha_i \in \R^p_{> 0}$ and $w \in \R^p$. Let $\nu_i = e^{w_i z} \alpha_i$.
Then
$$\sum_{i = 1}^p e^{w_i z} \alpha_i - kz \geq D(H_{w,k}||\nu) + \sum_{i = 1}^p \alpha_i - D(H_{w,k}||\alpha).$$

In particular, the progress made in the $z$ update of \textsc{WeightedSinkhorn} is at least $D(H_{w,k}||\nu)\geq\delta^2/2$.

\end{corollary}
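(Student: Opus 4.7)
The plan is to prove the inequality by invoking Lemma \ref{lem:1d-bregman} twice---once at the tilted marginal $\nu$ (which absorbs the current value of $z$) and once at the base marginal $\alpha$---and then comparing the two one-dimensional minimization problems via a change of variables. Applying the lemma at $\nu$ and expanding $\nu_i = e^{w_iz}\alpha_i$ yields
\begin{align*}
D(H_{w,k}\|\nu) \;=\; \sum_i e^{w_iz}\alpha_i \;-\; \min_{z' \geq 0}\Bigl(\sum_i e^{w_i(z+z')}\alpha_i - kz'\Bigr).
\end{align*}
I would substitute $\tilde z := z+z'$ so that $\tilde z$ ranges over $[z,\infty)$ and $-kz' = -k\tilde z + kz$; this rewrites the above as
\begin{align*}
D(H_{w,k}\|\nu) \;=\; \sum_i e^{w_iz}\alpha_i - kz \;-\; \min_{\tilde z \geq z}\Bigl(\sum_i e^{w_i\tilde z}\alpha_i - k\tilde z\Bigr).
\end{align*}

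The key step is then to relax the constraint $\tilde z \geq z$ to $\tilde z \geq 0$, which can only decrease the inner minimum. By Lemma \ref{lem:1d-bregman} applied at $\alpha$, the relaxed minimum equals $\sum_i \alpha_i - D(H_{w,k}\|\alpha)$, and rearrangement of the resulting inequality is exactly the bound claimed in the corollary.

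For the progress bound, I would observe that the $z$-update step of \textsc{WeightedSinkhorn} sets $z$ to the minimizer $\arg\min_{z \geq 0}\bigl(\sum_i e^{\tilde w_iz}\alpha_i - kz\bigr)$ by the ``furthermore'' clause of Lemma \ref{lem:1d-bregman} (applied with $\alpha$ and $\tilde w$). Hence after the update, the $z$-dependent part of $f_{\tilde w,k}$ drops from $\sum_i e^{\tilde w_iz_{\mathrm{old}}}\alpha_i - kz_{\mathrm{old}}$ down to $\sum_i \alpha_i - D(H_{\tilde w,k}\|\alpha)$, which by the inequality just proved is a drop of at least $D(H_{\tilde w,k}\|\nu) = \eps_Z$. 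The branching condition for entering the $z$-update step of the algorithm guarantees $\eps_Z \geq \delta^2/2$, finishing the argument.

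There is essentially no obstacle: the proof amounts to a one-line change of variables together with weak duality (relaxing the feasible set of a minimization). The only subtlety is bookkeeping around whether $z$ is constrained to $[z_{\mathrm{current}},\infty)$ or merely $[0,\infty)$, which is precisely where an equality becomes an inequality.
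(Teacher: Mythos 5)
Your proof is correct and is essentially the paper's own argument: both apply Lemma \ref{lem:1d-bregman} at $\alpha$ and at $\nu$ and compare the two one-dimensional minimizations via the change of variables $\tilde z = z + z'$, the inequality coming from comparing the minimum over $[z,\infty)$ with that over $[0,\infty)$ (you relax the domain starting from $\nu$; the paper restricts it starting from $\alpha$ — the same step in mirror image). The progress-bound argument, including the use of the ``furthermore'' clause and the branching condition giving $\eps_Z \geq \delta^2/2$, also matches the paper's.
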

\begin{proof}
Like Lemma \ref{lem:prog-bound}, the proof is a change of variables. By Lemma \ref{lem:1d-bregman},
\begin{align*}\sum_i \alpha_i - D(H_{w,k}|| \alpha)
&= \min_{z' \geq 0}\left( \sum_{i = 1}^p e^{w_i z'} \alpha_i - kz'\right)\\
&\leq \min_{z' \geq 0} \left( \sum_{i = 1}^p e^{w_i (z' +z)} \alpha_i - k(z' +z)\right)\\
& = - kz + \min_{z' \geq 0} \left( \sum_{i = 1}^p e^{w_i z'} \nu_i - kz'\right)\\
&= - kz + \sum_{i = 1}^p e^{w_i z} \alpha_i - D(H_{w,k}|| \nu).
\end{align*}
Rearranging the terms yields the inequality. The progress bound follows from Lemma \ref{lem:1d-bregman}, because the terms that depend on $z$ change from $\sum_{i = 1}^p e^{w_i z} \alpha_i - kz$ to $\min_{z' \geq 0} \left( \sum_{i = 1}^p e^{w_i z'} \alpha_i - kz'\right) = \sum_i \alpha_i - D(H_{w,k}|| \alpha).$
\end{proof}
\begin{proof}[Proof of Theorem \ref{thm:bl-sinkhorn}]
We first verify that, assuming the algorithm terminates, the output is a feasible point. We claim that $\tilde B_i = \sqrt{ n_i/\nu_i}e^{w_i z/n_i} Y_i^{-1/2} B_i X^{-1/2}$ and $x$ satisfy the conditions in Theorem \ref{thm:bl-less} for feasibility of $x$.
As in the algorithm, let $\tilde w_i = w_i/n_i$. By the $Y$ update step previously, $\tilde B_i \tilde B_i^\dagger = I_{n_i}$. Hence $\tilde B_i \tilde B_i^\dagger/ (\nu_i/n_i) = I_{n_i}$. On the other hand, $\sum_i x_i \tilde B_i^\dagger \tilde B_i = e^{- \delta} \sum_i e^{\tilde w_i z} X^{-1/2} B_i^\dagger Y_i^{-1} B_i X^{-1/2},$ which has spectrum $e^{-\delta} \mu$ for $\mu$ as in the $X,Y$ update step of the algorithm. Note that $D(P_{\mathbf 1_n}|| \mu)= \sum_{\mu_i > 1} \mu_i - 1 - \log \mu_i,$ so Lemma \ref{lem:kl-scale} implies $e^{-\delta} \mu \leq \mathbf 1_n$. This implies $x$ is feasible.

Next we verify that $w^T x$ is at least $(1 - \eps) k$. By the termination condition, we know there is some $\beta \in \R^p_{\geq 0}$ such that $\tilde w^T \beta \geq k$ and $D( \beta || \nu) \leq \delta^2/2$. Then we have $w^Tx  = e^{-\delta} \tilde w^T \nu$, and
$$\tilde  w^T \nu = e^{-\delta} (\tilde w^T (\nu - \beta) + \tilde w^T\beta \geq k - \|\tilde w\|_\infty \|\nu - \beta\|_1. $$

By Lemma \ref{lem:pinsk}, $D(\beta|| \nu) \geq \min\left\{\frac{1}{4 \|\nu\|_1} \|\nu - \beta\|_1^2, (1 - \ln 2)\| \alpha - \beta\|_1.\right\}$. Because $e^{-\delta} \mu \preceq 1_n$, we know that $\|\nu\|_1 \leq  \sum_i \nu_i n_i \leq e^{\delta} n.$ Thus $\|\nu - \beta\|_1 \leq \max\{\frac{\delta^2}{(1 - \ln 2) 2}, \sqrt{2 n} e^{\delta/2} \delta \}.$ For $\eps \leq 1$, the latter argument dominates.
Combining the inequalities we have $w^T x \geq e^{-\delta} (k - \|w\|_\infty \sqrt{2 n} e^{\delta/2} \delta) \geq  (1 - (\sqrt{2n} + 1)\delta) k\geq (1 - \eps) k.$

We next verify that $f_{w,k}$ is monotone decreasing. For $X$ and $Y$ one can easily verify that each update step minimizes $f_{w,k}$ in the corresponding variable while holding the others fixed. In fact, the $X$ update is the same as the update in \textsc{MajSinkhorn} for $\alpha = \mathbf 1_n$.
For the $z$ update this follows from Lemma \ref{lem:1d-bregman}. It remains to check that the progress made each iteration is at least $\delta^2/2$ whenever the termination condition is not met. If the $X$ variable is updated, this follows from Lemma \ref{lem:prog-bound}, and if $z$ is updated it follows from Corollary \ref{cor:z-prog}.
\end{proof}

\begin{theorem}\label{thm:bl-capacity}
$\capa_{w,k} B > - \infty$ if and only if $OPT:=\max\{w^T x : x \in BL(B)\} \geq k$. Moreover, if $B$ has Gaussian integer entries then $OPT \geq k$ implies $-\capa_{w,k} \Phi_B  = O( k (n + m) \log (m + n)).$
\end{theorem}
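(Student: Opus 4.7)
The plan is to sandwich $\capa_{w,k} B$ against known quantities, using Theorem~\ref{thm:bl-sinkhorn} for the direction $\capa_{w,k} B > -\infty \Rightarrow OPT \geq k$ and Theorem~\ref{thm:bl-less} (via a change of variables) for the reverse. For the first direction, after rescaling so that $\sum_i \|B_i\|_F^2 \leq 1$, Theorem~\ref{thm:bl-sinkhorn} guarantees that for every $\eps > 0$ \textsc{WeightedSinkhorn} terminates and outputs a feasible $x^\eps \in P(B)$ with $w^T x^\eps \geq (1-\eps)k$. Since $P(B) \subseteq [0,1]^p$ is compact, passing to a limit as $\eps \to 0$ along a convergent subsequence produces $x^* \in P(B)$ with $w^T x^* \geq k$, so $OPT \geq k$.

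For the reverse direction, pick any $x^* \in P(B)$ with $w^T x^* \geq k$; by Theorem~\ref{thm:bl-less} we have $\capa_{x^*} B > -\infty$, and I would prove $\capa_{w,k} B \geq \capa_{x^*} B$ by substitution. For $(X,Y,z) \in \caD'$, set $Z_i := e^{-\tilde w_i z} Y_i$, where $\tilde w_i = w_i/n_i$. Using $\det Y_i = 1$, one checks that $Z_i \succ 0$, $e^{\tilde w_i z} Y_i^{-1} = Z_i^{-1}$, and $\log\det Z_i = -w_i z$, which after substitution and collection of terms yields
\begin{equation*}
f_{w,k}(X,Y,z) = \sum_i \tr B_i X^{-1} B_i^\dagger Z_i^{-1} + \sum_i x_i^* \log\det Z_i + \log\det X + (w^T x^* - k)\, z.
\end{equation*}
The first three summands form the objective of $\capa_{x^*} B$ evaluated at the feasible point $(X, (Z_i)_i)$ (with $X \succeq I_n$ and $Z_i \succ 0$), while the fourth is nonnegative because $w^T x^* \geq k$ and $z \geq 0$. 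Hence $f_{w,k}(X,Y,z) \geq \capa_{x^*} B$, and taking the infimum over $\caD'$ gives the claim.

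For the quantitative bound, I would specialize $x^*$ to a vertex of $P(B)$ maximizing $w^T x$; by half-integrality of $P(B)$ one has $x^* = c^*/2$ with $c^* \in \Z_{\geq 0}^p$, and testing $x^* \in P(B)$ against $V = \C^n$ yields $\sum_i c_i^* n_i \leq 2n$. The reduction used in the proof of Theorem~\ref{thm:bl-less} then furnishes a CP map $\Psi$ with Gaussian integer Kraus operators, dimension parameters controlled by $n$ and $k'' := \sum_i c_i^* n_i \leq 2n$, and an $O(n)$-sized spanning subfamily of Kraus operators, satisfying an identity of the form $2\capa_{x^*} B = \capa_{k''} \Psi + O(n)$. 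Invoking Theorem~\ref{thm:cap-lb} on $\Psi$ and chaining through $-\capa_{w,k} B \leq -\capa_{x^*} B$ then yields the claimed polynomial lower bound $O(k(n+m)\log(m+n))$. The main technical point is the bookkeeping in the substitution of paragraph two; a minor subtlety is that $(X,Y,z) \mapsto (X, (Z_i)_i)$ is not surjective onto the domain of $\capa_{x^*} B$, but this is harmless because we only need one feasible image per $(X,Y,z)$ to get the majorant.
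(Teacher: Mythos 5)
Your equivalence argument is essentially the paper's: the forward direction is the same appeal to Theorem~\ref{thm:bl-sinkhorn} (your compactness limit just makes explicit that the supremum over $P(B)$ is attained), and your substitution $Z_i := e^{-\tilde w_i z} Y_i$ is exactly the paper's Lemma~\ref{lem:bl-cap-ineq}, after which Theorem~\ref{thm:bl-less} finishes as in the paper. That part is correct.

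The quantitative bound is where you diverge, and there is a genuine gap. The paper bounds $-\capa_{x} B$ for a point $x \in P(B)$ with $w^T x \geq k$ by redoing the argument of Theorem~\ref{thm:cap-lb} directly for the BL capacity (compactness over achievable marginals, Theorem~\ref{thm:tri}, Schwartz--Zippel, and the lower bound of Theorem~\ref{thm:fra-cap-lb}); this is how the stated $k$-dependence $O(k(n+m)\log(m+n))$ arises. Your route instead picks a \emph{maximizing vertex} $x^* = c^*/2$ and passes through the blown-up map $\Psi$ of Theorem~\ref{thm:bl-less}. Two problems. First, half-integrality of the vertices of $P(B)$ is only available when all $n_i \leq 2$; Theorem~\ref{thm:bl-capacity} (and the surrounding capacity machinery) is stated for general $n_i$, and for a general vertex the denominator $d$ is unbounded, so the dimension $nd$ of $\Psi$ and hence your bound become pseudopolynomial in $d$. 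Second, even in the rank-$2$ case your parameters are governed by $k'' = \sum_i c_i^* n_i$, which can be as large as $2n$ \emph{independently of} $k$, so applying Theorem~\ref{thm:cap-lb} to $\Psi$ yields a bound of order $n^2\log n$, not the claimed $O(k(n+m)\log(m+n))$; when $k \ll n$ this is strictly weaker than the statement you are proving (it still suffices for polynomial-time consequences, but it is not the theorem). Scaling $x^*$ down to force $w^T x = k$ does not rescue this, since it destroys half-integrality. Two smaller points: the identity $\capa_{k''}\Psi = d\,\capa_{x^*}(cB)$ concerns the rescaled tuple $cB=(c_1B_1,\dots,c_pB_p)$, so you need the extra step $\capa_{x^*}(cB) = \capa_{x^*}(B) + 2\sum_i x_i^* n_i \log c_i$ (and to drop coordinates with $c_i^*=0$) to justify your ``$+O(n)$''; and your remark about passing to an ``$O(n)$-sized spanning subfamily of Kraus operators'' is not a legitimate reduction, since the capacity is a property of the CP map, not of the span of its Kraus operators --- fortunately the construction already has only $d\sum_i c_i = O(n)$ Kraus operators here, so the $\log p$ term is harmless without it.
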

We first prove a lemma relating $\capa_{w,k}$ to the capacity for majorization scaling.
\begin{lemma}\label{lem:bl-cap-ineq}
Suppose $x \in \R_{\geq 0}^p$ is such that $w^T x \geq k$. Then $\capa_{w,k} B \geq \capa_x B$.
\end{lemma}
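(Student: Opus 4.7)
My plan is to prove the inequality by a direct change-of-variables, exhibiting for each feasible triple $(X,Y,z) \in \caD'$ in the $\capa_{w,k}$-problem a feasible pair $(X, Z)$ for the $\capa_x$-problem whose objective value is at most $f_{w,k}(X,Y,z)$. Taking the infimum over $(X,Y,z)$ on both sides will then give $\capa_x B \leq \capa_{w,k} B$, which is what we want.

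The natural substitution, given the shape of the two objective functions, is to set
$$Z_i := e^{-w_i z / n_i}\, Y_i,$$
so that the scalar factor $e^{w_i z/n_i}$ in front of the trace in $f_{w,k}$ gets absorbed into $Y_i^{-1}$ to produce $Z_i^{-1}$. The first step is to check that $(X, Z)$ lies in the feasible set for $\capa_x B$: the constraint $X \succeq I_n$ is inherited from $\caD'$, and $Z_i \succ 0$ follows because $Y_i \succ 0$ and the scalar is positive. The second step is to compute $\log\det Z_i$ using the normalization $\det Y_i = 1$ built into $\caD'$; this gives $\log\det Z_i = -w_i z$, so $\sum_i x_i \log\det Z_i = -(w^T x)\, z$.

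Substituting these identities into the $\capa_x$-objective at $(X, Z)$ yields an expression that differs from $f_{w,k}(X,Y,z)$ only in that the term $-kz$ is replaced by $-(w^T x) z$. The hypothesis $w^T x \geq k$ together with $z \geq 0$ gives $-(w^T x) z \leq -k z$, so the $\capa_x$-objective at $(X, Z)$ is at most $f_{w,k}(X,Y,z)$. The inequality $\capa_x B \leq \capa_{w,k} B$ follows by infimizing over $(X,Y,z) \in \caD'$.

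I do not expect any genuine obstacle: the argument is a one-shot change of variables, and the only things to verify carefully are (i) that $\det Y_i = 1$ is exactly the normalization needed to make the $\log\det$ bookkeeping work out, and (ii) that the $X \succeq I_n$ constraint is the same in both problems so no domain mismatch arises. Both are immediate from how $\caD'$ and the $\capa_x$-domain in Theorem \ref{thm:bl-less}(3) were defined.
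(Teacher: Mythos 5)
Your proposal is correct and is essentially the paper's own argument: the paper performs exactly the change of variables $Z_i = e^{-w_i z/n_i}Y_i$ inside the infimum, using $\det Y_i = 1$ to get $x_i\log\det Z_i = -w_i x_i z$ and then $w^Tx \geq k$, $z \geq 0$ to drop the resulting $(w^Tx - k)z$ term. The only cosmetic difference is that you phrase it as mapping each feasible $(X,Y,z)\in\caD'$ to a feasible $(X,Z)$ for $\capa_x B$, while the paper writes the same manipulation as a chain of (in)equalities between infima.
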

\begin{proof}
We have
\begin{align*}
\capa_{w,k} B &= \inf_{(X,Y,z) \in \caD'}\sum_{i = 1}^p e^{w_i z/n_i} \tr  B_i X^{-1} B_i^T Y_i^{-1} - kz  + \log \det X \nonumber\\
& =  \inf_{(X,Y,z) \in \caD'} \sum_{i = 1}^p (\tr  B_i X^{-1} B_i^T (e^{w_i z/n_i} Y_i)^{-1}  +  x_i \log \det (e^{-w_i z/n_i} Y_i) + w_i x_i ) - kz  + \log \det X \nonumber\\
& \geq \inf_{X \succeq I_n, Z_i \succ 0} \sum_{i = 1}^p( \tr  B_i X^{-1} B_i^T Z_i^{-1}  +x_i  \log \det Z_i)   + \log \det X.%
\end{align*}
In the last inequality we used that $w^T x \geq k$ and $z\geq 0$.
\end{proof}
We can now easily prove Theorem \ref{thm:bl-capacity}.
\begin{proof}[Proof of Theorem \ref{thm:bl-capacity}] For the foward direction, if $\capa_{w,k} B > -\infty$ then for any $\eps > 0$ our algorithm produces a point with $P(B)$ with $w^T x > (1 - \eps) k$ by Theorem \ref{thm:bl-sinkhorn}; hence $OPT \geq k$. Now suppose that $OPT \geq k$. Then there is a point $x \in P(B)$ with $w^T x \geq k$. By Theorem \ref{thm:bl-less}, $\capa_x B \geq -\infty$. By a similar argument to the proof of Theorem \ref{thm:cap-lb} (which we omit to avoid redundancy), if $B$ has Gaussian integer entries then $-\capa_x B = O( k (n + m) \log (m + n) + k \log p) = O(k (m + n) \log (m+n))$.  \end{proof}

\begin{theorem}\label{thm:weighted-sinkhorn-decision} Suppose $B$ has Gaussian integer entries of absolute value at most $M$, that all $n_i \in \{1,2\}$, and that $k \geq \|w\|_\infty$ and $w\in\Z^p_{\geq 0}$. Then \textsc{WeightedSinkhorn} run on $B/\sqrt{mnM}$ terminates before step
$$T = O((k (m + n) \log (m+n) + n \log M)/\eps^2)$$
if $OPT \geq k$ and outputs a feasible point $x\in P(B)$ with $w^T x \geq k - \eps$, and does not terminate if $OPT \leq k - \eps$. As $OPT \leq \|w\|_1$, we may find $OPT$ to precision $2\eps$ by binary search over values of $k$ in $O(\log (\|w\|_1/\eps))$ calls to the algorithm and hence
$$ O((\|w\|_1 (m_1 + n) \log (m_1+n) + n \log M) (m_1 n^2 + n^3 +m_2 n + m_3) \log (\|w\|_1/\eps) /\eps^2)$$
arithmetic operations, where $m_1 := \sum_{i = 1}^p n_i$, $m_2 := \sum_{i = 1}^p n_i^2$, and $m_3 := \sum_{i = 1}^p n_i^3$. If all $n_i \leq 2$ and $\eps = 1/4k$ then the algorithm correctly computes $OPT = \max\{w^T x :x \in P(B)\}$ in
$$O((\|w\|_1(m + n) \log (m + n) + n \log M) \|w\|_1^2 \log \|w\|_1)$$
iterations and
$$ O((\|w\|_1(m + n) \log (m + n) + n \log M) (m_1 n^2 + n^3 +m_2 n + m_3)\|w\|_1^2 \log \|w\|_1)$$
arithmetic operations. \end{theorem}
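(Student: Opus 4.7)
The plan is to combine Theorem~\ref{thm:bl-sinkhorn} (the Sinkhorn-style iteration analysis) with the Gaussian-integer capacity bound from Theorem~\ref{thm:bl-capacity}, after carefully tracking how the normalization $B \mapsto B/\sqrt{mnM}$ changes the capacity. The theorem then follows in three stages: (a) prove the single-shot decision guarantee (termination when $OPT \geq k$, nontermination when $OPT$ is small); (b) lift this to an $OPT$ computation by binary search; (c) sharpen to an exact computation using half-integrality when $n_i \leq 2$.

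For (a), the normalization ensures $\sum_i \|B_i'\|_F^2 \leq 1$, so Theorem~\ref{thm:bl-sinkhorn} applies and gives a termination bound of $2(1-\capa_{w,k}B')/\eps^2$ along with the output quality $w^T x \geq (1-\eps)k$. The contrapositive of the output guarantee gives nontermination when $OPT$ falls below this threshold; no separate nontermination argument is needed beyond this. The work here is to show a lower bound on $\capa_{w,k} B'$ when $OPT \geq k$. By Lemma~\ref{lem:bl-cap-ineq} we may pick any $x \in P(B)$ with $w^T x \geq k$ and write $\capa_{w,k} B' \geq \capa_x B'$. A direct change of variables $Z_i \mapsto c Z_i$ with $c = mnM$ in the defining expression of $\capa_x B$ (cf.\ Theorem~\ref{thm:bl}) yields the identity
\[
\capa_x B' \;=\; \capa_x B - \Bigl(\sum_i x_i n_i\Bigr) \log c.
\]
Since $x \in P(B)$ forces $\sum_i x_i n_i \leq n$, and since for Gaussian-integer $B$ the argument sketched in the proof of Theorem~\ref{thm:bl-capacity} (an analogue of Theorem~\ref{thm:cap-lb}) gives $-\capa_x B = O(k(m+n)\log(m+n))$, we conclude $\capa_{w,k} B' \geq -O(k(m+n)\log(m+n) + n\log M)$. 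Substituting into the iteration bound from Theorem~\ref{thm:bl-sinkhorn} yields the claimed $T$.

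For (b), since $w \in \Z^p_{\geq 0}$ we have $\|w\|_\infty \leq OPT \leq \|w\|_1$, so binary search over $k$ with calls to \textsc{WeightedSinkhorn} at tolerance $\eps$ localizes $OPT$ to an interval of length $2\eps$ using $O(\log(\|w\|_1/\eps))$ calls. For (c), when all $n_i \leq 2$ the polytope $P(B)$ is half-integral, so $OPT \in \tfrac{1}{2}\Z$; choosing $\eps = 1/(4k)$ inside each binary-search step provides a gap smaller than $1/2$ between the ``terminates'' and ``does not terminate'' thresholds and thus identifies $OPT$ exactly. The per-iteration arithmetic cost is dominated by inverting the $Y_i$'s ($O(m_3)$), assembling $\sum_i e^{w_i z/n_i} B_i^\dagger Y_i^{-1} B_i$ ($O(m_1 n^2 + m_2 n)$), diagonalizing it ($O(n^3)$), and a one-dimensional KL projection (negligible). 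Multiplying these by $T$ and the binary-search factor gives the stated bounds.

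The main obstacle I anticipate is bookkeeping the two slightly different roles of $\eps$: Theorem~\ref{thm:bl-sinkhorn} phrases the output quality multiplicatively as $w^T x \geq (1-\eps)k$, while the current statement is additive, $w^T x \geq k - \eps$. Reconciling these (by feeding an appropriately rescaled $\eps$ into \textsc{WeightedSinkhorn}) without losing the stated iteration count $O((k(m+n)\log(m+n)+n\log M)/\eps^2)$ requires care, especially in the binary-search step where $k$ itself varies; a secondary technical subtlety is that the capacity bound $-\capa_x B = O(k(m+n)\log(m+n))$ is only \emph{sketched} in the proof of Theorem~\ref{thm:bl-capacity}, and one must verify it holds uniformly over any $x \in P(B)$ with $w^T x \geq k$, not merely for vertices.
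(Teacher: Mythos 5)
Your proposal matches the paper's proof in essentially every step: normalize $B$, lower-bound $\capa_{w,k}$ of the normalized map via a scalar-rescaling change of variables together with Theorem~\ref{thm:bl-capacity} (the paper routes this through the analogue of Lemma~\ref{lem:cap-scalar} applied to $X$ rather than your $Z_i\mapsto cZ_i$ substitution, but the computation is the same), invoke Theorem~\ref{thm:bl-sinkhorn} for termination and output quality, take the contrapositive for nontermination, then binary search and half-integrality with $\eps=1/4k$. The two caveats you flag (additive vs.\ multiplicative $\eps$, and the only-sketched capacity bound for a non-vertex $x$) are glossed over in the paper in exactly the same way, so they are not gaps relative to the paper's own argument.
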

We can check if $BL(B)$ is empty by computing the value for $w = (n_1, \dots, n_1)$. If $BL(B) \neq \emptyset$, optimizing $w^Tx$ over $BL(B)$ reduces to optimizing $2\|w\|_1 (n_1, \dots, n_p) + w$ over $P(B)$.

\begin{proof} First assume $OPT \geq k$. Note that $B/\sqrt{mnM}$ has size at most $1$. By the same proof as Lemma \ref{lem:cap-scalar}, $\capa_{w,k} B/\sqrt{mnM} \geq \capa_{w,k} B - n \log mnM$, which is $- O(k (m + n) \log (m+n) + n \log M)$ by Theorem \ref{thm:bl-capacity}. By Theorem \ref{thm:bl-sinkhorn}, if $OPT \geq k$ \textsc{WeightedSinkhorn} terminates before step
$$T = O((k (m + n) \log (m+n) + n \log M)/\eps^2) = O(k^3(m + n) \log (m + n) + k^2 n \log M)$$
and outputs a feasible point $x\in P(B)$ with $w^T x \geq k - \eps$. If $OPT \leq k - \eps$, the algorithm cannot terminate before step $T$ or else it outputs a feasible point with value $\geq k - \eps$, a contradiction. The case of $n_i \leq 2$ follows by half-integrality.\end{proof}
\begin{remark}
If we do not have all $n_i = 2$, the algorithm still runs in time $O((k (m + n) \log (m+n) + n \log M)/\eps^2)$ - only we may need to take $\eps$ exponentially small to certify $OPT \geq k$, because $OPT$ need not be half-integral. The number of arithmetic operations per step is $O(m n^2 + n^3 + \sum_i n_i^2 n + \sum_i n_i^3)$.

\end{remark}

We now give another proof of Theorem~\ref{thm:bl-conp}.

\begin{proof}[Another Proof of Theorem~\ref{thm:bl-conp}]
It suffices to prove the statement for $P(B)$; the claim for $BL(B)$ follows immediately because $BL(B) = P(B) \cap \{x: \sum_i n_i x_i = n\}$.

Suppose $x \in P(B)$. Then $x$ is in the convex hull of at most $p + 1$ vertices of $P(B)$, which are half-integral. By Theorem \ref{thm:bl-less}, there is a polynomial time algorithm to decide if a half-integral vector is in $P(B)$, so the $p + 1$ vertices are a polynomial length certificate that $x \in P(B)$.

Next suppose $x \not\in P(B)$. Then there is some subspace $V \subseteq \C^n$ such that $\sum_i x_i \dim \row B_i \cap V > \dim V$. If we let $w_i = \dim \row B_i \cap V$, then we have $OPT = \max\{w^T x' :x' \in P(B)\} < w^T x.$ We can verify this in polynomial time using Theorem \ref{thm:weighted-sinkhorn-decision}, so $w$ is a certificate that $x \not \in P(B)$. \end{proof}

\ifanonymous\else
\section*{Acknowledgements} CF acknowledges helpful conversations with Akshay Ramachandran and Harold Nieuwboer.
The authors thank Nikhil Srivastava for references to eigendecomposition algorithms. 
The authors also thank Koyo Hayashi and Hiroshi Hirai for references~\cite{Aas2014a,Gietl2013} and sharing a draft of~\cite{Hayashi2022}.%
\fi

\bibliographystyle{alpha}
\bibliography{rank2BL}

\newcommand{\etalchar}[1]{$^{#1}$}
\begin{thebibliography}{BGVKS20}

\bibitem[Aas14]{Aas2014a}
Erik Aas.
\newblock Limit points of the iterative scaling procedure.
\newblock {\em Annals of Operations Research}, 215(1):15--23, 2014.

\bibitem[AKRS21]{amendola2021invariant}
Carlos Am{\'e}ndola, Kathl{\'e}n Kohn, Philipp Reichenbach, and Anna Seigal.
\newblock Invariant theory and scaling algorithms for maximum likelihood
  estimation.
\newblock {\em SIAM Journal on Applied Algebra and Geometry}, 5(2):304--337,
  2021.

\bibitem[Bar98]{Barthe1998}
Franck Barthe.
\newblock On a reverse form of the {Brascamp-Lieb} inequality.
\newblock {\em Inventiones Mathematicae}, 134(2):335--361, 1998.

\bibitem[BCCT08]{Bennett2008}
Jonathan Bennett, Anthony Carbery, Michael Christ, and Terence Tao.
\newblock The {Brascamp-Lieb} inequalities: Finiteness, structure and
  extremals.
\newblock {\em Geometric and Functional Analysis}, 17(5):1343--1415, 2008.

\bibitem[BD06]{burgin2006hilbert}
Matthias B{\"u}rgin and Jan Draisma.
\newblock The hilbert null-cone on tuples of matrices and bilinear forms.
\newblock {\em Mathematische Zeitschrift}, 254(4):785--809, 2006.

\bibitem[BFG{\etalchar{+}}19]{burgisser2019towards}
Peter B{\"u}rgisser, Cole Franks, Ankit Garg, Rafael Oliveira, Michael Walter,
  and Avi Wigderson.
\newblock Towards a theory of non-commutative optimization: Geodesic 1st and
  2nd order methods for moment maps and polytopes.
\newblock In {\em Proceedings of the IEEE 60th Annual Symposium on Foundations
  of Computer Science (FOCS)}, pages 845--861. IEEE, 2019.

\bibitem[BGVKS20]{Banks2020}
Jess Banks, Jorge Garza-Vargas, Archit Kulkarni, and Nikhil Srivastava.
\newblock Pseudospectral shattering, the sign function, and diagonalization in
  nearly matrix multiplication time.
\newblock In {\em The proceedings of the IEEE 61st Annual Symposium on
  Foundations of Computer Science (FOCS)}, pages 529--540. IEEE, 2020.

\bibitem[Bha09]{bhatia2009positive}
Rajendra Bhatia.
\newblock {\em Positive Definite Matrices}.
\newblock Princeton university press, 2009.

\bibitem[Bha13]{bhatia2013matrix}
Rajendra Bhatia.
\newblock {\em Matrix analysis}, volume 169.
\newblock Springer Science \& Business Media, 2013.

\bibitem[CLV01a]{Chang2001b}
Shiow{-}yun Chang, Donna~C. Llewellyn, and John~H. {Vande Vate}.
\newblock Matching 2-lattice polyhedra: finding a maximum vector.
\newblock {\em Discrete Mathematics}, 237(1):29--61, 2001.

\bibitem[CLV01b]{Chang2001}
Shiow{-}yun Chang, Donna~C. Llewellyn, and John~H. {Vande Vate}.
\newblock Two-lattice polyhedra: duality and extreme points.
\newblock {\em Discrete Mathematics}, 237(1):63--95, 2001.

\bibitem[DM17]{Derksen2017}
Harm Derksen and Visu Makam.
\newblock Polynomial degree bounds for matrix semi-invariants.
\newblock {\em Advances in Mathematics}, 310:44--63, 2017.

\bibitem[FR04]{Fortin2004}
Marc Fortin and Christophe Reutenauer.
\newblock Commutative/noncommutative rank of linear matrices and subspaces of
  matrices of low rank.
\newblock {\em S{\'e}minaire Lotharingien de Combinatoire}, 52:B52f, 2004.

\bibitem[Fra18]{Franks2018}
Cole Franks.
\newblock Operator scaling with specified marginals.
\newblock In {\em Proceedings of the 50th Annual ACM SIGACT Symposium on Theory
  of Computing (STOC)}, pages 190--203, 2018.

\bibitem[FW20]{franks2020minimal}
Cole Franks and Michael Walter.
\newblock Minimal length in an orbit closure as a semiclassical limit.
\newblock {\em arXiv preprint arXiv:2004.14872}, 2020.

\bibitem[GGOW16]{garg2016deterministic}
Ankit Garg, Leonid Gurvits, Rafael Oliveira, and Avi Wigderson.
\newblock A deterministic polynomial time algorithm for non-commutative
  rational identity testing.
\newblock In {\em Proceedings of the IEEE 57th Annual Symposium on Foundations
  of Computer Science (FOCS)}, pages 109--117. IEEE, 2016.

\bibitem[GGOW18]{Garg2018}
Ankit Garg, Leonid Gurvits, Rafael Oliveira, and Avi Wigderson.
\newblock Algorithmic and optimization aspects of {Brascamp--Lieb}
  inequalities, via operator scaling.
\newblock {\em Geometric and Functional Analysis}, 28(1):100--145, 2018.

\bibitem[GGOW19]{Garg2019}
Ankit Garg, Leonid Gurvits, Rafael Oliveira, and Avi Wigderson.
\newblock Operator scaling: Theory and applications.
\newblock {\em Foundations of Computational Mathematics}, 20:223–290, 2019.

\bibitem[GLS12]{grotschel2012geometric}
Martin Gr{\"o}tschel, L{\'a}szl{\'o} Lov{\'a}sz, and Alexander Schrijver.
\newblock {\em Geometric algorithms and combinatorial optimization}, volume~2.
\newblock Springer Science \& Business Media, 2012.

\bibitem[GP13]{Gijswijt2013}
Dion Gijswijt and Gyula Pap.
\newblock An algorithm for weighted fractional matroid matching.
\newblock {\em Journal of Combinatorial Theory, Series B}, 103(4):509--520,
  2013.

\bibitem[GR13]{Gietl2013}
Christoph Gietl and Fabian~P. Reffel.
\newblock Accumulation points of the iterative proportional fitting procedure.
\newblock {\em Metrika}, 76(6):783--798, 2013.

\bibitem[Gur04]{Gurvits2004}
Leonid Gurvits.
\newblock Classical complexity and quantum entanglement.
\newblock {\em Journal of Computer and System Sciences}, 69(3):448--484, 2004.

\bibitem[HH21]{Hamada2021}
Masaki Hamada and Hiroshi Hirai.
\newblock {Computing the nc-rank via discrete convex optimization on CAT(0)
  spaces}.
\newblock {\em SIAM Journal on Applied Algebra and Geometry}, 5(3):455--478,
  2021.

\bibitem[HH22]{Hayashi2022}
Koyo Hayashi and Hiroshi Hirai.
\newblock Finding hall blockers by matrix scaling.
\newblock {\em arXiv preprint arXiv:2204.07425}, 2022.

\bibitem[Ide16]{Idel2016}
Martin Idel.
\newblock A review of matrix scaling and {Sinkhorn's} normal form for matrices
  and positive maps.
\newblock {\em arXiv preprint arXiv:1609.06349v1}, 2016.

\bibitem[IKQS15]{Ivanyos2015}
G{\'{a}}bor Ivanyos, Marek Karpinski, Youming Qiao, and Miklos Santha.
\newblock Generalized {Wong} sequences and their applications to {Edmonds'}
  problems.
\newblock {\em Journal of Computer and System Sciences}, 81(7):1373--1386,
  2015.

\bibitem[IQS17]{Ivanyos2017}
G{\'{a}}bor Ivanyos, Youming Qiao, and K.~V. Subrahmanyam.
\newblock Non-commutative {Edmonds'} problem and matrix semi-invariants.
\newblock {\em Computational Complexity}, 26(3):717--763, 2017.

\bibitem[IQS18]{Ivanyos2018}
G{\'{a}}bor Ivanyos, Youming Qiao, and K.~V. Subrahmanyam.
\newblock Constructive non-commutative rank computation is in deterministic
  polynomial time.
\newblock {\em Computational Complexity}, 27(4):561--593, 2018.

\bibitem[KI04]{kabanets2004derandomizing}
Valentine Kabanets and Russell Impagliazzo.
\newblock Derandomizing polynomial identity tests means proving circuit lower
  bounds.
\newblock {\em Computational Complexity}, 13(1):1--46, 2004.

\bibitem[Kin94]{king1994moduli}
Alastair~D. King.
\newblock Moduli of representations of finite dimensional algebras.
\newblock {\em The Quarterly Journal of Mathematics}, 45(4):515--530, 1994.

\bibitem[LSV18]{lee2018efficient}
Yin~Tat Lee, Aaron Sidford, and Santosh~S Vempala.
\newblock Efficient convex optimization with membership oracles.
\newblock In {\em Proceedings of the 31st Conference On Learning Theory
  (COLT)}, pages 1292--1294. PMLR, 2018.

\bibitem[LW16]{Lim2016a}
Cong~Han Lim and Stephen~J. Wright.
\newblock Efficient bregman projections onto the permutahedron and related
  polytopes.
\newblock In {\em Proceedings of the 19th International Conference on
  Artificial Intelligence and Statistics (AISTATS)}, volume~51, pages
  1205--1213, 2016.

\bibitem[Mah64]{Mahler1964}
Kurt Mahler.
\newblock An inequality for the discriminant of a polynomial.
\newblock {\em Michigan Mathematical Journal}, 11(3):257--262, 1964.

\bibitem[MFK94]{mumford1994geometric}
David Mumford, John Fogarty, and Frances Kirwan.
\newblock {\em Geometric Invariant Theory}, volume~34.
\newblock Springer Science \& Business Media, 1994.

\bibitem[MS22]{Matsuda2022}
Takeru Matsuda and Tasuku Soma.
\newblock Information geometry of operator scaling.
\newblock {\em Linear Algebra and its Applications}, 649:240--267, 2022.

\bibitem[Mur09]{Murota2009book}
Kazuo Murota.
\newblock {\em Matrices and Matroids for System Analysis}.
\newblock Springer-Verlag, Berlin, 2nd edition, 2009.

\bibitem[OS22]{Oki2022}
Taihei Oki and Tasuku Soma.
\newblock Algebraic algorithms for fractional linear matroid parity via
  non-commutative rank.
\newblock {\em arxiv}, 2022.

\bibitem[Par98]{Parlett1998symmetric}
Beresford~N. Parlett.
\newblock {\em The symmetric eigenvalue problem}.
\newblock SIAM, 1998.

\bibitem[SHK{\etalchar{+}}12]{Suehiro2012}
Daiki Suehiro, Kohei Hatano, Shuji Kijima, Eiji Takimoto, and Kiyohito Nagano.
\newblock Online prediction under submodular constraints.
\newblock In {\em Proceedings of the 23rd International Conference on
  Algorithmic Learning Theory (ALT)}, pages 260--274, 2012.

\bibitem[Sin64]{sinkhorn1964relationship}
Richard Sinkhorn.
\newblock A relationship between arbitrary positive matrices and doubly
  stochastic matrices.
\newblock {\em The Annals of Mathematical Statistics}, 35(2):876--879, 1964.

\bibitem[vAGL{\etalchar{+}}20]{Van2020}
Joran van Apeldoorn, Sander Gribling, Yinan Li, Harold Nieuwboer, Michael
  Walter, and Ronald de~Wolf.
\newblock Quantum algorithms for matrix scaling and matrix balancing.
\newblock {\em arXiv preprint arXiv:2011.12823}, 2020.

\bibitem[Val79]{valiant1979complexity}
Leslie~G. Valiant.
\newblock The complexity of computing the permanent.
\newblock {\em Theoretical Computer Science}, 8(2):189--201, 1979.

\bibitem[{Van}92]{VandeVate1992}
John~H. {Vande Vate}.
\newblock Fractional matroid matchings.
\newblock {\em Journal of Combinatorial Theory, Series B}, 55(1):133--145,
  1992.

\end{thebibliography}

\clearpage
\appendix

\section{Bit complexity of linear algebra operations}

\subsection{Models of computations}
We need two models of computations of rational numbers, which we formalize below.

\paragraph{Rational Arithmetic.}
A rational number $r$ is represented as $r = p / q$, where $p \in \Z$ and $q \in \Z_{>0}$.
This model of arithmetic is suitable to represent rational numbers exactly such as results of Gaussian elimination and Gram-Schmidt orthogonalization.
A caveat of this model is that careless implementation of an algorithm may explode the bit length to represent integers $p, q$, even if it makes only polynomially many arithmetic operations.

\paragraph{Finite Precision Arithmetic.}
A rational number $r$ is rounded to a fixed length of bits, where the length may depend on the size of the problem and desired accuracy.
We call the length of bits the \emph{precision}.
This model of arithmetic is more suitable to numerical computations such as eigendecompositions.

\subsection{Gaussian elimination and pseudoinverse}
The following is a well-known theorem for Gaussian elimination.
\begin{theorem}[\cite{grotschel2012geometric}]
    For $A \in \Q[\sqrt{-1}]^{n \times n}$, the Gaussian elimination algorithm runs in $O(n^3)$ arithmetic operations.
    If $A$ have Gaussian integer entries with magnitude at most $M$, the output as well as intermediate numbers are rationals with numerator and denominator at most $e^{O(n\log (nM))}$.
\end{theorem}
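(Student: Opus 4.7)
The plan is to handle the arithmetic count and the bit complexity separately. The $O(n^3)$ operation count is immediate from the standard description: at elimination stage $k$, one eliminates the column-$k$ entries of the remaining $n-k$ rows by adding a multiple of row $k$ to each, at a cost of $O(n)$ arithmetic operations per row, giving $O(n^2)$ per stage and $O(n^3)$ across all $n$ stages. Back substitution, computing a pseudoinverse, or solving against a right-hand side all add only lower-order work.

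The bit complexity bound rests on the Schur determinant identity: assuming no pivoting is needed, after $k$ elimination stages the $(i,j)$ entry of the working matrix (for $i,j > k$) equals
\begin{equation*}
a_{ij}^{(k)} \;=\; \frac{\det A[\{1,\dots,k,i\};\{1,\dots,k,j\}]}{\det A[\{1,\dots,k\};\{1,\dots,k\}]},
\end{equation*}
as one shows by induction on $k$ using the standard recurrence $a_{ij}^{(k)} = a_{ij}^{(k-1)} - a_{ik}^{(k-1)} a_{kj}^{(k-1)}/a_{kk}^{(k-1)}$ and cofactor expansion. Both numerator and denominator are subdeterminants of the original Gaussian-integer matrix $A$, so Hadamard's inequality gives, for every $S,T\subseteq [n]$ with $|S|=|T|$,
\begin{equation*}
|\det A[S;T]| \;\leq\; |S|^{|S|/2} M^{|S|} \;\leq\; n^{n/2} M^n \;=\; e^{O(n\log(nM))}.
\end{equation*}
Since $a_{ij}^{(k)}$ is a Gaussian rational, rationalizing the denominator by multiplying through its complex conjugate expresses its real and imaginary parts as ordinary rationals whose numerators and denominators are products of at most two such minors, still of magnitude $e^{O(n\log(nM))}$. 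The final output entries (of the row-reduced form or of the inverse) admit an analogous minor-quotient representation by Cramer's rule, so the same bound applies to them.

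The one bookkeeping nuisance is pivoting: with partial or complete pivoting, the index sets $\{1,\dots,k\}$ in the displayed formula are replaced by the actual sets of pivot rows and columns chosen through stage $k$, which are still subsets of $[n]$ of size $k$. Because the Hadamard bound is invariant under row and column permutations of $A$, the conclusion is unchanged. I see no substantial obstacle beyond the routine determinant-identity manipulations that establish the Schur formula and its pivoted variant; these are the only technically delicate step and can be found in any standard treatment of exact Gaussian elimination.
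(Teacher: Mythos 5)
Your proof is correct and is essentially the standard argument from the cited reference \cite{grotschel2012geometric} (Edmonds' observation that the intermediate entries are quotients of minors of $A$, bounded via Hadamard's inequality), which the paper itself does not reprove but simply cites. The only point worth making explicit is that the bound applies to the entries represented as quotients of minors (equivalently, to fractions kept in lowest terms); maintaining this representation, rather than multiplying denominators naively through the recurrence, is what keeps the stored numerators and denominators at $e^{O(n\log(nM))}$, exactly as in the cited treatment.
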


A nonsingular matrix $A^+$ is called a \emph{pseudoinverse} of $A$ if $A^+ A$ is the identity map restricted on $\Img(A)$.
It is easy to construct $A^+$ by the result of Gaussian elimination.
More specifically, let
\[
    A = P \begin{bmatrix}
        I_r & O_{r, n-r} \\
        O_{n-r,r} & O_{n-r,n-r}
    \end{bmatrix} Q^{-1}
\]
be a rank normal form obtained by Gaussian elimination, where $P,Q$ are nonsingular and $r = \rk(A)$.
The first $r$ columns of $Q$ spans $\Img(A)$.
We can see that $QP^{-1}$ is a pseudoinverse of $A$.

\subsection{Gram-Schmidt orthogonalization}
Given an ordered basis $(a_1, \dots, a_n) \subseteq \Q[\sqrt{-1}]^m$ of a subspace $U$, the Gram-Schmidt orthogonalization computes orthogonal ordered basis $(b_1, \dots, b_n)$ of $U$.
The Gram-Schmidt algorithm can be carried out in the same time and bit complexity as Gaussian elimination~\cite{grotschel2012geometric}.

Once we compute $(b_1, \dots, b_n)$, the orthogonal projection onto $U$ is given by $\pi_U = \sum_{i=1}^n \frac{b_i b_i^\dagger}{\norm{b_i}^2}$.
The orthogonal projection onto $U^\perp$ is simply $I - \pi_U$.

\begin{theorem}\label{thm:Gram-Schmidt}
    Given a basis $a_1, \dots, a_n \in \Q[\sqrt{-1}]^m$ of a subspace $U$, the orthogonal projections onto $U$ and its orthogonal complements can be found in $O(mn^2)$ arithmetic operations.
    If $a_1, \dots, a_n$ have Gaussian integer entries with magnitude at most $M$, the output as well as intermediate numbers are rationals with numerator and denominator at most $e^{O(m\log (mM))}$.
\end{theorem}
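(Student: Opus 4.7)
The plan is to compute an orthogonal basis $b_1,\dots,b_n$ of $U$ by the classical Gram--Schmidt procedure, then assemble $\pi_U=\sum_{i=1}^n \frac{b_i b_i^\dagger}{\|b_i\|^2}$ and set $\pi_{U^\perp}=I_m-\pi_U$. The two things to prove are (i) the $O(mn^2)$ arithmetic count and (ii) the bit complexity bound $e^{O(m\log(mM))}$ on every intermediate rational.

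For (i), I would invoke the known complexity of Gram--Schmidt over $\Q[\sqrt{-1}]$: producing $b_1,\dots,b_n$ from $a_1,\dots,a_n$ requires, at stage $i$, computing $i-1$ inner products of length-$m$ vectors, forming the weighted sum, and subtracting from $a_i$, which is $O(im)$ arithmetic operations; summing over $i\in[n]$ gives $O(mn^2)$. Computing the scalars $\|b_i\|^{-2}$ and maintaining the basis together with the norms is absorbed in this bound. The paper's surrounding text makes clear that the projections are to be represented (and used) via this orthogonalized basis together with the formula $\pi_U=\sum_i b_i b_i^\dagger/\|b_i\|^2$; I would follow this convention so that $O(mn^2)$ counts all arithmetic needed to present $\pi_U$ and $\pi_{U^\perp}$ in the form in which they are consumed elsewhere in the paper.

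For (ii), the key tool is the determinantal representation of the Gram--Schmidt vectors. Let $A_j$ be the $m\times j$ matrix with columns $a_1,\dots,a_j$, and let $G_j=A_j^\dagger A_j$ be its Gram matrix. Standard identities give $b_j = a_j - A_{j-1} G_{j-1}^{-1} A_{j-1}^\dagger a_j$ and $\|b_j\|^2 = \det G_j/\det G_{j-1}$. By Hadamard's inequality applied to $G_j$, we have $|\det G_j|\le (m M^2)^j \le e^{O(m\log(mM))}$, and each entry of $G_j$ is a Gaussian integer bounded by $m M^2$. Cramer's rule then expresses each coordinate of $b_j$ (and of $G_{j-1}^{-1} A_{j-1}^\dagger a_j$) as a ratio of integer determinants of size at most $j\le m$, each bounded in magnitude by $e^{O(m\log(mM))}$, so $b_j$ has Gaussian-rational entries whose numerators and denominators are at most $e^{O(m\log(mM))}$. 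The same bound propagates to $\|b_j\|^2$, to the rank-one outer products $b_j b_j^\dagger/\|b_j\|^2$, and to their sum, since a common denominator can be taken as a bounded product of $\det G_j$'s, which remains $e^{O(m\log(mM))}$. Finally $I_m-\pi_U$ inherits the same bit bound.

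The main (mild) obstacle is keeping the bit complexity of \emph{intermediate} quantities under control, not just the final output. A naive Gram--Schmidt, where one forms $b_i$ by literally evaluating the recursion with unreduced fractions, can produce intermediate rationals with bit length growing super-polynomially. I would therefore perform the computation by maintaining, after each step, the representation $b_i=(1/\det G_i)\,\tilde b_i$ with $\tilde b_i$ Gaussian-integral and bounded via Hadamard, analogous to the integer-preserving Gram--Schmidt used in \cite{grotschel2012geometric}; this cleanly guarantees the $e^{O(m\log(mM))}$ bound on every intermediate number as well, completing the proof.
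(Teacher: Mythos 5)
Your proposal is correct and follows essentially the same route as the paper, which simply runs Gram--Schmidt (citing \cite{grotschel2012geometric} for the fact that it has the same arithmetic and bit complexity as Gaussian elimination, i.e.\ the determinantal/Cramer representation you spell out) and then forms $\pi_U=\sum_i b_ib_i^\dagger/\|b_i\|^2$ and $\pi_{U^\perp}=I-\pi_U$. The one imprecise phrase is taking the common denominator of $\pi_U$ to be ``a bounded product of $\det G_j$'s,'' which taken literally would give $e^{O(nm\log(mM))}$; the clean fix is to note that the partial sum $\sum_{j\le t} b_jb_j^\dagger/\|b_j\|^2$ equals $A_tG_t^{-1}A_t^\dagger=A_t\,\mathrm{adj}(G_t)A_t^\dagger/\det G_t$, so the single denominator $\det G_t\le e^{O(m\log(mM))}$ suffices at every stage.
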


\subsection{Eigendecomposition of Hermitian matrices}
Given a Hermitian matrix $A \in \Q[\sqrt{-1}]^{n \times n}$, we are to find all pairs of eigenvalues and eigenvectors.
Since eigenvalues can be irrational, the best we can hope for is to approximate them in desired accuracy.
The following result is essentially achieved by the QR algorithm with Wilkinson's shift and the standard $O(n^3)$-time matrix multiplication.
For more recent improvements, see \cite{Banks2020} and references therein.

\begin{theorem}[\cite{Parlett1998symmetric}]\label{thm:eigh}
Let $A \in \Q[\sqrt{-1}]^{n \times n}$ be a Hermitian matrix with distinct eigenvalues $\lambda_i$ and (normalized) eigenvector $v_i$ ($i=1,\dots,n$).

There exists an algorithm that for $\delta > 0$, outputs pairs $(\hat\lambda_i, \hat v_i)$ such that
\[
    \abs{\lambda_i - \hat\lambda_i} \leq \delta \norm{A}, \quad
    \norm{v_i - \hat v_i} \leq \delta \norm{A},
\]
by making $O(n^3 + n^2\log(1/\delta))$ arithmetic operations with $O(\log(n/\delta))$ precision.
\end{theorem}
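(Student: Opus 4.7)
The plan is to follow the classical two-phase approach from Parlett's book: first reduce $A$ to Hermitian tridiagonal form $T$ via a sequence of Householder reflections, then apply shifted QR iteration to $T$. Both phases preserve Hermiticity, and the tridiagonal structure makes the second phase cheap per iteration. For the eigenvectors, one can either accumulate the Householder and Givens rotations explicitly, or, more cheaply, recover each $\hat{v}_i$ from the approximate eigenvalues by one or two steps of inverse iteration applied to the tridiagonal $T$ (followed by back-transformation through the stored Householder vectors).

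Concretely I would proceed as follows. Step 1 (tridiagonalization): apply $n-2$ Householder reflections $H_1,\dots,H_{n-2}$ so that $T := H_{n-2}\cdots H_1 A H_1\cdots H_{n-2}$ is tridiagonal and Hermitian, costing $O(n^3)$ arithmetic operations. Store each Householder vector compactly (so the product $Q := H_1\cdots H_{n-2}$ can be applied to a vector later in $O(n^2)$ time). Step 2 (QR iteration on $T$): repeatedly apply a QR step with Wilkinson's shift, using Givens rotations to exploit the tridiagonal structure so each sweep costs only $O(n)$ arithmetic operations. Wilkinson's shift is known to give (generically) cubic convergence at the trailing $2\times 2$ block, so after $O(\log\log(1/\delta))$ sweeps one can deflate a trailing eigenvalue to within $\delta\|A\|$; across all $n$ deflations this gives $O(n^2\log(1/\delta))$ operations, where the $\log(1/\delta)$ factor absorbs the worst-case non-asymptotic iterations and the residual polish. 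Step 3 (eigenvectors): for each approximate eigenvalue $\hat{\lambda}_i$, run inverse iteration on $T-\hat{\lambda}_iI$ starting from a random unit vector; under the distinct-eigenvalue hypothesis a constant number of iterations suffices to reach residual $\delta\|A\|$, at cost $O(n)$ per iteration because $T$ is tridiagonal, and then apply $Q$ in $O(n^2)$ to get $\hat{v}_i$. Summing across $i$ gives an additional $O(n^3)$ term, matching the claimed total $O(n^3 + n^2\log(1/\delta))$.

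The precision analysis follows the standard backward-error framework for symmetric eigensolvers. Each Householder reflection in fixed precision $t = O(\log(n/\delta))$ introduces a backward error of norm $O(2^{-t}\|A\|)$ per operation, and with $O(n^3)$ operations the total backward error on $A$ is $O(n^{O(1)} 2^{-t} \|A\|) \leq \tfrac{\delta}{2}\|A\|$ for suitable constant in $t$. The same holds for the Givens-based QR sweeps. Weyl's inequality then transfers the backward error to a forward bound $|\lambda_i-\hat{\lambda}_i| \leq \delta\|A\|$. For eigenvectors, the $\sin\Theta$ theorem of Davis–Kahan gives $\|v_i-\hat{v}_i\|\leq \delta\|A\|/\mathrm{gap}_i$ in general, but since we are allowed a dependence only on $\|A\|$ (not on $1/\mathrm{gap}_i$), one increases precision to $t=O(\log(n/\delta))$ with the constant large enough to absorb the smallest gap; the distinct-eigenvalue assumption ensures gaps are positive, and inverse iteration converges once the residual beats the gap.

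The hard part is really the precision analysis, not the combinatorics of the algorithm: one must show that working with $O(\log(n/\delta))$ bits suffices end-to-end, including in the deflation criterion for QR and in inverse iteration where near-singularity of $T-\hat{\lambda}_i I$ could in principle amplify rounding. The clean way to handle this is to appeal to the mixed forward/backward stability results for Householder tridiagonalization and for the shifted QR algorithm as presented in \cite{Parlett1998symmetric}, which explicitly give the stated precision bound, and then combine with Weyl and Davis–Kahan perturbation bounds to conclude the forward error statement of the theorem.
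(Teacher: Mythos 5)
Your overall architecture — Householder tridiagonalization, Wilkinson-shifted QR on the tridiagonal, inverse iteration plus back-transformation for eigenvectors, and a backward-error analysis transferred forward via Weyl and Davis--Kahan — is exactly the route the paper intends; it gives no proof of its own and simply cites \cite{Parlett1998symmetric}, with the sentence preceding the theorem naming the same QR-with-Wilkinson's-shift scheme. The operation count $O(n^3 + n^2\log(1/\delta))$ is accounted for correctly.

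There is, however, a genuine gap in your precision analysis for the eigenvectors, at the step where you claim one can take $t = O(\log(n/\delta))$ ``with the constant large enough to absorb the smallest gap.'' The eigenvalue gaps of $A$ are not a function of $n$ and $\delta$: they depend on the entries of $A$ and, for a general Hermitian matrix with rational (Gaussian-rational) entries, can be far smaller than any fixed power of $\delta/n$. No choice of absolute constant in $O(\log(n/\delta))$ can dominate a quantity that varies with the input, so Davis--Kahan does not yield $\lVert v_i - \hat v_i\rVert \le \delta\lVert A\rVert$ from backward error $O(2^{-t}\lVert A\rVert)$ alone; the same issue invalidates the claim that a constant number of inverse-iteration steps suffices, since the contraction factor per step is roughly $\delta\lVert A\rVert/\mathrm{gap}_i$. ``Distinct eigenvalues, hence positive gaps'' is a qualitative statement and cannot feed a quantitative bit-count. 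The correct repair is the one the paper makes in the remark immediately following the theorem: for matrices with Gaussian integer entries of magnitude at most $M$, a Mahler-type root-separation bound gives $\kappa_A = e^{-O(n\log M + n\log n)}$ for the minimum eigenvalue gap, so the required precision increases by an additional additive $O(n\log M + n\log n)$ bits — still polynomial, but not $O(\log(n/\delta))$ as a function of $n$ and $\delta$ alone. Your write-up should either carry an explicit $1/\mathrm{gap}_i$ (or $1/\kappa_A$) dependence in the eigenvector error, or invoke such a separation bound to fold the gap into the precision, as the paper does.
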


\begin{remark}
    If $A$ has a repeated eigenvalue, the above definition of the approximation of eigenvectors must be replaced with the approximation of eigenspaces.
    The QR algorithm can also find an approximate eigenspace $\tilde V$ close to a true eigenspace $V$ with respect to the principal angle distance.
    More precisely, $\norm{\pi_{\tilde V} - \pi_V} \leq O(\kappa_A \delta \norm{A})$, where $\kappa_A$ is the condition number of eigenspaces, i.e., the minimum gap of distinct eigenvalues.
    If $A$ has Gaussian integer entries with magnitude at most $M$, we have $\kappa_A = e^{-O(n\log M + n\log n)}$~\cite{Mahler1964}.
    So this only adds an additional $O(n\log M + n\log n)$ factor to the precision.
    For simplicity, we assume that matrices have distinct eigenvalues in the following argument.
\end{remark}

\section{Finite precision algorithms}\label{sec:finite}

So far we have assumed exact arithmetic for computing each iterate in \textsc{MajSinkhorn} (our main algorithmic primitive in this work; Algorithm~\ref{alg:maj-scaling}). In order to obtain a polynomial time algorithm in the Turing model, we need to show that we can round the numbers in each iterate to finite precision. Morally this should be possible because we are optimizing a (geodesically) convex function, but making the argument rigorous still requires some work.

Algorithm \ref{alg:maj-scaling-finite} (\textsc{MajSinkhornFinite}) is the finite precision version of \textsc{MajSinkhorn}. Note that we must represent the eigendecomposition of a matrix implicitly, because diagonalizing unitaries of rational matrices are hardly ever rational.
\begin{Algorithm}
Algorithm \textsc{MajSinkhornFinite}$(\Phi, \alpha, \beta, \eps)$:
\begin{description}
\item[\hspace{.2cm}\textbf{Input:}] A CP map $\Phi:\C^{n\times n} \to \C^{m\times m}$ with Kraus operators in $\Q[\sqrt{-1}]^{m \times n}$, non-increasing vectors $\alpha \in \Q^n, \beta \in \Q^m$, and a parameter $\eps > 0$.

\item[\hspace{.2cm}\textbf{Output:}] $X,Y,q$ such that the scaling $q\Phi_{X^{-1/2}, Y^{-1/2}} $ that is $(\alpha, \beta)$-majorized and has size at least $(1 -  \eps)^2k$.

\item[\hspace{.2cm}\textbf{Algorithm:}] Set $X = I_n, Y = I_m, q = 1$. Implicitly represent $X$ with the pair $x \in \Q^n,0 \preceq A \in \Q[\sqrt{-1}]^{n \times n}$, where $X = U \diag(x) U^*$ for $U$ diagonalizing $A$. Represent $Y$ with $(y,B)$ analogously.
\item[\hspace{.4cm}\textbf{For}] $t \in \{0,..,T\}$: \end{description}
\begin{enumerate}
\item Using the implicit representations of $X$ and $Y$, compute rational vectors $ \mu \approx_\delta \lambda( q X^{-1} \Phi^*(Y^{-1}))$ and $ \nu \approx_\delta \lambda( q Y^{-1} \Phi(X^{-1}))$. Using \textsc{KLProject}, compute the divergences $\eps_X := D(P_\alpha || \mu)$ and $\eps_Y:=D( P_\beta || \nu)$, where $P_\alpha$ and $P_\beta$ are the down-closure of the permutahedron of $\alpha$ and $\beta$, respectively. Compute also $\eps_q \approx_\delta q a - k \log q$ where $a = \tr \Phi(X^{-1})Y^{-1}$.
\item \textbf{If:} $\eps_X, \eps_Y, \eps_q  \leq \min\{\alpha_1, \beta_1\} \frac{\eps^2}{2}$, \textbf{Output:} $X,Y,(1 - \eps) q$.\\
\textbf{Else: } Update either $X,Y$ or $q$ depending on which of $\eps_X, \eps_Y, \eps_q$ is largest:
\begin{enumerate}
\item \textbf{$X$ update:}\begin{enumerate}
\item\label{it:finite-x-C} Using the implicit representation $(y,B)$ of $Y$, compute $0 \preceq \tilde C \in \Q^{n\times n}$ such that $\tilde C \approx_\delta C:=e^z \Phi^*( Y^{-1})$ in the trace norm. Compute $\lambda \in \Q^{n}$ diagonal such that $U \diag(\lambda) U^* \approx_{\delta} \tilde C$ in the trace norm, where $U$ diagonalizes $\tilde C$.
\item \label{it:finite-x-opt} Compute $ x \in \Q^{n}$ to be a minimizer of $x^{-1} \cdot \lambda + \alpha \cdot (\log x^\downarrow)$ subject to $x \geq \mathbf 1$ using \textsc{KLProject}.
\item Return $X$ represented implicitly as $(x, \tilde C)$, and set $\tilde a = (\lambda \cdot x^{-1})/q$ for use in the next $z$ update. \end{enumerate}
\item \textbf{$Y$ update:} analogous to $X$.
\item \textbf{$q$ update:} take $\tilde a$ computed in the $X$ or $Y$ update (or $\tilde a = \tr \Phi(I_n)$ if this is the first step). If $\tilde a \geq k$, set $q = e^z = 1$. Else set $q = e^z =  k/\tilde a$.
\end{enumerate}
\end{enumerate}
\caption{Finite precision algorithm for majorization scaling.}\label{alg:maj-scaling-finite}
\end{Algorithm}

Next we argue that this algorithm has the same time complexity and guarantees as \textsc{MajSinkhorn}. First we argue that, for sufficient precision, the finite precision steps still make at least half the progress that the infinite precision steps make. Next we argue that  \textsc{MajSinkhorn} only ever results in $\|X\|_2,\|Y\|_2,e^z \leq e^{O(T)},$ and that this remains true even with the finite precision steps. Finally, we show that these radius bounds imply that each step of the finite precision algorithm can be performed in polynomial time with polynomial bit complexity.

\begin{theorem}[Finite precision progress bound]\label{thm:finite-precision-progress}
Suppose $\delta \leq \min\{\alpha_1, \beta_1\} \frac{\eps^2}{8}$. Then the amount of progress made by the $X$ or $Y$ update in \textsc{MajSinkhornFinite} is at least half the progress made by the corresponding step in \textsc{MajSinkhorn}. If $\delta q \leq \frac{\eps^2}{8}$, where $q$ the updated value, the $q$ update decreases the function value by at least half as much as the infinite precision $z$ update.
\end{theorem}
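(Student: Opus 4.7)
The plan is to show that the finite-precision update lands at an iterate whose objective value is within $O(\delta)$ of the true minimum, then invoke the non-termination conditions to show the infinite-precision progress dominates this discrepancy by a factor of two.

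For the $X$ update, the finite-precision algorithm approximates $C := q\Phi^*(Y^{-1})$ by $\tilde C$ in trace norm, computes an approximate eigendecomposition $\tilde U \diag(\tilde\lambda)\tilde U^* \approx_\delta \tilde C$ in trace norm, and minimizes $\tilde\lambda \cdot x^{-1} + \alpha \cdot \log x^\downarrow$ over $x \geq \bfone$ via \textsc{KLProject} to obtain $\tilde X_{\mathrm{new}} = \tilde U \diag(\tilde x)\tilde U^*$. I would bound $f(\tilde X_{\mathrm{new}}, Y, q) - f(X_{\mathrm{new}}, Y, q) = O(\delta)$ by chaining three estimates. First, two applications of H\"older's inequality combined with $\|\tilde X_{\mathrm{new}}^{-1}\|_{\mathrm{op}} \leq 1$ (from $\tilde X_{\mathrm{new}} \succeq I_n$) and the two trace-norm approximations yield $|\tr(\tilde X_{\mathrm{new}}^{-1} C) - \tilde\lambda \cdot \tilde x^{-1}| \leq 2\delta$. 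Second, Lemma~\ref{lem:dual} identifies $\tilde\lambda \cdot \tilde x^{-1} + \alpha \cdot \log \tilde x^\downarrow$ with $\tr \tilde C - D(P_\alpha \| \tilde\lambda)$. Third, the Lidskii--Mirsky--Wielandt inequality yields $\|\tilde\lambda - \lambda(C)\|_1 \leq 2\delta$, and the envelope theorem applied to the dual representation $D(P_\alpha \| q) = \sum_i q_i - \min_{x \geq 0}(q \cdot e^{-x} + \alpha \cdot x^\downarrow)$ shows that its $q$-gradient is $\bfone - e^{-x^*(q)} \in [0,1]^n$, giving the $\ell^1$-Lipschitz bound $|D(P_\alpha \| \tilde\lambda) - D(P_\alpha \| \lambda(C))| \leq 2\delta$. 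Combining these and using $|\tr\tilde C - \tr C| \leq \delta$ delivers the claimed $O(\delta)$ suboptimality. The $Y$ update is symmetric; the $q$ update is a one-dimensional convex minimization of $qa - k\log q$ with closed-form optimum that depends continuously on $a$, yielding an $O(\delta)$ error when $a$ is replaced by $\tilde a \approx_\delta a$.

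To finish, I would invoke Lemma~\ref{lem:prog-bound}, which bounds the infinite-precision progress of the $X$ update below by $D(P_\alpha \| \mu)$ where $\mu = \lambda(q X^{-1/2} \Phi^*(Y^{-1}) X^{-1/2})$. The non-termination condition $\tilde\eps_X > \min\{\alpha_1,\beta_1\}\eps^2/2$, combined with the same $\ell^1$-Lipschitz estimate applied to $\mu,\tilde\mu$, gives $D(P_\alpha \| \mu) \geq \tilde\eps_X - O(\delta)$, which under the hypothesis $\delta \leq \min\{\alpha_1,\beta_1\}\eps^2/8$ is a sufficiently large constant multiple of $\delta$ that subtracting the $O(\delta)$ suboptimality still leaves at least half of it. The main technical obstacle will be handling the fact that the approximate eigenbasis $\tilde U$ diagonalizes $\tilde C$ but not $C$, so $\tilde X_{\mathrm{new}}$ does not diagonalize $C$: the key estimate $|\tr(\tilde X_{\mathrm{new}}^{-1} C) - \tilde\lambda \cdot \tilde x^{-1}| \leq 2\delta$ resolves this via unitary invariance of the trace norm, but relies crucially on the invariant $\tilde X_{\mathrm{new}} \succeq I_n$ --- which is precisely why the \textsc{KLProject} subroutine is constrained to $x \geq \bfone$ rather than being the unconstrained KL projection onto $P_\alpha$, and which must be checked to be preserved under rounding of the iterates.
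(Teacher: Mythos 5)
Your proposal is correct and follows essentially the same route as the paper: the paper likewise treats the finite-precision step as minimizing the perturbed objective $f_{\tilde C}$, uses $\lVert \tilde X^{-1}\rVert_{\mathrm{op}}\le 1$ (from $X \succeq I_n$) and the trace-norm error to get $|f_{\tilde C}(X)-f_C(X)|\le\delta$ uniformly, concludes the new iterate is an $O(\delta)$-approximate minimizer of $f_C$, and then compares against the progress lower bound of Lemma~\ref{lem:prog-bound}, handling the $q$ update separately because its error scales as $\delta q$. Your additional bookkeeping for the approximate eigendecomposition (Mirsky's inequality plus the $\ell^1$-Lipschitz bound on $D(P_\alpha \,\|\, \cdot)$) is a sound refinement of a layer the paper's proof glosses over; it only changes the suboptimality constant from $2\delta$ to $O(\delta)$, a constant-factor matter the paper itself treats loosely.
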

\begin{proof}
Consider the $X$ update; the $Y$ update is analogous. The infinite precision version would set $X = \arg\min_{X \succ I_n} (f_C(X):=\tr X^{-1} C + \alpha \cdot \lambda( \log X))$. In the finite precision version we rather set $X$ to a minimizer of $f_{\tilde C}$ where we have chosen $\tilde C$ such that $\|C - \tilde C\|_{1} \leq \delta$. Note that $|f_{\tilde C}(X) - f_{C} (X)| \leq \delta$ for $X \succeq I_n.$ Therefore the optimal value $OPT$ and $\overline {OPT}$ of $f_C$ and $f_{\tilde C}$ are within $\delta$ of one another and if $X$ is an optimizer of $f_{\tilde C}$ we have $f_C(X) \leq OPT + 2\delta$.

In the $q$ update we choose $q \geq 1$ to minimize $\tilde a q - k \log q$, but the new function value becomes $ a q - k \log q$ where $a = \tr \Phi(X^{-1} )Y^{-1}$. These functions differ by $|a - \tilde a|q$, and $|a - \tilde a| = |\tr (\tilde C/q') X^{-1} - (C/q') X^{-1}| \leq \delta$ where $q' \geq 1$ is the previous value of $q$. By similar reasoning to the $X$ update, we must have that $ a q - k \log q$ decreases by at least $\eps_q - 2 \delta q$ which is at least the decrease for the infinite precision $z$ update minus $3 \delta q$.
\end{proof}
We need a small lemma about the KL projection to the down-closure of the permutahedron.

\begin{lemma}\label{lem:projection-bounds}
Suppose $x$ is a minimizer of $x^{-1} \cdot \lambda + \alpha \cdot (\log x^\downarrow)$ subject to $x \geq \mathbf 1$. Then $\|x\|_\infty$ is at most $\max \lambda/\min \alpha$, and $\lambda \cdot x^{-1} \geq \alpha_1$.
\end{lemma}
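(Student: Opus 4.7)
The plan is first to reduce to the case where the minimizer $x$ has non-increasing entries, then prove the two estimates separately: the upper bound by truncation, and the lower bound via a KKT block-structure analysis. For the reduction, I would observe that $\alpha \cdot \log x^\downarrow$ depends only on the multiset of entries of $x$, and that by the rearrangement inequality---together with $\lambda_1 \geq \cdots \geq \lambda_n$---the sum $\lambda \cdot x^{-1}$ is minimized over permutations of $x$ exactly when $x$ is non-increasing. Since the constraint $x \geq \mathbf 1$ is symmetric under reorderings, I may assume $x_1 \geq \cdots \geq x_n \geq 1$ at the minimizer. Write $g(x) := \lambda \cdot x^{-1} + \alpha \cdot \log x^\downarrow$ and $h_i(t) := \lambda_i/t + \alpha_i \log t$, so that on the ordered domain $g(x) = \sum_i h_i(x_i)$.

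For the upper bound $\|x\|_\infty \leq M := \lambda_1/\alpha_n$, I argue by truncation. Assuming $M \geq 1$ (the algorithmically relevant regime; otherwise the minimizer is $x = \mathbf 1$ and $\|x\|_\infty = 1$), if some $x_i > M$ I replace $x$ by $\tilde x_i := \min\{x_i, M\}$. This preserves both monotonicity and $\tilde x \geq \mathbf 1$. The per-coordinate derivative $h_i'(t) = (\alpha_i t - \lambda_i)/t^2$ is strictly positive on $t > \lambda_i/\alpha_i$, and since $\lambda_i/\alpha_i \leq \lambda_1/\alpha_n = M$ by monotonicity of $\lambda$ and $\alpha$, $h_i$ is strictly increasing on $[M, \infty)$. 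Hence $h_i(\tilde x_i) < h_i(x_i)$ for every coordinate with $x_i > M$, and summing yields $g(\tilde x) < g(x)$, contradicting optimality.

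For the lower bound $\lambda \cdot x^{-1} \geq \alpha_1$, I would examine the block structure of $x$: partition $[n]$ into maximal blocks $I_1, \ldots, I_m$ on which $x$ is constant, with block values $c_1 > \cdots > c_m \geq 1$, so that $1 \in I_1$. Blockwise first-order optimality in $c_1$ (which is perturbable both up and down whenever $c_1 > \max\{1, c_2\}$) gives two cases. If $c_1 > 1$, then $c_1 = (\sum_{i \in I_1}\lambda_i)/(\sum_{i \in I_1}\alpha_i)$, so $\sum_{i \in I_1}\lambda_i/x_i = \sum_{i \in I_1}\alpha_i \geq \alpha_1$ and the claim follows. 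If $c_1 = 1$, then $x = \mathbf 1$ and $\lambda \cdot x^{-1} = \sum_i \lambda_i$; here I would invoke the algorithmic invariant that the preceding $z$-update enforces $\sum_i \lambda_i = \tr C \geq k \geq \alpha_1$ (using that $\alpha_1 \leq 1 \leq k$ throughout the paper, since $\alpha \in \{\mathbf 1_n, \alpha_r\}$).

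The main obstacle will be exactly this boundary case: the bound $\lambda \cdot x^{-1} \geq \alpha_1$ is not an absolute identity on $(\lambda, \alpha)$ alone, as seen by $\lambda = (\alpha_1/2, 0, \ldots, 0)$, so the argument must lean on the ambient Sinkhorn context. The interior cases are immediate from blockwise KKT, so pinning down the invariant $\sum_i \lambda_i \geq \alpha_1$ from the preceding $z$-update is essentially the only nontrivial step.
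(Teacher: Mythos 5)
Your argument is essentially correct, and its core coincides with the paper's: the paper parametrizes $\log x$ by nonnegative increments $\Delta_i$ (encoding both the ordering and $x\ge\mathbf 1$), takes the \emph{first} index with $\Delta_i>0$, and reads off the stationarity identity $\sum_{j\le i}\lambda_j/x_1=\sum_{j\le i}\alpha_j$; this single identity gives both conclusions at once, since it yields $x_1=\sum_{j\le i}\lambda_j/\sum_{j\le i}\alpha_j\le\max\lambda/\min\alpha$ as well as $\lambda\cdot x^{-1}\ge\sum_{j\le i}\alpha_j\ge\alpha_1$. Your ``first maximal constant block'' KKT condition is exactly this step, so the lower bound is the same proof; your upper bound via per-coordinate monotonicity of $h_i(t)=\lambda_i/t+\alpha_i\log t$ on $[\lambda_1/\alpha_n,\infty)$ and truncation is a different (and perfectly valid) route that avoids using the stationarity identity there.

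The one substantive point of divergence is the boundary case $x=\mathbf 1$, and here you have caught a genuine imprecision: the paper's proof tacitly assumes some $\Delta_i>0$, and indeed the lemma is not an unconditional statement about $(\lambda,\alpha)$ --- your example $\lambda=(\alpha_1/2,0,\dots,0)$ kills the second claim, and $\max\lambda<\min\alpha$ kills the first (which should really read $\|x\|_\infty\le\max\{1,\max\lambda/\min\alpha\}$, harmless for the downstream radius bound). Your proposed patch via the ambient algorithm is the right one and can be made non-circular: at an $X$-update one has $\sum_i\lambda_i\approx\tr C=e^{z}\tr\Phi(I_n)Y^{-1}\ge e^{z}\tr\Phi(X^{-1})Y^{-1}\ge k\ge\alpha_1$, using only $X\succeq I_n$, complete positivity of $\Phi$, and the $z$-update; equivalently, one could simply add the hypothesis $\sum_i\lambda_i\ge\alpha_1$ to the lemma, under which the case $x=\mathbf 1$ gives $\lambda\cdot x^{-1}=\sum_i\lambda_i\ge\alpha_1$ directly and your (and the paper's) interior argument covers the rest.
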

\begin{proof}
We know from the analysis of \textsc{KLProject} that we may assume $x$ and $\lambda$ are in non-increasing order. Writing $x_i = e^{\sum_{j = 1}^i \Delta_i}$ (or $x = e^{- \sum_{i = 1}^n \Delta_i \mathbf 1_{[i]}})$ for $\Delta_i \geq 0$, by minimality of $x$ we must have that for any $\Delta_i > 0$, in particular for the first $\Delta_i > 0$, that $\partial_{\Delta_i} e^{- \sum_{i = 1}^n \Delta_i \mathbf 1_{[i]}} \cdot \lambda + \alpha \cdot (\log x^\downarrow) = 0,$ or $\sum_{j=1}^i x_i^{-1} \lambda_j = \sum_{j =1} \alpha_i$. For $j \leq i$ we have $x_j = x_1$, so $x_1 = \sum_{j = 1}^i \lambda_j/ \sum_{j = 1}^i \alpha_i$. The first bound follows. We also see that $\sum_{j=1}^i x_i^{-1} \lambda_j = \sum_{j =1} \alpha_i$ implies $\lambda \cdot x^{-1} \geq \alpha_1$.
\end{proof}

\begin{theorem} \label{thm:sinkhorn-bound}
Set $S = \max\{k/s,k/\alpha_1,k/\beta_1,1\}$ where $s = \tr \Phi(I_n)$.
The iterates $X,Y,z$ of \textsc{MajSinkhorn} satisfy $\|X\|_2 \leq S^T s/\alpha_n ,\|Y\|_2 \leq S^T s/\beta_m $ and $e^z \leq S^T.$ Moreover, provided $\delta \leq .5 \min\{\alpha_n,\beta_m\}$, we have $\|X\|_2 \leq 3(3S)^T s/\alpha_n ,\|Y\|_2 \leq 3(3S)^T s/\beta_m $ and $e^z \leq (3S)^T.$

\end{theorem}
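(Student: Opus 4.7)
The plan is to prove the three bounds $\|X_t\|_2 \leq S^t s/\alpha_n$, $\|Y_t\|_2 \leq S^t s/\beta_m$, and $e^{z_t} \leq S^t$ by induction on the iteration count $t$, with Lemma \ref{lem:projection-bounds} as the main workhorse. Applied to the $X$-update with $\lambda := \lambda(C)$ and $C := e^z \Phi^*(Y^{-1})$, the lemma gives two facts about the freshly updated $X$: (i) $\|X\|_2 \leq \lambda_1/\alpha_n$, and (ii) $\lambda \cdot x^{-1} \geq \alpha_1$. Since $X$ and $C$ share a diagonalizing basis, (ii) reads $\tr X^{-1} C \geq \alpha_1$, equivalently $e^z \tr \Phi(X^{-1}) Y^{-1} \geq \alpha_1$. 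Symmetric statements hold for the $Y$-update with $\beta$ in place of $\alpha$. Another crucial ingredient is the constraint $Y \succeq I_m$ (from the domain $\caD$): it forces $\Phi^*(Y^{-1}) \preceq \Phi^*(I_m)$, and hence $\|\Phi^*(Y^{-1})\|_2 \leq \tr \Phi^*(I_m) = \tr \Phi(I_n) = s$.

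First I would establish $e^{z_t} \leq S^t$. At initialization $e^{z_0} = 1$, and after the first $z$-update $a_0 = \tr \Phi(I_n) = s$, giving $e^z \leq \max\{1, k/s\} \leq S$. For a later iteration, fact (ii) applied to whichever of $X,Y$ was most recently updated yields $a \geq \min\{\alpha_1,\beta_1\}/e^{z_\text{prev}}$; the subsequent $z$-update therefore sets $e^{z_\text{new}} = \max\{1, k/a\} \leq \max\{1, k e^{z_\text{prev}}/\min\{\alpha_1,\beta_1\}\}$, which simplifies to $e^{z_\text{new}} \leq S e^{z_\text{prev}}$ using $S \geq k/\alpha_1, k/\beta_1, 1$ and $e^{z_\text{prev}} \geq 1$. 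This closes the induction. The norm bounds then follow: immediately after an $X$-update, (i) combined with $\|\Phi^*(Y^{-1})\|_2 \leq s$ gives $\|X\|_2 \leq e^z s/\alpha_n \leq S^t s/\alpha_n$, and $\|X\|_2$ does not change in iterations where only $Y$ or $z$ is updated. The bound for $\|Y\|_2$ is symmetric.

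For the finite-precision claim, I would run the same induction for \textsc{MajSinkhornFinite}, now tracking two sources of error in step \ref{it:finite-x-C}: the approximation $\|\tilde C - C\|_1 \leq \delta$ and the further $\delta$-trace-norm error from the approximate eigendecomposition $U\diag(\lambda)U^*$ of $\tilde C$. Via Weyl, the computed spectrum $\lambda$ differs from $\lambda(C)$ by at most $2\delta$ coordinate-wise, so Lemma \ref{lem:projection-bounds}'s upper bound becomes $\|X\|_2 \leq (\|C\|_2 + 2\delta)/\alpha_n$, contributing at most an additive $1$ under $\delta \leq 0.5\alpha_n$; and its lower bound weakens to $\tr X^{-1} C \geq \alpha_1 - 2\delta \geq \alpha_1/2$, so the per-iteration multiplicative factor degrades from $k/\alpha_1$ to at most $2k/\alpha_1 \leq 2S$. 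An analogous computation controls the finite-precision $\tilde a = (\lambda \cdot x^{-1})/q$ used in the $q$-update, yielding $|\tilde a - a| = O(\delta)/q$ which is absorbed by the same slack. Bundling these constants, each iteration multiplies the induction hypothesis by at most $3S$ (rather than $S$), and the additive slack contributes a single factor of $3$ out front, giving the stated $(3S)^T$ and $3(3S)^T$ bounds.

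The main obstacle I expect is the bookkeeping of the finite-precision slack: verifying that the implicit representation $(x,\tilde C)$ of $X$ is compatible with both (i) and (ii) when two compounded $\delta$-perturbations are present, and that the slack absorbed per iteration can genuinely be kept to a constant factor rather than compounding across the $z$-, $X$-, and $Y$-updates inside a single iteration.
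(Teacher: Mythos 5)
Your proposal is correct and follows essentially the same route as the paper's proof: an induction powered by Lemma \ref{lem:projection-bounds} (the eigenvalue bound $\|X\|_2 \leq \lambda_1(C)/\alpha_n$ together with $\lambda \cdot x^{-1} \geq \alpha_1$, which caps each $z$-update by a factor $S$), the bound $\lambda_1(C) \leq e^z s$ coming from $Y \succeq I_m$, and $\delta$-error bookkeeping absorbed into the factors of $3$ for \textsc{MajSinkhornFinite}. If anything, your tracking of the two separate $\delta$-perturbations in the finite-precision step is slightly more careful than the paper's, which folds them into a single $\delta$.
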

\begin{proof}
We first prove by induction that after the $t$ many $z$ updates in \textsc{MajSinkhorn} we have $e^z \leq S^{t}$. For the base case, observe that if the $z$ update happens in the first step then $e^z = \max\{k/s,1\}$. Now suppose $t - 1 \geq 0$ many $z$ updates have happened before, and that $z$ update is to happen in the current step that is not the first step. There must have been an $X$ or $Y$ update in the previous step, because otherwise there will be no $z$ update in the current step. The value of $X$ or $Y$ was chosen so that $e^{z} \tr \Phi(X^{-1}) Y^{-1} \geq \min\{\alpha_1,\beta_1\}$ by the properties of the KL projection. Therefore $\tr \Phi (X^{-1} Y^{-1}) \geq  \min\{\alpha_1,\beta_1\} S^{t - 1}$ by induction, so after the $z$ update we will have $e^{z} = k / \tr \Phi (X^{-1} Y^{-1}) \leq (k/ \min\{\alpha_1,\beta_1\}) S^{t - 1} \leq S^t$.

Next we argue that after $t$ many $z$ updates, $\|X\|_2 \leq S^t s/\alpha_n.$ By Lemma \ref{lem:projection-bounds}, we have $\|X\|_2 \leq \lambda_1(C)/ \alpha_n$. But then $\lambda_1 (C) \leq \tr e^{z} \Phi^*(Y^{-1}) \leq e^{z} s \leq S^t s$ because $Y \succeq I_m$. The argument for $Y$ is analogous.

We now show how to slightly modify the argument for \textsc{MajSinkhornFinite}. For the $q$ update step, the previous finite precision $X$ or $Y$ update will result in $q \tr \Phi(X^{-1}) Y^{-1} \geq \lambda \cdot x - \delta \geq \min\{\alpha_1,\beta_1\} - \delta \geq .5 \min\{\alpha_1,\beta_1\}$ by Lemma \ref{lem:projection-bounds}, so by induction after the $z$ update $q \leq 2 k / \min\{\alpha_1,\beta_1\} (3S)^{t-1} + \delta \leq (3S)^t$. The arguments for $X$ and $Y$ are the same, except for $X$ we have $\|X_2\| \leq \delta + \lambda_1(\tilde C)/ \alpha_n \leq \delta + (\lambda_1(C) + \delta)/ \alpha_n \leq \delta + ( s (3S)^t + \delta)/\alpha_n$. Analogous reasoning proves the bound for $Y$.
\end{proof}

\subsubsection*{Implementation of \textsc{MajSinkhornFinite}}

We now show why the intermediate numbers in each step are only polynomial bit complexity and can be computed in polynomial time. In order to ensure at least half the progress is made, the most onerous requirement is in the $z$ update. For this we need to take $\delta \leq \eps^2/8q,$ but by Theorem \ref{thm:sinkhorn-bound} we have $q \leq 3(3S)^T$ which is merely exponential. Therefore it suffices to take $\delta$ inverse exponential in the input size and $\log(1/\eps)$.

We begin with the $X$ update. Given an implicit representation of $Y$ as $(y,B)$ and $e^z \in \Q$, we need to compute $e^z \Phi^*(Y)$ to precision $\delta$ in the trace norm. Let $B = V \diag(b) V^\dagger$. By the particular choice of $B$ ($\Phi(X)e^{z}$ for the previous $X$), any choice of $V$ diagonalizing $B$ will give rise to $Y$ with the same function value. Therefore it suffices to find a $\tilde V$ which is $\delta$-close to some $V$ diagonalizing $B$ and set $\tilde C = e^z \Phi^*( \tilde V \diag(r) \tilde V^\dagger)$ for $e^{y} \approx r \in \Q^{m}.$ Standard eigendecomposition algorithms output $\tilde V$ deterministically with $\poly(m, \log 1/\delta, \log \|B\|_2)$ bits in time $\poly(m, \log 1/\delta, \log \|B\|_2)$ (Theorem \ref{thm:eigh}). We know that $\|B\|_2 \leq M \sqrt{m np } (3S)^T$ by Theorem \ref{thm:sinkhorn-bound}, so this takes polynomial time. The $Y$ update is analogous, and the $z$ update is trivial. Approximating $\eps_X, \eps_Y, \eps_q$ is similar. This concludes the argument.

\subsubsection*{Calling \textsc{MajSinkhornFinite} as a subroutine}

As \textsc{MajSinkhornFinite} with appropriate $\delta$ makes at least half the progress of \textsc{MajSinkhorn} in every step, it only needs to be run for twice as many iterations. Therefore we can use \textsc{MajSinkhornFinite} as a replacement for \textsc{MajSinkhorn} in all the algorithms in this paper.

The only algorithm that requires further discussion is \textsc{ApproximateIndep}. We are given $X$ and $Y$ implicitly as $(x,A)$, $(y,B)$. We need to check if some pairs $R = \langle u_1, \dots, u_i \rangle$, $L = \langle v_1, \dots, v_j \rangle $ are $\eps$-shrunk subspaces for $\eps$ inverse exponential, where $u_i, v_j$ are \emph{some} eigenbases of $A$,$B$ in order of decreasing eigenvalue. We merely need to compute approximations $\tilde u_i \approx_\delta u_i, \tilde v_j \approx_\delta v_j$ for some such eigenbases, which again is possible in polynomial time by standard eigendecomposition algorithms (Theorem \ref{thm:eigh}). Taking $\delta$ inverse exponential will suffice because $\tr \Phi(\pi_R) \pi_L$ is sufficiently Lipschitz in $R$ and $L$.

\FloatBarrier

\end{document}